\definecolor{pinegreen}{RGB}{0,128,128}
\def\bra#1{\mathinner{\langle{#1}|}}
\def\ket#1{\mathinner{|{#1}\rangle}}
\newcommand{\bbra}[2]{\bra{#1}\otimes\bra{#2}}
\newcommand{\kket}[2]{\ket{#1}\otimes\ket{#2}}
\newcommand{\dbra}[1]{\langle\!\langle{#1}|}
\newcommand{\dket}[1]{|{#1}\rangle\!\rangle}
\newcommand{\ra}{\rightarrow}
\newcommand{\ua}{\uparrow}
\newcommand{\da}{\downarrow}
\newcommand{\rvac}{\ket{\rm vac}}
\newcommand{\lvac}{\bra{\rm vac}}
\def\expect#1{\langle#1\rangle}
\def\ul#1{\underline{#1}}
\def\ol#1{\overline{#1}}
\def\cal#1{\mathcal{#1}}
\def\hatcal#1{\hat{\mathcal{#1}}}
\newcommand{\LL}{{\hatcal L}}
\newcommand{\DD}{{\hatcal D}}
\newcommand{\VV}{{\hatcal V}}
\newcommand{\BL}{{\boldsymbol{\Lambda}}}
\def\bb#1{\mathbf{#1}}
\def\vmbb#1{\mathds{#1}}
\def\adH{\,{\rm \widehat{ad}}_{H}}
\def\Uqsl#1{\mathcal{U}_{q}(\mathfrak{sl}_{#1})}
\newcommand{\braket}[2]{\langle #1 \vert #2 \rangle}
\newcommand{\ii}{ {\rm i} }
\newcommand{\dd}{ {\rm d} }
\newcommand{\ZZ}{\mathbb{Z}}
\newcommand{\NaN}{\mathbb{N}}
\newcommand{\RaR}{\mathbb{R}}
\newcommand{\CC}{\mathbb{C}}
\newcommand{\half}{\frac{1}{2}}
\newcommand{\ihalf}{\frac{\ii}{2}}
\def\tr{{{\rm tr}}}
\def\det{{\,{\rm det}\,}}
\def\dim{{\,{\rm dim}\,}}
\def\ad{{\,{\rm ad}\,}}
\def\End{{\,{\rm End}\,}}
\def\one{\mathds{1}}
\def\Re{{\,{\rm Re}\,}}
\def\id{{\,{\rm id}\,}}
\def\PR{\check{R}}
\def\PBR{\check{\bb{R}}}
\theoremstyle{plain}
\newtheorem{lem}{Lemma}
\newtheorem{theorem}{Theorem}
\theoremstyle{definition}
\newtheorem{defn}{Definition}[section]
\newtheorem{exam}{Example}[section]
\theoremstyle{remark}
\let\size@chapter\huge
\newcommand\frontformat{%
\titlecontents{chapter}[0em]
  {\itshape}{\contentslabel{0em}}
  {}{\normalfont\titlerule*[1pc]{.}\contentspage}}
\newcommand\mainformat{%
\titlecontents{chapter}[1.4em]
  {\addvspace{10pt}\bfseries}{\contentslabel{1.15em}}
  {}{\normalfont\titlerule*[1pc]{.}\bfseries\contentspage}
}
\newcommand\backformat{%
\titlecontents{chapter}[1.5em]
  {\addvspace{10pt}\itshape}{\contentslabel{1.5em}}
  {\hspace*{-1.5em}}{\normalfont\titlerule*[1pc]{.}\contentspage}}
\apptocmd{\frontmatter}{\frontformat}{}{}
\apptocmd{\mainmatter}{\mainformat}{}{}
\apptocmd{\appendix}{\backformat}{}{}
\begin{document}

% -----------------------
% Title & auxiliary pages
% -----------------------

\thispagestyle{empty}
\begin{center}
\Large
\textsc{University of Ljubljana\\
Faculty of Mathematics and Physics\\
Department of Physics\\}
\vspace{4cm}
\textbf{Enej Ilievski}\\
\vspace{1cm}
\huge
EXACT SOLUTIONS OF OPEN INTEGRABLE QUANTUM SPIN CHAINS\\
\vspace{1cm}
\Large Doctoral thesis\\
\vspace{4cm}
\Large \textsc{advisor}: prof.~dr.~Toma\v{z} Prosen\\
\vspace{3cm}
\Large Ljubljana, 2014
\end{center}

\newpage
\thispagestyle{empty}
\mbox{}

\newpage
\thispagestyle{empty}
\mbox{}

\begin{center}
\Large
\textsc{Univerza v Ljubljani\\
Fakulteta za matematiko in fiziko\\
Oddelek za fiziko\\}
\vspace{4cm}
\textbf{Enej Ilievski}\\
\vspace{1cm}
\huge
TOČNE REŠITVE ODPRTIH INTEGRABILNIH KVANTNIH SPINSKIH VERIG\\
\vspace{1cm}
\Large Doktorska disertacija\\
\vspace{4cm}
\Large \textsc{mentor}: prof.~dr.~Toma\v{z} Prosen\\
\vspace{3cm}
\Large Ljubljana, 2014
\end{center}

\newpage
\thispagestyle{empty}
\mbox{}

\newpage
\thispagestyle{empty}
%\vspace{0.5cm}
\begin{center}
\Large Abstract
\end{center}
\normalsize
%\begin{adjustwidth}{0.0cm}{0.0cm}

In the thesis we present an analytic approach towards \textit{exact} description for steady state density operators of nonequilibrium
quantum dynamics in the framework of open systems. We employ the so-called quantum Markovian semi-group evolution, i.e. a general
form of time-autonomous positivity and trace-preserving dynamical equation for reduced density operators, by only allowing unitarity-breaking
dissipative terms acting at the boundaries of a system. Such setup enables to simulate macroscopic reservoirs for different values of effective
thermodynamic potentials, causing incoherent transitions between quantum states which are modeled with aid of the \textit{Lindblad operators}.
This serves as a simple minimalistic model for studying quantum transport properties, either in the linear response domain or in more
general regimes far from canonical equilibrium.

We are mainly exploring possibilities of identifying nonequilibrium situations which are amenable to exact description within
matrix product state representation, by exclusively focusing on steady states, i.e. fixed points of the Lindblad equation, of certain
prototypic \textit{interacting integrable} spin chains driven by incoherent polarizing processes. We outline how to systematically derive
a recently found solution pertaining to anisotropic Heisenberg spin-$1/2$ chain in the context of the quantum integrability theory, namely from
first symmetry principles based on solutions of the celebrated \textit{Yang-Baxter equation}. The defining algebraic mechanism
in the form of the Sutherland equation, which resolves the unitary part of the fixed point condition, is explained with aid of
Faddeev--Reshetikhin--Takhtajan realization of algebraic structures which are commonly referred to as \textit{quantum groups}.
The dissipative part is treated separately via decoupled systems of boundary compatibility equations, relating physical parameters
describing dissipation rates with representation parameters of an underlying symmetry.
It turns out that the Cholesky factor of the solution coincides with the transfer matrix of an abstract integrable
quantum system. We provide a proof of its commutative property by explicitly constructing the corresponding infinite-dimensional $R$-matrix.

Subsequently, we present an exact construction of degenerate steady states in spin-$1$ integrable $SU(3)$-invariant Lai--Sutherland model
via non semi-simple Lie algebra generator represented by two auxiliary bosonic degrees of freedom and one complex-valued spin.
By resorting on $U(1)$ symmetry of the Liouville generator related to global conservation of hole particles
we introduce a grand-canonical nonequilibrium ensemble at chemical equilibrium with holes.

Finally, we define a concept of \textit{pseudo-local} extensive almost-conserved quantities by allowing a violation of time-invariance up to
boundary-localized terms, which demonstrably become immaterial in the thermodynamic limit at high temperatures. We elucidate the role
of such quantities on non-ergodic behavior of temporal correlation functions rendering anomalous transport properties.
It turns out that such conservation laws can be generated by means of boundary universal quantum transfer operators of the fundamental
integrable models.

%\end{adjustwidth}
\vspace{0.5cm}
{
\footnotesize
\noindent\textbf{Keywords:} Lindblad master equation, nonequilibrium steady states, pseudo-local conservation laws, quantum transport,
exact solutions, quantum integrability
}
\vspace{0.5cm}

{
\footnotesize
\noindent\textbf{PACS:} 02.20.Uw, 02.30.Ik, 03.65.Fd, 03.65.Yz, 05.60.Gg, 73.23.Ad, 75.10.Pq
}
\newpage
\thispagestyle{empty}
\mbox{}

\newpage
\thispagestyle{empty}
$\phantom{}$
\selectlanguage{slovene}
%\vspace{0.5cm}
\begin{center}
\Large Povzetek
\end{center}
\normalsize
%\begin{adjustwidth}{1.5cm}{1.5cm}

V disertaciji predstavimo analitičen pristop k obravnavi \textit{točnega} opisa stacionarnih gostotnih operatorjev neravnovesne
kvantne dinamike spinskih verig v sklopu formalizma odprtih sistemov. Osredotočimo se na opis s pomočjo t.i. kvantne dinamične
Markovske polgrupe, tj. splošne oblike časovno-avtonomne pozitivne in sled-ohranjajoče enačbe časovnega razvoja reduciranih gostotnih
operatorjev, pri čemer dovolimo disipativne člene, ki vodijo v kršitev unitarnosti dinamike, le na robovih verige. Na
ta način modeliramo makroskopske rezervoarje za različne efektivne vrednosti termodinamskih potencialov preko
nekoherentnih vzbuditev kvantnih načinov s pomočjo t.i. \textit{Lindbladovih operatorjev}. Takšen preprost minimalističen model
služi kot osnova za študijo lastnosti kvantnega transporta, tako v sklopu linearnega odziva kot tudi v splošnejšem režimu daleč stran od
kanoničnega ravnovesja.

V delu se pretežno ukvarjamo z možnostjo točne predstavitve stacionarnih stanj izven ravnovesja preko \textit{matrično-produktnega
nastavka}. Omejimo se na stacionarne rešitve, tj. fiksne točke Lindbladove enačbe, nekaterih prototipskih \textit{interagirajočih} spinskih
\textit{integrabilnih} verig, ki jih ženemo preko robnih nekoherentnih polarizacijskih procesov. Sprva ponazorimo kako je mogoče rešitev
anizotropnega Heisenbergovega modela polovičnih spinov, ki je bila
predlagana nedavno tega, izpeljati v sklopu standardne teorije kvantne integrabilnosti, tj. primarnih simetrijskih principov osnovanih na
rešitvah slavne \textit{Yang--Baxterjeve enačbe}. Izpeljemo osnovni princip uporabljenega algebrajskega mehanizma, t.i. Sutherlandovo enačbo,
ki razreši unitarni del pogoja za rešitve preko Faddeev--Reshetikhin--Takhtajanove realizacije algebrajskih struktur navadno poimenovanih
\textit{kvatne grupe}. Disipativni del obravnavamo ločeno preko razklopljenega sistema robnih kompatibilnostnih pogojev, ki določajo
povezavo med fizikalnimi parametri disipacije in upodobitvenimi parametri pripadajoče simetrije.
Izkaže se, da razcep po Choleskem tako dobljenega gostotnega operatorja definira prenosno matriko abstraktnega
integrabilnega kvatnega sistema. Lastnost komutiranja pojasnimo preko eksplicitne konstrukcije pripadajoče $R$-matrike neskončne dimenzije.

V nadaljevanju predstavimo konstrukcijo degeneriranih stacionarnih stanj v spin-$1$ integrabilnem $SU(3)$-invariantnem
Lai--Sutherlandovem modelu preko ne pol-preproste Liejeve algebre upodobljene z dvema pomožnima bozonoma ter enim kompleksnim spinom, ki
nam na osnovi $U(1)$ simetrije generatorja časovne dinamike in s tem povezane globalne ohranitve števila lukenj
omogoča vpeljavo eksaktnega velekanoničnega neravnovesnega ansambla pri dani vrednosti pripadajočega kemijskega potenciala.

Nazadnje definiramo koncept \textit{psevdo-lokalnih} ekstenzivnih ohranitvenih količin, kjer dovolimo kršitev časovne invariance s členi
lokaliziranimi na robovih verige, in pojasnimo njihov vpliv na ne-ergodične lastnosti časovnih korelacijskih funkcij, ki vodijo do
anomalnih transportnih pojavov. Izkaže se, da je za fundamentalne integrabilne modele tovrstne konstante gibanja mogoče generirati preko
univerzalnih robnih kvantnih prenosnih operatorjev.

%\end{adjustwidth}
\vspace{1cm}
{
\footnotesize
\noindent\textbf{Ključne besede:} Lindbladova master enačba, neravnovesna stacionarna stanja, psevdo-lokalne konstante gibanja, kvantni transport,
točne rešitve, kvantna integrabilnost
}
\vspace{0.5cm}

{
\footnotesize
\noindent\textbf{PACS:} 02.20.Uw, 02.30.Ik, 03.65.Fd, 03.65.Yz, 05.60.Gg, 73.23.Ad, 75.10.Pq
}

% ----------------
% Acknowledgements
% ----------------
\chapter*{Acknowledgements}
\selectlanguage{english}

I would like to express my sincere gratitude to all who have been directly involved in the
process of research which ultimately culminated with this doctoral thesis. In particular, I would like to
thank all of my colleagues and coworkers from the research group -- explicitly, Marko, Martin, Bojan,
Simon and Berislav -- for creating a fruitful working atmosphere and helping me on various occasions.
Also, all entertaining lunch and coffee breaks we have had together are not to be understated.
I am also thankful to Slava for helping me keeping my injured hand safe when descending from Krvavec,
and, of course, Tomaž for being a friendly host. I thank explicitly to prof. Gunter M. Sch\"{u}tz for making useful remarks
and corrections in the original manuscript.

Furthermore, I am also grateful to my parents for their persistent support throughout my entire study in
Ljubljana which has made everything that I have been doing much easier.

But in the end of the day, it has all been about physics, thus I have to express my honest appreciation
to my thesis advisor, Tomaž Prosen. It has been a privilege to get an opportunity to work with him,
and certainly a very enjoyable experience as well. I would like to thank him again for being very supportive, for sharing his
impressive insights and experiences, and for all inspiring physical debates we have had in his office.
I moreover thank Tomaž for carefully reading the manuscript and pointing out several mistakes in formulas and occasional sloppiness in the text. Without him, this work would not have been possible.

% --------
% Contents
% --------

% fancy header settings 
\setlength{\headheight}{15pt}

\pagestyle{fancy}

\fancyhf{}
\fancyhead[LE,RO]{\bfseries\thepage}
\fancyhead[RE]{\bfseries\leftmark}
\fancyhead[LO]{\bfseries\rightmark}

\fancypagestyle{plain}{ %
  \fancyhf{} % remove everything
  \renewcommand{\headrulewidth}{0pt} % remove lines as well
  \renewcommand{\footrulewidth}{0pt}
}

\setcounter{tocdepth}{2}
\tableofcontents
\newpage

\mainmatter
\setcounter{page}{13}
% on inclusion of quotations: http://tex.stackexchange.com/questions/53377/inspirational-quote-at-start-of-chapter
\begin{savequote}[0.55\linewidth]
``The only laws of matter are those which our minds must fabricate, and the only laws of mind are fabricated for it by matter.''
\qauthor{James Clerk Maxwell}
\end{savequote}

\chapter{Introduction}
\label{sec:Introduction}

Nearly a century has passed now from the pioneering developments of quantum mechanics. The theory, which has undeniably had a massive success ever since its early days, today offers a platform for understanding and
fabricating modern nano-technologies. Despite on one hand there is a plethora of quantum phenomena where our understanding is reasonably
satisfactory, encompassing dynamics involving only few degrees of freedom and collective quantum phenomena which is admissible for the mean-field
description, on the other hand we modestly struggle to deal with regimes of genuine many-body systems where quantum correlations play a
dominant role. These drawbacks are especially pronounced in effective low-dimensional electronic systems
where reduction to a simple single-particle description become typically inaccurate. From practical perspective however, a major difficulty
hides in the computational intractability of the classical simulation of quantum dynamics attributed to exponential growth of
required resources, which is what makes a description of quantum effects a highly non-trivial task. This naturally calls for development of
effective quantum theories by retaining only important interactions and eliminating time and energy scales which cannot be resolved.
Such tendencies formally reside on renormalization group arguments, which often allows to associate quantum dynamical systems with certain universality classes.
For instance, while the theoretical framework for dealing with canonical equilibrium is pretty well-understood,
it is an intensive topic of research to see whether any sort of universal description persist also in out-of-equilibrium regimes.

A notable shortcoming when dealing with generic far-from-equilibrium scenario is a lack of unifying framework. Despite
few rigorous formalisms to deal with nonequilibrium setups have been developed already a while ago, e.g. Schwinger--Keldysh approach of nonequilibrium Green functions~\cite{KadanoffBook},
or formalism of operator algebras~\cite{RuelleBook}, these tools rapidly lead to very technical and burdensome calculations, rarely producing any insightful closed-form results which would be satisfactory. Additionally, to treat some non-trivial models one has to routinely resort to perturbative techniques
which often involve certain delicate assumptions. In order to overcome such problems one has to inevitably take advantage of specific characteristics of a system under consideration. Similarly, a lot of attention has also been devoted to mesoscopic quantum systems, such as interacting quantum dots coupled to two reservoir leads or similar impurity problems~\cite{Hershfield93,MA06,Dutt11,BSS08}.

An alternative approach is rely on numerical simulations. We have witnessed an enormous progress in tailoring efficient numerical
techniques for dealing with strongly-interacting low-dimensional systems in last two decades, ranging from improvements of exact
diagonalization methods~\cite{PrelovsekBook}, time-dependent density matrix renormalization group~\cite{Schollwock11},
numerical renormalization group~\cite{Bulla08,Anders08}, Quantum Monte-Carlo based approaches~\cite{WOEM10,SF09} etc.

Conversely, when it comes to bona-fide many-body strongly-correlated systems which are confined to one or two spatial dimensions, no universal analytic methods exist on the market. This deficiency is perhaps even more disturbing after very recent huge advancements in availability and controllability of experiments
with ultra-cold atomic gases trapped in optical lattices. After several successful demonstrations, there exist long-term hopes what these techniques could offer us tunable quantum nano devices permitting simulations of a diverse range of correlated models of electrons and exploring various quantum phenomena in the near future.
To this end, it is desirable to understand both transient and stationary (or quasi-stationary) regimes, i.e. how systems evolve from a given initially prepared state and what is an adequate statistical description when they reach steady states after long times.
Quite recently theoretical studies of quantum quenched dynamics have gained a great deal of popularity, uncovering some remarkable
nonequilibrium properties such as absence of thermalization and proposed generalized Gibbs measures~\cite{Cazalilla06,Rigol07,CC07,CEF11} in integrable gases evolving from sudden quenches,
with experimental confirmations carried out in~\cite{Kinoshita06}, pre-thermalization~\cite{EKW09} and dynamical quantum phase transition~\cite{HPK13,KS13}.

The defining property of nonequilibrium systems is the presence of macroscopic currents. Another question of central importance is therefore to understand principles which determine quantum transport laws, not only in near-equilibrium (i.e. in a linear response) regime, but also in more general far-from-equilibrium scenarios. Despite accessibility of a large variety of methods to address transport laws, it still remains to large extent unclear under what circumstances the normal transport behavior (e.g. Fourier's, Fick's or Ohm's laws), which in essence express linear relationship between external
gradients and induced currents, emerge from microscopic interactions. How to characterize nonequilibrium quantum phases for certain
archetypal models of strongly-interacting quantum systems on the other hand remains presently almost entirely out of reach, and will
indisputably posit a challenging task and a hot topic in theoretical research for quite some time. An important step to gain better knowledge
could be to explore predictions of the large deviation theory~\cite{Touchette09}, which plays a profound role in the theory of
equilibrium by justifying standard notions of entropy and free energy. We have seen some successful applications of these
concepts in classical stochastic interacting particle systems~\cite{Bertini02,BD04,DLS01,Lazarescu13,GE11}, and recently also for non-interacting quantum Hamiltonians as well~\cite{Znidaric14}.

One viable directive to attack the questions we mentioned would be to reduce the complexity of quantum nonequilibrium problems by treating quantum systems effectively within open system setup~\cite{BreuerBook}. Such approach has already led to a wide spectrum of applications in the area of quantum optics. A distinguished property of such
treatment with respect to techniques we briefly listed above is that dynamics becomes unconditionally non-unitary.
The effect can be understood as a direct consequence of discarding environmental degrees of freedom in our description, with reservoirs being replaced by extra terms which usually only involve a small number of effective thermodynamic parameters.
One advantage of such an approach is to be able to inspect and describe truly stationary (steady) states, unlike in the unitary treatment when internal relaxation only leads to quasi-stationarity or arbitrary observables. Luckily, under certain special conditions the propagator of density matrices in the quantum Liouville space attains the semi-group property. These studies were initiated in the $70$'s in influential work of Lindblad, Gorini, Kossakowski and Sudarshan~\cite{Lindblad76,GKS78}, where authors constructed the generator of the most general time-local and continuous trace-preserving
completely-positive dynamical map, most commonly referred in the literature simply as the Lindblad master equation.

While quantum Markovian master equations appear in numerous models describing effects of decoherence and dissipation for various physical setups,
predominantly in matter-radiation models, it is not a-priori clear whether they might be beneficial also in the scope of many-body
low-dimensional quantum models. At any rate, we can argue that it certainly represents an appealing alternative with its own advantages and handicaps.
These ideas have been actualized just recently, primarily addressed within t-DMRG based evolution in the Liouville space~\cite{PZ09JSTAT,LongRange,Znidaric11spin}, unveiling
a handful of distinctive far-from-equilibrium phenomena, e.g. phase transitions from short to long range magnetic order, anomalous diffusive
transport behavior, negative differential conductance etc. As far as we limit ourselves exclusively to quasi-free/non-interacting (by which as usually we mean Gaussian) Liouville operators, the analytic treatment is possible by lifting canonical quantization to operator spaces~\cite{Prosen08,PZ10,Dzhioev11}, allowing for efficient quasi-exact solutions of non-interacting Hamiltonians with linear noise processes by means of explicit diagonalization of Lindbladians to normal master modes
via suitable generalization of Bogoliubov transformation. Because Wick theorem still applies in such a case, it is sufficient to evaluate only
the $2$-point correlation matrix (or if one prefers, the nonequilibrium Green function). It is quite fascinating nevertheless that several exactly solvable instances, capturing steady states
only, have been proposed even beyond the Gaussian frame~\cite{Znidaric11,Eisler11,Zunkovic14}.

Since we know that Gaussian theories are rather special as they essentially represent non-interacting particle or modes
one might expresses some serious doubts whether the concept of solvability of the Liouvillian dynamics makes any sense with regard to truly interacting models. Nevertheless, while it is well-known that ground states of non-critical quantum lattice models admit an efficient exact description
in terms of matrix product states, ensured by bounded block entanglement entropy implied by the area laws, it might be reasonable to speculate that under certain circumstances a sort of (roughly speaking) super-area-law could manifest itself on the level of steady state density matrices. In this sense we perceive fixed points of quantum Liouville evolution as ground states
of non-hermitian Hamiltonians. One the flip side, these ideas appear even more attractive after accounting for the fact that matrix product states have already been successfully applied in exact description for the steady state configurations of certain classical stochastic lattice processes~\cite{HN83,Derrida93,Blythe07}.

In the thesis we investigate solvability aspects of Markovian master equation of the Lindblad form outside of Gaussian theory.
We restrict our consideration to a special situation with simple prototype integrable interacting spin chains exhibiting interactions
with external reservoirs through the boundary particles only, being a minimal model to induce non-trivial steady states supporting macroscopic
currents. The reservoirs are prescribed by a set of channels which incoherently and continuously absorb/eject excitations from/to local boundary
particles with rates assigned to various thermodynamic potentials. The very first exactly solved steady state solution of this type has been
manufactured for the anisotropic spin-$1/2$ Heisenberg chain by Prosen, initially treating boundary coupling constant
as perturbation parameter~\cite{PRL106}, and later extended to full non-perturbative case~\cite{PRL107}. Afterwards, the same solution
has been revised by unmasking the underlying quantized Lie-algebraic symmetry~\cite{KPS13}. These results stand as a platform for a large part
of the results presented in this thesis.

Subsequent work has brought considerable improvements in understanding an algebraic structure of the problem, stemming from the
commutative property of the defining object which may be referred to as a density matrix ``amplitude''. Quite incredibly, the corresponding
intertwiner turned out to be an infinite-dimensional solution of the quantum Yang-Baxter equation.
Later we have presented a systematic and unifying approach for construction of similar steady state solutions for other
fundamental integrable chains~\cite{IZ14}, corroborating a firm link to the universal Yang-Baxter equations and corresponding realizations of quantum groups.

Even though the steady state operators that have been constructed with our algebraic approach display out-of-equilibrium character,
their ``amplitudes'' permit to make an unexpected connection to the linear response theory. We have shown in~\cite{PI13}, 
by promoting the idea which has been originally put forward in~\cite{PRL106}, that defining algebraic entities of constructed ``integrable'' solutions
essentially represent generating operators for so-called pseudo-local almost-conserved charges, supplementing infinite tower of local charges
within the fundamental integrable lattice models. These ``hidden'' charges can be naturally linked to non-ergodic behavior of
dynamical correlations, thereby being responsible for non-dissipative conducting properties~\cite{CZP95,IP13}.

\section*{Outlook}

The thesis encloses a research work which has been published as five independent articles~\cite{IP13,PIP13,PI13,IZ14,IP14}, presented in a single comprehensive
and self-contained discussion (although not in chronological order). In addition, a work on the Liouville--Floquet non-interacting theories has been published, addressing periodically-driven models~\cite{PI11}, which is not included as a part of this thesis.

The outline goes as follows. We begin with a compressed presentation of the quantum master equation of the Lindblad form~\ref{sec:Lindblad}.
The prerequisite part of the thesis continues in chapter~\ref{sec:integrability}, where we cover basic technical background, starting with a gentle introduction to
the classical theory of integrability first, and then proceeding to the quantum case, with the main purpose to familiarize the reader with concepts of the Algebraic Bethe Ansatz and algebraic framework which fits into the context of the quantum Yang-Baxter equation and related notion of quantum groups. A journey into exact solvability in the realm of nonequilibrium quantum many-body systems begins in chapter~\ref{sec:openXXZ}, where we set focus on
the construction of exact states states for our driven open spin chains. We start with an exposition of an exact matrix product steady state for the anisotropic Heisenberg
spin-$1/2$ chain via ad-hoc approach along the lines of the seminal papers~\cite{PRL106,PRL107} on this subject. Some details on the derivation are intentionally
left out, since we are reconsidering the problem in a different and more rudimentary way afterwards in section~\ref{sec:QGapproach}.
The rationale behind using this particular form of presentation is mainly to highlight various ideas which may help the reader to build some intuition in order to fully appreciate the essence of the symmetry-based approach which we thoroughly treat in chapter~\ref{sec:QGapproach}.
The next chapter~\ref{sec:exterior} is entirely devoted to the so-called exterior $R$-matrix, representing an object which ensures the
commutative property of the steady state ``amplitude operator'', summarizing the content from reference~\cite{PIP13}.
Subsequently, in chapter~\ref{sec:QGapproach}, we employ a machinery of quantum groups, allowing us to devise a systematic
derivation of nonequilibrium steady states for the boundary-driven fundamental integrable models. In addition to the re-derivation of the previously known example of the Heisenberg spin-$1/2$ model, we discuss certain straight forward generalizations to multi-particle quantum gases which exhibit full rotational symmetry. In chapter~\ref{sec:degenerate} we examine the Lai--Sutherland $S=1$ chain
with an integrable dissipative boundaries yielding an entire manifold of degenerate exactly ``integrable'' steady states~\cite{IP14}.
Furthermore, we construct a nonequilibrium grand-canonical ensemble in the chemical equilibrium with respect to the number
of ``hole particles'', which are protected from the dissipative processes.
Ultimately, a concept of the pseudo-local almost-conserved operators and its generating operators is established in
chapter~\ref{sec:transport}, supplemented with an intense discussion in regard to the spin Drude weight~\cite{IP13,PI13}.
We wrap it up in a condensed resume of the main results and stress out closing remarks in chapter~\ref{sec:summary}.

\paragraph{General remarks.}
We owe to say some general remarks in regard to ``philosophy'' of our presentation.
Without trying to undermine the importance of a rigorous, precise and succinct presentation style, compliant with standards and expectations of mathematical physics experts, we simultaneously greatly appreciate any pedagogical attitude and effort at the same time. In practice is often not easy to achieve just the right level of both. With that said, we shall afford ourselves to be occasionally intentionally sloppy with our notation. With aim of improving readability of the thesis and making final results more transparent, we also are sometimes weakly abuse priorly defined notation or override certain symbols along our way. We hope that such ``bad practice'' nonetheless occurs only in places where chances of raising any confusion or ambiguity are minimal.
Different chapters thus sometimes ``live'' as (partly) independent threads.
Our guideline has mainly been that sacrificing the level of rigour in explanations is acceptable as long as
they are not truly vital for the debate.

To a large extent, three different notations have been in use: in the context of integrability theory
two prevalent notations have been established in the course of its development -- the Leningrad school notation
(a-la Faddeev~\cite{FaddeevBook}) and notation a-la Korepin~\cite{KorepinBook}, favoring the braid group formulation of the theory.
We will regularly use the former one, however, in subsection \ref{sec:QG_symmetry} and the entire chapter \ref{sec:exterior} on exterior integrability
we make an exception and switch to the latter one (of course, solely to keep track with derivations made in the original publication).
In the final chapter, when discussing quantum transport and pseudo-locality, we shortly transmute our language to that of
$C^{*}$-dynamical systems~\cite{BR}. We have decided to consistently omit a field specification when referring to various types of algebras, which can always be thought to be the complex numbers.
\chapter{Lindblad master equation}
\label{sec:Lindblad}

We are beginning our discussion by introducing a dynamical map governing time evolution for a \textit{density matrix}
of an \textit{open} quantum system. Denoting a reduced density matrix, i.e. a density matrix of our central system which
is coupled to an environment, as $\rho_{\rm{sys}}$, a typical derivation of the so-called \textit{quantum master equation} begins by
adopting a total
\textit{unitary} evolution of a reduced system $\rho_{\rm{sys}}$ together with environmental degrees of freedom. The latter are given by a density
matrix $\rho_{env}$, which is usually referred to as a bath or a reservoir\footnote{In the thesis we shall only consider a particular type of reservoirs, hence we will avoid to make any distinction between the two notions.}.
Such evolution takes place in a tensor-product Hilbert space $\cal{H}_{\rm{sys}}\otimes \cal{H}_{\rm{env}}$.
The evolution of a \textit{reduced density matrix} $\rho_{\rm{sys}}\in \cal{B}(\cal{H}_{\rm{sys}})$, where $\cal{B}(\cal{H}_{\rm{sys}})$ designates the algebra of bounded operators over the Hilbert space of the reduced system $\cal{H}_{\rm{sys}}$,
is formally provided for any time $t$ via \textit{partial trace}
operation of \textit{Liouville--von Neumann} evolution equation over the degrees of freedom from $\cal{H}_{\rm{env}}$,
\begin{align}
\frac{\dd}{\dd t}\rho_{\rm{sys}}(t)&=-\ii\;\tr_{\rm{env}}[H(t),\rho(t)], \nonumber \\
\rho_{\rm{sys}}(t)&=\VV(t)\rho_{\rm{sys}}(0)=\tr_{\rm{env}}\left(U(t)\rho(0)U^{\dagger}(t)\right).
\label{eqn:Liouville_vonNeumann}
\end{align}
For simplicity we further assume that the initial state $\rho(0)=\rho_{\rm{sys}}(0)\otimes \rho_{\rm{env}}(0)$ is separable.
Above a linear \textit{time-continuous} map $\VV(t)\in \cal{B}(\cal{B}(\mathcal{H_{\rm{sys}}}))$ was introduced, constituting
a one-parametric family of \textit{dynamical maps} $\{\VV(t);t\geq 0\}$, operating over the space of reduced density
operators $\cal{B}(\mathcal{H_{\rm{sys}}})$, namely
$\VV(t):\cal{B}(\cal{H}_{\rm{sys}})\rightarrow \cal{B}(\cal{H}_{\rm{sys}})$ is a bounded operator on the Hilbert--Schmidt space of
operators over the central system's Hilbert space $\mathcal{H}_{\rm{sys}}$.

One possibility to derive an explicit form for the dynamical map $\VV(t)$ is under assumption of the \textit{weak-coupling limit}. Provided that interactions between the central system and the environment are sufficiently weak, one may resort on arguments based on perturbative expansion in the coupling strength parameter of the interactions in order to justify the factorization of the full density matrix $\rho(t)$ at
arbitrary time $t$,
\begin{equation}
\rho(t)=\rho_{\rm{sys}}(t)\otimes \rho_{\rm{env}}.
\end{equation}
Henceforth, the \textit{second-order} truncation of the Dyson expansion with respect to the interaction Hamiltonian $H_{I}$ reads,
\begin{align}
\frac{\dd}{\dd t}\rho(t)&=-\ii[H_{I}(t),\rho(t)],\nonumber \\
\frac{\dd}{\dd t}\rho_{\rm{sys}}(t)&=-\int_{0}^{t}\dd s\;\tr_{\rm{env}}[H_{I}(t),[H_{I}(s),\rho_{\rm{sys}}(s)\otimes \rho_{\rm{env}}]].
\label{eqn:Dyson}
\end{align}
One apparent disadvantage of this equation is that it lacks \textit{locality} in time.
However, in a regime in which environmental correlations decay so rapidly that they cannot be effectively resolved within the system's typical system's internal time-scale, one may replace \eqref{eqn:Dyson} by a \textit{coarse-grained}
version (simultaneously making the integration variable substitution $t\mapsto t-s$)
\begin{equation}
\frac{\dd}{\dd t}\rho_{\rm{sys}}(t)=-\int_{0}^{\infty}\dd s\;\tr_{\rm{env}}[H_{I}(t),[H_{I}(t-s),\rho_{\rm{sys}}(t)\otimes \rho_{\rm{env}}]]
\end{equation}
This rick rendered our dynamical equation independent from the initial condition.
The set of approximations that were invoked up to this point is customary found under the name of \textit{Born-Markov approximation}.

In this thesis we are working entirely with dynamical maps $\VV(t)$ possessing \textit{semi-group property},
\begin{equation}
\VV(t_{1})\VV(t_{2})=\VV(t_{1}+t_{2}),\quad \forall t_{1},t_{2},\quad \VV(0)=\one.
\label{eqn:semi-group}
\end{equation}
Such maps are understood as the \textit{Markovian} quantum master evolution.
The \textit{time-independent} generator $\hatcal{L}$ governs the infinitesimal propagation in time,
\begin{equation}
\frac{\dd}{\dd t}\rho_{\rm{sys}}(t)=\hatcal{L}\rho_{\rm{sys}}(t).
\end{equation}
We should note that by lifting the time-homogeneity property we still preserve validity of the evolution, although
the semi-group property will then be lost.

In order to make the evolution equation compatible with the semi-group property another approximation has to be carried out.
Typically it is referred to as the \textit{rotating-wave approximation}, sometimes also called the secular approximation.
Its purpose is to eliminate rapidly oscillating terms by averaging out time-scales which are much shorter in compare to the system relaxation time, i.e. we are essentially smearing out the evolution over times in which $\rho_{\rm{sys}}$ changes only appreciably.
We omit the details of the derivation here and simply present the final result:
\begin{align}
\label{eqn:GKS_form}
\frac{\dd}{\dd t}\rho_{\rm{sys}}(t)&=-\ii\;[H_{\rm{sys}},\rho_{\rm{sys}}(t)]\nonumber \\
&+\sum_{\omega}\sum_{\mu,\nu}\cal{G}_{\mu\nu}(\omega)
\left(A_{\nu}(\omega)\rho_{\rm{sys}}(t)A^{\dagger}_{\mu}(\omega)-\half \{A^{\dagger}_{\mu}(\omega)A_{\nu}(\omega),\rho_{\rm{sys}}(t)\}\right).
\end{align}
This is the celebrated Markovian master equation which is commonly referred to as the \textit{Gorini--Kossakowski--Sudarshan} (GKS) equation~\cite{GKS78}.
The set of operators $\{A_{\mu}(\omega)\}$ are called the \textit{jump operators}.
In the context of the particular microscopic derivation, they arise
from the eigenoperator decomposition of the interaction Hamiltonian with respect to the system's Hamiltonian $H_{\rm{sys}}$, i.e.
\begin{align}
H_{I}&=\sum_{\mu}A_{\mu}\otimes B_{\mu},\nonumber \\
A_{\mu}(\omega)&:=\sum_{\epsilon^{\prime}-\epsilon=\omega}\Pi(\epsilon)A_{\mu}\Pi(\epsilon^{\prime}),
\end{align}
where $A_{\mu}=A_{\mu}^{\dagger}\in \cal{B}(\cal{H}_{\rm{sys}})$ and $B_{\mu}=B_{\mu}^{\dagger}\in \cal{B}(\cal{H}_{\rm{env}})$
operate in the central system and the environment, respectively, and
$\Pi(\epsilon)$ are projectors onto eigenspaces of $H_{\rm{sys}}$ with eigenvalues $\epsilon$, obeying
\begin{equation}
[H_{\rm{sys}},A_{\mu}(\omega)]=-\omega A_{\mu}(\omega),\quad [H_{\rm{sys}},A^{\dagger}_{\mu}(\omega)]=\omega A^{\dagger}_{\mu}(\omega)
\end{equation}
In addition, the Hamiltonian $H_{\rm{sys}}$ gets renormalized by the so-called Lamb shift term $H_{\rm{Lamb}}$,
\begin{equation}
H_{\rm{Lamb}}:=\sum_{\omega}\sum_{\mu,\nu}\cal{S}_{\mu\nu}(\omega)A^{\dagger}_{\mu}(\omega)A_{\nu}(\omega),
\end{equation}
entering through the \textit{imaginary} part of the Fourier transform of the bath correlation matrix $\cal{C}(\omega)$,
\begin{equation}
\cal{C}_{\mu\nu}(\omega):=\int_{0}^{\infty}\dd s e^{\ii \omega s}\expect{B^{\dagger}_{\mu}(t)B_{\nu}(t-s)}:=
\half \cal{G}_{\mu\nu}(\omega)+\ii\; \cal{S}_{\mu\nu}(\omega),\\
\end{equation}
whereas the real part determines a \textit{positive semi-definite} \textit{rate matrix} $\cal{G}=\cal{G}^{\dagger}\geq 0$, which will be
referred to as \textit{GKS matrix}. Non-negativity of $\cal{G}$ is related to the fact that dissipation rates describing
incoherent modes (or simply put, decoherence) obtained by bringing \eqref{eqn:GKS_form} to the diagonal form, are \textit{non-negative real} numbers.
The diagonalized form of \eqref{eqn:GKS_form} is referred to as the \textit{Lindblad form}.
It is worth remarking that even though imposing semi-positivity condition is not \textit{strictly} necessary for the evolution to preserve positivity of density matrices, it is safer to keep it as a physical requirement.

The reader should nevertheless keep in mind that the derivation as being presented above can only be properly justified when criteria for clear \textit{separation of time-scales} are fulfilled or, to be even more precise, when both (i) correlations which are induced to environment by the central system decay much faster in compare to a typical time-scale of the reduced system's dynamics (which is connected to a typical inverse energy level spacing) and (ii) when the system's dynamics is in turn much faster then its global relaxation time.

\section{Completely-positive trace-preserving evolution}
Although a combination of the Born-Markov and the rotating-wave approximations are typically acceptable in many quantum optics setups
we shall (ideologically) abstain from keeping any reference to the standard derivation and rather promote a somewhat different
perspective in order to defend relevance of the Lindblad equation even for certain primitive dissipative many-particle quantum processes. Hence, a class of applications we have in mind can be viewed as a minimalistic (toy) model for studying quantum transport and other genuine far-from-equilibrium phenomena. A conceptual meaning of the Lindblad-type dissipators is thus merely to represent some sufficiently generic physical processes which break unitary of time-evolution but still consistently preserve the two necessary requirements for (quantum) dynamics to be physical:
(i) preservation of the trace (normalization) of the reduced density matrix and (ii) preservation of its positivity at arbitrary times.
The premise behind our intentions here can be grasped by the question \textit{``How does a general time-homogeneous continuous completely-positive
trace-preserving quantum evolution look like?''}, with no regard to any additional information on an underlying form of microscopic interactions.
To give the answer to this question we proceed with an abstract mathematical consideration.

Starting from the Liouville--von Neumann equation \eqref{eqn:Liouville_vonNeumann}, we first express quantum Markovian
dynamical process, utilizing the spectral resolution of the environmental part,
\begin{equation}
\rho_{\rm{env}}=\sum_{\nu}\Lambda_{\nu}\ket{\phi_{\nu}}\bra{\phi_{\nu}},
\end{equation}
in the form of standard \textit{operator-sum representation},
\begin{equation}
\VV(t)\rho_{\rm{sys}}=\sum_{\mu,\nu}K_{\mu\nu}(t)\rho_{\rm{sys}}K^{\dagger}_{\mu\nu}(t).
\end{equation}
Operators $K_{\mu\nu}\in \cal{B}(\cal{H}_{\rm sys})$, specifying most general quantum operations, are called \textit{Kraus operators}
(sometimes also noise operators). They moreover obey
\begin{equation}
\sum_{\mu,\nu}K_{\mu\nu}^{\dagger}(t)K_{\mu\nu}(t)=\one,
\end{equation}
which immediately implies trace preservation,
\begin{equation}
\tr_{\rm{sys}}(\VV(t)\rho_{\rm{sys}})=\tr_{\rm{sys}}(\rho_{\rm{sys}})=1.
\end{equation}
Furthermore, by imposing semi-group property \eqref{eqn:semi-group}, we can reformulate the evolution in terms of the time-independent
quantum Liouville operator $\LL$, defined as the generator of $\VV(t)=\exp(t\LL)$. By means of $\LL$ the quantum Markovian process takes the form of a simple first-order differential equation,
\begin{equation}
\frac{\dd}{\dd t}\rho_{\rm{sys}}(t)=\LL \rho_{\rm{sys}}(t).
\end{equation}

Let us now find out the most general form for the generator $\LL$, by supposing that $\dim \cal{H}_{\rm{sys}}=d$.
The operator space $\cal{B}(\cal{H}_{\rm{sys}})$ is therefore of dimension $d^2$. Denoting a \textit{complete basis} of operators in
$\cal{B}(\cal{H}_{\rm{sys}})$ by $\{F_{i}\}$, making them orthogonal with respect to Hilbert--Schmidt product,
\begin{equation}
\left(F_{i},F_{j}\right)\equiv \tr_{\rm{sys}}\left(F^{\dagger}_{i}F_{j}\right)=\delta_{ij},
\end{equation}
and expanding the Kraus operators as
\begin{equation}
K_{\mu\nu}(t)=\sum_{i=1}^{d^2}\left(F_{i},W_{\mu\nu}(t)\right)F_{i},
\label{eqn:Kraus_expanded}
\end{equation}
we arrive at
\begin{equation}
\VV(t)\rho_{\rm{sys}}=\sum_{i,j=1}^{d^2}f_{ij}(t)F_{i}\rho_{\rm{sys}}F^{\dagger}_{j},\qquad
f_{ij}(t):=\sum_{\mu,\nu}\left(F_{i},K_{\mu\nu}(t)\right)\left(K_{\mu\nu}(t),F_{j}\right).
\label{eqn:F_expansion}
\end{equation}
One can confirm that the matrix $f$ is \textit{hermitian} and \textit{non-negative}. Without any harm we can choose the set $\{F_i\}$, with exception of
$F_{d^2}=(1/\sqrt{d})\one_{\rm{sys}}$, as \textit{traceless} operators, and evaluate the limit
\begin{equation}
\lim_{\tau \to 0}\frac{1}{\tau}\left(\VV(\tau)\rho_{\rm{sys}}-\rho_{\rm{sys}}\right),
\end{equation}
in accordance with the definition of a generator $\LL$.
By identifying the following limits of the expansion coefficients from \eqref{eqn:F_expansion} as
\begin{equation}
g_{ij}:=\lim_{\tau \to 0}\frac{f_{ij}(\tau)}{\tau},\quad g_{id^2}:=\lim_{\tau \to 0}\frac{f_{id^2}(\tau)}{\tau},\quad
g_{d^{2}d^{2}}:=\lim_{\tau \to 0}\frac{f_{d^{2}d^{2}}(\tau)-d^2}{\tau},
\end{equation}
for $i,j\in \{1,2,\ldots d^2-1\}$, and introducing operators
\begin{equation}
F:=\frac{1}{\sqrt{d}}\sum_{i=1}^{d^{2}-1}g_{id^{2}}F_{i},\quad G:=\frac{g_{d^{2}d^{2}}}{2d}+\frac{F+F^{\dagger}}{2},\quad
H:=\frac{F^{\dagger}-F}{2\ii},
\end{equation}
we readily find the following neat result
\begin{equation}
\LL \rho_{\rm{sys}}=-\ii\;[H,\rho_{\rm{sys}}]+\{G,\rho_{\rm{sys}}\}+\sum_{i,j}^{d^{2}-1}g_{ij}F_{i}\rho_{\rm{sys}}F^{\dagger}_{j}.
\label{eqn:Lindblad_intermediate}
\end{equation}
The coefficient matrix $g$ is (likewise $f$) \textit{hermitian} and \textit{non-negative}.
Now it only remains to inspect what are the restrictions imposed by trace preservation, i.e.
\begin{equation}
0=\tr_{\rm{sys}}(\VV \rho_{\rm{sys}})=\tr_{\rm{sys}}\left(2G+\sum_{i,j=1}^{d^{2}-1}g_{ij}F^{\dagger}_{j}F_{i}\right) \Rightarrow
G=-\half \sum_{i,j=1}^{d^{2}-1}g_{ij}F^{\dagger}_{j}F_{i}.
\end{equation}
By plugging this result into \eqref{eqn:Lindblad_intermediate} we finally arrive at the standard (GKS) form,
\begin{equation}
\LL \rho_{\rm{sys}}=-\ii[H,\rho_{\rm{sys}}]+\sum_{i,j}^{d^{2}-1}g_{ij}\left(F_{i}\rho_{\rm{sys}}F^{\dagger}_{j}-
\half\left\{F^{\dagger}_{j}F_{i},\rho_{\rm{sys}}\right\}\right),
\end{equation}
which clearly agrees with the previously stated expression \eqref{eqn:GKS_form},
barring the fact that the Lindblad operators $\{F_{i}\}$ are not assumed to be hermitian as earlier.
After diagonalization of the coupling matrix $g$ we produce the diagonal (Lindblad) form,
\begin{align}
\LL \rho_{\rm{sys}}&=-\ii [H,\rho_{\rm{sys}}]+\DD(\rho_{\rm{sys}}),\nonumber \\
\DD(\rho_{\rm{sys}})&:=\sum_{k}\Gamma_{k}\left(L_{k}\rho_{\rm{sys}}L^{\dagger}_{k}-\half\left\{L^{\dagger}_{k}L_{k},\rho_{\rm{sys}}\right\}\right),
\label{eqn:Lindblad_form}
\end{align}
where $H$ should be interpreted as the Hamiltonian which (in absence of dissipation) generates a unitary evolution.
The complementary part on the right-hand side is then the so-called \textit{Lindblad dissipator}, denoted compactly by $\DD$.

From now on we decide to permanently drop the subscript labels referring to the reduced part of the full density matrix,
since we never make an explicit reference to the environmental part $\rho_{\rm{env}}$ henceforth.
In any case, our initial goal has been to get entirely rid of the environmental part at the expense of its effective description
by using suitable Lindblad noise operators.\footnote{Merely for amusement we would like to bring to reader's attention the paper~\cite{BSP84} where authors rediscover Lindblad master equation in the scope of black hole physics and related information loss paradox.}

As the stage has already been set, we are ready to shift focus towards solvability aspects of Markovian semi-groups.
Our aim is to consider some paradigmatic examples of quantum spin chains with integrable Hamiltonians by using a set of
noise processes ``attached'' only at the boundary of a system. By playing this game we expect to simulate a kind of simple particle reservoirs which inject or eject excitations at chain's ends, and hope for some luck when searching for regimes which would permit to apply mainly analytic techniques. Having closed-form solutions at our disposal could hopefully improve our understanding of certain nonequilibrium physical situations.

We note that Markovian semi-groups have the property of being \textit{contractive}, namely, by employing the trace norm\footnote{The trace norm for an operator $A$ is defined as $\|A\|^{2}_{1}:=\tr(AA^{\dagger})$.} ($1$-norm) the following inequality holds true,
\begin{equation}
\|\VV^{\dagger}(t)A\|_1\leq \|A\|_1,
\end{equation}
with $\VV^{\dagger}$ denoting the generator in the Heisenberg picture.
Therefore, after a long time a system settles into the \textit{steady state},
\begin{equation}
\rho_{\infty}:=\lim_{t\to \infty}\VV(t)\rho(0),
\end{equation}
i.e. by definition an eigenvector of $\VV(t)$ with the eigenvalue $1$,
\begin{equation}
\VV(t)\rho_{\infty}=\rho_{\infty},
\end{equation}
or equivalently, a \textit{nullvector} of the Lindblad generator,
\begin{equation}
\LL\rho_{\infty}=0.
\end{equation}
In finite-dimensional Hilbert--Schmidt spaces, each positive semi-group contains at least one steady state, which is under
generic conditions a \textit{unique} state given by the ergodic average,
\begin{equation}
\rho_{\infty}=\lim_{T\to \infty}\frac{1}{T}\int_{0}^{T}\VV(s)\rho(0)ds.
\end{equation}
In fact, to be more accurate, dynamical semi-groups are said to be \textit{uniquely relaxing}~\cite{Evans77} \textit{if and only if}
the set of operators $\{H,L_{k},L_{k}^{\dagger}\}$ generates the \textit{entire} algebra of operators $\cal{B}(\cal{H}_{\rm{sys}})$.
Notice that the only exception to this condition occurs where there exist \textit{non-scalar} operators which commute with a Hamiltonian and
all Lindblad operators. In chapter~\ref{sec:degenerate} we address a model where condition for uniqueness are not met due to presence of a continuous
(Abelian) symmetry of Liouville evolution, giving rise to \textit{degenerate} steady states.

In this thesis we are exclusively interested in the steady state ensembles, i.e. time-asymptotic states of the Lindbladian evolution.
We can stress (at least) two very elementary reasons for making this choice. On one hand, steady states quite commonly represent physical states which are of
greatest interest for both theoreticians and experimentalists, simply because they are easiest to (approximately) access or generate.
Transient dynamics is on contrary dominated by decay modes which are typically much harder to deal with.
On the other hand, because steady states can be perceived as ground states of quantum Liouvillians, we hope that we would be able to
take advantage of their non-generic characteristics in some cases, eventually making them appropriate for analytical manipulations.
For instance, we might prognosticate certain ``low complexity'' regimes which are amenable for exact description can be extracted with some effort.
But let us for now postpone those rather speculative claims until we make further clarifications on the setup we shall be studying in
the forthcoming discussion.

For a self-contained presentation of properties and various mathematical aspects of quantum dynamical semi-groups we refer
the reader to consult some of the most standard literature~\cite{AlickiBook,BreuerBook,DaviesBook,RivasBook}.
\chapter{Introduction to Quantum Integrability}
\label{sec:integrability}

This chapter is meant to be a moderate introduction to the subject of integrability theory. Ever since the first breakthroughs came with a
collection of various works on the solvability of certain partial differential equations in the late $60$'s (e.g. Korteweg--de Vries equation, sine--Gordon model and Toda chain etc.) by combining geometric and group-theoretic approaches, which blossomed with classical theory of solitons~\cite{FaddeevBook,BabelonBook}, we have witnesses a rapid progress and a gigantic expansion in the next two decades or so, culminated with the \textit{Quantum Inverse Scattering Method} (abbrev. QISM)~\cite{KorepinBook,EsslerBook}.
The latter often serves as a basis of the famous  \textit{Algebraic Bethe Ansatz} (abbrev. ABA), an algebraic ``substitute'' for the good old
\textit{Coordinate Bethe Ansatz} (CBA), which was presented in the early days of quantum mechanics.
After mathematicians subsequently took over that truly enchanting algebraic design, the QISM crystallized into the theory of Quantum groups~\cite{Jimbo85,Drinfeld88,Jimbo90}. The holy grail of quantum integrability theory is the \textit{quantum Yang-Baxter equation} (abbrev. YBE). The classical theory, fueled by the
classical Yang-Baxter equation (abbrev. CYBE), can be shown to emerge as a ``semi-classical'' limit of its quantum counterpart.
To seize the greatness of the Yang-Baxter equation it should suffice to mention its prominent role in classical two-dimensional
solvable vertex models in statistical mechanics~\cite{BaxterBook}, factorizable scattering in 2D quantum field theories~\cite{Zamolodchikov79} and the theory of knots and links~\cite{KauffmanBook}.

It is far from trivial to suggest some good pedagogical introductory material to the subject.
There is a myriad of short lecture notes dispersed all over the place, though.
Beside quite standard references~\cite{Faddeev1,Faddeev2,Sklyanin92} the author also recommends e.g.~\cite{DoikouLectures,DoikouLectures2}.

Before we begin more seriously, we would also like to stress that integrability does not refer to (nor imply) exact solvability, in contrary to
frequently abused terminology, especially from laymen. Frankly, exact solvability is a rather vague notion, while oppositely, integrability (in the
sense of Yang and Baxter) is a well-defined concept. To put it shortly, while in \textit{non-interacting} (or free) quantum theories the evolution is
reducible to a \textit{single} particle description, integrable theories are interacting models which are \textit{two-particle} reducible,
in a sense that many-body scattering matrix factorizes into a sequence of two-particle scatterings. A profound consequence of such redundancy is existence of
an infinite number of conserved quantities (integrals of motion). This is where the name integrability comes from.
The role of Yang-Baxter equation is to ensure \textit{consistency condition} for such factorization property.
In contrast to common beliefs however, the Yang-Baxter condition alone does
not automatically guarantee the access to Hamiltonian eigenstates let alone correlation functions without investing some additional
work, but rather merely significantly reduces complexity of further analytic calculations.

\section[Classical integrability]{Classical integrability (Liouville--Arnol'd)}
\label{sec:classical}

Before we dig entirely into quantum domain, it is perhaps instructive to state the standard definition of integrability for
classical dynamical systems, which is due to Liouville and Arnol'd. We are not trying to be too pedantic with our presentation style in this chapter.
To this end let us consider a phase space $\cal{M}$, equipped with Poisson structure given by the bracket $\{\bullet,\bullet\}$, and a local pair of canonical coordinates $(p_i,q_i)$.
By writing canonical momentum and coordinate vectors $p=(p_1,\ldots,p_n)$ and $q=(q_{1},\ldots,q_{n})$, respectively, we prescribe the following linear symplectic Poisson structure given two phase-space functions $f=f(p,q)$ and $g=g(p,q)$,
\begin{equation}
\{f,g\}:=\sum_{i=1}^{n}\left(\frac{\partial f}{\partial p_{i}}\frac{\partial g}{\partial q_{i}}-
\frac{\partial f}{\partial q_{i}}\frac{\partial g}{\partial p_{i}}\right).
\label{eqn:Poisson_bracket}
\end{equation}
This is consistent with canonical commutation relations,
\begin{equation}
\{p_{i},p_{j}\}=\{q_{i},q_{j}\}=0,\qquad \{p_{i},q_{j}\}=\delta_{ij}.
\label{eqn:Poisson_CCR}
\end{equation}
As usual, Poisson bracket is skew-symmetric, obeys Leibniz rule and satisfies Jacobi identity, i.e.,
\begin{align}
\{f,g\}&=-\{g,f\},\nonumber \\
\{f,gh\}&=g\{f,h\}+\{f,g\}h,\nonumber \\
0&=\{f,\{g,h\}\}+\{g,\{h,f\}\}+\{h,\{f,g\}\},
\end{align}
respectively. Well-known \textit{Hamiltonian equations} of motion read
\begin{equation}
\dot{q_{i}}=\frac{\partial H}{\partial p_{i}},\qquad \dot{p_{i}}=-\frac{\partial H}{\partial q_{i}},
\label{eqn:Hamilton_equations}
\end{equation}
where dot denotes the time-derivative.
By means of \eqref{eqn:Hamilton_equations} we readily find
\begin{equation}
\frac{\dd f}{\dd t}=\sum_{i=1}^{n}\frac{\partial f}{\partial p_{i}}\dot{p}_{i}+\frac{\partial f}{\partial q_{i}}\dot{q}_{i}+\frac{\partial f}{\partial t}\Rightarrow
\frac{\partial f}{\partial t}=\{H,f\}.
\label{eqn:time_derivative}
\end{equation}
Therefore, in order for a function $f$ to be a constant of motion, its Poisson bracket with the Hamilton function $H$ must vanish.

Integrability a-la Lioville--Arnol'd states that a system is integrable if it has exactly $n$ \textit{functionally-independent}
\footnote{Functional independence can be understood as a generalization of a linear dependence. A set of $m$ phase-space functions
$\{f_{j}\}_{j=1}^{m}$, where $f_{j}=f_{j}(q_{1},\ldots,q_{n},p_{1},\ldots,p_{n})$ is said to be functionally independent if the
only function $F$ such that $F(f_{1},\ldots,f_{m})=0$ is $F=0$. This is implied by linear independence of gradients $\nabla f_{j}$
hence the associated Jacobian must be of full rank $m$.} constants of motion which are in \textit{mutual involution} (i.e. Poisson-commute among themselves).
Denoting a set of integrals by $\{I_{i}\}$, with one of them being just the system's Hamiltonian, we require
\begin{equation}
\{H,I_{i}\}=\{I_{i},I_{j}\}=0,\qquad i,j\in \{1,2,\ldots,n\}.
\label{eqn:Poisson_commuting}
\end{equation}
Canonical transformation of coordinates into \textit{action-angle variables},
\begin{equation}
(p,q)\mapsto (I,\phi)
\label{eqn:action_angle_variables}
\end{equation}
allows us to express the Hamiltonian only as a function of the actions $H=H(I)$.
Integrability thus implies that dynamics is restricted wind around the $n$-dimensional \textit{hypertori} which foliate the entire phase space $\cal{M}$. A particular torus is determined by values of $n$ action variables,
\begin{align}
\frac{\dd I_{j}}{\dd t}&=\{H,I_{j}\}=0 \Rightarrow I_{j}(t)=\text{const.},\\
\frac{\dd \phi_{j}}{\dd t}&=\{H,\phi_{j}\}=\omega_j \Rightarrow \phi_{j}(t)=\omega_{j} t+\phi_{j}(0).
\label{action_angle_dynamics}
\end{align}

% following Doyon's lecture notes
A practical issue which however remains is to understand under what conditions a classical dynamical system fulfills criteria of the
Liouville--Arnol'd theorem, and henceforth carrying out the separation of variables construction.
A convenient way to achieve this is to reformulate a problem, i.e. some integrable nonlinear partial differential equation under consideration,
into an algebraic form. In particular, a success of CISM (shortly called the theory of solitons~\cite{FaddeevBook,BabelonBook})
is the \textit{Lax pair formulation},
\begin{equation}
\boxed{\frac{\dd L}{\dd t}=[L,A],}
\label{eqn:Lax_pair}
\end{equation}
expressed via L--A pair of matrices with elements being functions over phase space $\cal{M}$.
The motivation for using this representation of the Hamiltonian flow is to use \textit{traces of powers} of the
\textit{Lax matrix} $L^m$ as \textit{conserved operators}, based on the observation
\begin{equation}
\frac{\dd}{\dd t}\tr{(L^m)}=m\;\tr{(L^{m-1}\dot{L})}=m\;\tr{(L^{m-1}[L,A])}=\tr{([L^m,A])}=0.
\end{equation}
One can also quickly check that \eqref{eqn:Lax_pair} corresponds to \textit{isospectral evolution}, meaning that eigenvalues of $L$
do not change with time.

Nonetheless, there is nothing yet to guarantee that conserved eigenvalues of $L$ are also in involution and mutually independent.
Therefore we have to look for an extra algebraic constraint imposed on the the Poisson bracket of two Lax matrices. First we introduce
a notation which allows us to handle classical phase spaces (Poisson manifolds) equipped with auxiliary matrix spaces. Let us denote
the latter one by $\cal{V}$, assuming to be of dimensionality $d$. Using the basis of unit matrices ${e^{ij}}$ (subsequently referred to as the Weyl basis) with indices
running as $i,j\in \{1,2,\dots,d\}$ (or simply with a double index $\alpha$), each matrix-function $A$ admits an expansion
\begin{equation}
A=\sum_{i,j=1}^{d}A_{ij}e^{ij}\equiv \sum_{\alpha}A^{\alpha}e^{\alpha}.
\end{equation}
Ordinary multiplication of two such objects is prescribed by
\begin{equation}
\{A,B\}=\sum_{\alpha,\beta}\{A^{\alpha},B^{\beta}\}e^{\alpha}e^{\beta}.
\label{eqn:matrix_multiplication}
\end{equation}
It is of principal importance to grasp the essence of such ``hybrid'' operations; there is one type of multiplication being performed with respect to a matrix space $\cal{V}$, and another type of multiplication which concerns the matrix elements. The later is now given by Poisson bracket operating on two functions from $\cal{M}$.
We shall also need to operate with tensor products of auxiliary spaces, hence we imagine two possible
embeddings $\cal{V}\rightarrow \cal{V}\otimes \cal{V}$ of a factor into a two-fold product space by endowing matrices with subscript indices,
\begin{equation}
A_{1}:=A\otimes \one,\quad A_{2}:=\one \otimes A.
\end{equation}
This definition trivially extends to more than two factors when needed.
Be careful not to confuse indices labeling auxiliary (matrix) spaces with indices
pertaining to basis matrices of individual tensor factors (cf. the definition \eqref{eqn:matrix_multiplication}).
For that particular reason we decided to make use of the superscript notation.

A sufficient condition for the eigenvalues of $L$ to be mutually Poisson-commuting is the existence of a matrix $r\in \End(\cal{V}\otimes \cal{V})$,
\begin{equation}
r=\sum_{\alpha,\beta}r_{\alpha \beta}e^{\alpha}\otimes e^{\beta},
\end{equation}
which obeys the algebraic equation
\begin{equation}
\{L_{1},L_{2}\}=[r_{12},L_{1}]-[r_{21},L_{2}].
\label{eqn:classical_r_matrix}
\end{equation}
Swapped auxiliary indices in $r_{21}$ indicate the operation of the permutation map over product spaces $\cal{M}\otimes \cal{M}$ on the matrix $r=r_{12}$. The object is the celebrated classical $r$-matrix which is restricted to obey the (constant) \textit{classical Yang-Baxter equation},
\begin{equation}
[r_{12},r_{13}]+[r_{12},r_{23}]+[r_{13},r_{23}]=0,
\label{eqn:CYBE}
\end{equation}
representing compatibility condition to the system \eqref{eqn:classical_r_matrix} when extended to multiple auxiliary spaces.

While the above construction makes perfect sense for certain integrable systems with few degrees of freedom (e.g. harmonic oscillator) it is evidently inadequate to deal with many-particle systems -- where we essentially expect to be able to generate a macroscopic number of conserved quantities -- unless we allow for e.g. infinite-dimensional matrix spaces.
For obvious reasons this would be totally impractical.
One possible resolution of this drawback can neatly achieved by introducing an \textit{analyticity degree} to our algebraic objects.
This is where complex analysis merges with algebra in a very natural way, as we are about to demonstrate now.

We consider the Lax pair \eqref{eqn:Lax_pair} and endow the matrices with a complex parameter $\lambda \in \CC$,
$L\rightarrow L(\lambda)$, $A\rightarrow A(\lambda)$,
\begin{equation}
\partial_{t}L(\lambda)=[L(\lambda),A(\lambda)].
\label{eqn:Lax_pair_spectral}
\end{equation}
If we succeeded in finding such a condition to hold regardless of the value of $\lambda$, then the traces $\tr{(L(\lambda)^{m})}$ will be conserved analytic functions in $\lambda$, and we may hope that subsequent formal series expansion in $\lambda$ could yield a sufficient number of charges with of desirable form. From more technical point of view the problem was being associated with a \textit{loop algebra} algebraic structure. The latter is essentially an infinite-dimensional algebra and can be thought of as a replacement of infinite-dimensional matrices, whereas using series expansion in matrix-valued elements would enable us to keep dimensionality of auxiliary spaces small.
Dimensions of auxiliary space $\cal{V}$ are commonly associated to fundamental representations of associated Lie-group symmetries.

% following Doikou's lectures
The parameter $\lambda$ is called the \textit{spectral parameter}.
With aid of \eqref{eqn:Lax_pair_spectral}, the evolution is represented by an isospectral flow
\begin{equation}
L(\lambda)\Psi=\Lambda \Psi,\qquad \det(L(\lambda)-\Lambda)=0,
\end{equation}
and quantities
\begin{equation}
H^{(j)}(\lambda)=\tr{(L^{j}(\lambda))},
\end{equation}
which may be simply called (higher) Hamiltonians, are Poisson-commuting among themselves,
\begin{equation}
\{H^{(j)}(\lambda),H^{(k)}(\mu)\}=0,\quad  \forall j,k\in \NaN,\quad \lambda,\mu \in \CC,
\end{equation}
courtesy of the generalized condition \eqref{eqn:classical_r_matrix} which now reads
\begin{equation}
\{L_{1}(\lambda),L_{2}(\mu)\}=[r_{12}(\lambda-\mu),L_{1}(\lambda)+L_{2}(\mu)].
\end{equation}
Here the \textit{parameter-dependent} classical $r$-matrix obeys \textit{non-constant} CYBE,
\begin{equation}
[r_{12}(\lambda-\mu),r_{13}(\lambda)+r_{23}(\mu)]+[r_{13}(\lambda),r_{23}(\mu)]=0.
\label{eqn:nonconstant_CYBE}
\end{equation}
Classical equations of motion, equivalent to equation \eqref{eqn:Lax_pair_spectral}, can be recovered by taking
\begin{equation}
\partial_{t}L(\mu)=\{H^{(j)}(\lambda),L(\mu)\}.
\end{equation}
Since the Lax pair formulation in some sense obstructs the notion of locality it is often desirable to switch rather to
\textit{zero-curvature representation}. The latter can viewed just as a reformulation of the Lax pair representation.
Let us spent few words on this quite deep insight. At this point we will begin introducing concepts which are almost directly
translate into quantum domain.

Introducing a $(U,V)$-pair of matrices, we require the \textit{flat connection} condition
\begin{equation}
\partial_t U+\partial_x V+[U,V]\equiv [\partial_x-U,\partial_t-V]=0.
\label{eqn:flat_connection}
\end{equation}
A connection (say, on a smooth two-dimensional manifold) is defined by covariant derivatives and prescribes a way of transporting
tangent vectors in a consistent manner. By specifying commutation of covariant derivatives
a notion of \textit{parallel transport} is defined. Commutation of covariant derivatives expresses vanishing of the curvature.
The matrices $U=U(x,t)$ and $V=(x,t)$, i.e. matrix-valued \textit{gauge potentials}\footnote{From gauge theory perspectie, flat connection condition \eqref{eqn:flat_connection} expressed the fact that $U,V$ gauges must be \textit{pure} gauges, i.e. potentials with corresponding vanishing field strengths.}, prescribe a \textit{parallel transport} of a
vector $\Psi=\Psi(x,t)$ (which might be loosely speaking called a wave-function) on a space-time manifold,
whereas the meaning of \eqref{eqn:flat_connection} is to ensure \textit{compatibility condition} for an associated \textit{auxiliary problem},
\begin{equation}
\partial_x\Psi=U(x,t;\lambda)\Psi,\qquad \partial_t\Psi=V(x,t;\lambda)\Psi.
\label{eqn:auxiliary_problem}
\end{equation}
Now we are equipped with a space-time structure and after choosing the initial condition $\Psi(x=0,t=0;\lambda)=1$ we may define a
transport along a path $\gamma$ via path-ordered exponential,
\begin{equation}
\Psi(x,t;\lambda)=\;\stackrel{\longleftarrow}{\exp}{\left(\int_{\gamma}U(x,t;\lambda)\dd x+V(x,t;\lambda)\dd t \right)}\Psi(0,0;\lambda).
\end{equation}
The result of such transport does \textit{not} depend on a chosen path $\gamma$ by virtue of zero-curvature property.
Additionally, for fixed time $t$ we specify the propagator between points $x$ and $y$,
\begin{equation}
T(x,y;\lambda):=\stackrel{\longleftarrow}{\exp}{\left(\int_{x}^{y}U(x,t;\lambda)\dd t\right)}.
\end{equation}
Assuming periodic boundary condition for a system of unit length, the transport over the entire interval is determined by the so-called \textit{monodromy matrix} $T(\lambda)$,
\begin{equation}
T(\lambda):=T(0,1;\lambda).
\end{equation}
The auxiliary linear problem can therefore be formulated by the following system of equations imposed on the transition matrix
(below omitting dependence on time and spectral parameter),
\begin{align}
\partial_x T(x,y)&=U(x)T(x,y),\nonumber \\
\partial_t T(x,y)&=V(x)T(x,y)-T(x,y)V(y).
\end{align}
The latter can be recognized as Lax equation, $\partial_{t}T(\lambda)=[V(\lambda),T(\lambda)]$,
with monodromy matrix appearing in the role of the Lax matrix.
Consequently, the trace of the powers of $T(\lambda)$ (which are independent of time) generate conserved charges,
\begin{equation}
H^{(j)}(\lambda)=\tr{(T^{j}(\lambda))}.
\end{equation}
In order to ensure that the charges are in mutual involution, it is sufficient to ensure that Poisson brackets of two monodromies
fulfill the \textit{fundamental Sklyanin relation},
\begin{equation}
\{T_{1}(\lambda),T_{2}(\mu)\}=[r_{12}(\lambda,\mu),T_{1}(\lambda)T_{2}(\mu)],\qquad r_{12}(\lambda,\mu)=-r_{21}(\mu,\lambda),
\end{equation}
with $r_{12}(\lambda,\mu)$ obeying CYBE. This can be viewed as a global version of the following \textit{ultra-local}
condition expressed by \textit{linear} Poisson brackets of the connection potential $U(x;\lambda)$,
\begin{equation}
\{U_{1}(x;\lambda),U_{2}(y;\mu)\}=[r_{12}(\lambda,\mu),U_{1}(x;\lambda)+U_{2}(y;\mu)]\delta(x-y).
\end{equation}
\\
All considerations so far have been based on \textit{classical field theory}. In the light of zero-curvature formulation,
passage to classical \textit{lattice} theories is rather elementary.
The discrete version of zero-curvature condition (at adjacent lattice sites $k$ and $k+1$) becomes
\begin{equation}
\dot{L}_{k}=A_{k+1}L_{k}-L_{k}A_{k},
\label{eqn:discrete_zero_curvature}
\end{equation}
while the auxiliary linear problem becomes
\begin{equation}
\Psi_{k+1}=L_{k}\Psi_{k},\qquad \dot{\Psi}_{k}=A_{k}\Psi_{k}.
\end{equation}
Furthermore, by carrying out discretization of the unit interval into $n$ equidistant pieces of length $\Delta$, the ultra-local Lax operator, satisfying $\{L_{k}(\lambda),L_{l}(\mu)\}=0$ for $k\neq l$, is just a coarse-grained propagator $U(x,t;\lambda)$,
\begin{equation}
L_{k}(\lambda)=\one+\int_{\Delta_k}U(x,t;\lambda)\dd x.
\label{eqn:discrete_Lax}
\end{equation}
The fundamental (Sklyanin) Poisson bracket acquires position indices
\begin{equation}
\{L_{1,k}(\lambda),L_{2,l}(\mu)\}=[r_{12}(\lambda-\mu),L_{1,k}(\lambda)L_{2,l}(\mu)]\delta_{k,l},
\label{eqn:fundamental_Poisson_bracket}
\end{equation}
and is valid up to correction $\cal{O}(\Delta^2)$.
The lattice version of monodromy matrix is then obtained via spatially-ordered product of local Lax operators.
Denoting the auxiliary index by $a$ to prevent confusion with physical lattice indices, we have
\begin{equation}
T(\lambda)=L_{a,1}(\lambda)L_{a,2}(\lambda)\cdots L_{a,n}(\lambda)=\prod_{k=1}^{\stackrel{n}{\longrightarrow}}L_{a,k}(\lambda),
\end{equation}
which by virtue of Leibniz rule again obeys the fundamental relation
\begin{equation}
\{T_{1}(\lambda),T_{2}(\mu)\}=[r_{12}(\lambda-\mu),T_{1}(\lambda)T_{2}(\mu)],
\label{eqn:classical_RTT}
\end{equation}
with global correction of order $\cal{O}(N\Delta^2)$. For proper lattice regularization with the UV cutoff $\Delta$ we should now simply forget the first-order correction terms in \eqref{eqn:fundamental_Poisson_bracket},\eqref{eqn:classical_RTT}.
After we ultimately \textit{trace out} the auxiliary space, we define an operator called the \textit{transfer operator},
\begin{equation}
t(\lambda):=\tr_{a}{(T(\lambda))}.
\label{eqn:transfer_classical}
\end{equation}
We readily observe (using \eqref{eqn:classical_RTT}) that operators \eqref{eqn:transfer_classical} mutually commute at different values of spectral parameters,
\begin{equation}
\{t(\lambda),t(\mu)\}=0,
\end{equation}
implying that $t(\lambda)$ can serve as generators for sets of conserved charges in involution.

The bottom line is that a naive discretization of Poisson-commuting functions from integrable field theories \textit{does not} preserve integrability, but a suitable lattice regularization procedure applied directly at the level of generating algebraic conditions is needed instead.

\section{Quantum Inverse Scattering Method}

Now we step into the quantum world with aim to adapt the concept of integrability (in Liouville--Arnol'd sense) and techniques of CISM
to quantum dynamical systems. We shall restrict our discussion only to quantum lattice theories.

Let us first briefly discuss an intuitive but naive proposition of associating integrability at the quantum level simply with existence of a maximal set of operators which are preserved by time-evolution. Arguably, this idea does not make much sense because for any many-body Hamiltonian living in a finite-dimensional Hilbert space one is always capable of constructing such a set by simply diagonalizing Hamiltonians and using all rank-$1$ orthogonal projectors onto corresponding eigenspaces. Such operators would
manifestly, however trivially, commute among each other. Although the proposal is perfectly consistent with the Liouville--Arnol'd classical definition, such operators will also be generically \textit{non-local}, and what is
even worse, they would not even be ``structurally stable'' with respect to varying the system size.
What we are really interested in instead are
\textit{extensive} operators with well-defined \textit{local} structure, namely operators which can be represented as uniform (spatially-homogeneous) sums of local densities. In order to find objects with such a property one has to be able to find an associated auxiliary problem in order to construct an appropriate generating operator and
take advantage of an underlying locality principle.

One possible way to achieve the goal is to stick with the CISM framework for classical lattice theories and try to adjust it to quantum entities. We are in fact looking for an algebraic construction which reduces to the classical one after identifying an appropriate semi-classical limit. To accomplish the task we need to perform a ``quantization'' of (i) quadratic Poisson bracket \eqref{eqn:fundamental_Poisson_bracket} and (ii) CYBE compatibility condition \eqref{eqn:nonconstant_CYBE}.

Consider one-dimensional quantum system consisting of $n$ copies of a local Hilbert space $\frak{h}\cong \CC^{d}$.
The entire many-body Hilbert space is constructed as $n$-fold tensor product space
\begin{equation}
\frak{H}_{s}=\frak{h}\otimes \cdots \otimes \frak{h}=:\frak{h}^{\otimes n},
\end{equation}
which now, at least conceptually, replaces a Poisson manifold.
By introducing an auxiliary (matrix) space $\frak{h}_{a}$, we define a \textit{quantum Lax matrix}
$\bb{L}_{k}(\lambda)\in \End(\frak{H}_{s})$ as an operator with $\frak{H}_{a}$-valued matrix elements $\{\bb{L}^{\alpha \beta}(\lambda)\}$ which operates non-identically only in the $k$-th copy of $\frak{h}$,
\begin{equation}
\bb{L}_{k}(\lambda)\equiv \bb{L}_{a,k}(\lambda)=\sum_{\alpha,\beta}e^{\alpha \beta}_{k}\otimes \bb{L}^{\beta \alpha}(\lambda).
\label{eqn:quantum_Lax_matrix}
\end{equation}
We adopt a convention that boldface symbols designate operators which are not scalars in auxiliary spaces $\frak{H}_{a}$. Note that swapped indices in elements $\bb{L}^{\alpha \beta}$ is just a matter of convention.
A local fundamental commutation relation is now imposed over the \textit{two-fold} auxiliary
space $\frak{H}_{a}\otimes \frak{H}_{a}$ and becomes
\begin{equation}
\bb{R}_{a_{1}a_{2}}(\lambda-\mu)\bb{L}_{a_{1}k}(\lambda)\bb{L}_{a_{2}k}(\mu)=\bb{L}_{a_{2}k}(\mu)\bb{L}_{a_{1}k}(\lambda)\bb{R}_{a_{1}a_{2}}(\lambda-\mu),
\label{eqn:RLL_relation}
\end{equation}
for $k\in\{1,2,\ldots,n\}$. We shall refer to it as the \textit{RLL relation}. The $\CC$-valued matrix $\bb{R}_{a_{1}a_{2}}\in \End{(\frak{H}_{a}\otimes \frak{H}_{a})}$ is called the (quantum) $R$-matrix. Its purpose is to intertwine (exchange) spectral parameters in a tensor product of two Lax operators.

It is not difficult to understand the meaning of the RLL relation \eqref{eqn:RLL_relation}.
Suppose that the quantum $R$-matrix admits a semi-classical expansion with respect some ``small'' parameter $\hbar$ (call it
an effective Planck's constant), namely that
\begin{equation}
\bb{R}_{a_{1}a_{2}}=\bb{R}_{a_{1}a_{2}}(\lambda,\hbar)=\one_{a_{1}a_{2}} + \ii\hbar\;\bb{r}_{a_{1}a_{2}}(\lambda)+\cal{O}(\hbar^2).
\label{eqn:semiclassical_expansion}
\end{equation}
Then, by plugging this expansion into condition \eqref{eqn:RLL_relation}, one resolves at the first order in $\hbar$,
\begin{align}
[\bb{L}_{a_{1}}(\lambda),\bb{L}_{a_{2}}(\mu)]&+\ii\hbar(\bb{r}_{a_{1}a_{2}}(\lambda-\mu)\bb{L}_{a_{1}}(\lambda)\bb{L}_{a_{2}}(\mu)\nonumber \\
&-\bb{L}_{a_{2}}(\mu)\bb{L}_{a_{1}}(\lambda)\bb{r}_{a_{1}a_{2}}(\lambda-\mu))+\cal{O}(\hbar^2)=0.
\end{align}
In the spirit of canonical quantization, i.e. in accordance with correspondence principle, commutators get replaced by Poisson bracket,
\begin{equation}
\frac{\ii}{\hbar}[\bullet,\bullet]\rightarrow \{\bullet,\bullet\},
\label{eqn:correspondence_principle}
\end{equation}
and consequently the amount of non-commutativity of the elements from $\bb{L}(\lambda)$ is only of order $\hbar$,
$\bb{L}_{a_{2}}(\mu)\bb{L}_{a_{1}}(\lambda)=\bb{L}_{a_{1}}(\lambda)\bb{L}_{a_{2}}(\mu)+\cal{O}(\hbar)$, which leads us to
\begin{equation}
\{\bb{L}_{a_{1}}(\lambda),\bb{L}_{a_{2}}(\mu)\}=[\bb{r}_{a_{1}a_{2}}(\lambda-\mu),\bb{L}_{a_{1}}(\lambda)\bb{L}_{a_{2}}(\mu)].
\end{equation}
We recovered the semi-classical counterpart, as given by \eqref{eqn:nonconstant_CYBE}.

Of course, we still need a suitable quantum compatibility-type of condition for \eqref{eqn:RLL_relation}, correctly reducing in the semi-classical limit as $\hbar \rightarrow 0$ to the CYBE \eqref{eqn:nonconstant_CYBE}. The sought object is the worshipped quantum Yang-Baxter equation, which sets a consistency condition for the quantum $R$-matrix over three-fold product space by
means of two-body objects $\bb{R}(\lambda,\mu)\in \End(\frak{H}_{a}\otimes \frak{H}_{a})$, reading
\begin{equation}
\boxed{\bb{R}_{a_{1}a_{2}}(\lambda,\mu)\bb{R}_{a_{1}a_{3}}(\lambda,\eta)\bb{R}_{a_{2}a_{3}}(\mu,\eta)=
\bb{R}_{a_{2}a_{3}}(\mu,\eta)\bb{R}_{a_{1}a_{3}}(\lambda,\eta)\bb{R}_{a_{1}a_{2}}(\lambda,\mu).}
\label{eqn:YBE} 
\end{equation}
The quantum YBE apparently resembles the form of the RLL equation \eqref{eqn:RLL_relation}.
We can say, quite indisputably, that this is the most important equation in the context of the theory of integrability.
The main purpose of it, as far as physical aspects of the problem are being addressed, is to ensure that the intertwining property (encapsulated locally by the equation \eqref{eqn:RLL_relation}) extends consistently over the entire many-body space.
Thus we can in turn write down the \textit{quantum RTT relation},
\begin{equation}
\bb{R}_{a_{1}a_{2}}(\lambda,\mu)\bb{T}_{a_{1}}(\lambda)\bb{T}_{a_{2}}(\mu)=\bb{T}_{a_{2}}(\mu)\bb{T}_{a_{1}}(\lambda)\bb{R}_{a_{1}a_{2}}(\lambda,\mu).
\label{eqn:RTT_equation}
\end{equation}
The reader should bare in mind that no ordering ambiguities can arise due to distinct sequences (braidings) of local intertwiners.
The \textit{quantum monodromy matrix} $\bb{T}_{a}=\bb{T}_{a}(\lambda)$ is defined in analogy to its classical counterpart with the discrete space structure,
\begin{equation}
\bb{T}_{a}(\lambda):=\bb{L}_{1}(\lambda)\cdots \bb{L}_{n}(\lambda)=\prod_{j=1}^{\stackrel{n}{\longrightarrow}}\bb{L}_{j}(\lambda),
\end{equation}
except for that now its elements are quantum (non-commuting) operators over the physical Hilbert space $\frak{H}_{s}$.
In order to arrive at equation \eqref{eqn:RTT_equation} we merely used the local property \eqref{eqn:RLL_relation} and 
the fact that Lax operators on different sites (equipped with distinct pair of indices) mutually commute and can thus be
rearranged.

It is worth remarking here that, unlike in case of the RTT equation \eqref{eqn:RTT_equation}, which is imposed over a product of auxiliary spaces $\frak{H}_{a_{1}}\otimes \frak{H}_{a_{2}}$, the Yang-Baxter equation involves extended objects of type $\bb{R}_{a_{j},a_{k}}\in \End(\frak{H}^{\otimes 3})$ ($j,k\in \{1,2,3\}$). Subscript indices indicate different possible
embeddings of the $R$-matrix into three copies of auxiliary spaces $\frak{H}_{a}$. Generally, the $R$-matrix depends on the spectral parameters, but it can be shown that objects pertaining to Lie algebras and their quantizations enjoy the \textit{difference property} $\bb{R}(\lambda,\mu)=\bb{R}(\lambda-\mu)$.

Equation \eqref{eqn:YBE} is a \textit{braid-associativity condition}, which is best expressed in the \textit{braided form}
\footnote{Operator $\PBR$ is a braid group generator, often utilized in e.g. construction of knot invariants~\cite{KauffmanBook}.}
of the $R$-matrix, defined as left-permuted YBE $R$-matrix,
\begin{equation}
\PBR_{a_{1}a_{2}}(\lambda):=\bb{P}_{a_{1}a_{2}}\bb{R}_{a_{1}a_{2}}(\lambda),
\end{equation}
which satisfies
\begin{equation}
\PBR_{a_{1}a_{2}}(\lambda,\mu)\PBR_{a_{2}a_{3}}(\mu,\eta)\PBR_{a_{1}a_{2}}(\lambda,\mu)=
\PBR_{a_{2}a_{3}}(\mu,\eta)\PBR_{a_{1}a_{2}}(\lambda,\mu)\PBR_{a_{2}a_{3}}(\mu,\eta).
\label{eqn:braid_YBE}
\end{equation}
Accordingly, the RTT equation is rewritten in a way which makes intertwining property even more transparent,
\begin{equation}
\bb{T}_{a_{1}}(\lambda)\bb{T}_{a_{2}}(\mu)=\PBR_{a_{1}a_{2}}^{-1}(\lambda,\mu)\bb{T}_{a_{1}}(\mu)\bb{T}_{a_{2}}(\lambda)\PBR_{a_{1}a_{2}}(\lambda,\mu),
\label{eqn:PRTT}
\end{equation}
where invertibility of $\PBR(\lambda,\mu)$ was used.

Yang--Baxter condition \eqref{eqn:YBE} can be in some sense regarded as a generalization of a permutation group, observing that local transpositions of adjacent tensorands also obey analogous conditions within three-fold product spaces
$\frak{H}_{a_{1}}\otimes \frak{H}_{a_{2}}\otimes \frak{H}_{a_{3}}$, that is $\bb{P}_{a_{1}a_{2}}\bb{P}_{a_{1}a_{3}}\bb{P}_{a_{2}a_{3}}=\bb{P}_{a_{2}a_{3}}\bb{P}_{a_{1}a_{3}}\bb{P}_{a_{1}a_{2}}$,
expressing a well-known fact that there are two equivalent
ways of reversing the order to three objects using three subsequent transpositions.

A crucial property of the quantum monodromy operator $\bb{T}(\lambda)$ is to provide us with the generating operator $\tau(\lambda)$
for mutually-commuting constants of motion. The latter is the essence of the property \eqref{eqn:RTT_equation}.
By assuming periodicity in the physical space, we introduce the \textit{quantum transfer operator} $\tau(\lambda)\in \End(\frak{H}_s)$
by taking partial trace over $\frak{H}_{a}$,
\begin{equation}
\tau(\lambda)=\tr_a \bb{T}_{a}(\lambda).
\label{eqn:QTM}
\end{equation}
Consequently, a quick calculation using the condition \eqref{eqn:PRTT}
\begin{align}
\tau(\lambda)\tau(\mu)&=\tr_{a_{1}a_{2}}\left(\bb{T}_{a_{1}}(\lambda)\bb{T}_{a_{2}}(\mu)\right)=
\tr_{a_{1}a_{2}}\left(R_{a_{1}a_{2}}^{-1}(\lambda,\mu)\bb{T}_{a_{1}}(\lambda)\bb{T}_{a_{2}}(\mu)R_{a_{1}a_{2}}(\lambda,\mu)\right)\nonumber \\
&=\tr_{a_{1}a_{2}}\left(\bb{T}_{a_{1}}(\mu)\bb{T}_{a_{2}}(\lambda)\right)=\tau(\mu)\tau(\lambda),
\end{align}
shows that $\tau(\lambda)$ are really mutually commuting at different values of spectral parameters and are thus viable candidates for generating conserved charges. Let us now demonstrate these principles on the important (textbook) example of a quantum spin-$1/2$ chain.

\begin{exam}{Heisenberg spin chain.}\\
It is perhaps beneficial to the reader to illustrate the formalism on the \textit{isotropic} (XXX) Heisenberg spin-$1/2$ chain of length $n$.
The latter can be regarded as a prototype integrable model, since it is associated with the simplest solution of the Yang--Baxter equation.

The Hilbert space is a $n$-fold product of $s=1/2$ local quantum spaces, $\frak{H}_{s}\cong (\CC^{2})^{\otimes n}$.
We employ the standard set of Pauli matrices,
\begin{equation}
\sigma^{x}=
\begin{pmatrix}
0 & 1 \cr
1 & 0
\end{pmatrix},\quad
\sigma^{y}=
\begin{pmatrix}
0 & -\ii \cr
\ii & 0
\end{pmatrix},\quad
\sigma^{z}=
\begin{pmatrix}
1 & 0 \cr
0 & -1
\end{pmatrix},
\label{eqn:Pauli_matrices} 
\end{equation}
which together with $\sigma^{0}\equiv \one_{2}$ constitute an orthogonal basis for $\End(\CC^{2})$. We typically prefer to use
spin creation/destruction operators, defined as
\begin{equation}
\sigma^{+}:=\frac{1}{2}(\sigma^{x}+\ii \sigma^{y}),\qquad \sigma^{-}:=\frac{1}{2}(\sigma^{x}-\ii \sigma^{y}).
\label{eqn:Pauli_raising}
\end{equation}
The isotropic Heisenberg Hamiltonian assumes the form
\begin{align}
\label{eqn:Heisenberg_isotropic}
H^{\rm{\rm{XXX}}}&=\sum_{j=1}^{n}h^{\rm{\rm{XXX}}}_{j,j+1}\nonumber \\
h^{\rm{XXX}}_{j,j+1}&=\vec{\sigma}_{j}\cdot \vec{\sigma}_{j+1}=2(\sigma^{+}_{j}\sigma^{-}_{j+1}+\sigma^{-}_{j}\sigma^{+}_{j+1})+\sigma^{z}_{j}\sigma^{z}_{j+1},
\end{align}
where we used standard embeddings of on-site spin operators
\begin{equation}
\sigma^{\alpha}_{j}\equiv \underbrace{\sigma^{0}\otimes \sigma^{0}\otimes \cdots \otimes}_{(j-1)\;\rm{times}} \sigma^{\alpha}\otimes \cdots \otimes \sigma^{0}=
\one_{2^{j-1}}\otimes \sigma^{\alpha}\otimes \one_{2^{n-j}}.
\label{eqn:spin_varibles}
\end{equation}
To acquire cyclic invariance we moreover impose periodic boundaries, setting $\sigma^{\alpha}_{n+1}\equiv \sigma^{\alpha}_{1}$.
The Hamiltonian \eqref{eqn:Heisenberg_isotropic} is an \textit{extensive} operator, i.e. it can be given as a homogeneous (uniform) sum of local densities $h^{\rm{XXX}}_{j,j+1}$. Essentially, the interaction density is just the permutation operator
$P\in \End(\CC^{2}\otimes \CC^{2})$ over two adjacent spin spaces,
\begin{equation}
h^{\rm{XXX}}_{j,j+1}=2P_{j,j+1}-\one,
\label{eqn:permutator_form}
\end{equation}
up to irrelevant overall scale and additive constant.

Next task is to guess a suitable form of the corresponding RLL equation. Although it may not be apriori obvious why the interaction of the permutation form is playing the main role here, we might ``blindly'' use it as a building piece for constructing both the $R$-matrix and the $L$-operator. At any rate, we are well aware that Hamiltonian \eqref{eqn:Heisenberg_isotropic} has to emerge in some way from the quantum transfer matrix which is about to be constructed.
Therefore, for the $R$-matrix we might propose the simplest \textit{analytic} extension
of the permutation $P$ which is the $4\times 4$ $R$-matrix $R(\lambda)\in \End(\CC^{2}\otimes \CC^2)$ of the form
\begin{equation}
R(\lambda)=\lambda \one+P.
\label{eqn:R-matrix_XXX}
\end{equation}
The Lax operator $L$ must on the other hand involve local physical operators (in particular case the Pauli variables \eqref{eqn:Pauli_matrices}), as its matrix elements. Because dimensionality of the matrix space is already fixed by our choice of the $R$-matrix \eqref{eqn:R-matrix_XXX}, i.e. we have $\frak{H}_{a}\cong \CC^2$, it makes sense to think of $\frak{H}_{a}$ as another (virtual) spin-$1/2$ space. In this regard, the quantum spin space and auxiliary spaces are equivalent and hence the Lax operator is an object \textit{isomorphic} to the $R$-matrix. Particularly, after identifying the inner space (which we call by convention the quantum space) in terms of the spin variables $\{s^{\alpha};\alpha\in\{+,-,z\}\}$, and shifting the spectral parameter in accordance with prescription $R(\lambda-\frac{1}{2})\rightarrow \bb{L}(\lambda)$, we find
\begin{equation}
\bb{L}(\lambda)=\lambda\one +\boldsymbol{\sigma}^{+}\otimes s^{-}+\boldsymbol{\sigma}^{-}\otimes s^{+}+\boldsymbol{\sigma}^{z}\otimes s^{z}=\lambda\one+
\begin{pmatrix}
s^{z} & s^{-} \cr
s^{+} & -s^{z}
\end{pmatrix}.
\label{eqn:Lax_XXX}
\end{equation}
We introduced $\frak{sl}_{2}$ \textit{fundamental} spins $s^{\pm}=\sigma^{\pm}$ and $s^{z}=\half \sigma^{z}$ which obey canonical commutation relations
\begin{equation}
[s^{+},s^{-}]=2s^{z},\qquad [s^{z},s^{\pm}]=\pm s^{\pm}.
\end{equation}
RLL equation is thus automatically satisfied by virtue of \eqref{eqn:R-matrix_XXX} and obeys the quantum Yang-Baxter
equation of the \textit{difference form},
\begin{equation}
R_{12}(\lambda-\mu)R_{13}(\lambda)R_{23}(\mu)=R_{23}(\mu)R_{13}(\lambda)R_{12}(\lambda-\mu).
\label{eqn:YBE_difference}
\end{equation}
Wherefore, the monodromy matrix $\bb{T}_a(\lambda)=\bb{L}_{a,1}(\lambda)\cdots \bb{L}_{a,n}(\lambda)$, which can be represented in a $2\times 2$ form with $\End(\frak{H}_{s})$-valued blocks
\begin{equation}
\bb{T}_a(\lambda)=
\begin{pmatrix}
A(\lambda) & B(\lambda) \cr
C(\lambda) & D(\lambda)
\label{eqn:monodromy_block}
\end{pmatrix},
\end{equation}
yields the quantum transfer operator $\tau(\lambda)=\tr_{a}\bb{T}_{a}(\lambda)=A(\lambda)+D(\lambda)$ which exhibits the commutative property $[\tau(\lambda),\tau(\mu)]=0$. The final step amounts to show that $\tau(\lambda)$ can be facilitated to generate \textit{local} charges. Despite this seems at first glance like a simple task, one can quickly check that a direct series expansion
\begin{equation}
\tau(\lambda)=\sum_{k=0}^{n}\lambda^k \tau^{(k)},
\end{equation}
will not get the job done because such $\{\tau^{(k)}\}$ would \textit{not} automatically acquire local structure.
Needless to say, however, we may have considered just \textit{any} analytic function of $\tau(\lambda)$ instead of
$\tau(\lambda)$ itself, because by virtue of the commutation of $\tau(\lambda)$ no ordering ambiguities ever occur.

We argue that the the sought generating operator is given by $\log \tau(\lambda)$.
Justification of this fact is based on existence of the lattice (cyclic) shift operator,
which is obtained by evaluating Lax operator at the so-called \textit{shift point} $\lambda_0=\half$, namely
$L(\lambda=\lambda_0)=R(\lambda=0)\sim P$. Thus we write
\begin{equation}
\tau(\lambda_0)=P_{12}P_{23}\cdots P_{n-1,n}=U_{\rm{cyc}} \equiv \exp\left(-\ii H^{(1)}\right),
\end{equation}
where $H^{(1)}$ designates the \textit{lattice momentum} operator. This property is enough to show that
\begin{equation}
\frac{\partial }{\partial \lambda}\log \tau(\lambda)|_{\lambda=\lambda_0}=
\left[\tau^{-1}(\lambda)\partial_{\lambda}\tau(\lambda)\right]_{\lambda=\lambda_0}\sim
\sum_{j=1}^{n}\left[\partial_{\lambda}\PR_{j,j+1}\right]_{\lambda=0}.
\end{equation}
By observing that the $\lambda$-derivative of the $\PR$-matrix is just the interaction density,
\begin{equation}
h_{j,j+1}\sim \left[\partial_{\lambda}\PR(\lambda)\right]_{\lambda=0},
\end{equation}
we readily confirm that
\begin{equation}
H^{\rm{XXX}}=H^{(2)}\sim \left[\partial_{\lambda}\log \tau(\lambda)\right]_{\lambda=\lambda_{0}}.
\end{equation}
The remaining local charges are simply given by higher logarithmic derivatives,
\begin{equation}
H^{(k)}\sim \left[(\partial/\partial \lambda)^{k} \log \tau(\lambda)\right]_{\lambda=\lambda_{0}},\qquad k\in \{3,\ldots,n\}.
\end{equation}
An intuitive explanation behind the emergence of locality goes basically as follows: the $k$-th derivative of the transfer matrix $\tau(\lambda)$
introduces $k$ defects into strings or ordered permutation operators, whereas subsequent multiplication by the inverse cyclic permutations
annihilates everything into the identity, except for terms where defects appear on adjacent sites.
We do not provide further details of the proof here (cf. ~\cite{Faddeev1,Faddeev2,DoikouLectures}).
\end{exam}

\subsection{Algebraic Bethe Ansatz}
One of main (practical) benefits of the algebraic formulation is a possibility of using information encoded in the quantum monodromy operator to find the common eigenspectrum for commuting Hamiltonians $\{H^{(k)}\}$. This feature is a clear indication that the monodromy operator is a more fundamental entity than the transfer matrix.

The task of diagonalizing the antiferromagnetic Heisenberg spin-$1/2$ chain has already been accomplished already back in 1931 by influential \textit{coordinate Bethe Ansatz} (CBA) method, which can be regarded (very roughly speaking) as a quantum version (i.e. Hilbert space formulation) of canonical separation of variables.
CBA proposes a suitable one-dimensional plane-wave ansatz which builds on elementary free excitations with well-defined momenta which
udergo elastic scattering after experiencing point-like collisions. The idea is to start with the highest-energy state, which is for the antiferromagnetic interaction obviously the ferromagnetic state
\begin{equation}
\ket{\Omega}:=
\begin{pmatrix}
1 \cr
0
\end{pmatrix}^{\otimes n}\equiv \ket{\uparrow}\otimes \cdots \otimes \ket{\uparrow},
\label{eqn:Bethe_vacuum}
\end{equation}
representing a product state of all spins pointing upwards\footnote{Equivalently, due to spin-reversal symmetry, we could have taken the state with all spins pointing downwards.}. One should think of it as a sort of vacuum state, remembering that $H\ket{\Omega}=0$.
Elementary particle excitations with respect to $\ket{\Omega}$ are \textit{one-particle} states created by superpositions of states
$\{\ket{\phi_{k}}\}$ with down-flipped spin at position $k$,
\begin{equation}
\ket{\psi_{1}}=\sum_{k}a_k \ket{\phi_{k}}.
\end{equation}
By imposing the eigenvector condition $H\ket{\psi_{1}}=E_{1}\ket{\psi_{1}}$ we find a difference equation for the amplitudes in the form of
\begin{equation}
a_{k-1}-2a_{k}+a_{k+1}=2E_{1}a_{k},\qquad a_{1}\equiv a_{n+1},
\label{eqn:difference_amplitudes}
\end{equation}
whence $a_k$ is just a phase factor constrained to the root of unity due to cyclic boundary condition, i.e.
\begin{equation}
a_{k}=\exp{(\ii kp)},\qquad p=2\pi(m/n),\quad m\in \{0,1,\ldots n-1\}.
\end{equation}
One-particle dispersion therefore reads
\begin{equation}
E_{1}=E_{1}(p)=\cos{(p)}-1.
\label{eqn:one_particle_dispersion}
\end{equation}
CBA is essentially an ansatz which tells what multi-particle states should look like.
First notice that in Heisenberg quantum spin chain the total magnetization
\begin{equation}
M=\sum_{j}\sigma^{z}_{j},
\end{equation}
is a globally conserved quantity, i.e. $[H,M]=0$, so the eigenstates can be organized into multiplets of states with
respect to a fixed number of, say down-flipped spins $N$. Two-particles states, being expanded in the basis of states $\{\ket{\phi_{k_1,k_2}}\}$ with two flipped spins, should then be of the form
\begin{equation}
\ket{\psi_2(p_{1},p_{2})}=\sum_{k_{1}\le k_{2}}\left[e^{\ii(k_{1}p_{1}+k_{2}p_{2})}+A\;e^{\ii(k_{1}p_{2}+k_{2}p_{1})}\right]\ket{\phi_{k_{1},k_{2}}}.
\end{equation}
This is an ordinary plane-wave ansatz with addition of scattering amplitudes $A_{\pm}\in \CC$ pertaining to particle collisions.
By rewriting it in a more suggestive form,
\begin{equation}
\ket{\psi_2(p_{1},p_{2})}=\left[\sum_{k_{1}\ge k_{2}}e^{\ii(p_{1}k_{1}+
p_{2}k_{2})}+A(p_{1},p_{2})\sum_{k_{2}\ge k_{1}}e^{\ii(p_{1}k_{1}+p_{2}k_{2})}\right]\ket{\phi_{k_{1},k_{2}}},
\end{equation}
we learn that particle excitations (magnons) simply accumulate a phase when then ``pass by'' each other.
Clearly, when positions of magnons are farther than one site apart, we find them
to behave as free particles and consequently we should find the eigenenergy being the sum of individual one-particle dispersions,
$E_{2}(p_1,p_2)=E_1(p_1)+E_1(p_2)$. But there are also terms when magnons appear at adjacent sites. After tedious (however straightforward) calculations one can check that the eigenvector condition is fulfilled when the scattering amplitude (which is a pure phase $A=e^{\ii \theta_{12}}$) satisfies the condition
\begin{equation}
A(p_{1},p_{2})=\frac{\cot{(p_{1}/2)}-\cot{(p_{2}/2)}-2\ii}{\cot{(p_{1}/2)}-\cot{(p_{2}/2)}+2\ii},\quad A(p_{1},p_{2})A(p_{2},p_{1})=1.
\end{equation}
It should be stressed however, that not any particle momenta $p_{1,2}$ are allowed. We must not forget to respect cyclic invariance. The latter provides a quantization condition for momenta, originating from moving one particle around another back to the same spot (thereby only contributing a phase factor), in the form of \textit{Bethe equations},
\begin{equation}
e^{\ii p_{1}n}=A(p_{2},p_{1}),\qquad e^{\ii p_{2}n}=A(p_{1},p_{2}).
\label{eqn:Bethe_boundary}
\end{equation}
By employing rapidity variables $u=2\cot(p/2)$ we recast the scattering amplitude $A$ as a rational function
\begin{equation}
A(u_{1},u_{2})=\frac{u_{1}-u_{2}-\ii}{u_{1}-u_{2}+\ii},
\end{equation}
resulting subsequently in polynomial equations for rapidities of the type
\begin{equation}
\left(\frac{u_{1}+\ii/2}{u_{1}-\ii/2}\right)^{n}=\frac{u_{1}-u_{2}-\ii}{u_{1}-u_{2}+\ii},\quad
\left(\frac{u_{2}+\ii/2}{u_{2}-\ii/2}\right)^{n}=\frac{u_{2}-u_{1}-\ii}{u_{2}-u_{1}+\ii},
\end{equation}
whose solutions determine the \textit{valid} discrete values of particle momenta.
At this stage one can in principle write down a $N$-particle ansatz of the same type,
\begin{equation}
\ket{\psi_{n}}=\sum_{k_{1},\ldots,k_{N}}a_{k_{1},\ldots,k_{N}}\ket{\phi_{k_{1},\ldots,k_{N}}},\qquad
a_{k_{1},\ldots,k_{N}}=\sum_{\underline{\sigma} \in \cal{S}_{N}}A_{\sigma}\exp{\left(\ii\sum_{i}p_{\sigma_i}k_{i}\right)},
\end{equation}
where the summation goes over all elements $\underline{\sigma}$ of the permutation group $\cal{S}_{N}$. Then one may proceed along the same lines. After lengthy and tiresome calculations the end result imposes a requirement on each rapidity variable $u_{i}$ in the form of
\begin{equation}
\left(\frac{u_{i}+\ii/2}{u_{i}-\ii/2}\right)^{n}=\prod_{j\neq i}\frac{u_{i}-u_{j}+\ii}{u_{i}-u_{j}-\ii},\quad i=1,2,\ldots,N.
\label{eqn:Bethe_equations}
\end{equation}
It is worth explicitly remarking here that one-particle dispersion and magnon scattering amplitudes are \textit{everything} we need to know to construct the entire set of multi-particle eigenstates. This can be attributed to a simple fact that the full (many-body) scattering matrix factorizes entirely in terms of two-particle scattering matrices $A(p_{i},p_{j})$, entering into Bethe equations via periodicity condition in the form of
\begin{equation}
e^{\ii p_{i}n}=\prod_{j\neq i}A(p_{j},p_{i}).
\end{equation}
Physically speaking, the essence of the Yang-Baxter integrability can be stated as follows:
by specifying particle types and telling how a pair of particles scatter is equivalent of knowing the whole theory.\\

Let us now take a different route and rather demonstrate how to use previously introduced objects of QISM to
accomplish the task of diagonalizing an integrable model via algebraic procedure, known as the Algebraic Bethe Ansatz.
The idea rests on few additional properties which are going to be explained very shortly.
First let us parametrize our rational $6$-vertex $R$-matrix as
\begin{equation}
R_{a_{1}a_{2}}(\lambda)=\lambda \one +\eta P_{a_{1}a_{2}}=
\begin{pmatrix}
a(\lambda) & 0 & 0 & 0 \cr
0 & b(\lambda) & c(\lambda) & 0 \cr
0 & c(\lambda) & b(\lambda) & 0 \cr
0 &0 &0 & a(\lambda)
\end{pmatrix},
\label{eqn:R_six_vertex}
\end{equation}
with weights
\begin{equation}
a(\lambda)=w(\lambda+\eta),\quad b(\lambda)=w(\lambda),\quad c(\lambda)=w(\eta),\quad w(\lambda)=\lambda.
\end{equation}
We choose \footnote{This is slightly different gauge (involving parameter $\eta \in \CC$) with respect to the $R$-matrix \eqref{eqn:R-matrix_XXX}, which is inessential for the final income.} $\eta=\ii$.
Considering the block form of the monodromy matrix \eqref{eqn:monodromy_block} we observe that (i) the ferromagnetic state (called
the Bethe vacuum) $\ket{\Omega}$ is an eigenstate of the transfer matrix $\tau(\lambda)=A(\lambda)+D(\lambda)$ and that the element $C(\lambda)$ destructs $\ket{\Omega}$,
\begin{equation}
\bb{T}_{a}(\lambda)\ket{\Omega}=
\begin{pmatrix}
\alpha(\lambda)^n & * \cr
0 & \delta(\lambda)^n
\end{pmatrix}\ket{\Omega},\qquad \alpha(\lambda)=w(\lambda+\eta/2),\quad \delta(\lambda)=w(\lambda-\eta/2).
\end{equation}
To see how this happens, one has to inspect the form of the blocks of $\bb{T}_{a}(\lambda)$ by looking at the string (sequence) of
Lax operators,
\begin{equation}
\bb{T}_{a}(\lambda)=
\begin{pmatrix}
L_{1}^{11} & L_{1}^{21} \cr
L_{1}^{12} & L_{1}^{22}
\end{pmatrix}\cdot
\begin{pmatrix}
L_{2}^{11} & L_{2}^{21} \cr
L_{2}^{12} & L_{2}^{22}
\end{pmatrix}\cdots
\begin{pmatrix}
L_{n}^{11} & L_{n}^{21} \cr
L_{n}^{12} & L_{n}^{22}
\end{pmatrix},
\end{equation}
and identifying the elements with physical (Pauli) spin variables
\begin{equation}
\bb{L}_{k}(\lambda)=
\begin{pmatrix}
L^{11}_{k}(\lambda) & L^{21}_{k}(\lambda) \cr
L^{12}_{k}(\lambda) & L^{22}_{k}(\lambda) 
\end{pmatrix}=
\begin{pmatrix}
\lambda\;\one_{k}+\ii s_{k}^{z} & \ii s^{-}_{k} \cr
\ii s^{+}_{k} & \lambda\;\one_{k}-\ii s^{z}_{k} \cr
\end{pmatrix}.
\label{eqn:Lax_ABA}
\end{equation}
The last result simply reveals the fact that $A(\lambda)$ and $D(\lambda)$ involve strings of spin generators which do not change magnetization (i.e. contain equal number of $S^{+}$ as $S^{-}$ operators) which merely pick up simple scalar factors when operating on the factorizable highest-weight state $\ket{\Omega}$, that is
\begin{equation}
L^{11}_{k}(\lambda)\ket{\ua}_{k}=\alpha(\lambda)\ket{\ua}_{k},
\quad L^{22}_{k}(\lambda)\ket{\ua}_{k}=\delta(\lambda)\ket{\ua}_{k},
\end{equation}
whence we can convince ourselves that $\ket{\Omega}$ is an eigenstate of $\tau(\lambda)$,
\begin{equation}
\tau(\lambda)\ket{\Omega}=(A(\lambda)+D(\lambda))\ket{\Omega}=(\alpha(\lambda)^{n}+\delta(\lambda)^{n})\ket{\Omega}.
\end{equation}
Furthermore, the block operator $C(\lambda)$ contains strings with one extra $s^{+}$ operator, thus annihilates the state $\ket{\Omega}$. On contrary, $B(\lambda)$ \textit{increases} the number of down-flipped spins by one, which means that each application creates a quasi-particle excitation. This property allows (at least in principle) for a possibility of expressing candidates for the $N$-particle eigenstates by subsequent application of the creation $B$-operator,
\begin{equation}
\ket{\psi_N}=B(\lambda_N)\cdots B(\lambda_2)B(\lambda_1)\ket{\Omega}.
\label{eqn:Bstring_states}
\end{equation}
A set of spectral parameters $\{\lambda_{i}\}$ can be in this context interpreted as quasi-particle momenta. However, once again, no momentum vectors will be automatically compatible with the eigenstate condition. The protocol is then to set an eigenvalue problem for the states \eqref{eqn:Bstring_states} for the trace of the transfer matrix $A(\lambda)+D(\lambda)$,
\begin{equation}
A(\mu)\ket{\psi_{N}}=A(\mu)B(\lambda_N)\cdots B(\lambda_1)\ket{\Omega},
\label{eqn:AonBethe}
\end{equation}
and solve it by making use of exchange relations for the algebra of relevant monodromy elements $\{A(\lambda),B(\lambda),D(\lambda)\}$.
The latter can be recognized as some abstract quadratic algebra defined by the element-wise resolution of
the RTT equation \eqref{eqn:RTT_equation}.
To be more concrete, by employing embeddings $\bb{T}_{a_{j}}(\lambda)\in \End(\frak{H}_{a}\otimes \frak{H}_{a})$ ($j=1,2$),
explicitly reading
\begin{equation}
\bb{T}_{a_{1}}(\lambda)=
\begin{pmatrix}
A(\lambda) & 0 & B(\lambda) & 0 \cr
0 & A(\lambda) & 0 & B(\lambda) \cr
C(\lambda) & 0 & D(\lambda) & 0 \cr
0 & C(\lambda) & 0 & D(\lambda)
\end{pmatrix},\quad
\bb{T}_{a_{2}}(\mu)=
\begin{pmatrix}
A(\lambda) & B(\lambda) & 0 & 0 \cr
C(\lambda) & D(\lambda) & 0 & 0 \cr
0 & 0 & A(\lambda) & B(\lambda) \cr
0 & 0 & C(\lambda) & D(\lambda)
\end{pmatrix},
\end{equation}
the quadratic relations assume the following explicit form,
\begin{align}
\label{eqn:monodromy_quadratic_algebra}
B(\lambda)B(\mu)&=B(\mu)B(\lambda)\nonumber \\
A(\lambda)B(\mu)&=f_{+}(\lambda-\mu)B(\mu)A(\lambda)+g_{+}(\lambda-\mu)B(\lambda)A(\mu)\\
A(\lambda)D(\mu)&=f_{-}(\lambda-\mu)D(\mu)A(\lambda)+g_{-}(\lambda-\mu)B(\lambda)D(\mu)\nonumber,
\end{align}
where new amplitudes were introduced,
\begin{equation}
f_{\pm}(\lambda)=w(\lambda \mp \eta)/w(\lambda)\quad g_{\pm}(\lambda)=\pm \eta/w(\lambda).
\end{equation}
We can see that the creation operator $B(\lambda)$ commutes at different spectral parameters, implying that $\ket{\Psi_N}$ are
\textit{symmetric} under exchange of momenta. The remaining two equations from \eqref{eqn:monodromy_quadratic_algebra} tell us how the diagonal operator-components $A(\lambda)$ and $D(\lambda)$ ``scatter'' with $B(\lambda)$.
Each time, say $A(\lambda)$, jumps to the right of $B(\lambda)$, \textit{two} terms are produced: one term when momenta of
respective excitations are being retained and another term when momenta are being exchanged. Henceforth, the trick is to use \eqref{eqn:monodromy_quadratic_algebra} repeatedly on the string \eqref{eqn:AonBethe} to bring $A(\lambda)$ all the way to the right and ultimately absorb it into the Bethe vacuum.
This way we can actually reassemble the original $N$-particle state $\ket{\Psi_N}$ again on the right side, however,
we also unavoidably produce plenty of \textit{unwanted terms} at the same time,
\begin{align}
A(\mu)\ket{\psi_N(\lambda_1,\ldots,\lambda_N)}&=\alpha(\mu)^{n}\prod_{k}f_{+}(\mu-\lambda_{k})\ket{\psi_{N}(\lambda_{1},\ldots,\lambda_N)}\nonumber \\
&+\sum_{k}W^{A}_{k}(\mu,\lambda_k)\ket{\psi_{N}(\lambda_{1},\ldots,\lambda_{k-1},\mu,\lambda_{k+1},\ldots,\lambda_{N})}.
\end{align}
These can be compactly expressed by accounting commutativity of $\{B(\lambda_k)\}$ via amplitudes
\begin{equation}
W^{A}_{k}(\mu,\lambda_{k})=\alpha(\lambda_{k})^{n}g_{+}(\mu-\lambda_{k})\prod_{i\neq k}f_{+}(\lambda_{k}-\lambda_{i}),
\end{equation}
and similarly for the $W^{D}_{k}(\mu,\lambda_{k})$. Despite this may not seem like a big deal, one should notice that we are still lacking some sort of quantization condition for the momenta. By enforcing that all unwanted terms produced by
the combination $A(\lambda)+D(\lambda)$ on the $\ket{\Psi_{N}}$ vanish, we arrive at the set of equations
\begin{equation}
\prod_{i\neq k}^{N}f_{+}(\lambda_{k}-\lambda_{i})\alpha(\lambda_{k})^{n}=\prod_{i\neq k}^{N}f_{-}(\lambda_{k}-\lambda_{i})\delta(\lambda_{k})^{n},
\end{equation}
which are precisely the old familiar Bethe equations from CBA,
\begin{equation}
\left(\frac{\mu+\ihalf}{\mu-\ihalf}\right)^{n}=\prod_{i\neq k}\frac{\lambda_{k}-\lambda_{i}+\ii}{\lambda_{k}-\lambda_{i}-\ii}.
\end{equation}

At this point purely algebraic reduction finally hits the wall, disallowing to circumvent the problem of solving a set of nonlinear Bethe equations.
Nevertheless, what has been done is still a severe improvement over, for instance, brute-force diagonalization of the problem, and must as such earn at least some appreciation.
Namely, it should be emphasized that the eigenproblem has been reduced only to a finite number of equations within individual magnetization sectors which \textit{do not} grow with system size $n$. On the flip side, beside being able to ``efficiently'' calculate the spectrum of Hamiltonians, also the problem of evaluating correlation functions (which is typically of central interest for physicists) becomes particularly easier via the so-called quantum inverse problem: if we express the physical spins in terms of the elements of the monodromy matrix
\begin{align}
\sigma^{+}_{k}&=\prod_{i=1}^{k-1}(A+D)(\lambda)\cdot C(\lambda)\cdot \prod_{i=k+1}^{n}(A+D)(\lambda), \nonumber \\
\sigma^{-}_{k}&=\prod_{i=1}^{k-1}(A+D)(\lambda)\cdot B(\lambda)\cdot \prod_{i=k+1}^{n}(A+D)(\lambda), \nonumber \\
\sigma^{z}_{k}&=\prod_{i=1}^{k-1}(A+D)(\lambda)\cdot (A-D)(\lambda)\cdot \prod_{i=k+1}^{n}(A+D)(\lambda),
\label{eqn:inverse_transform}
\end{align}
we may eventually calculate scalar products with respect to Bethe states and consequently derive e.g. determinant formulas for the form factors of finite chains or integral representations for arbitrary $n$-point correlation functions in the thermodynamic limit~\cite{KMT99,KMT00}.

\section{Quantum groups}
\label{sec:QG}

In the previous section we learned how isotropic Heisenberg Hamiltonian along with its higher local Hamiltonians (we will frequently call them simply charges) arise from the main building piece in the form of the Yang's $6$-vertex rational $R$-matrix. An important special feature is reflected in equivalence between the Lax matrix and the $R$-matrix. We found that the former one is obtained by interpreting one of auxiliary $\CC^2$ spaces as spin-$1/2$ variables associated to local quantum spaces.
In this regard, given the solution of the YBE, the RLL relation is automatically obeyed.
Solutions possessing equivalence of this type pertain to a class of \textit{fundamental integrable models}.

Our next task is to understand solutions of the YBE from more formal perspective, i.e. to put $R$-matrices and Lax matrices in a
concise mathematical context. This amounts to give the RLL equation an abstract meaning, by looking at it as a defining relation for a type of \textit{associative quadratic algebra}. In particular, most common solutions of the YBE constitute algebraic structures which are identified with \textit{Hopf algebras} (as certain non-trivial types of bialgebras), regularly appearing in the literature under appealing (but also somewhat misleading) name of \textit{quantum groups}.
These structures can be understood as (typically one-parametric) \textit{continuous deformations} of Lie algebras,
or more precisely, their \textit{universal enveloping algebras} (abbrev. UEA).
It is important to know that a Hopf algebra $\cal{A}$ is endowed with certain maps, so-called \textit{costructures},
the most important being a \textit{coproduct} map $\Delta:\cal{A}\rightarrow \cal{A}\otimes \cal{A}$, which when translated to physicist's language prescribes the action of an interaction on $2$-particle spaces. An associated symmetry naturally extends from local quantum spaces to multi-particle product spaces by virtue of associativity of a coproduct. There even exist certain deformations of a coproduct which preserve Hopf algebra structures and give rise to deformed integrable interactions.

Throughout the thesis we regularly use the term \textit{quantum algebras}, by which we refer to ``quantizations'' of UEAs belonging to some semi-simple Lie algebras. The notion of a quantum group is usually reserved for dual Hopf algebras which can be regard as deformed algebras of functions over Lie groups. One can think of the latter simply as an algebra of functions over non-commutative
coordinate space, found at the heart of non-commutative geometry\footnote{This very technical topic, pioneered by Connes and Woronowicz, is arguably out of scope of any beginner's material, yet its utter importance in mathematics and physics deserves at least honourable mention.}.

The connotation of the word \textit{quantum} is here basically referring to canonical quantization, namely passing from observables (functions) over Poisson manifolds to a non-commutative setting. In the context of the integrability theory though, a quantization implies ``departure'' from Lie-algebraic relations. A curious reader can find more elaborate explanations of concepts which fall into the scope of our considerations here in appendices~\ref{sec:App_FRT} and \ref{sec:App_universal}.
From more practical point of view, the motivation to get deeper understanding of more formal background is mainly to (i) learn more about solutions of the universal Yang-Baxter equation associated to ``classical'' Lie (super)symmetries which generate some useful integrable Hamiltonians and (ii) to study possible continuous deformations of these solutions and thus acquire access to an even broader class of integrable interactions.

The notion of quantum groups was coined in the 80's in the seminal work of Faddeev, Reshetikhin and Takhtajan (FRT)~\cite{FRT88}, directly motivated from developments of the algebraic framework of the integrability theory, QISM and the YBE.
A detailed construction is being presented in appendix~\ref{sec:App_FRT}. We warmly advise the reader to take a look at it in order to fully appreciate its charms. As mentioned earlier, a rigorous mathematical theory of quantum groups has been pioneered (independently) by Drinfel'd~\cite{Drinfeld88} and Jimbo~\cite{Jimbo85}. The authors considered an algebra $\cal{A}$ with \textit{universal} element
$\cal{R}\in \End(\cal{A}\otimes \cal{A})$ obeying the YBE over the product of algebras $\cal{A}\otimes \cal{A}\otimes \cal{A}$,
\begin{equation}
\cal{R}_{12}\cal{R}_{13}\cal{R}_{23}=\cal{R}_{23}\cal{R}_{13}\cal{R}_{12}.
\label{eqn:universal_YBE}
\end{equation}
Extended objects, which are here defined in the usual way, i.e. $\cal{R}_{12}=\cal{R}\otimes \one$ and $\cal{R}_{23}=\one \otimes \cal{R}$
are merely abstract elements ``living'' in the product of algebras without associated vector spaces in which they would operate.
For instance, in the case of the spin algebra $\frak{sl}_{2}$, we may consider a family of representations $\pi(\ell,\lambda)$
characterized by generic representation parameters, the size of spin $\ell$ and the spectral parameter $\lambda$, and use
evaluation representations on the universal object $\cal{R}$ to construct $R$-matrices operating over various vector spaces,
\begin{equation}
(\pi(a_{1},\lambda)\otimes \pi(a_{2},\mu))\cal{R}=R_{a_{1}a_{2}}(\lambda-\mu).
\end{equation}
Equivalently, in case of Lax operators we replace one of auxiliary labels with the spin label $\ell$,
\begin{equation}
(\pi(a,\lambda)\otimes \pi(\ell,\mu))\cal{R}=L_{a\ell}(\lambda-\mu).
\end{equation}
We remark that the difference property with respect to dependence on the spectral parameters reflects homogeneity property of
classical algebras\footnote{To see why the difference property sets in we invite the reader to follow the steps of derivation for
the universal $\frak{sl}_{2}$ intertwiner, given in appendix~\ref{sec:App_universal}.}.
The simplest case belongs to the $\frak{sl}_{2}$ Lie algebra which is characterized by a unique spin label $\ell$. In higher-rank algebras there are obviously additional representation parameters which can be used.

For the isotropic Heisenberg model considered above we need to evaluate all three spaces in the fundamental $\ell=\frac{1}{2}\equiv f$ representation. The RLL equation \eqref{eqn:RLL_relation} thus follows from applying the representation
$\pi(f,\lambda)\otimes \pi(f,\mu)\otimes \pi(f,0)$ to condition \eqref{eqn:universal_YBE}, and subsequently interpreting the third space as a local spin-$1/2$ space $\frak{h}_{1}\cong \CC^2$. However, we could have considered a generic evaluation of the type
$\pi(\ell_{1},\lambda)\otimes \pi(\ell_{2},\mu)\otimes \pi(\ell_{3},\zeta)$ as well, resulting in the most general form of the
$\frak{sl}_{2}$-invariant solution of the YBE, having
\begin{equation}
R_{\ell_{1}\ell_{2}}(\lambda-\mu)R_{\ell_{1}\ell_{3}}(\lambda-\zeta)R_{\ell_{2}\ell_{3}}(\mu-\zeta)=
R_{\ell_{2}\ell_{3}}(\mu-\zeta)R_{\ell_{1}\ell_{3}}(\lambda-\zeta)R_{\ell_{1}\ell_{2}}(\lambda-\mu),
\label{eqn:universal_solutions}
\end{equation}
simply using spin parameters to label distinct auxiliary spaces. An explicit calculation for the $R$-matrix $R_{\ell_{1}\ell_{2}}$
can be carried out by resorting on $\frak{sl}_{2}$ symmetry and e.g. using spectral decomposition in terms of projectors onto irreducible
subspaces (see appendix~\ref{sec:App_universal}). We would moreover like to emphasize that no restriction of spin parameters to
half-integer values was demanded whatsoever. Recall that for half-integer values of the $\frak{su}_{2}$ spin (with corresponding \textit{compact} Lie group) we deal with \textit{unitary} (irreducible) representations which are always finite dimensional.
A complete representation theory of $\frak{sl}_{2}$ is different though, allowing also for irreducible \textit{infinite-dimensional} representations. To put it differently, we may say that particles associated with a \textit{generic} YBE of the form \eqref{eqn:universal_solutions} can be regarded as \textit{non-compact} spins. Such (continuous) freedom will play a crucial role later on when we address our nonequilibrium problems.

\subsection{Quantum deformations}
Perhaps the most important message of the FRT construction is to demonstrate how trivial Hopf algebras associated with Lie symmetries
survive deformations. This can be done in a way to control the amount of non-commutativity of the coproduct as prescribed by the RLL relation. Curiously, a deformation (or quantization) parameter $q$ (i.e. essentially in some sense an \textit{effective} Planck constant $\gamma$, such that $q=e^{\ii \gamma}$) is linked to an \textit{anisotropy parameter} interactions pertaining to fundamental integrable models.

For instance, let us considered the \textit{axially anisotropic} version of the Heisenberg chain,
\begin{equation}
H^{\rm{XXZ}}=\sum_{j=1}^{n}h^{\rm{XXZ}}_{j,j+1}=\sum_{j=1}^{n}2(\sigma^{+}_{j}\sigma^{-}_{j+1}+\sigma^{-}_{j}\sigma^{+}_{j+1})+
\Delta \sigma^{z}_{j}\sigma^{z}_{j+1},
\end{equation}
with anisotropy parameter $\Delta=\cos{(\gamma)} \in \RaR$. The model can be divided into two physical regimes: the massless (easy-plane) phase for $|\Delta| \in [0,1]$ and the massive (easy-axis) phase $|\Delta|>1$.
Quantum phase transition occurs at the $SU(2)$-symmetric (isotropic) point $|\Delta|=1$.
For general values of $\Delta$, the rational $R$-matrix \eqref{eqn:R_six_vertex} gets smoothly deformed into the \textit{trigonometric} $6$-vertex $R$-matrix, which is obtained by taking $w(\lambda)=\sin{(\lambda)}$ (massless phase) or $w(\lambda)=\sinh{(\lambda)}$ (massive phase).
To reconcile this choice with the deformation parameter $q$, it suffices to consider quantum-deformed numbers (called also $q$-numbers) for the matrix elements of \eqref{eqn:R_six_vertex}, in accordance with the symmetric definition
\begin{equation}
[x]_q:=\frac{q^{x}-q^{-x}}{q-q^{-1}}=\frac{\sin{(\gamma x)}}{\sin{(\gamma)}}=
\prod_{m=-\infty}^{\infty}\frac{x+m\pi \gamma^{-1}}{1+m\pi \gamma^{-1}},\quad \lim_{q\to 1}[x]_{q}=x.
\label{eqn:q-number}
\end{equation}
The same quantum deformation also applies to operators (in the sense of series expansion) by simply replacing a number $x$ in the definition above
with an operator $X$. The rational solution to the YBE \eqref{eqn:R-matrix_XXX} transforms into the trigonometric one (setting $\eta=1$)
\begin{equation}
R^{q}(\lambda)=
\begin{pmatrix}
[\lambda+1]_q & & & \cr
& [\lambda]_q & [1]_q & \cr
& [1]_q & [\lambda]_q & \cr
& & & [\lambda+1]_q
\end{pmatrix},
\end{equation}
which is up to a trivial rescaling of the spectral parameter and an overall multiplicative constant equivalent to \eqref{eqn:R_six_vertex}, with $w(\lambda)=\sin{(\lambda)}$. The Lax matrix \eqref{eqn:Lax_XXX} needs to be of course appropriately modified as well. In order to illustrate how this is done in an elegant fashion, we take a pathway via the FRT construction, in which sense the RLL relation is regarded as the defining relation for an associative Hopf algebra of the $q$-deformed spin. The corresponding $R$-matrix has a role of providing structure constants.

\begin{exam}{Quantization of $\frak{sl}_{2}$ algebra.}

Let us take a look how the spin algebra $\frak{sl}_{2}$ gets deformed via the FRT realization.
More details of this construction are provided in appendix~\ref{sec:App_FRT}.
The idea is to take two \textit{triangular} components of the $R$-matrix,
\begin{equation}
R^{+}_q:=
\begin{pmatrix}
q & & & \cr
& 1 & q-q^{-1} & \cr
& & 1 & \cr
& & & q
\end{pmatrix},\quad
R^{-}_{q}:=P(R^{+}_{q})^{-1}P,
\end{equation}
and a pair of triangular $L$-matrices with elements taken from $q$-deformed universal enveloping algebra of classical $\frak{sl}_{2}$ spins,
generated by the set $\{\bb{k}^{\pm}\equiv q^{\pm \bb{s}^{z}},\bb{s}^{+}_{q},\bb{s}^{-}_{q}\}$,
\begin{equation}
\bb{L}^{+}=
\begin{pmatrix}
\bb{k} & (q-q^{-1})\bb{s}_{q}^{-}\cr
0 & \bb{k}^{-1}
\end{pmatrix},\quad \bb{L}^{-}=
\begin{pmatrix}
\bb{k}^{-1} & 0\cr
-(q-q^{-1})\bb{s}_{q}^{+} & \bb{k}
\end{pmatrix}.
\end{equation}
We further define a deformed product by means of three independent matrix equations of the form
\begin{equation}
R^{+}_{q}\bb{L}^{\pm}_{1}\bb{L}^{\pm}_{2}=\bb{L}^{\pm}_{2}\bb{L}^{\pm}_{1}R^{+}_{q},\quad
R^{\pm}_{q}\bb{L}_{1}^{\pm}\bb{L}_{2}^{\mp}=\bb{L}_{2}^{\mp}\bb{L}_{1}^{\pm}R^{\pm}_{q}.
\end{equation}
These equations precisely reconstruct the algebraic relations prescribed by ``quantized'' Lie-algebraic relations,
\begin{equation}
[\bb{s}_{q}^{+},\bb{s}_{q}^{-}]=[2\bb{s}^{z}]_{q}=\frac{(\bb{k}^{+})^{2}-(\bb{k}^{-})^{2}}{q-q^{-1}},\quad
\bb{k}^{\pm}\bb{s}_{q}^{\pm}=q^{\pm1}\bb{s}_{q}^{\pm}\bb{k}^{\pm}.
\label{eqn:sl2_deformed_relations}
\end{equation}
Finally, we introduce an extra continuous parameter $\lambda$ via the so-called \textit{Baxterization} procedure~\cite{FRT88,Jones90},
which glues together both triangular pieces as
\begin{equation}
R_{q}(\lambda)=xR_{q}^{+}-x^{-1}R_{q}^{-},\quad \bb{L}_{q}(\lambda)=x\bb{L}_{q}^{+}-x^{-1}\bb{L}_{q}^{-},
\label{eqn:Yang-Baxterization}
\end{equation}
using the variable $x=q^{-\ii \lambda}=\exp{(\gamma \lambda)}$, yielding (using rather $u=-\ii \lambda$) explicitly 
\begin{equation}
\bb{L}_{q}(\lambda)=(q-q^{-1})
\begin{pmatrix}
[u+\bb{s}^{z}]_{q} & q^{u}\bb{s}_{q}^{-}\cr
q^{-u}\bb{s}_{q}^{+} & [u-\bb{s}^{z}]_{q}
\end{pmatrix}.
\end{equation}
The Baxterization trick essentially amounts to promote the symmetry of the $L$-operator to a symmetry of
an infinite-dimensional (Kac-Moody) algebra (see appendix~\ref{sec:App_FRT}).

There are surely other possible continuous deformations of classical symmetries (some even involving multiple parameters) but they
rarely lead to particularly interesting and physically motivated models. An exhaustive list of integrable models can be e.g. found in
the paper~\cite{KunduRev}. We chose the axially anisotropic Heisenberg model merely because we shall extensively study it in the
forthcoming discussion.
\end{exam}
\chapter[NESS of boundary-driven spin chains]{Steady states of boundary-driven spin chains}
\label{sec:openXXZ}

We continue by familiarizing the reader with the concept of matrix product states, serving as a powerful tool for efficient description and
simulation of many-body correlated quantum systems in one spatial dimension.

\section{Matrix product states}
\label{sec:MPS}

It is well-known that \textit{ground states} of strongly correlated electrons (in absence of the mass gap) obey the so-called \textit{area law}, stating that the block \textit{entanglement entropy} for a \textit{bipartite} splitting of a pure state $\ket{\psi}\bra{\psi}$ does not grow \textit{asymptotically} with a system size~\cite{Hastings07,AreaLaw}.
Such property attributes to efficient description of ground states by representing them as \textit{tensor network states} or,
assuming restriction to one spatial dimension, a \textit{matrix product state} (abbrev. MPS).
A generic $n$-body wavefunction $\ket{\psi}$ from a Hilbert space $\frak{H}=\frak{H}_{1}^{\otimes n}$ composed of $n$ local $d$-dimensional Hilbert spaces $\frak{H}_{1}\cong \CC^{2}$, can be given as
\begin{equation}
\ket{\psi}=\sum_{i_{1},i_{2},\ldots,i_{n}}
\mathbb{A}_{1}^{[i_{1}]}\mathbb{A}_{2}^{[i_{2}]}\cdots \mathbb{A}_{n}^{[i_{n}]}\ket{i_{1}}\otimes \ket{i_{2}}\otimes \cdots \otimes \ket{i_{n}},
\label{eqn:general_MPS}
\end{equation}
where the summation is over indices taking values in $i_{x}\in \{1,2,\ldots,d\}$.
The matrices $\{\mathbb{A}_{x}^{[i_{x}]}\}$ are called the MPS tensors. In the canonical \textit{exact} MPS representation these objects are algorithmically computed from a wavefunction $\ket{\psi}$ via a chain of Schmidt decompositions, resulting in rectangular matrices of dimensions $D_{x}\times D_{x+1}$, being referred to as local bond dimensions.
The Schmidt decomposition of a bipartite cut is essentially a restatement of the singular value decomposition over Hilbert spaces with tensor structure.
Say, for two contiguous blocks of sites $A$ and $B$ we can write a finite sum over separable states,
\begin{equation}
\ket{\psi}=\sum_{\alpha=1}^{r}\sqrt{\Lambda_{\alpha}}\ket{\chi^{\rm{A}}_{\alpha}}\otimes \ket{\chi^{\rm{B}}_{\alpha}},
\label{eqn:Schmidt_decomposition}
\end{equation}
where $\{\ket{\chi_{\alpha}^{\rm{A}}}\}$ and $\{\ket{\chi_{\alpha}^{\rm{B}}}\}$ are two \textit{orthonormal} bases pertaining to blocks $A$ and $B$, respectively.
An integer number $r$ designates the number of non-zero expansion coefficients and is called the \textit{Schmidt rank}. It ``measures'' an effective dimension of a relevant Hilbert subspace. A compression of $\ket{\psi}$ is thus carried out by discarding certain number of Schmidt coefficients.
The main advantage of using MPS formulation is that it is naturally tailored for dealing with \textit{approximate} quantum states, in a sense that the information content of $\ket{\psi}$ can be optimally compressed via truncation of local bond dimensions.
Such description, which builds on ideology of Wilson's \textit{numerical renormalization group}~\cite{Wilson74}, is a key piece of efficient classical simulations of strongly correlated systems on lattice in \textit{one dimension}. The idea has been further articulated in a more transparent way by White~\cite{White92} (in a modern language of quantum information theory), Vidal (under the name of the time-evolving block decimation~\cite{Vidal04}) and others, however
nowadays it is most commonly known as the \textit{density matrix renormalization group} (abbrev. DMRG)~\cite{Schollwock05}.
It can be employed for approximating quantum evolution for systems of sizes which go beyond the scope of exact diagonalization techniques, or as a variational ansatz for description of quantum states of \textit{limited complexity}.

A hallmark feature of the area-law property is ability of representing ground states exactly by matrices of fixed bond dimension,
independent on systems size. Another famous example of states which can be captured exactly by MPS is a class of the so-called
\textit{valence-bond solids}, with the prototype model being the (spin-$1$) AKLT model~\cite{AKLT88}. The ground state of the latter appears to be just a chain of virtual spin-$1/2$ particles initialized in singlet pairs, projected onto physical Hilbert space via projectors onto
triplet states of all adjacent pairs which do not constitute a singlet. Remarkably, in the picture of MPS, this actually means that it suffices
to take $2$-dimensional auxiliary matrices only.

From the perspective of efficient simulation of a time-evolution (in real or imaginary time), there is yet another crucial benefit of
using MPS formulation. Namely, by facilitating one of split-step schemes for factoring unitary time-propagation operator in terms of local unitary gates (Suzuki-Trotter decomposition), locality of interactions ensures that MPS wavefunctions at each step have to be updated only on finite (usually small) number of sites at once. Still, a time-evolution generically increases local bond dimensions and therefore requires subsequent truncations resulting in an accuracy loss matching a sum of discarded Schmidt coefficients.
These aspects are extensively discussed in references~\cite{Schollwock11,Schollwock05}.

\paragraph{Simulation of the Liouville evolution.}
Although efficiency of simulating dissipative quantum evolution possesses a major obstruction due to lack of unitarity, from formal perspective there are no serious issues as far as formulation of the method in the Liouville space~\cite{Zwolak04,Verstraete04} is concerned. It has been argued nevertheless that at least \textit{steady states} of certain one-dimensional open quantum chains can be
approximately described efficiently by Liouville space adaptation of the DMRG method~\cite{PZ09JSTAT}. This fact is corroborated by observation
that a generator of quantum Liouvillian dynamics can be understood as an operator on a Hilbert space with a \textit{doubled} number of degrees of
freedom and extra \textit{non-hermitian} terms attributed to disipation\footnote{We shall always study situations where dissipation is exclusively limited to system's
boundary in this thessi, but it need not be in general.}.
Accordingly, a density matrix transforms under such \textit{purification} into a wavefunction. Such identification highlights an isomorphism between Hilbert--Schmidt space
of density matrices and a two-fold tensor product Hilbert space. As such, it represents a main ingredient of the theory of Nonequilibrium Thermo Field Dynamics~\cite{TFD}.
That said, a NESS can be viewed as a ``ground state'' of some non-normal ``Hamiltonian''. In this picture two \textit{new} features (w.r.t. Hamiltonian setting) emerge:
(i) the physical NESS is the \textit{right vacuum} (ket-vacuum), whereas the left vacuum (bra-vacuum) is the identity (which is a direct consequence of trace preservation), and
(ii) one has to deal with two sets of operators, expressing the fact that generic linear maps can operate on states either by multiplication from the left or from the right\footnote{For quadratic
(quasi-free) Hamiltonians with linear noise channels, i.e. Gaussian Liouvillians, one can make use of such formalism to
construct suitable generalized Bogoliubov transformations in the space of operators enabling for explicit diagonalization of Liouvillians in terms of (free) normal excitations (decay modes)~\cite{Prosen08,Dzhioev11}.}.

In the quantum Liouville theory (see chapter \ref{sec:Lindblad}) the notion of a ground state should be replaced by a fixed point of the evolution,
\begin{equation}
\VV(t)\rho_{\infty}=\rho_{\infty},
\end{equation}
or equivalently, NESS eigenvectors of the generator $\LL$ with eigenvalue $0$,
\begin{equation}
\boxed{\LL\rho_{\infty}=0.}
\end{equation}
The rest of the spectrum (i.e. the image of $\LL$) describes Liouville \textit{decay modes},
\begin{equation}
\LL \rho^{(j)}=\varLambda_{j}\rho^{(j)},
\end{equation}
and lie entirely on the \textit{left} side of the imaginary axis in the complex plane (which guarantees stability of time evolution).
Relaxation towards steady states is \textit{exponential} in time and is determined by a rate which is given by the \textit{spectral gap},
$\min\{|\Re(\varLambda_j)|\}$. By adopting Markovian master equation description, our central goal will be to examine and (try to) understand what are the circumstances which lead to steady states of limited complexity, eventually allowing for compact and efficient MPS description.

Unfortunately there exist no operator space area-law counterpart (or some other analogous rigorous statements) of reduced complexity
for Liouville steady states, nor suitable nonequilibrium analogues of the valence bond states have been found yet.
One possible characterization of complexity is, for instance, by computing \textit{operator space entanglement entropy} (OSEE), which is just the Shannon entropy of a Schmidt spectrum with respect to a bi-partition of a density matrix as an element of a Hilbert-Schmidt space~\cite{PP08}
\footnote{This entanglement characterization must not be confused with von Neumann entanglement entropy for mixed quantum states.
There exist various entanglement measures for mixed quantum states (when treated as positive hermitian operators over Hilbert spaces),
e.g. see~\cite{HHHH09} for a review.}.
Even though there is no apriori reason why simulating quantum dynamics in Liouville space should be efficient, the study~\cite{PZ09JSTAT}
indicates that DMRG description of NESS in the operator space, as fas as only boundary dissipative processes are permitted,
can be often performed efficiently in a sense that an effective bond dimension, captured by the OSEE, typically monotonically converges in time, then the computational time scales linearly with the system size.

\section{Driven anisotropic Heisenberg spin-1/2 chain}
\label{sec:solution}

Nonequilibrium steady states of boundary-driven open quantum spin chains and their transport behavior far from equilibrium
have been a subject of quite intense studies very recently. Apart from exact solutions of quasi-free Liouvillians of  Markovian
evolution~\cite{Prosen08,PZ10,Znidaric10} or e.g. the solution the XX model using a different approach via protocol of repeated
interactions~\cite{KP09}, some instances of non-interacting models have been solved with MPS ansatze even out of Gaussian theory~\cite{Eisler11}.
It is thus a meaningful step forward to try similar things on certain genuinely interacting models.
First success in this direction is the solution of the anisotropic Heisenberg spin-$1/2$ chain with two oppositely polarizing
incoherent boundary processes with equal rates. Therefore, before we begin with an algebraic formulation, we would like
outline the ``top-down'' approach which goes essentially along the lines of the original derivation~\cite{PRL106,PRL107}, but with
slightly more precise argumentation on several places.

With aim to address steady state solutions which would allow us to study paramount transport laws in far-from-equilibrium regime,
we consider Markovian semi-group with a simplistic version of two macroscopic
particle reservoirs, modeled by the set of Lindblad operators,
\begin{equation}
A^{\pm}_{1}=\sqrt{\epsilon_{\rm{L}}(1\pm \mu)/2}\;\sigma_{1}^{\pm},
\qquad A^{\pm}_{n}=\sqrt{\epsilon_{\rm{R}}(1\mp \mu)/2}\;\sigma^{\pm}_{n},
\label{eqn:mu_symmetric_driving}
\end{equation}
described by two types of external parameters: the \textit{dissipation rate} parameters $\epsilon_{\rm{L},\rm{R}} >0$ (coupling strengths with reservoirs)
and the \textit{driving parameter} $\mu \in [-1,1]$. The latter describes the effect of unequal (average) spin polarizations imposed
by the left/right reservoirs, i.e. $\mu$ can be related to an effective chemical potential
\footnote{In order to connect $\mu$ to \textit{real} chemical potential one would have to calculate a steady state ensemble for
the Lindblad operator acting on single particle space. A relation obtained in that way is however lost once a particle affected by the
dissipation starts to coherently interact with other particles.}.

Suppose we initially restrict the form of \eqref{eqn:mu_symmetric_driving} to a special symmetric case by setting
$\epsilon_{\rm{L}}=\epsilon_{\rm{R}}\equiv \epsilon$.
The dissipator $\DD$ of the Lindblad form then reads
\begin{align}
\DD\rho_{\infty}&=\DD_{1}\rho_{\infty}+\DD_{n}\rho_{\infty},\nonumber \\
\DD_{k}\rho_{\infty}&=2A^{+}_{k}\rho_{\infty}(A^{+}_{k})^{\dagger}-\{(A^{+}_{k})^{\dagger}A^{+}_{k},\rho_{\infty}\}\nonumber \\
&+2A^{-}_{k}\rho_{\infty}(A^{-}_{k})^{\dagger}-\{(A^{-}_{k})^{\dagger}A^{-}_{k},\rho_{\infty}\},
\label{eqn:max_symmetric_driving}
\end{align}
for $k\in \{1,n\}$, and the fixed point condition for the NESS $\rho_{\infty}$ assumes the form
\begin{equation}
\boxed{\ii [H,\rho_{\infty}]=\epsilon\;\DD \rho_{\infty}.}
\label{eqn:fixed_point_condition}
\end{equation}
Suffice it to say that it is of course not of our interest to look for solutions which are simultaneous eigenoperators of the unitary and the dissipative part, which would merely represent conserved quantities protected from the dissipation.

The semi-empirical protocol can be described as follows. By encoding a local $\CC^{2}$ space of a spin-$1/2$ with Pauli canonical matrices (supplemented by $\sigma^{0}\equiv \one_2$) we write a general form of a \textit{homogeneous} MPS density matrix $\rho_{\infty}$ as
\begin{equation}
\rho_{\infty}=\sum_{\underline{s}\in \{\pm,0,z\}^n}\bra{\rm{L}}\mathbb{B}_{s_{1}}\mathbb{B}_{s_{2}}\cdots \mathbb{B}_{s_{n}}\ket{\rm{R}}
\sigma^{s_{1}}\otimes \sigma^{s_{2}}\otimes \cdots \otimes \sigma^{s_{n}}.
\label{eqn:rho_abstract_MPS}
\end{equation}
It is worth trying first with a set of four site-independent auxiliary matrices $\{\mathbb{B}_{0},\mathbb{B}_{z},\mathbb{B}_{\pm}\}$.
The ansatz \eqref{eqn:rho_abstract_MPS} is just an abstract proposal at this moment, i.e. no extra information on the form the auxiliary matrices nor the bond dimensions can be used at the moment. We should remark however that we intentionally retained an ansatz of the homogeneous/uniform form despite non-existent cyclic invariance which would guarantee spatially homogeneous form of a MPS. At any rate, we may contemplate that such a form should be manifestly materialized if some sort of ``low-level'' locality-based mechanism would emerge underneath.

Several insightful empirical observations can now be made simply by inspecting few finite-size solutions obtained from
an exact diagonalization of the Lindbladian $\LL$ for system sizes of order $n\sim 10$.
\begin{itemize}
 \item The proposed \textit{maximally symmetric driving} \eqref{eqn:max_symmetric_driving},
corresponding to $\mu=1$ in \eqref{eqn:mu_symmetric_driving},
\begin{equation}
A_{1}=\sqrt{\epsilon}\sigma^{+}_{1},\qquad A_{n}=\sqrt{\epsilon}\sigma^{-}_{n},
\label{eqn:symmetric_driving}
\end{equation}
yields the most simple form of the amplitudes. With this choice we eliminated one degree of freedom with respect to \eqref{eqn:mu_symmetric_driving} and seek for \textit{one-parametric} family of steady states depending on the coupling rate constant, $\rho_{\infty}=\rho_{\infty}(\epsilon)$.
 \item All amplitudes can be given (in a suitable gauge of non-normalized density matrix) as \textit{polynomials} in the coupling
parameter $\epsilon$ of order \textit{no higher} than $2n$, with \textit{integer-valued} coefficients. This attribute alone can be regarded as a precursor of exact solvability.
 \item The steady state $\rho_{\infty}$ is a \textit{full} rank matrix, meaning that it is a strictly positive operator, $\rho_{\infty}>0$.
 \item The Schmidt rank $r$ of $\rho_{\infty}$ as a vector in the Hilbert-Schmidt space grows \textit{quadratically} with system size,
$r\sim \cal{O}(n^{2})$.
\end{itemize}

The last property, in conjunction with the maximal $\epsilon$-degree of amplitudes, somewhat suggests that the MPS representation of
$\rho_{\infty}$ might be redundant, i.e. allowing for a simpler description. Remarkably, with some effort one can learn that factorization
of the type
\begin{equation}
\boxed{\rho_{\infty}(\epsilon)=S_{n}(\epsilon)S^{\dagger}_{n}(\epsilon),}
\label{eqn:Cholesky_like}
\end{equation}
ignoring an overall normalization which can be always determined a-posteriori, works for our case.
We shall refer to the Cholesky\footnote{For a hermitian positive-definite matrix $A$, the Cholesky decomposition reads
$A=U^{\dagger}U$, for an upper-triangular matrix $U$ with real positive diagonal elemets. Thus, it can be regarded as a square root of a matrix.} factor
$S_{n}(\epsilon)$ simply as the $S$-operator. After further inspection of the $S$-operator we notice few other profound properties:
\begin{itemize}
 \item The Pauli operator $\sigma^{z}$ is absent from the many-body basis.
 \item Amplitudes with respect to Pauli basis become polynomials in $\epsilon$ (still with $\ZZ$-valued coefficients) of maximal degree $n$.
 \item The $S$-operator becomes an \textit{upper-triangular matrix} in the computational basis\footnote{The computation basis is fixed
by quantization axis, conventionally chosen as the $z$-axis, and is given by all \textit{unit} vectors, i.e. vectors with a single component being $1$ and remaning components being $0$.}, with \textit{unit diagonal}.
 \item The selection rule which says that the \textit{running number} of raising operators $\sigma^{+}$ must always be greater or equal to the number of lowering operators $\sigma^{-}$ throughout the contraction process going from the left to the right, for otherwise the contraction gives zero amplitude.
\end{itemize}

The first property is the most important one. The $S$-factor can thus be expanded as an MPS over reduced many-body operator space,
\begin{equation}
S_{n}=\sum_{\underline{s}\in \{\pm,0\}^{n}}\bra{0}\bb{A}_{s_{1}}\bb{A}_{s_{2}}\cdots \bb{A}_{s_{n}}\ket{0}
\sigma^{s_{1}}\otimes \sigma^{s_{2}}\otimes \cdot \otimes \sigma^{s_{n}}
\end{equation}
By taking into consideration the unit diagonal, we have no reason why the left boundary auxiliary state $\bra{\rm{L}}$ should be
different from its counterpart $\ket{\rm{R}}$ on the right side. Hence, we introduce a unique boundary state $\ket{0}$ and proclaim it
as the \textit{auxiliary vacuum}.
The third property is another manifestation of a selection rule, i.e. whenever $\sigma^{-}$ appears before $\sigma^{+}$ (counting
from left to right, i.e. in the direction of increasing site index) the corresponding amplitude vanishes. The fourth property
makes it possible to interpret the auxiliary contraction as kind of a (ladder) hopping process, beginning and ending on the lowest rung. At last, the second property suggests that new MPS $\bb{A}$-matrices are likely to be at most \textit{linear} in the coupling parameter $\epsilon$.

Before we continue it is worth remarking that despite the factorized ansatz \eqref{eqn:Cholesky_like} appears at first glance to be just the ordinary Cholesky decomposition, particularly because of triangularity of the $S$-operator, it is in fact \textit{not strictly}. Having adopted a convention to use upper-triangular matrices, the standard Cholesky decomposition would instead read $\rho_{\infty}=C_{n}^{\dagger}C_{n}$, namely the order of factors in the product has to be \textit{reversed} with respect to \eqref{eqn:Cholesky_like}. The latter can be indeed computed from the standard Cholesky decomposition by virtue of the global spin-reversal parity transformation,
\begin{equation}
F_{n}=(\sigma^{x})^{\otimes n},\quad F_{n}^{2}=\one,
\end{equation}
implementing a ``$180^{\circ}$ degree rotation'' of a matrix with subsequent conjugation. By factoring $F_{n}\rho_{\infty}F_{n}=C^{\dagger}_{n}C_{n}$ we quickly find that
\begin{equation}
\rho_{\infty}=(F_{n}C_{n}F_{n})^{\dagger}(F_{n}C_{n}F_{n})=S_{n}S_{n}^{\dagger}.
\end{equation}
Since the factor $C_{n}(\epsilon)$ is by definition upper-triangular, so must be $S_{n}(\epsilon)$.
Another way to arrive at the same result is to account for strict positivity of the density matrix $\rho_{\infty}$ and
factoring the \textit{inverse} of NESS, $\rho^{-1}_{\infty}=C^{\dagger}_{n}C_{n}$, whence it follows $S_{n}=C_{n}^{-1}$.

In absence of additional insights at this stage, we turn to the Lindblad equation \eqref{eqn:fixed_point_condition} and evaluate
the \textit{adjoint action} of the Hamiltonian on $\rho_{\infty}$,
\begin{equation}
\ii[H,\rho_{\infty}]=\ii[H,S_{n}]S_{n}^{\dagger}-\ii S_{n}[H,S_{n}]^{\dagger}.
\end{equation}
By evaluating the adjoint action of $H$ on the $S$-operator, the following compelling operator identity can be revealed,
\begin{equation}
\boxed{\ii [H,S_{n}(\epsilon)]=\epsilon \left(\sigma^{z}\otimes S_{n-1}(\epsilon)-S_{n-1}(\epsilon)\otimes \sigma^{z}\right).}
\label{eqn:global_defining_relation}
\end{equation}
This very special result represents the cornerstone of the proof. We shall call it the \textit{global defining relation}.

Accounting for the fact that $S_{n}(\epsilon)$ is free of $\sigma^{z}$ components, the commutator $[H,S_{n}(\epsilon)]$ can only involve terms with at most one $\sigma^{z}$ operator. But the defining relation \eqref{eqn:global_defining_relation} is saying that the $S$-operator,
after being commuted with the Hamiltonian, does not produce any terms where $\sigma^{z}$ resides \textit{in the bulk}, which is anywhere except at the boundary sites $1$ and $n$. We expect that this intriguing property is a consequence of some algebraic relations among matrices $\{\bb{A}_{\pm,0}\}$
and formulate the following \textit{orthogonality condition} locally at all the sites from the bulk $j\in\{2,3,\ldots,n-1\}$ as
\begin{equation}
\sum_{s_{1},s_{2},s_{3}\in \{\pm,0\}}\tr\left(\sigma_{j-1}^{r_{1}}\sigma_{j}^{z}\sigma_{j+1}^{r_{2}}
\left[h^{\rm{XXZ}}_{j-1,j}+h_{j,j+1},\sigma_{j-1}^{s_{1}}\sigma_{j}^{s_{2}}\sigma_{j+1}^{s_{3}}\right]\right)=0,\quad \forall r_{1},r_{2}\in \{\pm,0\},
\label{eqn:bulk_cancellation}
\end{equation}
producing eight \textit{independent} homogeneous cubic algebraic relations on the free algebra generated by operators $\{\bb{A}_{\pm},\bb{A}_{0}\}$,
\begin{align}
\label{eqn:bulk_cubic}
[\bb{A}_{\pm}\bb{A}_{\mp},\bb{A}_{0}]&=0,\nonumber \\
\{\bb{A}_{0},\bb{A}_{\pm}^{2}\}&=2\Delta \bb{A}_{\pm}\bb{A}_{0}\bb{A}_{\pm},\nonumber \\
[\bb{A}_{\pm},\bb{A}_{\mp}^{2}]&=2\Delta[\bb{A}_{0}^{2},\bb{A}_{\mp}],\nonumber \\
2\Delta\{\bb{A}_{\pm},\bb{A}_{0}^{2}\}+2\bb{A}_{\pm}\bb{A}_{\mp}\bb{A}_{\pm}&=\{\bb{A}_{\mp},\bb{A}^{2}_{\pm}\}+4\bb{A}_{0}\bb{A}_{\pm}\bb{A}_{0}.
\end{align}
Probably the simplest way to derive these equations is to consider how the commutator with the interaction $h^{\rm{XXZ}}$ operates in
the restricted two-fold product Pauli basis,
\begin{equation}
[h^{\rm{XXZ}},\sigma^{\alpha}\otimes \sigma^{\beta}]=\sum_{\gamma\in \{+,0,-\}}\Omega_{\alpha\beta}^{\gamma}\sigma^{\gamma}\otimes \sigma^{z}+
\Omega_{\beta\alpha}^{\gamma}\sigma^{z}\otimes \sigma^{\gamma},\quad \forall \alpha,\beta \in \{\pm,0\}.
\label{eqn:adh_XXZ_abstract}
\end{equation}
The structure constants are explicitly given by
\begin{equation}
\Omega_{\pm,0}^{\pm}=\pm 2\Delta,\quad \Omega_{0,\pm}^{\pm}=\mp 2,\quad \Omega_{\pm,\mp}^{0}=\pm 1,\quad{\rm for} \quad\alpha+\beta=\gamma,
\end{equation}
supplemented with \textit{zeros} when $\alpha+\beta\neq \gamma$. Plugging \eqref{eqn:adh_XXZ_abstract} into \eqref{eqn:bulk_cancellation} results in
\begin{equation}
\sum_{\alpha>\beta}\sum_{\alpha^{\prime}>\beta^{\prime}}\Omega_{\alpha\beta}^{\gamma}\bb{A}_{\alpha}\bb{A}_{\beta}\bb{A}_{\gamma^{\prime}}+
\Omega_{\beta\alpha}^{\gamma}\bb{A}_{\beta}\bb{A}_{\alpha}\bb{A}_{\gamma^{\prime}}+
\Omega_{\alpha^{\prime}\beta^{\prime}}^{\gamma^{\prime}}\bb{A}_{\gamma}\bb{A}_{\beta^{\prime}}\bb{A}_{\alpha^{\prime}}+
\Omega_{\beta^{\prime}\alpha^{\prime}}^{\gamma^{\prime}}\bb{A}_{\gamma}\bb{A}_{\alpha^{\prime}}\bb{A}_{\beta^{\prime}}=0,
\end{equation}
which is equivalent to \eqref{eqn:bulk_cubic}.

The boundaries are treated separately. In particular, we have to ensure that $[H,S_{n}(\epsilon)]$ produces precisely the
term $-\ii \epsilon\;\sigma^{z}_{1}$ at the first site, and similar term $\ii \epsilon\;\sigma^{z}_{n}$ at the last site, together with the requirement that $S_{n-1}(\epsilon)$
gets perfectly reconstructed on the remaining sites. This can be achieved by imposing suitable constraints with respect to both vacua.
For instance, at the left boundary we have a partially contracted expression of the form
\begin{equation}
\bra{0}\bb{A}_{s_{1}}\bb{A}_{s_{2}}\bb{A}_{s_{3}}\left[h^{\rm{XXZ}},\sigma^{s_{1}}\otimes \sigma^{s_{2}}\right]\otimes \sigma^{s_{3}}=
\sum_{\alpha\in \{+,-,0\}}\Omega_{s_{2}s_{1}}^{\alpha}\bra{0}\bb{A}_{s_{1}}\bb{A}_{s_{2}}\bb{A}_{s_{3}}\sigma^{z}\otimes \sigma^{\alpha}\otimes\sigma^{s_{3}},
\end{equation}
where the term containing $\sigma^z$ component at site $2$ was simply omitted because it was already included in the bulk-algebraic
condition \eqref{eqn:bulk_cancellation}. After lengthier but straightforward calculations we obtain
\begin{align}
\label{eqn:boundary_cubic}
\bra{0}\bb{A}_{-}&=\bra{0}\bb{A}_{+}(\bb{A}_{-}\bb{A}_{+}-\ii \epsilon \bb{1})=\bra{0}\bb{A}_{+}\bb{A}^{2}_{-}=0,\nonumber \\
\bb{A}^{2}_{+}\bb{A}_{-}\ket{0}&=(\bb{A}_{-}\bb{A}_{+}-\ii \epsilon \bb{1})\bb{A}_{-}\ket{0}=\bb{A}_{+}\ket{0}=0,\\
\bb{A}_{0}\ket{0}&=\ket{0},\quad \bra{0}\bb{A}_{0}=\bra{0},\quad \bra{0}\bb{A}_{+}\bb{A}_{-}\ket{0}=\ii \epsilon.\nonumber 
\end{align}
The last condition merely fixes a gauge.

It is still by no means evident that representations compliant with algebraic conditions \eqref{eqn:bulk_cubic} and \eqref{eqn:boundary_cubic} do actually exist. Nonetheless, since obtained relations are reminiscent of a simple auxiliary 1D hopping process, we might try to look for the tridiagonal representation in the infinite-dimensional auxiliary Hilbert space $\frak{H}_{a}=\rm{lsp}\{\ket{k};k\in \ZZ_{+}\}$, by setting
\begin{equation}
\bb{A}_{0}=\sum_{k=0}a_{r}\ket{k}\bra{k},\quad \bb{A}_{+}=\sum_{k=0}a_{k}^{+}\ket{k}\bra{k+1},\quad
\bb{A}_{-}=\sum_{k=0}a_{k}^{-}\ket{k+1}\bra{k}.
\end{equation}
By introducing $b_{k}:=a^{+}_{k}a^{-}_{k}$, we arrive at the following set of \textit{recurrence equations},
\begin{align}
a_{k+1}-2\Delta a_{k}+a_{k-1}&=0,\\
b_{k+1}-b_{k}&=2a_{k+1}(\Delta a_{k+1}-a_{k}),\\
b_{k-1}-b_{k}&=2a_{k}(\Delta a_{k}-a_{k+1}).
\label{eqn:recurrence_amplitudes}
\end{align}
In fact, the first $3$-point \textit{recurrence relation} for the amplitudes $a_{k}=a_{k}(\Delta,\epsilon)$ is just
the defining recurrence for Chebyshev polynomials in parameter $\Delta\equiv \cos(\gamma)$.
Henceforth, the expression must be a suitable linear combination of polynomials of the first kind and the second kind
\begin{equation}
t_{k}(\Delta)=\cos{(k\gamma)},\quad u_{k}(\Delta)=\frac{\sin{((k+1)\gamma)}}{\sin{(\gamma)}}\equiv [k+1]_{q},
\end{equation}
respectively. From the initial condition $a_{0}(\Delta,\epsilon)=1$ it follows moreover
\begin{equation}
a_{k}(\Delta,\epsilon)=t_{k}(\Delta)+(\ii \epsilon/2)u_{k-1}(\Delta).
\label{eqn:Chebyshev_amplitude}
\end{equation}
Finally, $b_{k}(\Delta,\epsilon)$ is determined by using quadratic relations,
\begin{equation}
a_{k+1}a_{k}-a_{k}a_{k-1}=b_{k}-b_{k-1},\quad b_{0}(\Delta,\epsilon)=\ii \epsilon.
\end{equation}
The remaining part of the proof is merely a trivial consequence of all the properties which have been accumulated so far. First, by means of the boundary conditions \eqref{eqn:boundary_cubic} the $S$-operator can be factored either by removing the leftmost or the rightmost spin, i.e.
\begin{equation}
S_{n}(\epsilon)=\sigma^{0}\otimes \widetilde{S}^{\rm{R},0}_{n-1}(\epsilon)+\sigma^{+}\otimes \widetilde{S}^{\rm{R},+}_{n-1}(\epsilon),\quad
S_{n}(\epsilon)=\widetilde{S}^{\rm{L},0}_{n-1}(\epsilon)\otimes \sigma^{0}+\widetilde{S}^{\rm{L},-}_{n-1}(\epsilon)\otimes \sigma^{-},
\label{eqn:S_factoring}
\end{equation}
where the set of $(n-1)$-body operators
\begin{equation}
\left\{\widetilde{S}^{\rm{L},0}_{n-1}(\epsilon),\widetilde{S}^{\rm{L},-}_{n-1}(\epsilon),
\widetilde{S}^{\rm{R},0}_{n-1}(\epsilon),\widetilde{S}^{\rm{R},+}_{n-1}(\epsilon)\right\}\in \End(\frak{H}_{1}^{\otimes (n-1)}),
\end{equation}
was being introduced. By virtue of boundary relations $\bb{A}_{0}\ket{0}=\ket{0}$ and $\bra{0}\bb{A}_{0}=\bra{0}$ we actually have
$\widetilde{S}^{\rm{L},0}_{n-1}(\epsilon)=\widetilde{S}^{\rm{R},0}_{n-1}(\epsilon)=S_{n-1}(\epsilon)$.
After quite tiresome calculations, using \eqref{eqn:S_factoring} on both sides of the fixed point condition \eqref{eqn:fixed_point_condition}, accounting for the relation \eqref{eqn:global_defining_relation} and exploiting the fact that the dissipator $\DD$ consists of two parts $\DD_{1,n}$ individually affecting $S$-operators only ultra-locally in the boundary spaces, we ultimately complete the proof.

\section[Quantum group symmetry]{Auxiliary process with quantum group symmetry}
\label{sec:QG_symmetry}

An apparent drawback of the method presented in the previous section is that it somehow obstructs the symmetry of the problem. Even though cubic algebraic relations initially hinted to a possibility that quantum stochastic process might display richer algebraic conditions compared to their classical ancestors, namely the ASEP, admitting a wide class of exact solutions characterized by \textit{quadratic} so-called reaction-diffusion algebras~\cite{Alcaraz93,ADHR94,Dahmen95,IPR01}. Eventually it turned out that this is not the case.
The fact has been pointed out first in~\cite{KPS13}, where authors demonstrated that the solution to the maximally-driven anisotropic
Heisenberg chain also permits for MPS realization with auxiliary operators associated to $q$-deformed generators of the $\frak{sl}_{2}$ Lie algebra.
This finding of course implied that the cubic algebra as defined in \eqref{eqn:bulk_cubic} -- which is nevertheless still a perfectly
valid \textit{sufficient} condition to implement the solution to our nonequilibrium problem -- is not a fundamental symmetry property, but merely its gauge-equivalent. This new observation should nonetheless not be too surprising because the $q$-deformed UEA $\cal{U}_{q}(\frak{sl}_{2})$ is indeed the continuous non-Abelian symmetry of the Hamiltonian (modulo boundary conditions), as we have learned in chapter~\ref{sec:integrability}.

Let us represent the $S$-operator in terms of a ``ground state expectation'' of a $n$-fold tensor product
of $2\times 2$ matrices $\bb{L}$, with matrix elements from the auxiliary Hilbert space $\frak{H}_{a}$, i.e.
\begin{equation}
S_{n}(\epsilon)=\bra{0}\bb{L}^{\otimes n}\ket{0},\quad \bb{L}=
\begin{pmatrix}
\bb{A}_{0} & \bb{A}_{+} \cr
\bb{A}_{-} & \bb{A}_{0}.
\end{pmatrix}=\sigma^{0}\otimes \bb{A}_{0}+\sigma^{+}\otimes \bb{A}_{+}+\sigma^{-}\bb{A}_{-}.
\label{eqn:S-Lax_representation}
\end{equation}
For clarity we temporarily omit $\epsilon$-dependence from operators. A tensor product $\otimes$ operates with respect to physical single-particle spaces $\frak{H}_{1}\cong \CC^{2}$, with $\bb{A}$-matrices being the non-commutative entries. Any pedantic reader might have objections with regard to clumsy of the symbol $\bb{L}$ which seemingly interferes with a notion of the Lax matrix in the context of QISM. We shall postpone clarification for this choice for a latter moment. More importantly, with this definition the global defining relation \eqref{eqn:global_defining_relation} can now be cast into a compact expression
\begin{equation}
\sum_{j=1}^{n-1}\bra{0}\bb{L}^{\otimes (j-1)}\otimes \left[h^{\rm{XXZ}},\bb{L}\otimes \bb{L}\right]\otimes
\bb{L}^{\otimes (n-j-1)}\ket{0}=-\ii \epsilon\bra{0}\bb{B}\otimes \bb{L}^{\otimes (n-1)}-
\bb{L}^{\otimes (n-1)}\otimes \bb{B}\ket{0}.
\label{eqn:FCR_Omega}
\end{equation}
We may interpret the expression in the braket on the right-hand side as a \textit{telescoping} sum,
\begin{equation}
\sum_{j=1}^{n-1}\left(\bb{L}^{\otimes (j-1)}\otimes \bb{B}\otimes \bb{L}^{\otimes (n-j)}-
\bb{L}^{\otimes j}\otimes \bb{B}\otimes \bb{L}^{\otimes (n-j-1)}\right),
\end{equation}
from where immediately follows that the bulk part ($2\leq j\leq n-1$) can be eliminated from the equation \eqref{eqn:FCR_Omega} by locally imposing a sort of \textit{operator divergence condition} (cf.~\cite{SS95,KPS13}),
\begin{equation}
\boxed{[h^{\rm{XXZ}},\bb{L}\otimes \bb{L}]=-\ii \epsilon\left(\bb{B}\otimes \bb{L}-\bb{L}\otimes \bb{B}\right).}
\label{eqn:operator_divergence_KPS}
\end{equation}
Explicitly, by working out the isotropic case $\Delta=1$ when the tensor $\Omega$ acquires skew-symmetry
$\Omega_{\alpha\beta}^{\gamma}=-\Omega_{\beta\alpha}^{\gamma}$, we have
\begin{equation}
\sum_{\alpha,\beta\in \{\pm,0\}}\bb{A}_{\alpha}\bb{A}_{\beta}[h^{\rm{XXZ}},\sigma^{\alpha}\otimes \sigma^{\beta}]=-\ii \epsilon
\sum_{\gamma}\bb{A}_{\gamma}\left(\sigma^{z}\otimes \sigma^{\gamma}-\sigma^{\gamma}\otimes \sigma^{z}\right),
\end{equation}
which after extracting matrix-valued coefficients in front of many-body Pauli basis simplifies to
\begin{equation}
\Omega_{\alpha\beta}^{\alpha+\beta}[\bb{A}_{\alpha},\bb{A}_{\beta}]=-\ii \epsilon\;\bb{A}_{\alpha+\beta},\quad
\forall (\alpha,\beta)\in \{(+,-),(0,+),(0,-)\}.
\end{equation}
which are just the defining relations of a Lie algebra
\begin{equation}
[\bb{A}_{+},\bb{A}_{-}]=\ii\epsilon\;\bb{A}_{0},\quad [\bb{A}_{0},\bb{A}_{\pm}]=\mp\left(\frac{\ii \epsilon}{2}\right)\bb{A}_{\pm}.
\end{equation}
The algebra of $\bb{A}$-matrices is isomorphic to $\frak{sl}_{2}$, which can be verified after applying a trivial $\epsilon$-dependent rescaling of the generators.

Authors of~\cite{KPS13} have realized that equation \eqref{eqn:operator_divergence_KPS}, when treated as a set of $16$ operator-valued equations with the anisotropic interaction $h^{\rm{XXZ}}$, admits a realization in terms of the $\Uqsl{2}$ generators. Yet, no hint has been given whether compatibility requirement of given type in some way linked to integrability of the anisotropic Heisenberg Hamiltonian or e.g. generated from some more fundamental algebraic condition, which would eventually explain the origin of central cancellation mechanism for the bulk of the chain.
For instance, imagine that the condition \eqref{eqn:operator_divergence_KPS} was be solved for some other pairs of operators
$\bb{B}$ and $\bb{L}$, in principle unrelated to the symmetry of the bulk theory. This would in turn open a possibility for new types of solutions to the boundary-driven open anisotropic Heisenberg chain. Knowing an underlying mechanism for generating pairs which would fulfill such divergence conditions would enable to make similar constructions for other families of (integrable) models.
In the forthcoming discussion we shall try to fill this gap, establishing a firm link to fundamental objects of quantum
integrability theory and re-derive recent results in a cleaner, mathematically concise and comprehensive fashion.

It is instructive to remark that a condition resembling \eqref{eqn:operator_divergence_KPS} plays a tantamount role in exact solutions of quite intensively studied classical stochastic lattice models, where analogous matrix product ansatze have been proposed~\cite{Derrida93,HN83}.
The main difference to the quantum version is however that classical probability amplitudes are encoded in a \textit{real} vector,
hence a local building piece (having analogous role to the $\bb{B}$-operator above) is an operator-valued vector, instead of an operator-valued matrix as in quantum case. At least on conceptual level however, the analogy works quite firmly, which is of course not an accident, as the authors of~\cite{Derrida93} borrowed their ansatz with incorporated ``divergence trick'' from studies of integrable classical 2D vertex models~\cite{Sutherland70,BaxterBook}. An exact matrix product state solution for free fermions (XX model) from reference~\cite{Znidaric10} is essentially built on the same idea, but has not been carried further into non-integrable paradigm.
\chapter{Exterior integrability}
\label{sec:exterior}

By empirical inspection of the steady state solutions for the XXZ Hamiltonian ,
whose construction is presented in section \eqref{sec:solution}, we observe a remarkable analytic property of the $S$-operator,
\begin{equation}
[S(\epsilon),S(\epsilon^{\prime})]=0,\quad \forall \epsilon,\epsilon^{\prime} \in \CC.
\label{eqn:commuting_property}
\end{equation}
It is probably quite safe to claim that such a property points directly to an uncovered Yang-Baxter integrability structure.
In accordance with~\cite{PIP13}, we shall speak of the \textit{exterior integrability}. A justification for that name comes solely from noticing that a continuous complex parameter which appears in the amplitudes of the $S$-operator plays a role of the coupling strength parameter of a nonequilibrium problem which enters into MPS description for a steady state through ancilla (virtual) degree of freedom pertaining to an auxiliary Hilbert space. That being said, the controversial use of symbol $\bb{L}$, customary reserved for Lax operators, has now become legitimate. It is important to point out, however, that parameter $\epsilon$ cannot simply be an ordinary spectral parameter (cf. chapter~\ref{sec:integrability}) because it does \textit{not} couple to the identity component in the expansion of $\bb{L}$, but rather enters through algebra generators.

By virtue of property \eqref{eqn:commuting_property}, the $S$-operator becomes a \textit{non-hermitian} generator of Abelian conserved charges of non-local structure. We are going to show though how $S(\epsilon)$-operator can be used as a generating operator for continuum of pseudo-local charges, having a profound role on the nature of quantum transport. We devote the whole chapter \ref{sec:transport} to discuss these aspects.
In this section we rather entirely focus on the technical part, namely we aim to rigorously prove the property \eqref{eqn:commuting_property}. We shall restrict our consideration solely on the isotropic interaction (i.e. XXX Heisenberg model), albeit the property holds for any value of anistropy parameter $\Delta$. Our aim is to find the intertwiner which establishes the commutative property of two $S$-operators. Below we provide a proof by construction as presented in~\cite{PIP13}.

We are looking for an $R$-matrix being a map over a tensor product of two \textit{irreducible infinite-dimensional} spaces,
$R(p,p^{\prime})\in \End(\frak{H}_{a}\otimes \frak{H}_{a})$, with formal semi-infinite basis $\{\ket{0},\ket{1},\ldots\}$.
By utilizing a generic representation parameter $p \in \CC$, we adopt the following parametrization of the $\bb{A}$-matrices,
\begin{align}
\label{eqn:representation}
\bb{A}_{0}(p)&=\sum_{k=0}^{\infty}a^{0}_{k}(p)\ket{k}\bra{k},\nonumber \\
\bb{A}_{+}(p)&=\sum_{k=0}^{\infty}a^{+}_{k}(p)\ket{k}\bra{k+1},\\
\bb{A}_{-}(p)&=\sum_{k=0}^{\infty}a^{-}_{k}(p)\ket{k+1}\bra{k},\nonumber 
\end{align}
with amplitude functions of the form
\begin{equation}
a^{0}_{k}(p)=p-k,\quad a^{+}_{k}(p)=k-2p,\quad a^{-}_{k}=k+1, 
\label{eqn:amplitudes}
\end{equation}
Up to trivial rescalings these constitute $\frak{sl}_{2}$-type commutation relations,
\begin{equation}
[\bb{A}_{+}(p),\bb{A}_{-}(p)]=-2\bb{A}_{0}(p),\quad [\bb{A}_{0}(p),\bb{A}_{\pm}(p)]=\pm \bb{A}_{\pm}(p).
\label{eqn:algebra}
\end{equation}
Retaining a convenient index notation from chapter~\ref{sec:integrability}, the Lax operator for our problem lives in
$\bb{L}_{k}(p)\in \End(\frak{H}_{s}\otimes \frak{H}_{a})$, operating non-identically only in the local quantum
space $\frak{h}_{k}$, and admits a resolution in terms of Pauli operators as
\begin{equation}
\bb{L}_{k}(p)=\sigma_{k}^{0}\otimes \bb{A}_{0}(p)+\sigma_{k}^{+}\otimes \bb{A}_{+}(p)+\sigma_{k}^{-}\otimes \bb{A}_{-}.
\label{eqn:Lax_operator}
\end{equation}
Henceforth, the $S$-operator becomes simply given by the ``vacuum projection'' of the corresponding monodromy matrix, i.e.
\begin{equation}
S(p)=\bra{0}\bb{L}_{1}(p)\bb{L}_{2}(p)\cdots \bb{L}_{n}(p)\ket{0}=\bra{0}\bb{T}(p)\ket{0}.
\end{equation}
We stick with our habit that operators which are \textit{not scalars} with respect to auxiliary space $\frak{H}_{a}$ are written boldface. Two independent requirements are sufficient to guarantee the property \eqref{eqn:commuting_property}:
\begin{align}
\label{eqn:intertwining}
\PBR_{a_{1}a_{2}}(p,p^{\prime})\bb{L}_{a_{1}k}(p)\bb{L}_{a_{2}k}(p^{\prime})=
\bb{L}_{a_{1}k}(p^{\prime})\bb{L}_{a_{2}k}(p)\PBR_{a_{1}a_{2}}(p,p^{\prime}),\\
\label{eqn:boundary_conditions}
\bra{0,0}\PBR_{12}(\lambda,\mu)=\bra{0,0},\quad \PBR(\lambda,\mu)\ket{0,0}=\ket{0,0}.
\end{align}
% \begin{subequations}
% \begin{empheq}[box=\widefbox]{align}
% \PBR_{12}(\lambda,\mu)\bb{L}_{1}(\lambda)\bb{L}_{2}(\mu)=\bb{L}_{1}(\mu)\bb{L}_{2}(\lambda)\PBR_{12}(\lambda,\mu),\\
% \label{eqn:intertwining}
% \bra{0,0}\PBR_{12}(\lambda,\mu)=\bra{0,0},\quad \PBR(\lambda,\mu)\ket{0,0}=\ket{0,0}.
% \label{eqn:boundary_conditions}
% \end{empheq}
% \end{subequations}
The first property is a local intertwining relation in the form of the RLL relation, expressed by means of the $R$-matrix $\PBR(\lambda,\mu)$
is the \textit{braided form}\footnote{To obtain the standard $R$-matrix compliant with the standard YBE, use conversion via left-multiplication with
the permutation matrix $\bb{P}\in \End(\frak{H}_{a}\otimes \frak{H}_{a})$, yielding $\bb{R}(p,p^{\prime})=\bb{P}\PBR(p,p^{\prime})$}.
The second one is representing boundary conditions \eqref{eqn:boundary_conditions} which naturally replace the partial trace over $\frak{H}_{a}$.
It is worth emphasizing at this point that the trace operation makes no sense in this setup as far as we work with infinite-dimensional spaces where the standard definition of the trace is ill-defined.

To prove the that requirements \eqref{eqn:intertwining} and \eqref{eqn:boundary_conditions} are sufficient for the $S$-operator to commute at different values of parameter $p$, we invoke the ``train argument'', expressing the fact that Lax matrices equipped with different position indices commute. Particularly, for
\begin{equation}
S(p)=\bra{0}\bb{T}_{a_{1}}(p)\ket{0},\quad S(p^{\prime})=\bra{0}\bb{T}_{a_{2}}(p^{\prime})\ket{0},
\end{equation}
while using shorthanded notation for product vacua, $\bbra{0}{0}:=\bra{0,0}$, $\kket{0}{0}:=\ket{0,0}$, a quick inspection
\begin{align}
S(p)S(s)&=\bra{0,0}\bb{T}_{a_{1}}(p)\bb{T}_{a_{2}}(p^{\prime})\ket{0,0}\nonumber \\
&=\bra{0,0}(\bb{L}_{a_{1}1}(p)\bb{L}_{a_{1}2}(p)\cdots \bb{L}_{a_{1}n}(p))
(\bb{L}_{a_{2}1}(p^{\prime})\bb{L}_{a_{2}2}(p^{\prime})\cdots \bb{L}_{a_{2}n}(p^{\prime}))\ket{0,0}\nonumber \\
&=\bra{0,0}\PBR(p,p^{\prime})\bb{L}_{a_{1}1}(p)\bb{L}_{a_{2}1}(p^{\prime})\bb{L}_{a_{1}2}(p)\bb{L}_{a_{2}2}(p^{\prime})\cdots
\bb{L}_{a_{1}n}(p)\bb{L}_{a_{2}n}(p^{\prime})\ket{0,0}\nonumber \\
&=\bra{0,0}\bb{L}_{a_{1}1}(p^{\prime})\bb{L}_{a_{2}1}(p)\bb{L}_{a_{1}2}(p^{\prime})\bb{L}_{a_{2}2}(p)\cdots
\bb{L}_{a_{1}n}(p^{\prime})\bb{L}_{a_{2}n}(p)\PBR(p,s)\ket{0,0}\nonumber \\
&=\bra{0,0}\bb{T}_{a_{1}}(p^{\prime})\bb{T}_{a_{2}}(p)\ket{0,0}=S(p^{\prime})S(p),
\label{eqn:commuting_proof}
\end{align}
confirms the validity of \eqref{eqn:commuting_property}.

\paragraph{Ice-rule.}
We have already mentioned the upper-triangularity property of the $S$-operator.
The precise definition says that amplitudes with respect to \textit{computational basis} from the $\frak{H}_{s}$,
\begin{equation}
{\ket{\ul{\nu}}=\ket{\nu_{1},\nu_{2},\ldots,\nu_{n}},\qquad \nu_{j}\in \{0,1\}},
\end{equation}
with $\sigma^{z}_{j}\ket{\ul{\nu}}=(-1)^{\nu_{j}}\ket{\ul{\nu}}$, which are of the form
\begin{equation}
\bra{\ul{\nu}^{\prime}}S(p)\ket{\ul{\nu}}=\bra{0}\bb{A}_{\nu_{1}-\nu_{1}^{\prime}}(p)
\bb{A}_{\nu_{2}-\nu_{2}^{\prime}}(p)\cdots \bb{A}_{\nu_{n}-\nu_{n}^{\prime}}(p)\ket{0},
\label{eqn:S_computational_basis}
\end{equation}
\textit{vanish} whenever
\begin{equation}
\sum_{j=1}^{n}\nu_{j}^{\prime}2^{n-j}>\sum_{j=1}^{n}\nu_{j}2^{n-j}\Longrightarrow \bra{\ul{\nu}^{\prime}}S(p)\ket{\ul{\nu}}=0.
\label{eqn:selection_rule}
\end{equation}
This neat property, originating as a consequence of the boundary selection rules,
\begin{equation}
\bra{0}\bb{A}_{0}=p\bra{0},\quad \bra{0}\bb{A}_{-}=0,
\end{equation}
already implies that $S$-operator is \textit{non-diagonalizable}. In particular, the diagonal elements simply read
\begin{equation}
\bra{\ul{\nu}}S(\lambda)\ket{\ul{\nu}}=p^n,
\label{eqn:S_diagonal}
\end{equation}
whence all eigenvalues are given by $p^{n}$. But $S(\lambda)$ is \textit{not} diagonal and must therefore posses a non-trivial
Jordan decomposition.

Similarly, we can have a look at a general element of the monodromy matrix
\begin{equation}
T^{k^{\prime}}_{k}(p):=\bra{k^{\prime}}\bb{T}(p)\ket{k}\in \End(\frak{H}_{s})
\end{equation}
For convenience we subsequently place all row indices as upper-scripts. By tridiagonality of the representation \eqref{eqn:representation}
we find that the matrix elements $\bra{\ul{\nu}^{\prime}}T^{k^{\prime}}_{k}(p)\ket{\ul{\nu}}$ all equal \textit{zero} provided that the selection rule
\begin{equation}
\sum_{j=1}^{n}\nu_{j}-\nu_{j}^{\prime}=k-k^{\prime}.
\label{eqn:magnetization_selection}
\end{equation}
is obeyed. Henceforth the operators $T^{k^{\prime}}_{k}(p)$ have well-defined value of global magnetization,
\begin{equation}
[M,T^{k^{\prime}}_{k}(p)]=2(k^{\prime}-k)T^{k^{\prime}}_{k}(p),
\end{equation}
implying global $U(1)$-invariance of individual $T^{k^{\prime}}_{k}(p)$.

Furthermore, there exist another $U(1)$ global symmetry on the level of the product auxiliary space $\frak{H}_{a}\otimes \frak{H}_{a}$,
reflecting in the so-called \textit{ice-rule property} of the matrix $\PBR(p,p^{\prime})$ by virtue of preservation of the auxiliary ``particle
number'' operator $\bb{N}$,
\begin{equation}
[\PBR(p,p^{\prime}),\bb{N}]=0,\quad \bb{N}=-(\bb{A}_{0}(0)\otimes \one + \one \otimes \bb{A}_{0}(0))=\bigoplus_{\alpha}\alpha\;\one_{\alpha+1}.
\label{eqn:ice-rule}
\end{equation}
Recall that we have already encountered such rule in the paradigmatic case of the rational $4\times 4$ $6$-vertex $R$-matrix of the XXX model \ref{eqn:R-matrix_XXX}. The name $6$-vertex solution comes from $6$ out of $16$ non-vanishing matrix elements.
There, both $U(1)$ symmetries, i.e. the one imposed over the many-body quantum space $\frak{H}_{s}$ and the one associated with tensor-product auxiliary spaces, evidently emerge as a consequence of $\frak{su}_{2}$ symmetry of the corresponding $R$-matrix.

\clearpage
\section{Exterior R-matrix}
\label{sec:exterior}

With ice-rule being conjectured, we split a product of two copies of auxiliary spaces into a semi-infinite direct sum of Hilbert
spaces $\frak{H}^{(\alpha)}_{a}$ labeled by an index $\alpha \in \ZZ_{+}$,
\begin{equation}
\frak{H}_{a}\otimes \frak{H}_{a}=\bigoplus_{\alpha=0}^{\infty}\frak{H}^{(\alpha)}_{a}.
\end{equation}
Spaces $\frak{H}^{(\alpha)}_{a}$ are of dimension $(\alpha+1)$ and are spanned by product states $\ket{k,\alpha-k}$.
Therefore for any operator $\bb{X}\in \End(\frak{H}_{a}\otimes \frak{H}_{a})$ we can apply the block decomposition
\begin{equation}
\bb{X}=\bigoplus_{\alpha=0}^{\infty}\bb{X}^{(\alpha)},
\end{equation}
and successively also for the $\PBR$-matrix,
\begin{equation}
\PBR(p,p^{\prime})=\sum_{\alpha=0}^{\infty}\sum_{k,l=0}^{\alpha}R^{(\alpha)}_{k,l}\ket{k,\alpha-k}\bra{l,\alpha-l}=
\bigoplus_{\alpha=0}^{\infty}\PBR^{(\alpha)}(p,p^{\prime}).
\end{equation}
By taking into account that the elements from $\frak{H}^{(0)}_{a}$ are scalars, we choose an overall normalization such that
$R^{(0)}_{0,0}=1$, yielding
\begin{equation}
\PBR(p,p^{\prime})\ket{0,0}=\ket{0,0},\quad \bra{0,0}\PBR(p,p^{\prime})=\bra{0,0}.
\end{equation}
This already takes care of the boundary requirements \eqref{eqn:boundary_conditions}.
It therefore only remains to solve for the intertwining property \eqref{eqn:intertwining} which is addressed in the theorem below.

\begin{theorem}
\label{theorem1}
A solution of the RLL relation \eqref{eqn:intertwining} for the Lax matrix \eqref{eqn:Lax_operator} is given by
\begin{equation}
\PBR \left(x+\frac{y}{2},x-\frac{y}{2}\right)=\exp{(y\;\bb{H}(x))},
\end{equation}
for all $x\in \CC \setminus \half \ZZ_{+}$ and $y\in \CC$, being block-decomposed as
\begin{equation}
\bb{H}(x)=\bigoplus_{\alpha}\bb{H}^{(\alpha)}(x),
\end{equation}
with matrix elements reading explicitly
\begin{align}
H_{k,l}^{(\alpha)}(x)&=\frac{(-1)^{k-l}}{2}\binom{k}{l}\sum_{m=1}^{k-1}(-1)^{m}\binom{k-l-1}{m-l}f_{m}(x),\quad k\geq l+1,\\
H_{k,k}^{(\alpha)}(x)&=\sum_{m=k}^{\alpha-k-1}f_{m}(x),\quad 2k \leq \alpha,\\
H_{\alpha-k,\alpha-l}^{(\alpha)}(x)&=-H_{k,l}^{(\alpha)}(x),
\end{align}
and simple-pole functions $f_{m}(x):=(x-m/2)^{-1}$.
\end{theorem}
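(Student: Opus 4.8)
The plan is to verify the RLL relation \eqref{eqn:intertwining} and the boundary conditions \eqref{eqn:boundary_conditions} directly for the stated ansatz $\PBR(x+\tfrac{y}{2},x-\tfrac{y}{2})=\exp(y\,\bb{H}(x))$, working one ice-rule block at a time, and then deduce \eqref{eqn:commuting_property} from the train argument. First I would invoke the ice-rule \eqref{eqn:ice-rule}: it forces $\PBR$, hence $\bb{H}(x)$, to be block diagonal on $\frak{H}_{a}\otimes\frak{H}_{a}=\bigoplus_{\alpha\ge 0}\frak{H}^{(\alpha)}_{a}$ with $\dim\frak{H}^{(\alpha)}_{a}=\alpha+1$, which is exactly the structure asserted in the theorem and reduces everything to finite-dimensional matrices. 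Expanding \eqref{eqn:intertwining} in the physical basis of $\frak{h}_{k}\cong\CC^{2}$ splits it into four $\End(\frak{H}_{a}\otimes\frak{H}_{a})$-valued equations indexed by the physical matrix position; the two diagonal ones preserve the block label $\alpha$ while the two off-diagonal ones shift it by $\pm1$, so they couple $\PBR^{(\alpha)}$ to $\PBR^{(\alpha\mp1)}$. The normalization $R^{(0)}_{0,0}=1$ already secures \eqref{eqn:boundary_conditions}.

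Next, writing $\bb{L}_{1}\equiv\bb{L}_{a_{1}k}$ and $\bb{L}_{2}\equiv\bb{L}_{a_{2}k}$ (the physical site $k$ being a spectator), I would use that $\bb{L}_{j}(p)$ is affine in $p$, so $\bb{L}_{j}':=\partial_{p}\bb{L}_{j}$ is constant and $\bb{L}_{1}(x+\tfrac{y}{2})\,\bb{L}_{2}(x-\tfrac{y}{2})$ is a polynomial in $y$ of degree at most two; relation \eqref{eqn:intertwining} then reads
\begin{equation}
e^{\,y\,\mathrm{ad}_{\bb{H}(x)}}\Bigl(\bb{L}_{1}(x+\tfrac{y}{2})\,\bb{L}_{2}(x-\tfrac{y}{2})\Bigr)=\bb{L}_{1}(x-\tfrac{y}{2})\,\bb{L}_{2}(x+\tfrac{y}{2}).
\end{equation}
Matching coefficients of $y^{N}$ and exploiting the quadratic structure of the left argument, I would show that all orders collapse to the single Sutherland equation $\mathrm{ad}_{\bb{H}(x)}\bigl(\bb{L}_{1}(x)\bb{L}_{2}(x)\bigr)=\bb{L}_{1}(x)\bb{L}_{2}'-\bb{L}_{1}'\bb{L}_{2}(x)$ (the order-$y$ piece, i.e.\ the ``unitary'' part of the fixed-point mechanism) together with one companion identity $\mathrm{ad}_{\bb{H}(x)}\bigl(\bb{L}_{1}'\bb{L}_{2}(x)-\bb{L}_{1}(x)\bb{L}_{2}'\bigr)=3\,\bb{L}_{1}'\bb{L}_{2}'$; iterating $\mathrm{ad}_{\bb{H}(x)}$ on the three polynomial coefficients terminates, so all $y^{\ge3}$ equations follow automatically and the pair is equivalent to the full RLL relation.

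Finally, I would solve and verify these two relations block by block. On $\frak{H}^{(\alpha)}_{a}$ they are linear equations for the entries $H^{(\alpha)}_{k,l}(x)$; the tridiagonal shape of $\bb{A}_{0},\bb{A}_{\pm}$ in \eqref{eqn:representation}–\eqref{eqn:amplitudes} converts them into a three-term recursion along the row index whose inhomogeneity is assembled from the simple poles $f_{m}(x)=(x-m/2)^{-1}$. Imposing the antisymmetry $H^{(\alpha)}_{\alpha-k,\alpha-l}=-H^{(\alpha)}_{k,l}$, the initial datum $H^{(0)}\equiv0$, and the gauge $R^{(0)}_{0,0}=1$ fixes the solution uniquely; inverting the recursion is what produces the binomial prefactors $\binom{k}{l}$ and $\binom{k-l-1}{m-l}$, and the restriction $x\notin\tfrac{1}{2}\ZZ_{+}$ is exactly the avoidance of the poles of the $f_{m}$. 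Substituting the closed form back, every component of the two relations reduces to a Vandermonde-type binomial summation together with a partial-fraction rearrangement of telescoping sums of the $f_{m}$, after which \eqref{eqn:commuting_property} follows by the train argument displayed in \eqref{eqn:commuting_proof}. I expect the combinatorial heart of step three — guessing the right $\bb{H}(x)$ by inverting the recursion, then pushing the binomial/partial-fraction identities through to confirm the two Sutherland-type relations — to be the main obstacle; the order-counting reduction of step two is comparatively routine.
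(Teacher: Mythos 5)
Your overall strategy -- exploit the ice-rule block structure, use linearity of $\bb{L}$ in the representation parameter to make $\BL(x,y)=\bb{L}(x+\tfrac{y}{2})\otimes_a\bb{L}(x-\tfrac{y}{2})$ quadratic in $y$, expand $e^{y\,\ad_{\bb{H}(x)}}\BL(x,y)=\BL(x,-y)$ order by order, and reduce to a handful of adjoint-action identities verified block-wise -- is exactly the paper's. But your reduction step contains a genuine error. The order-$y$ coefficient gives the HLL/Sutherland relation $\ad_{\bb{H}}\BL_{0}=\BL_{1}$, and the order-$y^{2}$ coefficient is then automatic; the first new condition appears at order $y^{3}$ and reads $\ad_{\bb{H}}^{2}\BL_{1}+3\,\ad_{\bb{H}}\BL_{2}=0$, i.e.\ it carries one more adjoint action than your companion identity $\ad_{\bb{H}}\bigl(\bb{L}'\otimes_a\bb{L}(x)-\bb{L}(x)\otimes_a\bb{L}'\bigr)=3\,\bb{L}'\otimes_a\bb{L}'$. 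Your version, which asserts $\ad_{\bb{H}}\BL_{1}=-3\BL_{2}$, is in fact false: the $\sigma^{0}$-component of $\BL_{2}$ is the identity on $\frak{H}_{a}\otimes\frak{H}_{a}$, so on any finite $\alpha$-block the claimed identity would force a commutator to equal $-3\,\one_{\alpha+1}$, which has non-zero trace.

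Second, even with the correct third-order identity the higher orders do \emph{not} ``follow automatically'' from termination of the polynomial: comparing the order-$y^{N}$ equations for $N\geq 5$ against the iterated third-order identity leaves the residual conditions $\ad_{\bb{H}}^{2l-1}\BL_{2}=0$ ($l\geq 2$), which the paper secures by proving the stronger statement $\ad_{\bb{H}}^{2}\BL_{2}=0$ as a separate master symmetry. Your proposal omits this third condition entirely, so the equivalence with the full RLL relation is not established. Finally, you describe the block-wise verification as a Vandermonde summation plus partial-fraction telescoping; the paper explicitly abandons that route for the master symmetries (the double sums quadratic in binomials are unmanageable) and instead runs an induction in $\alpha$ driven by the conserved operator $[\bb{H},\BL_{1}^{-}]=0$ linking adjacent blocks, supplemented by the kernel vectors $\bb{u}^{(\alpha)},\bb{v}^{(\alpha)}$ to repair the non-invertibility of the rectangular maps $\BL_{1}^{(\alpha)-}$. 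That inductive machinery is the real content of the proof and is absent from your sketch.
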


\begin{proof}
Interpreting parameters $p,p^{\prime}$ as quasiparticle momenta, we initially re-parametrize the RLL equation \eqref{eqn:intertwining} in the ``center-of-momentum frame'', by introducing the reference coordinate $x=(p+p^{\prime})/2$ and ``relative momentum'' coordinate $y=p-p^{\prime}$. It follows
\begin{equation}
\exp{(y\;\bb{H}_{12}(x))}\bb{L}_{1}\left(x+\frac{y}{2}\right)\bb{L}_{2}\left(x-\frac{y}{2}\right)=
\left(\bb{L}_{1}\left(x-\frac{y}{2}\right)\bb{L}_{2}\left(x+\frac{y}{2}\right)\right)\exp{(y\;\bb{H}_{12}(x))},
\label{eqn:exponential_form}
\end{equation}
The intertwiner $\PBR(p,s)$ is generated by $\bb{H}_{12}(x)$ which is \textit{independent} of the relative coordinate $y$.
The form \eqref{eqn:exponential_form} is reminiscent of a Lie group structure.
What is more important, the Lax operator is \textit{linear} in momentum $p$ and can therefore be split into two parts, i.e.,
\begin{equation}
\bb{L}(\lambda)=\bb{L}_{0}+p\;\bb{L}^{\prime},\quad \bb{L}_{0}=\bb{L}(0),\quad \bb{L}^{\prime}=(\dd/\dd x)\bb{L}(x),
\end{equation}
To maintain the presentation as simple and compact as possible we now switch to notation a-la Korepin~\cite{KorepinBook}. With this choice we avoid explicit use of subindices addressing individual copies of auxiliary spaces $\frak{H}_{a}$ and rather replace them with a (partial) tensor product operation $\otimes_a$, an operation which takes tensor product of two Lax operator with common local
quantum space in which a matrix multiplication takes place. By doing so we are able to use subscript indices for different purposes.

Let us decompose the tensor product of Lax operators as
\begin{align}
\BL(x,y)&:=\bb{L}\left(x+\frac{y}{2}\right)\otimes_{a}\bb{L}\left(x-\frac{y}{2}\right)\\
&=\bb{L}(x)\otimes_{a}\bb{L}(x)-\frac{y}{2}\left(\bb{L}(x)\otimes_{a}\bb{L}^{\prime}-\bb{L}^{\prime}\otimes_{a}\bb{L}(x)\right)-
\frac{y^{2}}{4}\bb{L}^{\prime}\otimes_{a}\bb{L}^{\prime}\\
&=:\BL_{0}(x)-\frac{y}{2}\BL_{1}-\frac{y^{2}}{4}\BL_{2}.
\label{eqn:Lambda_def}
\end{align}
The entire $x$-dependence was absorbed into zeroth-order operator $\BL_{0}(x)$, whereas the $\BL_{1,2}$ are two \textit{constant} operators. In fact, by defining
\begin{equation}
\bb{K}:=-\frac{d}{dx}\bb{A}_{+}(x)=2\sum_{k}\ket{k}\bra{k+1},
\end{equation}
we expand the operator $\bb{L}^{\prime}$ in terms of Weyl matrices $e^{ij}=\ket{i}\bra{j}$ as
\begin{equation}
\bb{L}^{\prime}=(e^{00}+e^{11})\otimes \one_{a}-e^{01}\otimes \bb{K}=
\begin{pmatrix}
\one_{a} & -\bb{K} \cr
0 & \one_{a}
\end{pmatrix},
\end{equation}
yielding the following compact expressions for $\BL$-matrices,
\begin{align}
\label{eqn:BL0}
\BL_{0}(x)&=
e^{00}\otimes(\bb{A}_{0}(x)\otimes \bb{A}_{0}(x) + \bb{A}_{+}(x)\otimes \bb{A}_{-}(x))\nonumber \\
&+e^{01}\otimes(\bb{A}_{0}(x)\otimes \bb{A}_{+}(x) + \bb{A}_{+}(x)\otimes \bb{A}_{0}(x))\nonumber \\
&+e^{10}\otimes(\bb{A}_{0}(x)\otimes \bb{A}_{-} + \bb{A}_{-}\otimes \bb{A}_{0})\nonumber \\
&+e^{11}\otimes(\bb{A}_{0}(x)\otimes \bb{A}_{0}(x) + \bb{A}_{-}\otimes \bb{A}_{+}(x)), \\
\label{eqn:BL1}
\BL_{1}&=
e^{00}\otimes(\bb{A}_{0}(0)\otimes \one_{a} - \one_{a}\otimes \bb{A}_{0}(0) + \bb{K}\otimes \bb{A}_{-})\nonumber \\
&+e^{01}\otimes(\bb{A}_{+}(0)\otimes \one_{a} - \one_{a}\otimes \bb{A}_{+}(0) + \bb{K}\otimes \bb{A}_{0}(0)-\bb{A}_{0}(0)\otimes \bb{K})\nonumber \\
&+e^{10}\otimes(\bb{A}_{-}\otimes \one_{a} - \one_{a}\otimes \bb{A}_{-})\nonumber \\
&+e^{11}\otimes(\bb{A}_{0}(0)\otimes \one_{a} - \one_{a}\otimes \bb{A}_{0}(0) - \bb{A}_{-}\otimes \bb{K}),\\
\label{eqn:BL2}
\BL_{2}&=
(e^{00}+e^{11})\otimes(\one_{1}\otimes \one_{a})-e^{01}\otimes(\bb{K}\otimes \one_{a}+\one_{a}\otimes \bb{K}).
\end{align}
The trick is now to recognize that the expression \eqref{eqn:exponential_form} can be recast by virtue of the defining Lie algebra identity,
\begin{equation}
e^{\ad_{X}}Y=e^{X}Ye^{-X},\quad \ad_{X}(Y)\equiv [X,Y],
\label{eqn:Lie_identity}
\end{equation}
after multiplying it by $\exp{(-\frac{y}{2}\bb{H}(x))}$ from both sides, into a very useful form (equivalent to \eqref{eqn:exponential_form})
\begin{equation}
\exp{\left(\frac{y}{2}\ad_{\bb{H}(x)}\right)}\BL(x,y)-\exp{\left(-\frac{y}{2}\ad_{\bb{H}(x)}\right)}\BL(x,-y)=0.
\label{eqn:adH_form}
\end{equation}
The formula expresses an iterated adjoint action of the generator $\bb{H}(x)$ on the $\BL$-operator.
Expansion \eqref{eqn:adH_form} is of \textit{odd} order in $y$, which becomes clear after reshaping it into
\begin{equation}
\sinh{\left(\frac{y}{2}\ad_{\bb{H}(x)}\right)}\left(\BL_{0}(x)-\frac{y^{2}}{4}\BL_{2}\right)-\frac{y}{2}\cosh{\left(\frac{y}{2}\ad_{\bb{H}(x)}\right)}\BL_{1}=0.
\label{eqn:hyperbolic_form}
\end{equation}
Let us take a closer look at individual orders now. In the first order $\cal{O}(y)$ we obtain
\begin{equation}
\boxed{\ad_{\bb{H}(x)}\BL_{0}(x)=\BL_{1}.}
\label{eqn:first_order}
\end{equation}
For reader's amusement we emphasize similarity of \eqref{eqn:first_order} with the operator divergence condition \eqref{eqn:operator_divergence_KPS}
over then quantum space, if the generator $\bb{H}(x)$ is understood as some fictitious interaction in the auxiliary space.
We shall refer to the equation \eqref{eqn:first_order} as the \textit{HLL relation}.
Moreover, we can see that higher orders $\cal{O}(y^{2l+1})$ (specified by $l\in \NaN$), constitute a $3$-point recurrence relation
\begin{equation}
\ad_{\bb{H}(x)}^{2l+1}\BL_{0}(x)-(2l+1)\ad_{\bb{H}(x)}^{2l}\BL_{1}-2l(2l+1)\ad_{\bb{H}(x)}^{2l-1}\BL_{2}=0.
\label{eqn:higher_orders} 
\end{equation}
Although it might seem that we have done nothing worthwhile, apart from aesthetic improvements, however, the practical advantage of the form \eqref{eqn:higher_orders} (which is in principle an infinite hierarchy of operator-valued equations) is to expose a severe redundancy of the problem. First of all, we take $l=1$ and by use of \eqref{eqn:first_order} eliminate the $\BL_{0}(x)$, producing
\begin{equation}
\boxed{\ad_{\bb{H}(x)}^{2}\BL_{1}+3\ad_{\bb{H}(x)}\BL_{2}=0.}
\label{eqn:third_order}
\end{equation}
The remaining cases (i.e. for $l\geq 2$) transform, after using equation \eqref{eqn:first_order} in conjunction with \eqref{eqn:third_order}, into
a remarkably simple condition,
\begin{equation}
\ad_{\bb{H}(x)}^{2l-1}\BL_{2}=0.
\end{equation}
The latter is automatically fulfilled if we impose a \textit{stronger} condition
\begin{equation}
\boxed{\ad_{\bb{H}(x)}^{2}\BL_{2}=0.}
\label{eqn:any_order}
\end{equation}
In summary, the remainder of the proof consists of explicitly demonstrating the validity of identities \eqref{eqn:first_order},\eqref{eqn:third_order} and
\eqref{eqn:any_order}. This verifications represent results of the two independent lemmas which we provide below.
\end{proof}

But before we present the lemmas it is advantageous to exploit parity symmetry of $\BL$-operators.
To this end we introduce two types of permutation maps. First, the auxiliary permutation map
$\pi_{a}\in \End(\frak{H}_{a}\otimes \frak{H}_{a})$,
\begin{equation}
\pi_{a}(\bb{X})=\bb{P}\bb{X}\bb{P},\quad \bb{X}\in \End(\frak{H}_{a}\otimes \frak{H}_{a}),\quad
\bb{P}\ket{k,l}=\ket{l,k},\quad \forall k,l\in \ZZ_{+},
\label{eqn:aux_permutation}
\end{equation}
with decomposition in terms of $\alpha$-blocks,
\begin{equation}
\bb{P}=\bigoplus_{\alpha}\bb{P}^{(\alpha)},\quad P^{(\alpha)}_{k,l}=\delta_{k+l,\alpha}.
\end{equation}
Hence, for two arbitrary operators $\bb{a},\bb{b}\in \End(\frak{H}_{a})$ we have $\pi_{a}(\bb{a}\otimes \bb{b})=\bb{b}\otimes \bb{a}$.
Another permutation map can be defined over a local quantum space $\pi_{s}\in \End(\frak{H}_{1})$, expressed on Weyl basis matrices as
\begin{equation}
\pi_{s}(e^{\nu \nu^{\prime}})=e^{1-\nu^{\prime},1-\nu},
\end{equation}
or equivalently, by using Pauli basis matrices as $\pi_{s}(\sigma^{0})=\sigma^{0}$, $\pi_{s}(\sigma^{\pm})=\sigma^{\pm}$ and
$\pi_{s}(\sigma^{z})=-\sigma^{z}$. Finally, the full permutation map $\pi \in \End(\frak{H}_{1}\otimes \frak{H}_{a}\otimes \frak{H}_{a})$ is
provided by the composition of individual permutations,
\begin{equation}
\pi=\pi_{s}\circ \pi_{a}.
\label{eqn:full_parity}
\end{equation}
One may quickly check that the generator $\bb{H}(x)$ and $\BL$-operators have well-defined parities,
\begin{equation}
\pi_{a}(\bb{H})=\bb{H},\quad \pi(\BL_{k})=(-1)^{k}\BL_{k},\quad k=\{0,1,2\},
\end{equation}
therefore the whole formula \eqref{eqn:first_order} is an eigenoperator of $\pi$ with the eigenvalue $-1$,
\begin{equation}
[\bb{H}(x),\BL_{0}(x)]-\BL_{1}=-\pi([\bb{H}(x),\BL_{0}(x)]-\BL_{1}).
\end{equation}

\begin{lem}
\label{lem1}
The generator $\bb{H}(x)$ satisfies the HLL relation,
\begin{equation}
[\bb{H}(x),\BL_{0}(x)]=\BL_{1},\quad \forall x\in \CC\setminus \half \ZZ_{+}.
\label{eqn:lemma1}
\end{equation}
\end{lem}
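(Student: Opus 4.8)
The plan is to prove \eqref{eqn:lemma1} by direct verification, peeling off in turn the quantum index, the auxiliary $U(1)$-grading, and finally reducing to a finite family of rational identities in $x$. First, since $\bb{H}(x)$ acts as a scalar on the quantum factor $\frak{H}_{1}$, expanding $\BL_{0}(x)$ and $\BL_{1}$ in the Weyl basis $\{e^{\nu\nu^{\prime}}\}$ of $\End(\frak{H}_{1})$ through \eqref{eqn:BL0} and \eqref{eqn:BL1} turns the HLL relation \eqref{eqn:first_order} into four operator identities in $\End(\frak{H}_{a}\otimes\frak{H}_{a})$, one for the coefficient of each $e^{\nu\nu^{\prime}}$. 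Since \eqref{eqn:first_order} has a definite parity under $\pi=\pi_{s}\circ\pi_{a}$ of \eqref{eqn:full_parity} (using $\pi_{a}(\bb{H})=\bb{H}$ and $\pi(\BL_{k})=(-1)^{k}\BL_{k}$), these four identities fall into $\pi_{a}$-orbits, so it suffices to verify the one attached to $e^{00}$ (a genuine commutator) and the one attached to $e^{01}$; the remaining two are their images.

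Second, I would exploit the ice-rule grading \eqref{eqn:ice-rule}. The operator $\bb{A}_{0}(x)\otimes\bb{A}_{0}(x)+\bb{A}_{+}(x)\otimes\bb{A}_{-}$ commutes with $\bb{N}$, hence restricts to each finite sector $\frak{H}_{a}^{(\alpha)}$, on which it is an explicit $(\alpha+1)\times(\alpha+1)$ matrix read off from the amplitudes \eqref{eqn:amplitudes}; there the $e^{00}$-identity becomes an ordinary commutator equation for the matrix $\bb{H}^{(\alpha)}(x)$ of Theorem \ref{theorem1}. The operator $\bb{A}_{0}(x)\otimes\bb{A}_{+}(x)+\bb{A}_{+}(x)\otimes\bb{A}_{0}(x)$ entering the $e^{01}$-block instead lowers $\bb{N}$ by one, so that identity turns into a ladder (twisted-commutator) relation $\bb{H}^{(\alpha-1)}(x)\bb{Y}^{(\alpha)}(x)-\bb{Y}^{(\alpha)}(x)\bb{H}^{(\alpha)}(x)=\bb{X}^{(\alpha)}$ linking consecutive sectors, where $\bb{Y}^{(\alpha)}$ and $\bb{X}^{(\alpha)}$ are the corresponding blocks of \eqref{eqn:BL0} and \eqref{eqn:BL1}. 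All entries are now polynomials in $x$ times the simple poles $f_{m}(x)=(x-m/2)^{-1}$, so the lemma reduces to a finite system of partial-fraction identities, one per index pair $(k,l)$.

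Third, I would check these entrywise. Because $\bb{A}_{\pm}$ are single-step shifts and $\bb{A}_{0}$ is diagonal, commuting $\bb{H}^{(\alpha)}(x)$ against them yields, at position $(k,l)$, a short combination of $H^{(\alpha)}_{k\pm1,l}$, $H^{(\alpha)}_{k,l\pm1}$ and $H^{(\alpha)}_{k,l}$ weighted by the amplitudes, while the right-hand side is tridiagonal, so only a bounded window of entries contributes. The diagonal part follows from the telescoping $H^{(\alpha)}_{k,k}(x)-H^{(\alpha)}_{k+1,k+1}(x)=f_{k}(x)+f_{\alpha-k-1}(x)$, immediate from $H^{(\alpha)}_{k,k}=\sum_{m=k}^{\alpha-k-1}f_{m}$. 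The off-diagonal part requires showing that the alternating binomial sums $\binom{k}{l}\sum_{m}(-1)^{m}\binom{k-l-1}{m-l}f_{m}(x)$ collapse so that the combination produced by the (twisted) commutator reduces to a single residue; this is a finite partial-fraction identity in $x$, closable either by clearing denominators and comparing polynomials, or by matching residues at $x=m/2$ through a Vandermonde-type binomial convolution.

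I expect the genuine obstacle to be exactly this last step, i.e. the combinatorics of the off-diagonal sums. A cleaner route I would try first, avoiding the explicit closed form, is to treat the entries of $\bb{H}^{(\alpha)}(x)$ as unknowns and read the two reduced identities together with the $\pi_{a}$-symmetry and the normalization $\bb{H}^{(0)}(x)=0$ (i.e. $R^{(0)}_{0,0}=1$) as a triangular recurrence in $k-l$; this determines $\bb{H}^{(\alpha)}(x)$ uniquely, and solving the recurrence reproduces the formulas of Theorem \ref{theorem1}, the existence half of that argument being precisely the lemma. Either way, the $\pi$-reduction and the diagonal entries are bookkeeping; the off-diagonal recurrence is the heart of the proof. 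Finally, the excluded locus $x\in\half\ZZ_{+}$ is exactly where the $f_{m}$ develop poles, so the identity holds on $\CC\setminus\half\ZZ_{+}$ as stated.
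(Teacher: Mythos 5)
Your proposal follows essentially the same route as the paper: expand in the Weyl basis of $\End(\frak{H}_{1})$, use the parity $\pi=\pi_{s}\circ\pi_{a}$ to cut down the number of components, decompose into the ice-rule sectors $\frak{H}_{a}^{(\alpha)}$ so that the diagonal components give honest commutator equations for $\bb{H}^{(\alpha)}(x)$ and the off-diagonal ones give ladder relations between adjacent sectors, and finally verify the resulting finite matrix identities by matching residues at $x=m/2$ and the regular parts. One concrete correction: the parity only trades $e^{00}$ for $e^{11}$; under $\pi_{s}(e^{\nu\nu^{\prime}})=e^{1-\nu^{\prime},1-\nu}$ the components $e^{01}$ and $e^{10}$ are each mapped to themselves (up to $\pi_{a}$), not to one another, so they are \emph{not} images of each other and both ladder relations (the $\BL^{(\alpha)+}$ and $\BL^{(\alpha)-}$ ones) must be checked separately --- this is why the paper reduces to \emph{three} independent identities rather than your two. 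Beyond that, like the paper you stop short of executing the final entrywise partial-fraction/binomial verification (the paper defers it to the original reference); your alternative uniqueness-via-triangular-recurrence argument is a reasonable way to organize that computation, but as written it establishes uniqueness of a solution rather than existence, so it does not by itself discharge the remaining combinatorial work.
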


\begin{proof}
Since $\bb{H}(x)$ operates as a \textit{scalar} with respect to $\frak{H}_{1}$, we address equation \eqref{eqn:lemma1} component-wise, i.e.
\begin{equation}
\sum_{\nu,\nu'=0}^{1}e^{\nu \nu^{\prime}}\otimes \left([\bb{H}(x),\BL_{1}^{\nu \nu^{\prime}}]-\BL_{1}^{\nu \nu^{\prime}} \right)=0.
\label{eqn:solve_components}
\end{equation}
Here we expanded $\BL$-operators in the physical space
\begin{equation}
\BL_{k}(x)=\sum_{\nu,\nu'=0}^{1}e^{\nu \nu^{\prime}}\otimes \BL_{k}^{\nu \nu^{\prime}}(x)=\sum_{s\in\{0,\pm,z\}}\sigma^{s}\otimes \BL_{k}^{s}(x),
\end{equation}
were components can be read from \eqref{eqn:BL0}, \eqref{eqn:BL1} and \eqref{eqn:BL2}. Consequently, by virtue of the relation
\begin{equation}
\sigma^{z}\otimes \left(([\bb{H}(x),\BL_{0}^{00}(x)]-\BL_{1}^{00})-([\bb{H}(x),\BL_{0}^{11}(x)]-\BL_{1}^{11})\right)=0,
\end{equation}
only three components from \eqref{eqn:solve_components} are linearly \textit{independent}. Additionally, by separating out $\alpha$-dependence,
we can define
\begin{equation}
\BL^{s}_{k}(x)=\bigoplus_{\alpha=0}^{\infty}\BL^{(\alpha)s}_{k},
\end{equation}
where $\BL_{k}^{(\alpha)0,z}\in \End(\frak{H}_{a}^{(\alpha)})$ are $(\alpha+1)$-dimensional square matrices, and
\begin{equation}
\BL_{k}^{(\alpha)+}\in \rm{Hom}(\frak{H}_{a}^{(\alpha)},\frak{H}_{a}^{(\alpha+1)}),\quad
\BL_{k}^{(\alpha)-}\in \rm{Hom}(\frak{H}_{a}^{(\alpha+1)},\frak{H}_{a}^{(\alpha)}),
\end{equation}
are linear maps of dimensions $(\alpha+1)\times (\alpha+2)$ and $(\alpha+2)\times (\alpha+1)$, respectively.
Ultimately, to prove Lemma \ref{lem1}, it suffices to check a \textit{finite} set of equations, reading
\begin{align}
[\bb{H}(x),\BL_{0}^{00}(x)]&=\BL_{1}^{00},\nonumber \\
\bb{H}^{(\alpha)}(x)\BL_{0}^{(\alpha)+}-\BL_{0}^{(\alpha)+}\bb{H}^{(\alpha+1)}(x)&=\BL_{1}^{(\alpha)+},\\
\bb{H}^{(\alpha+1)}(x)\BL_{0}^{(\alpha)-}-\BL_{0}^{(\alpha)-}\bb{H}^{(\alpha)}(x)&=\BL_{1}^{(\alpha)-},\nonumber
\label{eqn:final_HLL}
\end{align}
with \textit{constant} block-matrices $\BL_{1}^{(\alpha)}$ of the form
\begin{align}
\BL_{1}^{(\alpha)0}&=\sum_{k=0}^{\alpha}2(\alpha-2k)\ket{k}\bra{k}+\sum_{k=0}^{\alpha-1}\left(2(\alpha-k)\ket{k}\bra{k+1}-2(k+1)\ket{k+1}\bra{k}\right),\\
\BL_{1}^{(\alpha)z}&=\sum_{k=0}^{\alpha}2(\alpha-k)\ket{k}\bra{k+1}+\sum_{k=0}^{\alpha-1}2(k+1)\ket{k+1}\bra{k},\\
\BL_{1}^{(\alpha)+}&=\sum_{k=0}^{\alpha}\left((3k-\alpha)\ket{k}\bra{k}+(3k-2\alpha)\ket{k}\bra{k+1}\right),\\
\BL_{1}^{(\alpha)-}&=\sum_{k=0}^{\alpha}\left((k+1)\ket{k+1}\bra{k}+(k-\alpha-1)\ket{k}\bra{k}\right),
\end{align}
reflecting local structure with respect to $\alpha$-decomposition of $\frak{H}_{a}\otimes \frak{H}_{a}$.
Verification of \eqref{eqn:final_HLL} involves only straightforward calculations which however appear to be very tiresome.
Specifically, one has to show for arbitrary fixed $\alpha$ that (i) residua at $x=p/2$ for $p\in\{0,1,\ldots \alpha\}$ vanish and (ii) that non-singular contributions from the left hand side match those on the right hand sides. We shall abstain from carrying them out explicitly and refer the reader to consult reference~\cite{PIP13}.
\end{proof}

\begin{lem}
\label{lem2}
For any $x\in \CC\setminus \half \ZZ_{+}$ the generator $\bb{H}(x)$ obeys operator identities (for definitions cf. \eqref{eqn:BL1},\eqref{eqn:BL2})
\begin{align}
[\bb{H}(x),[\bb{H}(x),\BL_{1}]]+3[\bb{H}(x),\BL_{2}]&=0,\nonumber \\
[\bb{H}(x),[\bb{H}(x),\BL_{2}]]&=0.
\label{eqn:master_symmetries}
\end{align}
\end{lem}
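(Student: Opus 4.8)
These are precisely the identities \eqref{eqn:third_order} and \eqref{eqn:any_order} used in the proof of Theorem~\ref{theorem1}, and the plan is to reduce both to a finite list of matrix identities over the auxiliary space and then settle those by a residue argument in the spectral variable $x$. The two structural facts to exploit are that $\bb{H}(x)$ is a scalar on $\frak{H}_{1}$ and, by the ice rule \eqref{eqn:ice-rule}, block diagonal with respect to the grading $\frak{H}_{a}\otimes\frak{H}_{a}=\bigoplus_{\alpha}\frak{H}^{(\alpha)}_{a}$, while the constant operators $\BL_{1},\BL_{2}$ expand in the Pauli basis as $\BL_{k}=\sum_{s\in\{0,\pm,z\}}\sigma^{s}\otimes\BL^{s}_{k}$, the $\sigma^{0}$- and $\sigma^{z}$-parts preserving the grading and the $\sigma^{\pm}$-parts shifting it by one unit. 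The decisive simplification is that, by \eqref{eqn:BL2}, the $\sigma^{0}$-part of $\BL_{2}$ is the identity operator on $\frak{H}_{a}\otimes\frak{H}_{a}$ (hence central for $\bb{H}(x)$) while its $\sigma^{z}$- and $\sigma^{-}$-parts vanish, so that with $\bb{W}:=\bb{K}\otimes\one_{a}+\one_{a}\otimes\bb{K}$ one has $[\bb{H}(x),\BL_{2}]=-\sigma^{+}\otimes[\bb{H}(x),\bb{W}]$. Therefore \eqref{eqn:any_order} is equivalent to the single auxiliary-space relation $\ad_{\bb{H}(x)}^{2}\bb{W}=0$, and projecting \eqref{eqn:third_order} onto Pauli components produces the ``master symmetries'' $\ad_{\bb{H}(x)}^{2}\BL^{s}_{1}=0$ for $s\in\{0,z,-\}$ together with the coupled relation $\ad_{\bb{H}(x)}^{2}\BL^{+}_{1}=3\,\ad_{\bb{H}(x)}\bb{W}$ in the $\sigma^{+}$-channel (one may also use the HLL relation of Lemma~\ref{lem1} to rewrite $\ad_{\bb{H}(x)}^{2}\BL_{1}=\ad_{\bb{H}(x)}^{3}\BL_{0}(x)$ where convenient).

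Next I would reduce to finitely many blocks. Since $\bb{H}(x)$, $\BL_{1}$, $\BL_{2}$ and $\bb{W}$ all act grading-homogeneously (up to the fixed unit shifts), each of the above relations splits into an $\alpha$-indexed family, a member being an equality of $(\alpha+1)\times(\alpha+1)$ matrices for $s\in\{0,z\}$ and of rectangular blocks of adjacent sizes for $s=\pm$, with entries assembled from the explicit rational functions $H^{(\alpha)}_{k,l}(x)$ of Theorem~\ref{theorem1} and from the integer blocks $\BL^{(\alpha)s}_{1}$ and $\bb{W}^{(\alpha)}$ read off from \eqref{eqn:BL1}--\eqref{eqn:BL2}. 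The parity $\pi=\pi_{s}\circ\pi_{a}$, under which $\bb{H}(x)$ is invariant and $\BL_{k}$ is a $(-1)^{k}$-eigenoperator, additionally fixes the $\pi_{a}$-parity of each block, so that only its upper-left half has to be checked. The upshot is that for each fixed $\alpha$ one is left with only a handful of genuinely independent scalar identities between rational functions of $x$.

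Each such identity I would then dispatch by rational interpolation. On both sides the functions are regular at $x=\infty$ and vanish there (every $f_{m}(x)=(x-m/2)^{-1}$ does), with poles confined to $x=m/2$, $m\in\{0,1,\ldots,\alpha-1\}$, all lying in the excluded set $\half\ZZ_{+}$; hence once the principal parts at these poles are shown to agree, the difference of the two sides is entire and vanishing at infinity, so identically zero by Liouville's theorem. An iterated commutator manufactures products $f_{m}(x)f_{m'}(x)$, so the first step is the partial-fraction rewriting $f_{m}f_{m'}=\tfrac{2}{m-m'}(f_{m}-f_{m'})$ for $m\neq m'$, with the double poles $f_{m}^{2}$ handled separately. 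After this, cancellation of every spurious residue turns into a purely combinatorial statement about the coefficients $\tbinom{k}{l}\tbinom{k-l-1}{m-l}$ occurring in $H^{(\alpha)}_{k,l}$ --- a list of Chu--Vandermonde and finite-difference identities --- which can be closed either by generating functions or by induction on $\alpha$ using the uniform-in-$\alpha$ form of the blocks.

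The load-bearing part, and the main obstacle, is precisely this residue bookkeeping: elementary but long, since the double commutators force one to control the cross terms $\bb{h}_{m}\bb{h}_{m'}$ of the residues $\bb{h}_{m}:=\mathrm{Res}_{x=m/2}\bb{H}(x)$. The conceptually cleanest way to organize it is to read the $\bb{h}_{m}$ as intertwiner-type operators on the tensor product of two $\frak{sl}_{2}$ Verma modules carried by $\frak{H}_{a}\otimes\frak{H}_{a}$ (the algebra \eqref{eqn:algebra}), so that the required cancellations are forced by the Jacobi identity and the $\frak{sl}_{2}$ relations rather than by raw binomial algebra; absent that reorganization one simply fixes $\alpha$, verifies the finitely many matrix identities directly, and lets the general case follow by induction on the grading. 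The remaining bookkeeping is carried out in~\cite{PIP13}.
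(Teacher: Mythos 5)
Your reduction is sound as far as it goes: splitting off the Pauli components, noting that the $\sigma^{0}$-part of $\BL_{2}$ is central while its $\sigma^{z}$- and $\sigma^{-}$-parts vanish, and arriving at the component identities $\ad_{\bb{H}(x)}^{2}\BL_{1}^{s}=0$ for $s\in\{0,z,-\}$, $\ad_{\bb{H}(x)}^{2}\BL_{1}^{+}+3\,\ad_{\bb{H}(x)}\BL_{2}^{+}=0$ and $\ad_{\bb{H}(x)}^{2}\BL_{2}^{+}=0$ is exactly the paper's own starting point (its operators $\bb{D}_{1}^{s},\bb{D}_{2}^{+}$ in \eqref{eqn:D_operators}). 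The gap is in what comes next. The step you yourself flag as load-bearing --- matching principal parts of the double commutators at every pole $x=m/2$ after partial-fractioning the cross terms $f_{m}f_{m'}$, and closing the resulting binomial identities ``by generating functions or by induction on $\alpha$'' --- is precisely the route the paper declares impractical: the double commutators produce double sums whose summands are \emph{quadratic} in the binomial coefficients of $H^{(\alpha)}_{k,l}$, and you never state what the induction hypothesis or inductive step actually is. Deferring the ``remaining bookkeeping'' to~\cite{PIP13} is circular in a blind proof, since that bookkeeping \emph{is} the proof.

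What is missing is the structural input that makes the induction on $\alpha$ run: the observation that $\BL_{1}^{-}$ is \emph{conserved}, $[\bb{H}(x),\BL_{1}^{-}]=0$, so that $\BL_{1}^{(\alpha)-}$ intertwines adjacent blocks, $\bb{H}^{(\alpha+1)}\BL_{1}^{(\alpha)-}=\BL_{1}^{(\alpha)-}\bb{H}^{(\alpha)}$ (this single first-order commutator \emph{is} amenable to the residue check, unlike the double commutators). Multiplying each $\bb{D}^{(\alpha)}$ from the right by $\BL_{1}^{(\alpha)-}$ and commuting it through yields recurrences of the type $\bb{D}^{(\alpha)}\BL_{1}^{(\alpha)-}=\BL_{1}^{(\alpha-1)-}\bb{D}^{(\alpha-1)}$, reducing everything to the $\alpha\in\{0,1\}$ base cases --- except that $\BL_{1}^{(\alpha)-}$ is rectangular and non-invertible, so one further needs the explicit kernel vectors $\bb{u}^{(\alpha)}=\sum_{k}(-1)^{k}k\ket{k}$ and $\bb{v}^{(\alpha)}=\sum_{k}(-1)^{k}\ket{k}$, together with the identities \eqref{eqn:kernel_identities} and \eqref{eqn:aux_set_1}--\eqref{eqn:aux_set_2}, to extend $\BL_{1}^{(\alpha)-}$ to an invertible square matrix. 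None of this appears in your proposal. Your closing suggestion to recast the residues $\bb{h}_{m}$ as $\frak{sl}_{2}$ intertwiners is reasonable in spirit (the paper's own remarks note that an $\frak{sl}_{2}$-covariant gauge would simplify matters via Clebsch--Gordan decomposition), but in the gauge in which the Lemma is stated the manifest $\frak{sl}_{2}$ symmetry is absent, and you neither carry out the gauge transformation nor the representation-theoretic argument; as written, the proof is incomplete at its central step.
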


\begin{proof}
Equations \eqref{eqn:master_symmetries} represent \textit{master symmetries} of the generator $\bb{H}(x)$. At this stage
we could have pursued the same tactics as done in the case of Lemma \ref{lem1}, however, now expressions suddenly involve double summations over linear combinations of terms which are themselves quadratic in binomial coefficients. It is hard to believe that these can be handled beyond investing unreasonable amount of effort. Fortunately though, there exist an extra symmetry of the generator $\bb{H}(x)$ which allows us to easily circumvent these issues. For this purpose we introduce a set of operators $\{\bb{D}^{s}_{1},\bb{D}^{+}_{2}\}\in \End(\frak{H}_{a}\otimes \frak{H}_{a})$
for $s\in\{\pm,0,z\}$, defined by means of projections of expressions from \eqref{eqn:master_symmetries} to Pauli components
\begin{align}
\label{eqn:D_operators}
\bb{D}_{1}^{0}&:=[\bb{H},[\bb{H},\BL_{1}^{0}]]=0,\nonumber \\
\bb{D}_{1}^{z}&:=[\bb{H},[\bb{H},\BL_{1}^{z}]]=0,\nonumber \\
\bb{D}_{1}^{+}&:=[\bb{H},[\bb{H},\BL_{1}^{+}]]+3[\bb{H},\BL_{2}^{+}]=0,\\
\bb{D}_{1}^{-}&:=[\bb{H},[\bb{H},\BL_{1}^{-}]]=0,\nonumber \\
\bb{D}_{2}^{+}&:=[\bb{H},[\bb{H},\BL_{2}^{+}]]=0.\nonumber
\end{align}
For compactness we subsequently drop parameter dependence from the operators.

The central piece of the proof is to utilize the conserved operator $\BL_{1}^{-}$,
\begin{equation}
[\bb{H},\BL_{1}^{-}]=0,
\label{eqn:conserved_operator}
\end{equation}
which provides a connection between two adjacent $\alpha$-blocks,
\begin{equation}
\boxed{\bb{H}^{(\alpha+1)}\BL_{1}^{(\alpha)-}=\BL_{1}^{(\alpha)-}\bb{H}^{(\alpha)}.}
\end{equation}
In order to prove it, it is sufficient to use the following \textit{residue expansion} of the generator
\begin{equation}
\bb{H}^{(\alpha)}(x)=\sum_{m=0}^{\alpha}\bb{X}^{(\alpha)m}f_{m}(x),\quad \bb{X}^{(\alpha)m}:=
{\rm Res}_{x=m/2}\bb{H}^{(\alpha)}(x),\quad m\in \ZZ_{+},
\label{eqn:X_definition}
\end{equation}
which can also be nicely expressed in the parity-symmetric form
\begin{equation}
X^{(\alpha)m}_{k,l}=\half \left(X^{(\alpha)m}_{k,l}-X^{(\alpha)m}_{\alpha-k,\alpha-l}\right),\quad
Y^{(\alpha)m}_{k,l}=(-1)^{k-l-1}\binom{k}{l}\binom{k-l-1}{m-l}\Theta_{m-l},
\label{eqn:Y_definition}
\end{equation}
with the step-function
\begin{equation}
\Theta_{x}:=
\begin{cases}
1, & \text{if }x\geq 0 \\
0, & \text{if }x<0
\end{cases}.
\end{equation}
Then, the residue form of \eqref{eqn:conserved_operator}, namely $[\bb{X}^{(\alpha)p},\BL_{1}^{-}]=0$, requires to verify that for every pole $p\in \{0,1,\ldots,\alpha\}$ and for all $k\in \{0,1,\ldots, \alpha+1\}$, $l\in \{0,1,\ldots, \alpha\}$, the set of identities
\begin{equation}
(l+1)X^{(\alpha+1)p}_{k,l+1}-kX^{(\alpha)p}_{k-1,l}-(\alpha-l+1)X^{(\alpha+1)p}_{k,l}+(\alpha-k+1)X^{(\alpha)p}_{k,l}=0
\end{equation}
holds true. By virtue of parity $\bb{X}=\half(\bb{Y}-\bb{P}\bb{Y}\bb{P})$, it is sufficient to treat the same expression by
making a replacement $\bb{X}\rightarrow \bb{Y}$.

The strategy to proceed is to rewrite \eqref{eqn:D_operators} as \textit{recurrence relations} in $\alpha$ and then use \textit{induction} on $\alpha$.
As an example, we consider the equation for $\bb{D}_{2}^{+}$, which produces for each $\alpha$-block
after (i) expanding the double commutator
\begin{equation}
\bb{D}_{2}^{(\alpha)+}=\left(\bb{H}^{(\alpha)}\right)^{2}\BL_{2}^{(\alpha)+}-2\bb{H}^{(\alpha)}\BL_{2}^{(\alpha)+}\bb{H}^{(\alpha+1)}
\BL_{2}^{(\alpha)+}\left(\bb{H}^{(\alpha+1)}\right)^{2},
\label{eqn:D2_double_commutator}
\end{equation}
(ii) multiplying by charge $\BL_{1}^{(\alpha)-}$ from the right, using simple identities (iii)
\begin{equation}
\bb{H}^{(\alpha+1)}\BL_{1}^{(\alpha)-}=\BL_{1}^{(\alpha)-}\bb{H}^{(\alpha)}\quad
\BL_{2}^{(\alpha)+}\BL_{1}^{(\alpha)-}=\BL_{1}^{(\alpha-1)-}\BL_{2}^{(\alpha-1)+},
\end{equation}
and (iv) finally commuting $\BL_{1}^{(\alpha)-}$ to the left of the expression, the following identity,
\begin{equation}
\bb{D}_{2}^{(\alpha)+}\BL_{1}^{(\alpha)-}=\BL_{1}^{(\alpha-1)-}\bb{D}_{2}^{(\alpha-1)+}.
\end{equation}
We just successfully connected two \textit{adjacent} $\alpha$-blocks $\bb{D}_{2}^{(\alpha)+}$. Thus we are in position to invoke
the arguments of induction which which bring us to the conclusion that whenever $\bb{D}_{2}^{(\alpha-1)}$ vanishes,
so does $\bb{D}_{2}^{(\alpha)}$. The initial conditions in the form of $\bb{D}^{(\alpha)s}_{1,2}=0$ for $\alpha\in\{0,1\}$ should be easily verifiable.

Nevertheless, the above reasoning contains a tiny flaw, originating from the fact that $\BL_{1}^{(\alpha)-}$ are \textit{rectangular} matrices and are as such \textit{non-invertible}. To continue from this point it would be sufficient to find at least one $(\alpha+2)$-dimensional vector, say $\bb{u}^{(\alpha+1)}$, which is in the \textit{kernel} of $\bb{D}_{2}^{(\alpha)+}$,
\begin{equation}
\bb{D}_{2}^{(\alpha)+}\bb{u}^{(\alpha+1)}=0,
\end{equation}
but \textit{not} from the \textit{column space} of $\BL_{1}^{(\alpha)}$ (which otherwise consists of $(\alpha+1)$ linearly independent vectors, as can be quickly checked). By supposing that such $\bb{u}^{(\alpha+1)}$ exists, we could simply add it as the $(\alpha+2)$-th column of $\BL_{1}^{(\alpha)-}$, thereby extending it to the \textit{invertible} matrix $\widetilde{\BL}_{1}^{(\alpha)-}$,
\begin{equation}
\bb{D}_{2}^{(\alpha)+}=\BL_{1}^{(\alpha-1)-}\bb{D}_{2}^{(\alpha-1)+}\left(\widetilde{\BL}_{1}^{(\alpha)-}\right)^{-1}.
\label{eqn:D2_recurrence}
\end{equation}
The crucial insight of the above idea is to observe that in each $\alpha$-sector a special (and unique) pair of kernel vectors
$(\bb{u}^{(\alpha)},\bb{v}^{(\alpha)})$ exists, given explicitly by
\begin{equation}
\bb{u}^{\alpha}=\sum_{k=0}^{\alpha}(-1)^{k}k\ket{k},\quad
\bb{v}^{\alpha}=\sum_{k=0}^{\alpha}(-1)^{k}\ket{k},
\label{eqn:kernel_vectors}
\end{equation}
such that
\begin{equation}
\label{eqn:kernel_identities}
\bb{H}^{(\alpha)}\bb{v}^{(\alpha)}=0,\quad
\bb{H}^{(\alpha)}\bb{u}^{(\alpha)}=(\alpha/x)\bb{v}^{(\alpha)},\quad
\left(\bb{H}^{(\alpha)}\right)^{2}\bb{u}^{(\alpha)}=0.
\end{equation}
The latter can be proven in a straightforward manner after confirming that:
\begin{enumerate}
 \item Vectors $\bb{v}^{(\alpha)}$ are in the kernel of $\bb{H}^{(\alpha)}$ by showing
\begin{equation}
\bb{X}^{(\alpha)p}\bb{v}^{(\alpha)}=0,\quad p=\{0,1,\ldots,\alpha\}.
\end{equation}
In fact, in reality we deal with two separate stronger conditions, reading
\begin{equation}
\bb{Y}^{(\alpha)p}\bb{v}^{(\alpha)}=-\bb{v}^{(\alpha)},\quad \bb{P}\bb{v}^{(\alpha)}=(-1)^{\alpha}\bb{v}^{(\alpha)}.
\end{equation}
 \item Vectors $\bb{u}^{(\alpha)}$ are (i) for $p\geq 1$ eigenvectors of the operators $\bb{Y}^{(\alpha)p}$ and $\pi_{a}(\bb{Y}^{(\alpha)p})$
with eigenvalues $-1$, whereas (ii) for $p=0$ we have
\begin{equation}
Y^{(\alpha)0}\bb{u}^{(\alpha)}=0,\quad \pi_{a}(Y^{(\alpha)0})\bb{u}^{(\alpha)}=-\alpha \bb{v}^{(\alpha)},
\end{equation}
implying $(\bb{H}^{(\alpha)})^{2}\bb{u}^{(\alpha)}=0$.
 \item By combining the results from the previous points and using the residue form of the $\bb{H}^{(\alpha)}(x)$ we obtain in addition
$\bb{H}^{(\alpha)}(x)\bb{u}^{(\alpha)}=(\alpha/x)\bb{v}^{(\alpha)}$.
\end{enumerate}
Further details on the derivation can be found in appendices of reference~\cite{PIP13}.

In order to complete the proof we also need (see \eqref{eqn:D2_double_commutator}) a set of auxiliary identities expressing
the action of the constant $\BL_{1,2}$-matrices on the kernel vectors \eqref{eqn:kernel_vectors}, namely
\begin{align}
\label{eqn:aux_set_1}
\BL_{1}^{(\alpha)0}\bb{v}^{(\alpha)}&=\BL_{2}^{(\alpha)+}\bb{v}^{(\alpha+1)}=0,\nonumber \\
\BL_{1}^{(\alpha)z}\bb{v}^{(\alpha)}&=-2\alpha \bb{v}^{(\alpha)},\nonumber \\
\BL_{1}^{(\alpha)+}\bb{v}^{(\alpha+1)}&=\alpha \bb{v}^{(\alpha)},
\end{align}
and
\begin{equation}
\BL_{2}^{(\alpha)+}\bb{u}^{(\alpha+1)}=2\bb{v}^{(\alpha)}.
\end{equation}
These relations already imply that $\bb{D}_{2}^{(\alpha)+}\bb{u}^{(\alpha+1)}=0$ and hence justify the recurrence of the form \eqref{eqn:D2_recurrence}.
With the remaining identities from \eqref{eqn:D_operators} we proceed in analogous way, where auxiliary identities for the operators
$\{\BL_{1}^{(\alpha)s}\}$ with respect to $\bb{u}^{(\alpha)}$,
\begin{align}
\label{eqn:aux_set_2}
\BL_{1}^{(\alpha)0}\bb{u}^{(\alpha)}&=-2\alpha \bb{v}^{(\alpha)},\\
\BL_{1}^{(\alpha)z}\bb{u}^{(\alpha)}&=-2\alpha \bb{v}^{\alpha}-2(\alpha-2)\bb{u}^{\alpha},
\end{align}
justify the addition of $\bb{u}^{(\alpha)}$ (or $\bb{u}^{(\alpha+1)}$ in the case of $\bb{D}_{1}^{(\alpha)+}$) to column spaces of
$\BL_{1}^{(\alpha)-}$. For diagonal blocks $\bb{D}_{1}^{(\alpha)0,z}$ from \eqref{eqn:D_operators} we similarly multiply the expression by
$\BL_{1}^{(\alpha)-}$ from the right and pull it through to the left using identities \eqref{eqn:aux_set_1}, resulting in
\begin{equation}
\bb{D}_{1}^{(\alpha)s}\BL_{1}^{(\alpha)-}=\BL_{1}^{(\alpha-1)-}\bb{D}_{1}^{(\alpha-1)s},\quad s\in\{0,z\}.
\end{equation}
At last, in the case of $\bb{D}_{1}^{(\alpha)+}$ we deal with the coupled recurrence of the form
\begin{equation}
\bb{D}_{1}^{(\alpha)+}\BL_{1}^{(\alpha)-}=\BL_{1}^{(\alpha-1)-}\bb{D}_{1}^{(\alpha)+}+\bb{D}_{1}^{(\alpha)z},
\end{equation}
which reduces to the same form as the previous cases after accounting that $\bb{D}_{1}^{(\alpha)z}=0$ holds for every $\alpha\in \ZZ_{+}$.
\end{proof}

\section[ABA for the density operator]{Algebraic Bethe ansatz for the density operator}
One possible practical advantage of understanding the structure of the RTT equation for our nonequilibrium problem,
\begin{equation}
\PBR(p,p^{\prime})(\bb{T}(p)\otimes_{a}\bb{T}(p^{\prime}))=(\bb{T}(p^{\prime})\otimes_{a}\bb{T}(p))\PBR(p,p^{\prime}),
\label{eqn:braid_RTT}
\end{equation}
may be the ABA procedure for \textit{diagonalizing} the one-parametric family of NESS operators $\rho_{\infty}(\epsilon)$ which
could open possibilities for analytic studies of e.g. spectral, geometric or entanglement properties of out-of-equilibrium ``integrable states''.
One particularly intriguing application could be to extract an analytic form of the eigenvalue statistics
which could give a firm analytical support to recent suggestions on using spectral properties of nonequilibrium ensembles as an indicator of ``solvability'', extending arguments of quantum chaos hypothesis into Liouville domain~\cite{PZn13}.

The idea of the ABA has been outlined in chapter~\ref{sec:integrability}.
Selecting \textit{ferromagnetic} vacuum as the reference state, $\ket{\Omega_{0}}:=\ket{\downarrow}^{\otimes n}$,
we shall apply off-diagonal monodromy elements $T^{k}_{l}(p)$ with $k<l$ to create quasiparticle modes carrying ``momentum'' $p$.
Excited states are obtained by subsequent application of the elements $T^{k}_{l}$ for distinct values of momenta $\{p_{k}\}$. Precise
values are determined by solutions of nonlinear (Bethe) equations which eliminate off-shell (unwanted) Bethe states.
We should emphasize however, that due to Cholesky form of our nonequilibrium density matrix,
\begin{equation}
\rho_{\infty}=S(\lambda)S^{T}(-\lambda)=(-1)^{n}T^{0}_{0}\widetilde{T}^{0}_{0},
\end{equation}
the standard ABA procedure does not work. Two main difficulties are (i) presence of transposed monodromy elements $\widetilde{T}^{k}_{l}$ creating secondary type of quasiparticle excitations, (ii) the \textit{infinite} tower of $m$-particle creation/destruction operators due to infinite dimensionality of the auxiliary space $\frak{H}_{a}$ and
(iii) ambiguities appearing in the particle creation scheme with respect to exchanges of two adjacent modes in strings of monodromy elements which determine multi-particle states. Specifically, it is not clear what is an appropriate order or protocol for swapping elements which would eventually produce a closed set of unwanted states which can be subsequently eliminated via suitable choice of quasiparticle momenta. The main difference to the ABA procedure which is applicable with the $\frak{sl}_{2}$ fundamental monodromy matrix is thus that the process of
creating $m$ excitations can be now achieved via $m$-particle creation operators $T^{l}_{l+m}$, but also using combinations of $T^{l}_{l+d}$ for $d<m$, for any $l\geq 0$.

Nonetheless it is not hard to convince ourselves that the above issues are not fatal in the case of \textit{one-particle} states, where no quasi-particle scattering occurs. There the information from the one-particle sector $\frak{H}_{a}^{(\alpha=1)}$ which is stored as a $2\times 2$ block $\PBR^{(1)}$ is sufficient to complete the task. The non-normalized NESS operator $\rho_{\infty}(p)$, with $p$ being some arbitrary fixed continuous parameter\footnote{To render $\rho_{\infty}$
a valid physical state we must of course assume that $p$ takes pure imaginary values, but for the sole application of ABA this is not relevant.} now plays the role of a quantum transfer matrix, despite fails to possess commuative property. Considering how $\rho_{\infty}(p)$ operates on one-particle states of the form
$T^{0}_{1}(p^{\prime})\ket{\Omega_{0}}$ and accounting for commutation rules prescribed by scattering amplitudes from the
$1$-sector $\PBR^{(1)}$, we arrive after some manipulations at the following identity,
\begin{align}
\label{eqn:one-particle_ABA}
(-1)^{n}\rho_{\infty}(p)T^{0}_{1}(p^{\prime})\ket{\Omega_{0}}&=t^{2}(p)\Lambda(p,p^{\prime})T^{0}_{1}(p^{\prime})\ket{\Omega_{0}}\nonumber \\
&+\frac{p^{\prime}(p+p^{\prime}-1)t(p)t(p^{\prime})-2p(p^{\prime}-p)t(p+1)t(p^{\prime}-1)}{(p-p^{\prime})(p-p^{\prime}+1)}T^{0}_{1}(p)\ket{\Omega_{0}}\nonumber \\
&+\frac{2p^{\prime} p(p+\half)t(p)t(p^{\prime}-1)}{(p+1)(p-p^{\prime}+1)}T^{0}_{1}(p+1)\ket{\Omega_{0}},
\end{align}
writing $t(p):=p^{n}$. The function $\Lambda(p^{\prime},p)$ prescribes the quasiparticle \textit{dispersion relation},
\begin{equation}
\Lambda(p,p^{\prime})=\frac{(p^{\prime}+p)(p^{\prime}+p-1)}{(p^{\prime}-p)(p^{\prime}-p+1)}.
\end{equation}
There are two distinct values of $p^{\prime}$ yielding the same eigenvalue, $\Lambda(p,p^{\prime}_{1})=\Lambda(p,p^{\prime}_{2})$.
which can be parametrized via single parameter $\xi$,
\begin{equation}
p^{\prime}_{1}=\frac{1}{2}(1+(p+1)\xi),\quad p^{\prime}_{2}=\frac{1}{2}(1+(p-1)\xi^{-1}),
\end{equation}
implying that one-particle eigenstates of $\rho_{\infty}(p)$ must be sought as a general linear combinations of the form
\begin{equation}
\ket{\Psi_{1}}=(c_{1}T^{0}_{1}(p^{\prime}_{1})+c_{2}T^{0}_{1}(p^{\prime}_{2}))\ket{\Omega_{0}}.
\end{equation}
Thus, unlike in the standard case of the XXX Heisenberg model, already one-particle states exhibit a non-trivial structure here.
Plugging the latter ansatz into \eqref{eqn:one-particle_ABA} and requiring elimination of the off-shell terms
(which are proportional to $T^{0}_{1}(p)\ket{\Omega_{0}}$ and $T^{0}_{1}(p+1)\ket{\Omega_{0}}$), we obtain
$\rho_{\infty}\ket{\Psi_{1}}\sim \Lambda(p,\xi)\ket{\Psi_{1}}$ provided that the $2\times 2$ system of equations for weights $c_{1},c_{2}$
admits a non-trivial solution,
\begin{equation}
\left(\frac{1-(p+1)\xi}{1+(p+1)\xi}\right)^{n}\left(\frac{\xi+p-1}{\xi-p+1}\right)^{n}=
\left(\frac{1-\xi}{1+\xi}\right)\left(\frac{(p+1)\xi+\lambda-1}{(p+1)\xi-p+1}\right).
\end{equation}
The later condition can be regarded as one-particle Bethe ansatz equation pertaining to $n$ single-particle states of $\rho_{\infty}$ with eigenvalues $\Lambda(p,\half(1+(p+1)\xi))$.

\paragraph{Remarks.}
\begin{enumerate}
 \item 
At poles, determined by the condition $p+p^{\prime} \in \ZZ_{+}$, where construction as given by Theorem \ref{theorem1} fails, one would have to invent some way of regularizing the intertwining relation \eqref{eqn:intertwining}. It is instructive to say that in such isolated cases Verma modules on which spin generators \eqref{eqn:representation} operate reduce to a sequence of invariant subspaces carrying representations which are equivalent to unitary spin representations (i.e. with half-integer values of the representation parameter). We withhold from this issue at this moment with an excuse that the missing points which only represent a zero-measure set of coupling parameters are inessential for potential physical applications.
 \item
The $\PBR$-matrix is additionally required to fulfill the quantum Yang-Baxter equation imposed over $\frak{H}_{a}^{\otimes 3}$.
In the \textit{braided formulation} the latter reads
\begin{equation}
\PBR_{23}(p,p^{\prime})\PBR_{12}(p,p^{\prime\prime})\PBR_{23}(p^{\prime},p^{\prime\prime})=
\PBR_{12}(p^{\prime},p^{\prime\prime})\PBR_{23}(p,p^{\prime\prime})\PBR_{12}(p,p^{\prime}),
\label{eqn:YBE_braided}
\end{equation}
where $\PBR_{12}(p,p^{\prime})\equiv \PBR(p,p^{\prime})\otimes \one_{a}$, $\PBR_{23}(p,p^{\prime})\equiv \one_{a}\otimes \PBR(p,p^{\prime})$
operate in $\frak{H}_{a}^{\otimes 3}$. The compatibility equation \eqref{eqn:YBE_braided}, which has not been the subject of our proof, seems to be much harder to tackle in comparison with the RLL relation. We have nonetheless verified explicitly by means of computer symbolic algebra on auxiliary spaces with truncated basis that \eqref{eqn:PR_matrix} indeed does fulfill the braid associativity condition \eqref{eqn:YBE_braided}.
 \item A pair of representation parameters $p,p^{\prime}$ that we have been using throughout the entire section pertains to weights of $\frak{sl}_{2}$ Verma modules, as defined by \eqref{eqn:representation} and \eqref{eqn:amplitudes}. The conventional spectral parameter, which is for our NESS ruled out on the basis of the boundary conditions, could be in principle also included in our ABA procedure on the level of the Lax operator \eqref{eqn:Lax_operator} by enabling a missing $\sigma^{z}$ component (see e.g. the standard Lax operator \eqref{eqn:Lax_ABA}). One can verify that such choice leads to an extended ($2$-parametric) family of commuting operators
\begin{equation}
[S(\lambda,p),S(\mu,p^{\prime})]=0,
\end{equation}
where $\lambda,\mu \in \CC$ now designate a pair of true spectral parameters.
It might turn out that this extra continuous freedom could provide a missing feature for completion of the ABA program presented above.
 \item At the time of completing the paper~\cite{PIP13} we learned that generic (i.e. non-fundamental) $R$-matrices which are closely related to our exterior $\PBR$-matrix have been explicitly constructed and employed much earlier in the higher energy physics domain, finding applications in the Regge limit of QCD~\cite{KF95} and super Yang-Mills theories~\cite{BS03}. These objects represent previously mentioned universal $\frak{sl}_{2}$-invariant solutions of the quantum YBE \eqref{eqn:universal_solutions} (and higher-rank superalgebra analogues) which have been briefly discussed (from purely representation-theoretic aspects) already
in some of the pioneering literature~\cite{KRS81,TTF83,Faddeev2}.
In our case, oppositely, the manifest $\frak{sl}_{2}$-symmetry is absent on the level of the Lax operator \eqref{eqn:Lax_operator}. Yet, as it eventually turned out later, there exist a possibility to repeat the NESS construction by employing $\frak{sl}_{2}$-invariant version of the $S$-operator. That being said, our particular parametrization has been merely a clumsy choice of gauge which is allowed by virtue of preservation of the local vanishing divergence relation \eqref{eqn:operator_divergence_KPS}. To be more concrete, by taking into account that
\begin{equation}
[h,\sigma^{z}\otimes \sigma^{z}]=0,
\end{equation}
one can transform our Lax matrix into $\frak{sl}_{2}$ invariant object under left-multiplication by $\sigma^{z}$ operator
(the same applies of course for the associated boundary matrix) and additional trivial re-parametrization of spin algebra generators into canonical ones. By using $\frak{sl}_{2}$-symmetric objects the constructive proof for the generic $R$-matrix becomes substantially easier because then one may completely rely on representation-theoretic arguments, i.e. one can resort on the Clebsch-Gordan resolution of a product module and write down a simple recurrence relation for the eigenvalues of the $R$-matrix associated with an infinite chain of irreducible multiplets (for details see appendix \ref{sec:App_universal} where the $\frak{sl}_{2}$ case is treated explicitly).

It is quite remarkable nevertheless that generic infinite-dimensional intertwiners which thus far seemed to fall (as far as physics is discussed)
exclusively into paradigm of integrable field theories~\cite{KF95,BLMS10,BS03} suddenly found their home also in some of paradigmatic models
of strongly correlated electrons in far-from-equilibrium regime. Employing projectors onto lowest-weight states rather than taking partial traces
over auxiliary spaces appears to be a novelty though.
\end{enumerate}
\chapter{Quantum Group approach}
\label{sec:QGapproach}

In this chapter we finally put together a clean and coherent derivation of the steady state density operator pertaining to the
anisotropic Heisenberg chain with maximally polarizing incoherent channels at the boundaries, which has been sketched earlier in
chapter \ref{sec:openXXZ}. This time we attack the problem from first symmetry principles.
This chapter mostly summarizes the results we have published in~\cite{IZ14}.

The main ingredient of our construction is to explain the roots of the local operator-divergence condition \eqref{eqn:operator_divergence_KPS} in the light of quantum Yang-Baxter equation, thereby finally establishing a firm link to the quantum integrability theory.
A particular benefit of having an algebraic form of the bulk cancellation mechanism at our disposal
is to use it as a platform for studying potential continuous deformations of known instances to get access to a wider class of integrable models (e.g. obtaining multi-parametric generalizations of original (fundamental) parameterless solutions).

Let us begin by specifying our setup first. Consider a $n$-site chain of quantum particles with local
qubit (physical) Hilbert space $\frak{H}_{1}\cong \CC^{2}$. We use site indices $x\in\{1,2,\ldots,n\}$ to designate a position on a lattice. The entire $2^n$-dimensional many-body Hilbert $\frak{H}_{s}$ space is given by $n$-fold product space, $\frak{H}_{s}=\frak{H}_{1}^{\otimes n}$.
By employing standard Weyl unit matrices, i.e. $\{e^{ij}\equiv \ket{i}\bra{j};i,j=1,2\}$ as a basis in $\frak{H}_{1}$, obeying classic $\frak{gl}_{2}$ algebraic
relations,
\begin{equation}
[e^{ij}_{x},e^{kl}_{x'}]=(\delta_{jk}e^{il}-\delta_{il}e^{kj})\delta_{xx'},
\label{eqn:Weyl_gl2}
\end{equation}
we introduce the entire matrix algebra $\frak{F}\equiv \End(\frak{H}_{s})$ consisting of the elements
\begin{equation}
e^{ij}_{x}=\one_{2}^{\otimes (x-1)}\otimes e^{ij}\otimes \one_{2}^{\otimes (n-x)}.
\end{equation}
We are solving for the fixed point $\rho_{\infty}\in \frak{F}$ of the Liouvillian flow governed by autonomous generator $\LL \in \End(\frak{F})$,
defined by time-asymptotic limit
\begin{equation}
\rho_{\infty}:=\lim_{t\to \infty}\exp{(t\LL)}\rho(0),
\end{equation}
or as the fixed point condition
\begin{equation}
\LL \rho_{\infty}=-\ii [H,\rho_{\infty}]+\DD\rho_{\infty}=0.
\label{eqn:Lindblad_fixed_point}
\end{equation}
We use the dissipator $\DD\in \frak{F}$ of the diagonal (Lindblad) form,
\begin{equation}
\DD \rho=\sum_{\mu=1,2}A_{\mu}\rho A^{\dagger}_{\mu}-\half\left\{A^{\dagger}_{\mu}A_{\mu},\rho\right\},
\label{eqn:Lindblad_dissipator}
\end{equation}
specified by two maximally-polarizing \textit{symmetric} channels $\{A_{1,2}\in \frak{F}\}$,
\begin{equation}
A_{1}=\sqrt{\epsilon}\sigma^{+}_{1},\quad A_{2}=\sqrt{\epsilon}\sigma^{-}_{1}
\label{eqn:XXZ_channels}
\end{equation}
of equal (positive) coupling rates $\epsilon \in \RaR$, and take the Hamiltonian $H\in \frak{F}$ of the anisotropic Heisenberg model reading
\begin{equation}
H^{\rm{XXZ}}=\sum_{x=1}^{n}h^{\rm{XXZ}}_{x,x+1},\quad h^{\rm{XXZ}}_{x,x+1}:=2(\sigma^{+}_{x}\sigma^{-}_{x+1}+\sigma^{-}_{x}\sigma^{+}_{x+1})+
\cos{(\gamma)}\sigma^{z}_{x}\sigma^{z}_{x+1}.
\end{equation}
Parameter $\gamma \in [0,2\pi)$ determines anisotropy of the interaction in the gapless phase $|\Delta|<1$.

The \textit{Cholesky factor} of NESS $\rho_{\infty}$, namely the $S$-operator $S_{n}(\epsilon)\in \frak{F}$,
can be written as an MPS, expanded in terms of the Weyl many-body basis
\begin{equation}
S_{n}(\epsilon)=\sum_{\ul{i},\ul{j}}\bra{\psi_{\rm{L}}}\bb{L}^{j_{1}i_{1}}(\epsilon)\bb{L}^{j_{2}i_{2}}(\epsilon)\cdots \bb{L}^{j_{n}i_{n}}(\epsilon)
\ket{\psi_{\rm{R}}}\prod_{x=1}^{\stackrel{n}{\longrightarrow}}e_{x}^{i_{x}j_{x}},
\label{eqn:S_operator_XXZ}
\end{equation}
with a set of formal \textit{auxiliary operators} $\{\bb{L}^{ij}\}\in \End(\frak{H}_{a})$ and two formal boundary states
$\{\ket{\psi_{\rm{L}}},\ket{\psi_{\rm{R}}}\in \frak{H}_{a}\}$. Here the space $\frak{H}_{a}$ represents some generically \textit{infinite-dimensional} separable
Hilbert space with semi-infinite basis $\{\ket{\psi_{k}};k\in \ZZ_{+}\}$.
The summation in \eqref{eqn:S_operator_XXZ} is over all multi-component binary vectors $\ul{i}=(i_{1},i_{2},\ldots,i_{n})$
and $\ul{j}=(j_{1},j_{2},\ldots,j_{n})$, for $i_{x},j_{x}\in \{1,2\}$.

With slight anticipation of the forthcoming results, we associate the set of auxiliary operators with the elements of a quantum Lax matrix. With aid of the latter identification we may rewrite \eqref{eqn:S_operator_XXZ} into a standard language of QISM by utilizing
$\bb{L}(\epsilon)\in \End(\frak{H}_{1}\otimes \frak{H}_{a})$,
\begin{equation}
\bb{L}(\epsilon)=\sum_{i,j=1}^{2}e^{ij}\otimes \bb{L}^{ji}(\epsilon),
\label{eqn:Lax_onsite}
\end{equation}
and its global embeddings $\bb{L}_{x}(\epsilon)\in \End(\frak{H}_{s}\otimes \frak{H}_{a})$ targeting a lattice site $x$,
\begin{equation}
\bb{L}_{x}(\epsilon)=\sum_{i,j=1}^{2}e_{x}^{ij}\otimes \bb{L}^{ji}(\epsilon).
\end{equation}
The $S$-operator then takes the form of
\begin{equation}
S_{n}(\epsilon)=\bra{\psi_{\rm{L}}}\bb{L}_{1}(\epsilon)\bb{L}_{2}(\epsilon)\cdots \bb{L}_{n}(\epsilon)\ket{\psi_{\rm{R}}}\equiv
\bra{\psi_{\rm{L}}}\prod_{x=1}^{\stackrel{n}{\longrightarrow}}\bb{L}_{x}(\epsilon)\ket{\psi_{\rm{R}}}
\label{eqn:Sn_factor}
\end{equation}
The last expression is basically the definition of the monodromy operator
\footnote{Despite we have been consistently using quite customary notation $\bb{T}$ for a monodromy matrix so far, we shall henceforth rather use the symbol $\bb{M}$ to distinguish it from the notion of an auxiliary transfer operator introduced in ongoing discussion.}
\begin{equation}
\bb{M}(\epsilon)=\bb{L}_{1}(\epsilon)\bb{L}_{2}(\epsilon)\cdots \bb{L}_{n}(\epsilon).
\label{eqn:monodromy_definition}
\end{equation}
One could have obviously formulated a direct MPS description for the $\rho_{\infty}(\epsilon)$ itself by means of a simple two-leg ladder tensor network, which would then imply that the auxiliary space has a two-fold product structure $\frak{H}_{a}\otimes \ol{\frak{H}_{a}}$.
The bar sign indicates that the second copy carries a conjugated representation w.r.t. the first copy, as dictated by the Cholesky form of $\rho_{\infty}(\epsilon)$. By defining another Lax operator with \textit{conjugated} elements $\ol{\bb{L}}^{ij}\in \End(\frak{H}_{a})$, prescribed via
\begin{equation}
\bra{\psi_{k}}\ol{\bb{L}}^{ij}(\epsilon)\ket{\psi_{l}}:=\ol{\bra{\psi_{k}}\bb{L}^{ij}(\epsilon)\ket{\psi_{l}}},
\end{equation}
we introduce the \textit{two-leg} Lax operator
$\vmbb{L}_{x}\in \End(\frak{H}_{s}\otimes \frak{H}_{a}\otimes \frak{H}_{a})$ with $\End(\frak{H}_{a}\otimes \frak{H}_{a})$-valued entries
$\vmbb{L}^{ij}(\epsilon)$,
\begin{equation}
\vmbb{L}_{x}(\epsilon)=\sum_{i,j=1}^{2}e^{ij}_{x}\otimes \vmbb{L}^{ij}(\epsilon),\quad
\vmbb{L}^{ij}(\epsilon)=\sum_{k=1}^{2}\bb{L}^{ki}(\epsilon)\otimes \ol{\bb{L}}^{kj}(\epsilon).
\label{eqn:two_leg_Lax}
\end{equation}
The reader should pay attention to reversal of indices in the second auxiliary space which is due to transposition in the physical space. Alongside these definitions we define also the \textit{two-leg monodromy operator}
$\vmbb{M}(\epsilon)\in \End(\frak{H}_{s}\otimes \frak{H}_{a}\otimes \frak{H}_{a})$,
\begin{equation}
\vmbb{M}(\epsilon)=\vmbb{L}_{1}(\epsilon)\vmbb{L}_{2}(\epsilon)\cdots \vmbb{L}_{n}(\epsilon),
\label{eqn:two_leg_monodromy}
\end{equation}
together with the product-type boundary vectors,
\begin{equation}
\dket{\psi_{\rm{L}}}:=\kket{\psi_{\rm{L}}}{\ol{\psi_{\rm{L}}}},\quad \dket{\psi_{\rm{R}}}:=\kket{\psi_{\rm{R}}}{\ol{\psi_{\rm{R}}}},
\end{equation}
and finally put a NESS into a succinct expression,
\begin{equation}
\rho_{\infty}(\epsilon)=\dbra{\psi_{\rm{L}}}\vmbb{M}(\epsilon)\dket{\psi_{\rm{R}}}.
\end{equation}

In section \ref{sec:QG_symmetry}, where we already pointed out (restricting ourselves only to the isotropic case, however)
how auxiliary matrices which enclose divergence condition \eqref{eqn:operator_divergence_KPS} produce a
global cancellation property (replacing originally proposed cubic algebraic relations) and thus provide an imperative piece 
for accomplishment of the proof. We devote the next section to explain the mechanism of this deep concept.

\section{Sutherland equation}
The divergence condition \eqref{eqn:operator_divergence_KPS} is essentially an old result which lies at the heart of integrability.
To best of our knowledge, it appeared for the first time in Sutherland's work~\cite{Sutherland70} on integrable classical 2D vertex models
\footnote{It is instructive to mention at this point that transfer matrices which define partition functions of classical 2D integrable vertex models are isomorphic to 1D quantum transfer operators from the QISM~\cite{BaxterBook}. In the classical case, $R$-matrices encode all contributions (Boltzmann weights) of local lattice configurations which contribute to a partition sum.},
where it was facilitated as a sufficient condition to establish commutativity of a Hamiltonian with an associated transfer matrix,
under the periodic boundary conditions (see also Sklyanin's lecture notes~\cite{Sklyanin92}).

The starting shall be the RLL equation of the difference form
\begin{equation}
R^{q}_{x,x+1}(\lambda-\mu)\bb{L}^{q}_{x}(\lambda)\bb{L}^{q}_{x+1}(\mu)=\bb{L}^{q}_{x+1}(\mu)\bb{L}^{q}_{x}(\lambda)R^{q}_{x,x+1}(\lambda-\mu),
\label{eqn:RLL}
\end{equation}
imposed on $\frak{H_{s}}$, formally originating from the \textit{Baxterization}~\cite{Jones90} of the constant RLL equation,
\begin{equation}
R^{q}_{x,x+1}(\lambda)=q^{-\ii \lambda}R^{q,+}_{x,x+1}-q^{\ii \lambda}R^{q,-}_{x,x+1},\quad
\bb{L}^{q}_{x}(\lambda)=q^{-\ii \lambda}\bb{L}^{q,+}_{x,x+1}-q^{\ii \lambda}\bb{L}^{q,-}_{x,x+1}.
\label{eqn:Baxterization}
\end{equation}
The label $x$ refers to some arbitrary local quantum space, $x\in \{1,2,\ldots,n-1\}$.
In FRT approach~\cite{FRT88} (the summary of the construction can be found in appendix \ref{sec:App_FRT}) the RLL equation \eqref{eqn:RLL}
imposes deformed commutation relations in the quantum algebra $\cal{U}_{q}(\frak{sl}_{2})$.
Here we deal with the so-called \textit{fundamental solutions} of the YBE, where Lax operators are objects
isomorphic to their $R$-matrices, i.e. they are obtained from the same universal $\cal{R}$-matrix evaluated in different
representations (cf. chapter \ref{sec:integrability}, section \ref{sec:QG} for more elaborate comment in regard to this point).
We have to notify the reader though that the recognition of a quantum (physical) and an auxiliary space in the equation \eqref{eqn:RLL} is exactly \textit{reversed} as it is normally perceived within the FRT construction of the quantum algebra, namely the matrix (auxiliary) space has now become a local space for our quantum system, whereas the block-elements from $\bb{L}_{x}\in \End(\frak{H}_{s}\otimes \frak{H_{a}})$ can be seen as $\End(\frak{H}_{a})$-valued operators\footnote{Identification of spaces is of course rather ambivalent at this stage. A proper interpretation can only be made after associating an $R$-matrix with an interaction of a quantum chain.}. Notably, by virtue of the \textit{regularity property}, at $\lambda=\mu$ the $R$-matrix becomes proportional to the permutation operator
$P_{x,x+1}\in \End(\frak{H}_{s})$ in $\frak{H}_{s}$ which exchanges two $\frak{H}_{1}$ copies residing at adjacent
position $x$ and $x+1$,
\begin{equation}
R^{q}_{x,x+1}(0)=(q-q^{-1})P_{x,x+1},
\end{equation}
whence
\begin{equation}
\check{R}_{x,x+1}^{q}(0)\equiv P_{x,x+1}R_{x,x+1}^{q}(0)=(q-q^{-1})\one_{s},\quad \one_{s}\in \End(\frak{H}_{s}).
\end{equation}
By taking the derivative of \eqref{eqn:RLL} with respect to the spectral parameter $\lambda$ at $\lambda=\mu$,
and multiplying by $P_{x,x+1}$ from the left, we arrive at
\begin{align}
[\partial_{\lambda}\check{R}^{q}_{x,x+1}(0)\bb{L}_{x}^{q}(\lambda)\bb{L}_{x+1}^{q}(\lambda)]&=
-\check{R}_{x,x+1}^{q}(0)(\partial_{\lambda}\bb{L}^{q}_{x}(\lambda))\bb{L}^{q}_{x+1}(\lambda)\nonumber \\
&+\bb{L}^{q}_{x}(\lambda)(\partial_{\lambda}L^{q}_{x+1}(\lambda))\check{R}^{q}_{x,x+1}(0),
\end{align}
which is equivalent -- up to inessential rescaling of the operators $\check{R}^{q}_{x,x+1}$ and $\bb{L}_{x}(\lambda)$ -- to the \textit{Sutherland equation},
\begin{equation}
\boxed{[h_{x,x+1},\bb{L}^{q}_{x}(\lambda)\bb{L}^{q}_{x+1}(\lambda)]=\bb{B}^{q}_{x}(\lambda)\bb{L}^{q}_{x+1}(\lambda)-
\bb{L}^{q}_{x}(\lambda)\bb{B}^{q}_{x+1}(\lambda),}
\label{eqn:Sutherland_equation}
\end{equation}
after identification
\begin{equation}
h_{x,x+1}\sim \left[\partial_{\lambda}\check{R}^{q}_{x,x+1}(\lambda)\right]_{\lambda=0},
\end{equation}
has been made. Here, $\bb{B}_{x}\in \End(\frak{H}_{s})$ is the \textit{boundary operator}, related to the Lax operator via
\begin{equation}
\bb{B}^{q}_{x}(\lambda)\sim \partial_{\lambda}\bb{L}^{q}_{x}(\lambda).
\end{equation}
This result is \textit{instrumental} for the ongoing algebraic construction of NESS operators for the boundary-driven quantum chains.

We continue by constraining the deformation parameter on the unit circle $q=\exp{(\ii \gamma)}$, $\gamma\in \RaR$. We do so in order tp describe the \textit{gapless} phase (easy-plane regime) of the anisotropic XXZ Heisenberg model. This amounts to use the $R$-matrix in the trigonometric form
\begin{equation}
R^{q}_{x,x+1}(\lambda)=\gamma^{-1}(q-q^{-1})
\begin{pmatrix}
[-\ii \lambda+1]_{q} & & & \cr
& [-\ii \lambda]_{q} & \exp{(\gamma \lambda)} & \cr
& \exp{(-\gamma \lambda)} & [-\ii \lambda]_{q} & \cr
& & & [-\ii \lambda+1]_{q}
\label{eqn:R_explicit}
\end{pmatrix},
\end{equation}
whence we readily obtain the interaction
\begin{equation}
h_{x,x+1}:=\left[\partial_{\lambda}\check{R}^{q}_{x,x+1}(\lambda)\right]_{\lambda=0}=
\begin{pmatrix}
q+q^{-1} & & & \cr
& -(q-q^{-1}) & 2 & \cr
& 2 & (q-q^{-1}) & \cr
& & & q+q^{-1}
\label{eqn:h_explicit}
\end{pmatrix}.
\end{equation}
The interaction $h_{x,x+1}$ inherits $\cal{U}_{q}(\frak{sl}_{2})$-invariance from the $R$-matrix.
The Lax and the boundary operators are provided by
\begin{equation}
\bb{L}^{q}_{x}(\lambda)=
\begin{pmatrix}
[-\ii \lambda+\bb{s}^{z}]_{q} & \exp{(\gamma \lambda)}\bb{s}_{q}^{-} \cr
\exp{(-\gamma \lambda)}\bb{s}_{q}^{+} & [-\ii \lambda-\bb{s}^{z}]_{q}
\end{pmatrix}_{x},
\label{eqn:Lax_explicit}
\end{equation}
and
\begin{equation}
\bb{B}^{q}_{x}(\lambda):=-c(\gamma)\partial_{\lambda}\bb{L}^{q}_{x}(\lambda)=-2
\begin{pmatrix}
\cos{\left(\gamma(-\ii \lambda + \bb{s}^{z})\right)} & \ii \sin{(\gamma)}\exp{(\gamma \lambda)}\bb{s}^{-}_{q} \cr
-\ii \sin{(\gamma)}\exp{(-\gamma \lambda)}\bb{s}^{+}_{q} & \cos{\left(\gamma(-\ii \lambda - \bb{s}^{z})\right)}
\end{pmatrix}_{x},
\end{equation}
with $c(\gamma):=2\ii \gamma^{-1}\sin{(\gamma)}$, respectively. The matrix elements are given in terms of the standard $q$-deformed
$\frak{sl}_{2}$ spin generators, as prescribed by \eqref{eqn:sl2_deformed_relations}. In the undeformed $q\rightarrow 1$ (or equivalently $\gamma \rightarrow 0$)
limit we correctly restore (redefining the spectral parameter as $u\equiv -\ii \lambda$)
\begin{equation}
\lim_{\gamma \rightarrow 0}\bb{L}^{q}_{x}(\lambda)=u\;\sigma^{0}_{x}\otimes \one_{a}+\vec{\sigma}_{x}\cdot \vec{\bb{s}}=
\begin{pmatrix}
u+\bb{s}^{z} & \bb{s}^{-} \cr
\bb{s}^{+} & u-\bb{s}^{z}
\end{pmatrix},\quad \lim_{\gamma \rightarrow 0}\bb{B}^{q}_{x}=-2\;\one_{s}\otimes \one_{a}.
\label{eqn:Lax_classial_limit}
\end{equation}
Nevertheless, the definition \eqref{eqn:h_explicit}, despite having correct classical limit, is not yet of the desired form of the XXZ interaction. Clearly, it is not even hermitian, since
\begin{equation}
h_{x,x+1}=2\sigma^{+}_{x}\sigma^{-}_{x+1}+2\sigma^{-}_{x}\sigma^{+}_{x+1}+
\cos{\gamma}(\one_{s}+\sigma^{z}_{x}\sigma^{z}_{x+1})-
\ii \sin{\gamma}(\sigma^{z}_{x}-\sigma^{z}_{x+1}).
\end{equation}
Hermicity is spoiled by the last anti-hermitian term. We are fortunate though that the issue is pretty innocent, as it can be shown that the problematic \textit{surface-like} term can we neatly absorbed into redefinitions of the Lax and the boundary operators.
To this end let us first isolate the proper XXZ interaction (modulo irrelevant constant shift),
\begin{equation}
h^{\rm{XXZ}}_{x,x+1}=h_{x,x+1}+\ii \sin{(\gamma)}(\sigma^{z}_{x}-\sigma^{z}_{x+1}),
\end{equation}
and expand the commutator on the left hand side of \eqref{eqn:Sutherland_equation},
\begin{equation}
[h_{x,x+1},\bb{L}^{q}_{x}\bb{L}^{q}_{x+1}]=[h^{\rm{XXZ}}_{x,x+1},\bb{L}^{q}_{x}\bb{L}^{q}_{x+1}]+
\ii\left([b_{x},\bb{L}^{q}_{x}]\bb{L}^{q}_{x+1}-\bb{L}^{q}_{x}[b_{x+1},\bb{L}^{q}_{x+1}]\right),
\end{equation}
using $b_{x}:=-\ii \sin{(\gamma)}\sigma^{z}_{x}$. The surface terms $\partial \bb{B}^{q}_{x}(\lambda):=[b_{x},\bb{L}^{q}_{x}(\lambda)]$ are then absorbed into $\bb{B}_{x}(\lambda)$, i.e.
\begin{equation}
\bb{B}^{\rm{XXZ}}_{x}(\lambda):=\bb{B}^{q}_{x}+\partial{\bb{B}}^{q}_{x}(\lambda)=-2
\begin{pmatrix}
\cos{\left(\gamma(-\ii \lambda + \bb{s}^{z})\right)} & \cr
& \cos{\left(\gamma(-\ii \lambda - \bb{s}^{z})\right)}
\end{pmatrix},
\end{equation}
rendering the boundary operator \textit{diagonal} in the physical space.
On the other hand, despite the Lax operator $\bb{L}^{q}_{x}$ remained unaltered, a $\lambda$-dependent spin-algebra automorphism given by $\bb{s}_{q}^{\pm}\rightarrow \exp{(\pm \gamma \lambda)}\bb{s}_{q}^{\pm}$ was being used merely for aesthetic
reasons, thus removing exponential factors from the off-diagonal elements. This particularly produces the form (cf.~\cite{Faddeev1,Faddeev2})
\begin{equation}
\bb{L}^{\rm{XXZ}}_{x}(\lambda)=
\begin{pmatrix}
[-\ii \lambda + \bb{s}^{z}]_{q} & \bb{s}_{q}^{-} \cr
\bb{s}_{q}^{+} & [-\ii \lambda - \bb{s}^{z}]_{q}
\end{pmatrix}.
\label{eqn:Lax_XXZ}
\end{equation}
It is worth spending few extra words on subtle modifications that were recently made.
All trigonometric $q$-deformations of the $\frak{su}_{N}$-invariant interactions
invariably lead to \textit{non-hermitian} counterparts. In the the $N=2$ case however (unlike in the trigonometric case for $N\geq 3$ when such non-hermitian terms are not of surface type), this ``pathology'' exactly cancels out on the level of the global Hamiltonian if the periodic boundary conditions are assumed, resulting in a reconstructed quantum algebra symmetry of the full Hamiltonian.
In our situation however, not only the cyclic invariance is absent (immediately obstructing the global $\cal{U}_{q}(\frak{sl}_{2})$ symmetry), but also the terms which violate hermicity produce unphysical boundary remnants (i.e. magnetic fields of \textit{imaginary} strength), which is why we have an urge to remove those terms. More formal (but equivalent) resolution of this aspect is provided in appendix~\ref{sec:App_FRT} where universal twisting elements preserving Hopf-algebraic structure are briefly discussed.

Equipped with the Sutherland equation, we return to the fixed point condition \eqref{eqn:Lindblad_fixed_point} and expand the adjoint action of the Hamiltonian by means of the Leibniz rule (dropping any parameter dependence momentarily),
\begin{equation}
\ii \adH \rho_{\infty}=-\ii S_{n}(\adH S_{n})^{\dagger}+\ii(\adH S_{n})S_{n}^{\dagger}.
\label{eqn:Leibniz}
\end{equation}
By employing the MPS form of the $S$-operator \eqref{eqn:Sn_factor}, and removing inert model parameter $q$ from the operators, we use the condition \eqref{eqn:Sutherland_equation} locally at each pair of neighbouring sites to generate the \textit{telescoping sum},
\begin{align}
\adH(S_{n})&=\bra{\psi_{\rm{L}}}[H,\bb{L}_{1}\bb{L}_{2}\cdots \bb{L}_{n}]\ket{\psi_{\rm{R}}}\nonumber \\
&=\sum_{x=1}^{n-1}\bra{\psi_{\rm{L}}}\bb{L}_{1}\cdots \bb{L}_{x-1}[h_{x,x+1},\bb{L}_{x}\bb{L}_{x+1}]\bb{L}_{x+2}\cdots \bb{L}_{n}\ket{\psi_{\rm{R}}}\nonumber \\
&=\bra{\psi_{\rm{L}}}\bb{B}_{1}\bb{L}_{2}\cdots \bb{L}_{n}\ket{\psi_{\rm{R}}}-
\bra{\psi_{\rm{L}}}\bb{L}_{1}\cdots \bb{L}_{n-1}\bb{B}_{n}\ket{\psi_{\rm{R}}}=:S_{n}^{(\rm{L})}-S_{n}^{(\rm{R})}.
\end{align}
Two modified $S$-operators $S_{n}^{(L)}$ and $S_{n}^{(R)}$ were introduced, differing from $S_{n}$ only by ``defects'' residing in the boundary spaces
which have the structure of auxiliary boundary matrices $\bb{B}^{q}$. Using this result, in conjunction with \eqref{eqn:Lindblad_fixed_point}, the requirement for the fixed point now reads
\begin{equation}
S_{n}^{(\rm{L})}S_{n}^{\dagger}-S_{n}\left(S_{n}^{(\rm{L})}\right)^{\dagger}+
S_{n}\left(S_{n}^{(\rm{R})}\right)^{\dagger}-S_{n}^{(\rm{R})}S_{n}^{\dagger}=
-\ii\left(\DD_{1}(S_{n}S_{n}^{\dagger})+\DD_{2}(S_{n}S_{n}^{\dagger})\right).
\end{equation}
To rewrite this result in a more compact form, we facilitate the two-leg Lax operator given by \eqref{eqn:two_leg_Lax} and two additional boundary operators
$\vmbb{B}_{x}^{(1,2)}\in \End(\frak{H}_{s}\otimes \frak{H}_{a}\otimes \ol{\frak{H}}_{a})$,
\begin{align}
\vmbb{B}_{x}^{(1)}&:=\left(\bb{B}_{x}\otimes \one_{a}\right)\left(\one_{a}\otimes (\ol{\bb{L}}_{x})^{T_{s}}\right)=
\sum_{i,j=1}^{2}e_{x}^{ij}\otimes(\bb{B}^{ii}\otimes \ol{\bb{L}}^{ij}),\nonumber \\
\vmbb{B}_{x}^{(2)}&:=\left(\bb{L}_{x}\otimes \one_{a}\right)\left(\one_{a}\otimes \ol{\bb{B}}_{x}\right)=
\sum_{i,j=1}^{2}e_{x}^{ij}\otimes(\bb{L}^{ji}\otimes \ol{\bb{B}}^{ii}),
\label{eqn:two_leg_boundary}
\end{align}
where $T_{s}$ in the superscript designates the transposition of the physical space. This finally yields the \textit{global boundary system},
\begin{align}
\dbra{\psi_{\rm{L}}}\left(\ii\vmbb{B}^{(1)}_{1}-\ii\vmbb{B}^{(2)}_{1}-\DD_{1}\vmbb{L}_{1}\right)
\prod_{x=2}^{\stackrel{n}{\longrightarrow}}\vmbb{L}_{x}\dket{\psi_{\rm{R}}}&=0,\nonumber \\
\dbra{\psi_{\rm{L}}}\prod_{x=1}^{\stackrel{n-1}{\longrightarrow}}\vmbb{L}_{x}\left(\ii\vmbb{B}^{(1)}_{n}-\ii\vmbb{B}^{(2)}_{n}+
\DD_{2}\vmbb{L}_{n}\right)\dket{\psi_{\rm{R}}}&=0.
\label{eqn:global_boundary_system}
\end{align}
It is important to remark that despite the form of the dissipation which we use in the problem was already fixed
by the choice \eqref{eqn:XXZ_channels}, there are yet two undetermined ingredients in the system of compatibility equations \eqref{eqn:global_boundary_system}:
\begin{itemize}
 \item The boundary states $\ket{\psi_{\rm{L}}}$ and $\ket{\psi_{\rm{R}}}$ have not been specified thus far. At this point we can either cheat a little
by already incorporating the knowledge from the original derivation of the solution~\cite{PRL107}, or simply propose a distinguished state characterized by the lowest weight, say $\ket{0}$. For our convenience we proclaim it as the \textit{vacuum}.
Of course, there exists also an intuitive argument to justify this choice based on the auxiliary $\frak{sl}_{2}$ spin algebra operators. The out-of-equilibrium character of solution requires to
explicitly break lattice-reversal (parity) symmetry. Therefore, let us set
\begin{equation}
\dket{\psi_{\rm{R}}}=\dket{\psi_{\rm{L}}}=\dket{0}\equiv \kket{0}{0}.
\end{equation}
 \item The existence a two-parametric \textit{continuous freedom} associated with generic representation parameters of the
$\cal{U}_{q}(\frak{sl}_{2})$ quantum algebra, i.e. the spectral parameter $\lambda$ (pertaining to the center of algebra) and the complex-spin parameter $p$.
\end{itemize}
We shall make an additional simplification by assuming that the solutions to \eqref{eqn:global_boundary_system} simultaneously satisfy a stronger condition in the form of the \textit{local boundary equations}, by essentially ``integrating out'' the monodromy part which can be in a sense viewed as a free propagator of the associated auxiliary contraction process. Namely we impose \textit{partially contracted} conditions over the boundary physical spaces,
\begin{align}
\dbra{0}\left(\vmbb{B}^{(1)}_{1}-\vmbb{B}^{(2)}_{1}+ \ii \DD_{1}\vmbb{L}_{1}\right)&=0,\nonumber \\
\left(\vmbb{B}^{(1)}_{n}-\vmbb{B}^{(2)}_{n}-\ii \DD_{2}\vmbb{L}_{n}\right)\dket{0}&=0.
\label{eqn:local_boundary_system}
\end{align}
We refer to the latter as the \textit{local boundary system}.

\subsection{Lax connection}
\label{sec:Lax_connection}

The Sutherland equations has a very informative differential-geometric meaning. Actually, it can be shown to be precisely equivalent to a discrete (UV cutoff) version of the \textit{zero-curvature condition}. In this light it represents a flat connection specifying a parallel transport for an associated linear auxiliary problem of an \textit{operator-valued} vector (``wavefunction'') $\Psi_{x}(t;\lambda)$,
\begin{align}
\Psi_{x+1}(t;\lambda)=\bb{L}_{x}(\lambda)\Psi_{x}(t;\lambda),\nonumber \\
\partial_{t}\Psi_{x}(t;\lambda)=\bb{U}_{x}(\lambda)\Psi_{x}(t;\lambda).
\label{eqn:Lax_representation_lattice}
\end{align}
A \textit{consistency condition} for this auxiliary dynamical system is of the form
\begin{equation}
\partial_{t}\bb{L}_{x}(\lambda)=\bb{U}_{x-1}(\lambda)\bb{L}_{x}(\lambda)-\bb{L}_{x}(\lambda)\bb{U}_{x}(\lambda).
\end{equation}
When the open boundary conditions are assumed we have to in addition provide similar type equations to account for each
boundary space~\cite{Santos09}, but in what follows we are solely interested in the bulk properties of the theory.
In order to show how such condition arises from the monodromy matrices which obey the YBE, we first define an invertible operator
$\bb{Q}\in \End(\frak{H}_{s}\otimes \frak{H}_{a})$ a-la Korepin \textit{et. al.} (cf.~\cite{KorepinBook}, Theorem 2 of section VI.1),
\begin{equation}
\bb{Q}_{a_{2},x}(\lambda,\mu):=\bra{0}\bb{L}_{a_{1},1}(\mu)\cdots \bb{L}_{a_{1},x}(\mu)\bb{R}_{a_{1},a_{2}}(\mu,\lambda)\cdots \bb{L}_{a_{1},n}(\mu)\ket{0},
\label{eqn:Q_aux_operator}
\end{equation}
which commutes with the Lax matrix,
\begin{equation}
\bb{L}_{x}(\lambda)\bb{Q}_{x}(\lambda,\mu)=\bb{Q}_{x-1}(\lambda,\mu)\bb{L}_{x}(\lambda),\quad
\bb{L}_{x}(\lambda)\bb{Q}^{-1}_{x}(\lambda,\mu)=\bb{Q}^{-1}_{x-1}(\lambda,\mu)\bb{L}_{x}(\lambda),
\label{eqn:Lax_and_Q}
\end{equation}
Notice that the operator $\bb{Q}$, which is non-local in $\frak{H}_{s}$, lives in a single copy of an auxiliary space $\frak{H}_{a}$, which explains why we omitted auxiliary indices in equation \eqref{eqn:Lax_and_Q}. From here we immediately obtain,
\begin{equation}
\widetilde{\bb{Q}}_{x-1}(\lambda,\mu)\bb{L}_{x}(\lambda)=\bb{L}_{x}(\lambda)\widetilde{\bb{Q}}_{x}(\lambda,\mu),\quad
\widetilde{\bb{Q}}_{x}(\lambda,\mu):=\bb{Q}^{-1}_{x}(\lambda,\mu)\partial_{\mu}\bb{Q}_{x}(\lambda,\mu).
\end{equation}
Identifying the time-propagator for the auxiliary problem \eqref{eqn:Lax_representation_lattice} with
\begin{equation}
\bb{U}_{x}(\lambda,\mu)=\ii\;\partial_{\mu}\log {\tau(\mu)}\;\one_{a}-\ii\;\widetilde{\bb{Q}}_{x}(\lambda,\mu),
\end{equation}
which reduces at the shift point $\mu=0$ to
\begin{equation}
\bb{U}_{x}(\lambda)=\ii(H-\widetilde{\bb{Q}}_{x}(\lambda,0)),
\end{equation}
and applying Heisenberg equation of motion for the Lax operator, we readily arrive at
\begin{align}
\partial_{t}\bb{L}_{x}(\lambda)&\equiv \ii\;[H,\bb{L}_{x}(\lambda)]\nonumber \\
&=\ii\;[H,\bb{L}_{x}(\lambda)]-\underbrace{\ii\;(\widetilde{\bb{Q}}_{x-1}(\lambda)\bb{L}_{x}(\lambda)-\bb{L}_{x}(\lambda)\widetilde{\bb{Q}}_{x}(\lambda))}_{=0}\nonumber \\
&=\bb{U}_{x-1}(\lambda)\bb{L}_{x}(\lambda)-\bb{L}_{x}(\lambda)\bb{U}_{x}(\lambda),\nonumber \\
\end{align}
confirming that Lax representation is indeed implied by the quantum Yang-Baxter equation.
Hence, it merely remains to associate the recent result with the Sutherland equation.
This only requires to express a local propagator $\bb{U}_{x}(\lambda)$ as
\begin{equation}
\bb{U}_{x}(\lambda)=\ii\;\bb{L}^{-1}_{x}(\lambda)\left(\bb{B}_{x}(\lambda)-[h_{x,x+1},\bb{L}_{x}(\lambda)]\right),
\end{equation}
and plug it into the Lax equation $\ii\;[H,\bb{L}_{x}(\lambda)]\equiv \bb{U}_{x-1}\bb{L}_{x}-\bb{L}_{x}\bb{U}_{x}$, accounting
for invertibility of a Lax matrix and that the only interactions which are relevant in a global Hamiltonian $H$ on the left
are $h_{x-1,x}$ and $h_{x,x+1}$, respectively.

\subsection{Boost operator}
Sutherland equation proves especially useful for constructing boost operators.
In lattice integrable system with a regular $R$-matrix (i.e. those displaying the difference property) the boost operator $B$ agrees with the first momentum of the Hamiltonian,
\begin{equation}
B:=-\sum_{x}x\;h_{x,x+1}.
\end{equation}
We should again assume the periodic boundary conditions here, whence a position index $x$ shall be considered modulo $n$.
Then it is quite easy to see that
\begin{equation}
[B,\tau(\lambda)]=-\tr{\left(\sum_{x}[x\;h_{x,x+1},\prod_{x=1}^{\stackrel{n}{\longrightarrow}}L_{x}(\lambda)]\right)}=\partial_{\lambda}\tau(\lambda),
\end{equation}
which implements a group of shift transformations in the spectral parameter,
\begin{equation}
\tau(\lambda+\mu)=\exp{(\mu B)}\tau(\lambda)\exp{(-\mu B)}.
\label{eqn:Lorentz_invariance}
\end{equation}
This is a footprint of the Lorentz invariance in a lattice theory, namely \eqref{eqn:Lorentz_invariance} can be seen as a rapidity shift between two inertial frames. This means, states from another point of view, that two observers must be able to construct the same set of Hamiltonian eigenstates. Moreover, the locality principle ensures that all higher \textit{local} Hamiltonians are simply obtained under iterative action under the boost operation,
\begin{equation}
[B,H^{(n)}]=H^{(n+1)},\quad [H_{n},H_{m}]=0,\quad \forall m,n\in \NaN,
\label{eqn:boost_property}
\end{equation}
with $H^{(1)}=P$ being a momentum operator (a generator of the cyclic shift $U_{\rm{cyc}}$), and $H^{(2)}$ a system's Hamiltonian.
It is instructional to remark that \eqref{eqn:boost_property} can be thought of as a lattice version of the $2D$ Poincare
algebra~\cite{Thacker86,Thacker98}. Namely, in continuum $2D$ theory the latter degenerates into a closed algebra
\begin{equation}
[H,P]=0,\quad [B,H]=P,\quad [K,P]=H.
\label{eqn:Poincare_algebra}
\end{equation}

To our surprise, the lattice version of the boost relation persist even in the non-fundamental models which lack the difference property -- thus are \textit{not} Lorentz invariant -- although in a slightly different (parameter-dependent) form
(cf. reference~\cite{Links01} for application to the Hubbard model).

It should be clear that the periodic boundary condition are of crucial importance here, for otherwise no lattice momentum operator could exist. In paper~\cite{GM95} it has been argued that the absence of momentum conservation has serious implications on the underlying set of local charges. We are about to discuss this subject in detail in chapter~\ref{sec:transport} to convince the reader that a boost density still persists as a meaningful concept even in open quantum chains.

At this stage, let us also take a look at the group-like property of the exterior $R$-matrix, as stated by the Theorem \ref{theorem1}. If we interpret the generator $\bb{H}(x)\in \End(\frak{H}_{a}\otimes \frak{H}_{a})$ of the $\PBR$-matrix $\PBR(x,y)$ as an interaction of some integrable \textit{non-compact} (say, virtual) spin chain with associated fundamental auxiliary space $\frak{H}_{a}\cong \CC^2{}$ and for the transfer matrix take $\bb{S}(p)\in \End(\frak{H}^{\otimes n}_{a})$,
\begin{equation}
\bb{S}(p)=\tr_{s}\left(\prod_{x=1}^{\stackrel{n}{\longrightarrow}}\bb{L}_{a_{x},s}(p)\right),
\end{equation}
then the expression \eqref{eqn:exponential_form} actually realizes a \textit{frame-dependent} boost property with respect
to continuous spin parameters $p_{1}=x+y/2$ and $p_{2}=x-y/2$, assuming the periodic setting.
The HLL relation \eqref{eqn:first_order}, found in the first order ${\cal O}(y)$, is essentially again of the Sutherland form.

\section{Verma modules}
\label{sec:Verma}

Our aim is now to construct \textit{lowest-weight irreducible} representations, called \textit{Verma modules}~\cite{HallBook,Dobrev94}.
Such representations are generically \textit{not equivalent} to more familiar unitary representations (associated with simply-connected compact Lie groups) which are abundant in many physical theories. For the Heisenberg XXZ chain, with the Lax operator of the form \eqref{eqn:Lax_XXZ}, we have to construct representation spaces of the quantum algebra $\cal{U}_{q}(\frak{sl}_{2})$ by specifying the action of its generators $\{\bb{k}^{\pm},\bb{s}_{q}^{\pm}\}$ on an infinite tower of states $\{v_{k}\}_{k=0}^{\infty}$.
Representation spaces are labeled by the complex-valued (spin) representation parameter $p$.
There exist a unique vector $v_{0}$, called the lowest-weight vector, by definition obeying
\begin{equation}
\bb{s}_{q}^{-}v_{0}=0.
\end{equation}
The remaining states in a module are generated from $v_{0}$ after iterative application of the raising generator $\bb{s}_{q}^{+}$.
For instance, one possible realization is to use a space $\CC[x]$ of polynomials in variable $x$, identifying $v_{k}\equiv \ket{k}=x^{k}$, with lowest-weight function $v_{0}=1$.
The dual vector space, with basis $\{\bra{l}\}$, is given by means of bi-orthogonality relation
$\braket{l}{k}=\delta_{l,k}$.

The $q$-deformed spin generators admit a realization in terms of differential operators (using notation
$\partial\equiv \partial/\partial_{x}$),
\begin{equation}
\bb{s}^{z}_{q}(p)=x\partial -p,\quad \bb{s}^{+}_{q}(p)=x[2p-x\partial]_{q},\quad \bb{s}^{-}_{q}(p)=x^{-1}[x\partial]_{q}.
\end{equation}
In the $q\to 1$ undeformed limit we find the following simple first-order differential operators,
\begin{equation}
\bb{s}^{z}(p)=x\partial -p,\quad \bb{s}^{+}(p)=2px-x^{2}\partial,\quad \bb{s}^{-}(p)=\partial.
\end{equation}
The Casimir invariant is given by $\bb{C}_{q}=\bb{s}_{q}^{+}\bb{s}_{q}^{-}+[\bb{s}^{z}]_{q}[\bb{s}^{z}-\one_{a}]_{q}$.
A parameter $p$ characterizes the lowest weight of a representation. Denoting an associated Verma module by $\frak{S}_{p}$, with an
eigenvalue of the Casimir being $[p]_{q}[p+1]_{q}$, we construct a full basis by means of the generating function
\begin{equation}
\exp{(\xi \bb{s}_{q}^{+})}v_{0}=\sum_{k=0}^{\infty}\frac{\xi^{k}}{k!}(\bb{s}_{q}^{+})^{k}v_{0}=\sum_{k=0}^{\infty}\frac{\xi^{k}(2p)_{k}}{k!}v_{k},\quad
(2p)_{k}:=\frac{\Gamma(p+1)}{\Gamma(p-k+1)},
\end{equation}
provided $p\neq \half \ZZ_{+}$. When $p$ is a multiple of a \textit{half-integer}, $p=\ell/2$ ($\ell \in \{1,2,\ldots\}$), a module $\frak{S}_{p}$
contains a highest-weight state as well, determined by the condition $\bb{s}_{q}^{+}(p)v_{\ell}=0$. Therefore, $\frak{S}_{p}$ reduces in such a
case into a sequence of $(\ell+1)$-dimensional irreducible sub-modules $\frak{S}_{\ell}\subset \frak{S}_{p}$. Each of them is
equivalent to the the one which is spanned by the basis vectors $\{v_{k}\}_{k=0}^{\ell}$.
These representations are equivalent to \textit{unitary representations} of compact spin algebra $\frak{su}_{2}$, e.g. for $\ell=1/2$ we obtain the \textit{fundamental representation} $\frak{S}_{f}:=\frak{S}^{\ell=1/2}\cong \CC^{2}$, generated by the standard Pauli matrices $\bb{s}^{\pm}=\sigma^{\pm}$ and $\bb{s}^{z}=\sigma^{z}$. Finite dimensional representations for non-trivial values of deformation parameters are for \textit{generic} $q$ in one-to-one correspondence with (classical) representations of $\cal{U}(\frak{sl}_{2})$. However, new types of irreducible representations, this time without classical correspondence, appear when $q$ is a \textit{root of unity}, $q^{m}=1$. Those $m$-dimensional representations can be attributed to the fact that the center of $\cal{U}_{q}(\frak{sl}_{2})$ gets enlarged with additional elements being the $m$-th powers of the generators, $\{(\bb{k}^{\pm})^{m},(\bb{s}^{\pm}_{q})^{m}\}$.
That means, of course, there are \textit{four} additional parameters (quantum numbers) besides the value of the Casimir.
Such representation are further classified as either cyclic, semi-cyclic or nilpotent ones~\cite{Arnaudon93,Korff03}.
It is not known if those exceptional cases are valuable for our construction, thus we are not addressing them here\footnote{See closing remarks
at the end of chapter~\ref{sec:transport} for further clarification in regard to this point.}.

\subsection{Product representations}
\label{sec:product}

Two-fold product auxiliary space $\frak{S}_{p}\otimes \frak{S}_{\ol{p}}$ is no longer irreducible.
It foliates into an \textit{infinite} sum of infinite-dimensional irreducible subspaces
(here we assume that the values of $p_{1,2}\notin \half \ZZ_{+}$ are generic),
\begin{equation}
\frak{S}_{p_{1}}\otimes \frak{S}_{p_{2}}=\bigoplus_{\zeta=0}^{\infty}\frak{S}_{p_{1}+p_{2}-\zeta},
\label{eqn:module_decomposition}
\end{equation}
labeled by integer (weight) index $\zeta$. As evident, $\cal{U}_{q}(\frak{sl}_{2})$ is \textit{simply-reducible},
i.e. all factors in the decomposition are of multiplicity one. Provided we have $p_{1,2}\in \half \ZZ_{+}$, the decomposition \eqref{eqn:module_decomposition} involves only a finite number of factors (with $\zeta=0,1,\ldots \min{(p_{1},p_{2})}$), in accordance with familiar Clebsch-Gordan decomposition. 
For $p=p_{1}=\ol{p}_{2}$ we can employ the space $\CC[x,\ol{x}]$ of polynomials in two variables $x,\ol{x}$. Lowest-weight vectors
\begin{equation}
w^{0}_{\zeta}=(x-\ol{x})^{\zeta},
\end{equation}
which are by definition annihilated by the total \textit{lowering} operator $\bb{S}^{-}_{q}$,
\begin{equation}
\bb{S}^{-}_{q}=\bb{s}^{-}_{q}\otimes \one_{a}+\one_{a}\otimes \bb{s}^{-}_{\ol{q}}, \quad \bb{S}^{-}_{q}w^{0}_{\zeta}=0,
\end{equation}
yield for the total $z$-spin
\begin{equation}
\bb{S}^{z}(p_{1},p_{2})w^{0}_{\zeta}=(\zeta-p_{1}-p_{2})w^{0}_{\zeta},
\end{equation}
while the higher weight states $\{w^{m}_{\zeta}(p_{1},p_{2})\}_{m=1}^{\infty}$, spanning a sequence of subspaces $\frak{S}_{p_{1}+p_{2}-\zeta}$, can be obtained by means of the total \textit{raising} operator $\bb{S}^{+}_{q}$,
\begin{equation}
\bb{s}^{+}_{q}\otimes \one_{a}+\one_{a}\otimes \bb{s}^{+}_{\ol{q}},\quad w^{m}_{\zeta}(p_{1},p_{2})=(\bb{S}_{q}^{+}(p_{1},p_{2}))^{m}w^{0}_{\zeta}.
\end{equation}
\\
\newline
Finally we can return to \eqref{eqn:local_boundary_system} and complete our derivation, starting with the action of the Lax
operator on the left and right vacua (using $u$ as a spectral parameter as earlier in \eqref{eqn:Lax_classial_limit}),
\begin{align}
\bb{L}^{11}(p)\ket{0}&=[u-p]_{q}\ket{0},& & \bra{0}\bb{L}^{11}(p)=[u-p]_{q}\bra{0},\nonumber \\
\bb{L}^{22}(p)\ket{0}&=[u+p]_{q}\ket{0},& & \bra{0}\bb{L}^{22}(p)=[u+p]_{q}\bra{0},\nonumber \\
\bb{L}^{21}(p)\ket{0}&=0,& & \bra{0}\bb{L}^{21}(p)=\bra{1},\nonumber \\
\bb{L}^{12}(p)\ket{0}&=[2p]_{q}\ket{1},& & \bra{0}\bb{L}^{12}(p)=0,
\end{align}
and similarly for the (diagonal) boundary operator
\begin{equation}
\bb{B}^{11}(p)\ket{0}=-2\cos{\left(\gamma(u-p)\right)}\ket{0},\quad \bb{B}^{22}(p)\ket{0}=-2\cos{\left(\gamma(u+p)\right)}\ket{0},
\end{equation}
which acts equally on the left vacuum $\bra{0}$ as well. Using these results as an input to the local boundary system of equations
\eqref{eqn:local_boundary_system}, we generate $2d^{2}=8$ polynomial equations for undetermined variables $u$ and $p$, with $\epsilon>0$ being some fixed model parameter. The solution (see~\cite{KPS13,IZ14}), which only exist provided that $u=0$ and $p$-spin value takes pure imaginary value, $p=\ii \Im{(p)}$, can be expressed in implicit form by relating the coupling parameter $\epsilon$ to the spin parameter (for arbitrary anisotropy angle parameter $\gamma$) $\epsilon=\epsilon(p,\gamma)$,
\begin{equation}
\boxed{\epsilon=4\sin{(\gamma)}\coth{(\gamma \Im{(p)})}=4\ii[p]^{-1}_{q}\cos{(\gamma p)}.}
\label{eqn:XXZ_solution}
\end{equation}
Surely, we could have allowed in principle for two different coupling constants at each chain's end, i.e. $\epsilon_{L}\neq \epsilon_{R}$, however solutions to the boundary system for our particular ansatz only exist for the symmetric choice $\epsilon_{L}=\epsilon_{R}$. In the isotropic case, i.e. in the limit $\gamma\to0$, the result \eqref{eqn:XXZ_solution} gets simplified to
\begin{equation}
\boxed{p=4\ii\;\epsilon^{-1},}
\end{equation}
which coincides with results from previous works~\cite{PRL107,KPS13}, as it should.

\subsection{On $\frak{sl}_{N}$ Verma representations}
\label{sec:Dobrev}

When dealing with higher-rank algebras, say $\frak{g}=\frak{gl}_{N}$ or $q$-deformed enveloping algebras $\cal{U}_{q}(\frak{gl}_{N})$,
we may use analogous construction. Here we proceed a-la Dobrev~\cite{Dobrev94} and realize the generators by means of \textit{differential operators} on
space of polynomials in $N(N-1)/2$ commuting variables $x^{i}_{j}$, for $1\leq i \leq j-1$ and $2\leq j\leq d$,
expressed by means of number operators $X^{i}_{j}$, defined as
\begin{equation}
X^{i}_{j}x^{k}_{l}=\delta_{ik}\delta_{jl}x^{k}_{l},
\end{equation}
and also $q$-differential operators
\begin{equation}
D^{i}_{j}=(x^{i}_{j})^{-1}[X^{i}_{j}]_{q}.
\end{equation}
The $\frak{gl}_{N}$ modules $\frak{S}_{\vec{r}}$ are completely characterized by generic $N$-dimensional \textit{representation vectors}
\begin{equation}
\vec{r}=(r_{0},r_{1},\ldots,r_{N-1}),\quad r_{m}\in \CC,
\end{equation}
chosen in accordance with a convention that a representation is reducible \textit{if and only if} all the $r_{m}$ are \textit{non-negative} integers.  Realization of the $\frak{sl}_{N}$ Verma module is practically analogous. The central element pertains to the combination
$\lambda=\sum_{m=0}^{N-1}(N-m)r_{m}$. For example, considering the $N=2$ case, the representation of $\cal{U}_{q}(\frak{sl}_{2})$ follows from constraining the $\frak{gl}_{2}$ representation parameters $r_{0}$ and $r_{1}$ to obey $2r_{0}+r_{1}=0$. This way, the spin parameter $p$ which has been introduced earlier is given simply by $p=r_{1}/2$ (hence the fundamental representation in the new convention belongs to $r_{1}=1$).
Furthermore, in the undeformed $q\to 1$ limit we find $X_{x}\equiv X^{1}_{2}=x \partial_{x}$ and $D_{x}\equiv D^{1}_{2}=\partial_{x}$.

\subsection{Twisted Heisenberg model}
\label{sec:twisted}

We shall shortly explain how to include extra parameters in our algebraic formulation via \textit{twists} of quantum algebra structures. This way, new integrable models emerge. As a working example let us consider a ``coloured version'' of the trigonometric $6$-vertex solution by endowing it with an extra angle parameter $\theta$, yielding the so-called $\theta$-twisted Heisenberg model, also known as the asymmetric Wu-McCoy model. Essentially, additional term in this model is just a vector-like (Dzyaloshinkii-Moriya) interaction, describing the effect of the longitudinal electric field.

By borrowing a Reshetikhin (Abelian) universal twisting element\footnote{A short summary on Hopf algebra twists can be found e.g. in~\cite{KulishBook}.},
$\cal{F}^{\theta}\in \cal{U}_{q}(\widehat{\frak{sl}_{2}})^{\otimes 2}$,
\begin{equation}
\cal{F}^{\theta}=\exp{\left(-\ii (\theta/2)(s^{z}\otimes \one-\one\otimes s^{z})\right)},
\end{equation}
and evaluating it in the product of two fundamental representations,
\begin{equation}
(\pi(f,0)\otimes \pi(f,0))\cal{F}^{\theta}\equiv F^{\theta}_{ff}=
\begin{pmatrix}
1 & & & \cr
& \exp{(-\ii \theta/2)} & & \cr
& & \exp{(\ii \theta/2)} & \cr
& & & 1
\end{pmatrix},
\end{equation}
we readily generate a $\theta$-twisted trigonometric $6$-vertex $R$-matrix over $\frak{S}_{f}\otimes \frak{S}_{f}$,
writing spectral parameter $\varphi \equiv -\ii \gamma \lambda$,
\begin{align}
\check{R}^{\theta}(\lambda)&=PR_{ff}^{\theta}(\lambda)=PF_{ff}^{\theta}R_{ff}(\lambda)F_{ff}^{\theta}\nonumber \\
&=\frac{2\ii}{\gamma}
\begin{pmatrix}
\sin{(\varphi+\gamma)} & & & \cr
& \sin{(\gamma)} & \exp{(\ii \theta)}\sin{(\varphi)} & \cr
& \exp{(-\ii \theta)}\sin{(\varphi)} & \sin{(\gamma)} & \cr
& & & \sin{(\varphi+\gamma)}
\end{pmatrix}.
\end{align}
As usual, $P_{ff}\in \End(\frak{S}_{f}\otimes \frak{S}_{f})$ designates the permutation operator over two fundamental spaces.
Continuing as before, we apply the $\lambda$-derivative and set $\lambda=0$, thereby extracting
$\theta$-deformed interaction $h^{\theta}\in \End(\frak{S}_{f}\otimes \frak{S}_{f})$, which modulo a constant term becomes
\begin{equation}
h^{\theta}=2(\exp{(\ii \theta)}\sigma^{+}\otimes \sigma^{-}+\exp{(-\ii \theta)}\sigma^{-}\otimes \sigma^{+})+
2\cos{(\gamma)}\sigma^{z}\otimes \sigma^{z}.
\label{eqn:theta_interaction}
\end{equation}
We can see that the twisting parameter $\theta$ only affects the hopping term, inducing a component proportional to the spin current density, which is nothing else but a well-known Peierls phase substitution.
In order to construct the steady state we also need the $\theta$-twisted Lax operator. Because the latter acts with
respect to two distinct spin representations, i.e. the fundamental one and the generic non-compact one, we evaluate the Reshetikhin twisting element
$\cal{F}^{\theta}$ in
\begin{equation}
(\pi(f,0)\otimes \pi(p,0))\cal{F}^{\theta}\equiv \bb{F}^{\theta}_{fp}=\exp{(-\ii(\theta/4)(\sigma^{z}\otimes \one))}\exp{(-\ii(\theta/2)(\sigma^{0}\otimes s^{z}))},
\end{equation}
and apply the transformation
\begin{align}
\bb{L}^{\theta}(\lambda)&=\bb{F}^{\theta}_{fp}\bb{L}(\lambda)\bb{F}^{\theta}_{fp}\nonumber \\
&=
\begin{pmatrix}
\exp{(-\ii \theta/2)}[u+\bb{s}^{z}]_{q}\exp{(\ii \theta \bb{s}^{z})} & \exp{(\ii \theta(\bb{s}^{z}+\half))}\bb{s}^{-}_{q} \cr
\bb{s}_{q}^{+}\exp{(\ii \theta(\bb{s}^{z}+\half))} & \exp{(\ii \theta/2)}[u-\bb{s}^{z}]_{q}\exp{(\ii \theta \bb{s}^{z})}
\end{pmatrix}.
\end{align}
Remarkably, the solution to the compatibility boundary equations \eqref{eqn:local_boundary_system} is not affected by $\theta$, thus
a $2$-parametric family of Cholesky-like factors $S_{n}(\epsilon,\theta)$ of the form
\begin{equation}
S_{n}(\epsilon,\theta)=\bra{0}\bb{L}^{\theta}_{1}(\epsilon)\cdots \bb{L}^{\theta}_{n}(\epsilon)\ket{0},
\end{equation}
solves the fixed point condition with the twisted Hamiltonian \eqref{eqn:theta_interaction}.

\section[SU(N)-symmetric quantum gases]{SU(N)-symmetric multi-component quantum gases}
\label{sec:SUN}

In the light of the novel quantum-algebraic formulation of NESS solutions advertised in previous sections, we now turn towards
potential generalizations. It is immediately clear that the presented procedure directly applies to multi-component quantum gases
with \textit{integrable} interactions exhibiting a global $SU(N)$ symmetry, and their $q$-deformed descendants.
For $N$ component gases (which may also be viewed as spin-$\ell$ chains) the local quantum spaces are of
dimension $N=2\ell+1$, with $\frak{H}_{1}\cong \CC^{N}$.
As an example, in the spin-$1$ case we obtain the $SU(3)$ symmetric Lai--Sutherland\footnote{Despite its common name, the
model has appeared in the literature for the first time in work of Uimin~\cite{Uimin,Lai}.} model, with the Hamiltonian $H\in \frak{F}$ given by
\begin{equation}
H^{\rm{LS}}=\sum_{x=1}^{n-1}h_{x,x+1},\quad h_{x,x+1}^{\rm{LS}}=\vec{s}_{x}\cdot \vec{s}_{x+1}+(\vec{s}_{x}\cdot \vec{s}_{x+1})^{2}-\one.
\label{eqn:LS_hamiltonian}
\end{equation}
We used the canonical spin vectors $\vec{s}=(s^{1}_{x},s^{2}_{x},s^{3}_{x})$, with components
\begin{equation}
s_{x}^{1}=\frac{1}{\sqrt{2}}\left(e_{x}^{12}+e_{x}^{21}+e_{x}^{23}+e_{x}^{32}\right),\quad
s_{x}^{2}=\frac{\ii}{\sqrt{2}}\left(e_{x}^{21}-e_{x}^{12}+e_{x}^{32}-e_{x}^{23}\right),\quad
s_{x}^{3}=e_{x}^{11}-e_{x}^{33}.
\label{eqn:spin1_components}
\end{equation}
satisfying $\frak{su}_{2}$ commutation rules,
\begin{equation}
[s_{x}^{i},s_{x'}^{j}]=\ii \sum_{k}\epsilon_{ijk}s_{x}^{k}\delta_{x,x'}.
\label{eqn:su2_commutation}
\end{equation}
Integrability of the Lai--Sutherland model can be linked to the fact, like in the (isotropic) Heisenberg spin-$1/2$ model,
that the interaction can be again viewed as a permutation of states in two adjacent quantum spaces $\frak{H}_{1}\cong \CC^{3}$.
More generally, for the $N$-dimensional case (i.e. when $\frak{H}_{1}\cong \CC^{N}$) we can define $h^{N}_{x,x+1}\in \End(\frak{H}_{s})$,
\begin{equation}
h^{(N)}_{x,x+1}=\sum_{i,j=1}^{N}\one_{N}^{\otimes (x-1)}\otimes \ket{i,j}\bra{j,i}\otimes \one_{N}^{\otimes (n-x-1)}=\sum_{i,j=1}^{N}e_{x}^{ij}e_{x}^{ji}.
\label{eqn:permutation_interaction}
\end{equation}
\\
We are returning to the Lai--Sutherland model in the next chapter \ref{sec:degenerate} where an intriguing case
of \textit{degenerate} nonequilibrium steady states~\cite{IP14} is being discussed. At this point we consider instead a general $N$-level model with an interaction given by the permutation and try to find, by facilitating the above machinery, whether there exist any integrable boundary dissipations permitting for exact solutions. Luckily, there is no need of invoking technical
manipulations as far as only isotropic interactions are considered. Namely, with $h^{N}_{x,x+1}$ and two generic parameter-free operators
\begin{equation}
\bb{L}_{x}=\sum_{i,j=1}^{N}e_{x}^{ij}\otimes \bb{L}^{ji},\quad \bb{B}_{x}=\sum_{i,j=1}^{N}e_{x}^{ij}\otimes \bb{B}^{ji},
\end{equation}
we can readily evaluate a formal Sutherland condition,
\begin{align}
[h_{x,x+1},\bb{L}_{x}\bb{L}_{x+1}]&=\sum_{i,j=1}^{N}\sum_{k,l=1}^{N}e^{ij}_{x}e^{kl}_{x+1}\otimes [\bb{L}^{jk},\bb{L}^{li}],\nonumber \\
\bb{B}_{x}\bb{L}_{x+1}-\bb{L}_{x}\bb{B}_{x+1}&=\sum_{i,j=1}^{N}\sum_{k,l=1}^{N}e^{ij}_{x}e^{kl}_{x+1}\otimes
(\bb{B}^{ji}\bb{L}^{lk}-\bb{L}^{ji}\bb{B}^{lk}),
\label{eqn:Sutherland_SUN}
\end{align}
whence by equating both expression on the right we readily obtain quadratic algebraic relations,
\begin{equation}
[\bb{L}^{jk},\bb{L}^{li}]=\bb{B}^{ji}\bb{L}^{lk}-\bb{L}^{ji}\bb{B}^{lk}.
\label{eqn:quadratic_algebra}
\end{equation}
The right-hand side of \eqref{eqn:quadratic_algebra} can be made linear in components of $\bb{L}$ if the boundary operator $\bb{B}$ operates as a scalar in the auxiliary space. In particular, by setting $\bb{B}=-\one_{a}$, the algebraic relation \eqref{eqn:quadratic_algebra} becomes the defining relation of a general Lie algebra $\frak{gl}_{N}$. Moreover, the Lax operator can be easily equipped with the spectral parameter $u$ because addition of scalar terms do not have any effect on \eqref{eqn:Sutherland_SUN}. This brings us to the following form of the $\frak{gl}_{N}$-invariant Lax matrix
\begin{equation}
\bb{L}_{x}(u)=u\;\one_{d}\cdot \one_{a}+\sum_{i,j=1}^{N}e_{x}^{ij}\otimes \bb{L}^{ji}.
\label{eqn:Lax_SUN}
\end{equation}
The result should be of no surprise, since it coincides with the $\frak{gl}_{N}$-symmetric solution of the YBE, with
one auxiliary copy evaluated in the fundamental module $\frak{S}_{f}$. Hence, with \eqref{eqn:Lax_SUN} at our disposal, we might directly jump to the local boundary system.

We think, however, that several remarks with respect to possible generalizations are in order first:
\begin{enumerate}
 \item We have not gained any fruitful insight into the foundation of our Cholesky-type ansatz yet.
 In absence of better ideas it makes sense to simply stick with it. On the other hand it is not hard to convince oneself that a simpler ansatz where NESS is sought as an MPS with an irreducible auxiliary space is insufficient for our task.
 \item There also has not been much of a progress in regard to the role of the boundary vectors.
Here however we do not expect much room for improvements, because special states like the vector of extremal weights (or more
generally coherent vectors), which are intrinsic to the symmetry of the problem, seem to be the most plausible choice.
 \item Finally, the form of an ultra-local dissipator can be relaxed to a general (non-diagonal) dissipator of the GKS form (see \eqref{eqn:GKS_form}),
\begin{equation}
\DD(a_{x})=\sum_{i,j=1}^{N}\sum_{k,l=1}^{N}\cal{G}^{ij}_{kl}\left(e_{x}^{ij}a_{x} e_{x}^{lk}-\half\{e_{x}^{lk}e_{x}^{ij},a_{x}\}\right),
\label{eqn:D_GKS}
\end{equation}
where $\cal{G}\geq 0$ is a non-negative GKS rate matrix ($\cal{G}=\cal{G}^{\dagger}$).
\end{enumerate}

It is quite unfortunate though that working directly with an unconstrained form of the dissipator \eqref{eqn:D_GKS} is rather
cumbersome. A straightforward way to proceed could be via two-stage reduction of a \textit{complex polynomial system} by e.g. invoking Gr\"{o}ebner basis algorithm, amounting to (i) find all admissible rate matrices ${\cal G}$, and (ii) to subsequently check for non-negativity of the rates.
To bypass this inconvenience we rather prefer to restrict ourselves to a subset of so-called \textit{primitive} (i.e. rank-$1$) dissipators, given by the Weyl basis elements $e^{ij}$. Therefore we include into consideration only the following $2N^{2}$ Lindblad operators
\begin{equation}
A_{1}^{(i,j)}=\sqrt{\epsilon_{L}^{ij}}e_{1}^{ij},\quad A_{2}^{(i,j)}=\sqrt{\epsilon_{R}^{ij}}e_{n}^{ij},\quad i,j\in \{1,2,\ldots N\}.
\label{eqn:primitive_channels}
\end{equation}
By facilitating generic $\frak{sl}_{N}$ modules, the boundary system of equations \eqref{eqn:local_boundary_system} now yields a set of $N\times (N+1)$ polynomial equations with unknowns being components of the representation vector $\vec{r}$, and parameters being dissipation rates $\epsilon_{L,R}^{ij}$. The lowest-weight vacua which obey
\begin{equation}
\bra{0}x^{i}_{j}=0,\quad \partial_{x^{i}_{j}}\ket{0}\equiv D^{i}_{j}\ket{0}=0,
\label{eqn:action_on_vacua}
\end{equation}
already force us to set
\begin{equation}
r_{m}=0,\quad m\in \{1,2,\ldots,N-2\},
\end{equation}
i.e. the only non-vanishing representation parameters can be $r_{0}$ and $r_{d-1}$. Subsequent analysis also shows that all the rates,
\begin{equation}
\epsilon^{NN}_{1,n}=\epsilon^{ij}_{1,n}=0,\quad i,j\in \{1,2,\ldots,N-1\},
\label{eqn:vanishing_set}
\end{equation}
vanish, except for $(N-1)$ \textit{equal} rates at each side,
\begin{equation}
\epsilon_{1}^{iN}=\epsilon_{L},\quad \epsilon_{n}^{Ni}=\epsilon_{R},\quad i\in \{1,2,\ldots,N-1\}.
\label{eqn:remaining_set}
\end{equation}

It is worth remarking that the surviving set of Lindblad operators from \eqref{eqn:primitive_channels} generates, in conjunction with the Hamiltonian, the entire operator algebra $\frak{F}$, implying a \textit{unique} NESS. Basically the above choice of dissipative channels \eqref{eqn:remaining_set} describes the situation where at one end of the chain incoherent transitions occur between one extremal basis state to all remaining basis states with \textit{equal} rates, and vice-versa on the other end. With help of classical analogy, this can be interpreted as a conversion process from one particle species to all other ones with equal probability. In this sense, this is yet another form of a ``extremal incoherent driving'' scenario.

By taking into account the restrictions \eqref{eqn:vanishing_set}, the Lax matrix can be expanded as
\begin{align}
\bb{L}_{x}&=\sum_{i=1}^{N-1}\left(e_{x}^{ii}(N^{i}_{N}+r_{0})+e_{x}^{iN}D^{i}_{N}+
e_{x}^{Ni}(r_{N-1}x^{i}_{N}-(x^{i}_{N})^{2}D^{i}_{N}-e_{x}^{NN}N^{i}_{N}\right)\nonumber \\
&+\sum_{i\neq j=1}^{N-1}e_{x}^{ij}x^{j}_{N}D^{i}_{N}+e_{x}^{NN}(r_{0}+r_{N-1}),
\label{eqn:restricted_Lax}
\end{align}
whereas the individual terms from \eqref{eqn:local_boundary_system} explicit read
\begin{align}
\dbra{0}\vmbb{B}^{(1)}_{1}&=-2\sum_{i=1}^{N-1}(\ol{r}_{0}e_{1}^{ii}+e_{1}^{Ni}\ol{x}^{i}_{N})+(\ol{r}_{0}+\ol{r}_{N-1})e_{1}^{NN},\nonumber \\
\dbra{0}\vmbb{B}^{(2)}_{1}&=-2\sum_{i=1}^{N-1}(r_{0}e_{1}^{ii}+e_{1}^{Ni}x^{i}_{N})+(r_{0}+r_{N-1})e_{1}^{NN},\\
\dbra{0}\DD_{1}(\vmbb{L}_{1})&=-(N-1)\epsilon_{L}|r_{0}+r_{N-1}|^{2}e_{1}^{NN}\nonumber \\
&+\epsilon_{L}\sum_{i=1}^{N-1}\left(|r_{0}+r_{1}|^{2}e_{1}^{ii}-\frac{N-1}{2}\left((r_{0}+r_{N-1})\ol{x}^{i}_{N}e_{1}^{Ni}+(\ol{r}_{0}+\ol{r}_{N-1})x^{i}_{N}e_{1}^{iN}\right)\right),\nonumber
\end{align}
at the left boundary, and
\begin{align}
\vmbb{B}^{(1)}_{n}\dket{0}&=-2\sum_{i=1}^{N-1}\left(\ol{r}_{0}e_{n}^{ii}+\ol{r}_{N-1}x^{i}_{N}e_{n}^{iN}\right)+(\ol{r}_{0}+\ol{r}_{N-1})e_{n}^{NN},\nonumber \\
\vmbb{B}^{(2)}_{n}\dket{0}&=-2\sum_{i=1}^{N-1}\left(r_{0}e_{n}^{ii}+r_{N-1}\ol{x}^{i}_{N}e_{n}^{Ni}\right)+(r_{0}+r_{N-1})e_{n}^{NN},\\
\DD_{2}(\vmbb{L}_{n})\dket{0}&=(N-1)|r_{0}|^{2}\epsilon_{R}e_{n}^{NN}\nonumber \\
&-\epsilon_{R}\sum_{i=1}^{N-1} \left(|r_{0}|^{2}e_{n}^{ii}-\frac{1}{2}\left(x^{i}_{N}\ol{r}_{0}r_{N-1}e_{n}^{Ni}+\ol{x}^{i}_{N}r_{0}\ol{r}_{N-1}e_{n}^{iN}\right)\right),\nonumber
\end{align}
at the right boundary, respectively.
The system of equations \eqref{eqn:local_boundary_system} admits a solution if we set the rates as
\begin{equation}
\boxed{\epsilon_{L}=\epsilon,\quad \epsilon_{R}=(N-1)^{2}\epsilon,}
\end{equation}
and parametrize the non-vanishing pair of representation parameters as
\begin{equation}
\boxed{r_{0}=\frac{-4\ii}{\epsilon}\frac{1}{(N-1)^{2}},\quad r_{N-1}=-Nr_{0}.}
\end{equation}

\paragraph{Remarks.}
\begin{itemize}
 \item The particular choice of driving (given by \eqref{eqn:remaining_set}) causes particle flow from the left end of the chain to the right end.
Note that -- perhaps contrary to naive expectations -- simple exchange of the two sets of incoherent channels \textit{does not} describe the steady state density matrix with currents flowing in the opposite direction. Even worse, no solutions exist in such a case. The issue is anyway only superficial, emerging as a consequence of non-unitarity of representations $\frak{S}_{p}$.
In order to come around it, one has to take a partial transposition of the Lax operator with respect to the physical components (i.e. in $\frak{H}_{s}$) and \textit{reverse} the representation vector $\vec{r}\mapsto -\vec{r}$.
 \item Our assertion that the restricted Lax matrix \eqref{eqn:restricted_Lax} solves the local system of boundary
compact conditions given by \eqref{eqn:local_boundary_system} is strictly speaking incorrect. Instead, it does solve the
\textit{global} boundary system \eqref{eqn:global_boundary_system}, which is a weaker but still sufficient solution.
Let us take a closer look at this subtlety by inspecting the $N=3$ case. While the local condition at the right boundary is
precisely obeyed, at the left boundary two components, namely the ones coupling with the $e_{1}^{12},e_{1}^{21}$ components in the physical space, still survive the operation of $\DD_{L}$. Those terms however only depend on the variable $x^{1}_{2}$, which apparently \textit{cannot} be created from the right vacuum $\ket{0}$ by applying the two-leg monodromy operator $\vmbb{T}(\vec{r})$,
due to selection $r_{1}=0$. Similar arguments apply for $N>3$ as well, indicating that imposing boundary conditions locally in the
boundary physical spaces might be too restrictive in some situations.
 \item It might be worth spending few words on physical clarification of the recently obtained class of the steady state solutions
which display a qualitative resemblance to the XXX case. The analogy is perhaps best explained by looking at the representation
parameter vector $r$ involving only two non-vanishing components. In fact, the vanishing spectral parameter (by virtue of $dr_{0}+r_{d-1}=0$)
tells us that we essentially only operate with a single representation parameter. Because at $N=2$ the latter is just the $\frak{sl}_{2}$ spin
parameter, we are indeed (regardless of the local dimension $N$) ``distilling'' the $\frak{sl}_{2}$ subalgebra from the fully available $\frak{sl}_{N}$ symmetry. That said, symmetry of $S$-operators can be weaker than the full symmetry of the model.
Thus, we can say then that the dissipative process could be understood as a transmutation between a single particle specie and a ``composite state'' of all remaining species. This correspondence can be further supported by evaluating particles density profiles which reproduce qualitative behaviour to the one found in the original Heisenberg spin chain solution~\cite{PRL107}.
 \item There is an issue with a trigonometric $q$-deformation of $SU(N)$-symmetric Hamiltonians \eqref{eqn:permutation_interaction} when $N\geq 3$. Namely, such $q$-deformed descendants generate ``incurable'' \textit{non-hermitian} interactions when $q=\exp{(\ii \gamma)}$ is taken from the unit circle. Terms which violate hermicity are not, unlike in the $N=2$ case, of surface-type, and therefore cannot be amended by suitable twisting transformations.
Interestingly, such $q$-modified interactions (pertaining to the class of Perk-Schultz models~\cite{PS81}) found their home in studies of multi-color vertex models in the domain of classical stochastic reaction-diffusion processes~\cite{IPR01},
where they can be associated with statistical weights of birth/death or various other processes with no quantum analogue~\cite{ADHR94,Dahmen95}. However in the classical context the Perk-Schultz Hamiltonians govern a time evolution of wavefunctions encoding classical probability distributions of particle configurations. For a genuine quantum evolution on the other hand a bulk dynamics if required to be unitary though. Conversely, there is no issue with non-hermicity in the \textit{hyperbolic} regime, $\gamma \to \ii \gamma$, where the interaction takes the form of
\begin{equation}
h^{\rm{PS}}(\gamma)=\sum_{i\neq j}e^{ij}\otimes e^{ji}+\cosh{(\gamma)}\sum_{i}e^{ii}\otimes e^{ii}+\sum_{i\neq j}{\rm sign}(i-j)\sinh{(\gamma)}e^{ii}\otimes e^{jj}.
\end{equation}
We note that Perk-Schultz models are generated from $\cal{U}_{q}(\frak{gl}_{N})$ fundamental $R$-matrices given by \eqref{eqn:gld_R_matrix}.
 \item It moreover remains an unsolved problem how to modify our method to make it suitable for the \textit{non-fundamental} integrable models,
supposing this can be achieved in some way. Unlike the fundamental solutions, where interactions by definition result from the first-order expansion of $R$-matrices with respect to the spectral parameter, the non-fundamental instances require a construction via logarithmic derivatives of transfer matrices generated from Lax operators which are \textit{not} isomorphic to their intertwiners. Auxiliary spaces are typically evaluated in the fundamental representation of an underlying algebra (e.g. the standard $6$-vertex solution can generate sine-Gordon model, nonlinear Schr\"{o}dinger equation etc.), whereas physical degrees of freedom come into existence from distinct admissible realizations of algebra generators. A particular insufficiency of our approach is that as long as only Lax operators with fundamental auxiliary spaces are at our disposal, we lose a continuous freedom which plays an essential feature in the boundary compatibility equations.
\end{itemize}

\section{Computation of observables}
\label{sec:observables}

We have devoted no words in regard to computation of physical observables with respect to exact MPS realizations of integrable steady states. In order to evaluate expectation values of local physical observables it is particularly convenient to facilitate auxiliary \textit{vertex operator} technique\footnote{We afforded a slight misuse of mathematical physics' standard terminology here. Vertex operator algebras normally refer to certain algebraic structures which are common in the context of conformal field theories and string theories.} which we outline below.

The key aspect is to calculate the normalization function (the partition sum) of a NESS density operator by making use of locality to
trace out the physical space. This introduces a concept of the \textit{auxiliary transfer operator}.
Local observables, which get substituted by corresponding vertex operators acting entirely on the level of auxiliary spaces, can be
perceived as defects in the transfer operator strings.

Writing a generic local observable with support on a sublattice between sites $x$ and $y$ for a quantum chain with local physical space $\frak{H}_{1}\cong \CC^{N}$ as
\begin{equation}
O_{[x,y]}=\one_{N}^{\otimes (x-1)}\otimes O\otimes \one_{N}^{\otimes (n-y)},
\label{eqn:local_observable}
\end{equation}
the NESS expectation values are formally given as
\begin{equation}
\expect{O_{[x,y]}}\equiv \frac{\tr{(O_{[x,y]}\rho_{\infty})}}{\tr{(\rho_{\infty})}}=\textgoth{Z}_{n}^{-1}\;\tr{(O_{[x,y]}\rho_{\infty})},
\label{eqn:expectation_value}
\end{equation}
We shall mostly hide parameter dependence of operators. Recall that NESS operators are \textit{not} normalized. Hence, for a system consisting of $n$ sites, the normalization factor is given by the \textit{nonequilibrium partition function} $\textgoth{Z}_{n}$, which becomes the central object when addressing the thermodynamic ($n\to \infty$) physical properties.

Accounting for the product structure of the auxiliary space, i.e. the two-fold product of a generic lowest-weight representation and its conjugated counterpart, we may define a set of maps $\Lambda_{\ell}:\End(\frak{H}_{1}^{\otimes \ell})\to \End(\frak{H}_{a}\otimes \ol{\frak{H}}_{a})$ and introduce generic vertex operators with support $d=y-x+1$ via prescription
\begin{equation}
\Lambda_{d}(O)=\vmbb{O}:=\sum_{i_{1},j_{1},\ldots, i_{d},j_{d}}\tr{(e^{i_{1}j_{1}}\otimes \cdots \otimes e^{i_{d}j_{d}})}
\vmbb{L}^{i_{1}j_{1}}\cdots \vmbb{L}^{i_{d}j_{d}}.
\end{equation}
A similar construction can be easily generalized for auxiliary spaces of any structure.
On the other hand, where $X_{[x,y]}$ operate identically we simply insert the \textit{transfer vertex operator}
\begin{equation}
\vmbb{T}:=\Lambda_{1}(\one_{N})=\tr\;\vmbb{L},
\end{equation}
enabling us to rewrite \eqref{eqn:expectation_value} as
\begin{equation}
\expect{O_{[x,y]}}=\textgoth{Z}_{n}^{-1}\dbra{0}\vmbb{T}^{x-1}\vmbb{O}\vmbb{T}^{n-y}\dket{0}.
\end{equation}
Elementary, i.e. on-site, vertex operators read $\Lambda_{1}(e^{ij})=\vmbb{L}^{ji}$, whereas the partition function can be expressed by contracted strings of auxiliary transfer matrices
\begin{equation}
\textgoth{Z}_{n}=\tr \rho_{\infty}=\dbra{0}\vmbb{T}^{n}\dket{0}.
\end{equation}

\subsection{Example: Heisenberg XXZ spin-1/2 chain}
\label{sec:Heisenberg_observables}

A direct computational approach for evaluation of statistical averages of observables with respect to NESS measure is briefly outlined. For simplicity we focus on the driven anisotropic Heisenberg spin-$1/2$, addressing some of the simplest local observables which appear to be of major relevance for the scope of our discussions, i.e. the magnetization profile, the spin-density current and the $2$-point stationary spin-spin correlation functions. Formally, the following contractions are of our interest,
\begin{align}
\expect{\sigma_{x}^{z}}&=\textgoth{Z}^{-1}_{n}\dbra{0}\vmbb{T}^{x-1}(\vmbb{L}^{11}-\vmbb{L}^{22})\dket{0},\nonumber \\
\expect{\sigma_{x}^{z}\sigma_{y}^{z}}&=
\textgoth{Z}^{-1}_{n}\dbra{0}\vmbb{T}^{x-1}(\vmbb{L}^{11}-\vmbb{L}^{22})\vmbb{T}^{y-x-1}(\vmbb{L}^{11}-\vmbb{L}^{22})\vmbb{T}^{n-y}\dket{0},\nonumber \\
\expect{j_{x}}&=\textgoth{Z}^{-1}_{n}\cdot 4\ii\dbra{0}\vmbb{T}^{x-1}(\vmbb{L}^{21}\vmbb{L}^{12}-\vmbb{L}^{12}\vmbb{L}^{21})\vmbb{T}^{n-x-1}\dket{0}.
\end{align}

Let us treat the gapless regime $0\leq \Delta \leq 1$ first. With $\Delta <1$ exact calculations are possible by
restricting calculations to values of the deformation parameter $q$ at the \textit{roots of unity} or, equivalently, for anisotropies of the from $\Delta=\cos{(\pi(l/m))}$, given by two co-prime integers $l,m\in \ZZ_{+}$ and $m>l$. As we have argued earlier in section~\ref{sec:Verma}, the lowest-weight Verma modules acquire also a highest-weight states at these special points and thus become \textit{reducible}, with an effective (bond) dimension of two-leg auxiliary vertex operators $\vmbb{L}^{ij}$ being only of dimension $m^{2}$. Accordingly, calculations become the simplest (barring the non-interacting $XX$ point) at $\Delta=\half=\cos{(\pi/3)}$, where the bond dimension is merely $3\times 3=9$.

Additionally, by virtue of global $U(1)$ invariance on the level of the two-leg monodromy $\vmbb{M}$, further reduction is possible due to the \textit{ice-rule} property,
\begin{equation}
\vmbb{T}=\tr\;\vmbb{L}=2\bb{s}^{z}\otimes \ol{\bb{s}}^{z}+\bb{s}^{+}\otimes \ol{\bb{s}}^{+}+\bb{s}^{-}\otimes \ol{\bb{s}}^{-},\quad
[\vmbb{T},\bb{s}^{z}\otimes \one_{a}-\one_{a}\otimes \ol{\bb{s}}^{z}]=0,
\end{equation}
implying preservation of subspaces $\frak{H}_{a}\otimes \ol{\frak{H}}_{a}\cong \frak{S}_{p}\otimes \frak{S}_{-p}$ with fixed \textit{difference} in
the ``occupation numbers''. Particularly, for the auxiliary right vacuum $\dket{0}$, the transfer auxiliary vertex operator $\vmbb{T}$ preserves the subspace of states
\begin{equation}
\frak{K}:=\{\dket{k}\equiv \ket{k}\otimes \ket{k};k\in \ZZ_{+}\}\subset \frak{H}_{a}\otimes \ol{\frak{H}}_{a}.
\end{equation}
In addition, also individual diagonal elements of $\vmbb{L}$ preserve $\frak{K}$,
\begin{equation}
\vmbb{L}^{ii}\frak{K}\subseteq \frak{K},
\end{equation}
meaning that computations of all \textit{diagonal} observables w.r.t. computational basis can be carried out within $\frak{K}$.
Analogously, one can show that even $2$-site auxiliary operators, $\vmbb{L}^{21}\vmbb{L}^{12}$, $\vmbb{L}^{12}\vmbb{L}^{21}$,
pertaining to simple $2$-point hopping operators, $\sigma^{+}\otimes \sigma^{-}$ and $\sigma^{-}\otimes \sigma^{+}$, respectively,
again preserve the subspace $\frak{K}$. Frankly, even if they were not, merely the matrix elements from $\frak{K}$ are needed at the end because of vacuum contraction.
Hence, by means of the restricted transfer matrix $\vmbb{T}|_{\frak{K}}$ we can easily obtain explicit finite size results for small enough bond dimensions $m$ by means of exact diagonalization.

To make things even more simple, we notice that the spin-current density vertex operator
\begin{equation}
\vmbb{J}:=\ii(\vmbb{L}^{21}\vmbb{L}^{12}-\vmbb{L}^{12}\vmbb{L}^{21}),
\end{equation}
operates \textit{proportionally} to the transfer matrix $\vmbb{T}$ itself,
\begin{equation}
\vmbb{J}=\varrho(\gamma,s)\vmbb{T},\quad \varrho(\gamma,s)=-2i[s]_{q},
\end{equation}
yielding the following very appealing result
\begin{equation}
\boxed{\expect{j_{x}}=\varrho(\gamma,s)\left(\frac{\textgoth{Z}_{n-1}(\epsilon)}{\textgoth{Z}_{n}(\epsilon)}\right).}
\end{equation}
What is remarkable about this result is its precise qualitative agreement with known exact closed-form results for the
ASEP~\cite{SchutzBook,Blythe07}. This observation hints on some universal features of solvable nonequilibrium steady states,
independent from details of an underlying dynamical equation.

We can see that in order to access closed-form expressions in the thermodynamic limit we need to understand the large $n$ behavior of the \textit{ratio} of two partitions functions. Such scaling analysis has been investigated at the critical point $\Delta=1$ already in the seminal paper~\cite{PRL107}, and later in more detail in~\cite{PKS13} by also incorporating twisting boundary fields which allow to study transport of transversal current components. Essentially we find, sending $n\to \infty$ and keeping the coupling rate $\epsilon$ fixed,
\begin{equation}
\frac{\textgoth{Z}_{n}(\epsilon)}{\textgoth{Z}_{n-1}(\epsilon)}\asymp \upsilon(\epsilon)n^{2}+\cal{O}(n),
\end{equation}
implying \textit{sub-diffusive} scaling for large $n$,
\begin{equation}
\log \textgoth{Z}_{n}(\epsilon)\asymp 2n\log{n}+\cal{O}(n)\Rightarrow \expect{j_{x}}\sim n^{-2}.
\end{equation}
In the gapless phase, quite expectedly, the spin-current density asymptotically \textit{saturates} with dependence which is non-monotonic in $\epsilon$, whereas in the easy-axis regime $\Delta>1$ we obtain \textit{exponential} decay, consistent with conjectured \textit{insulating} behavior (cf. reference~\cite{PRL107}).

For the magnetization profile one can make use of another type of algebraic reduction, this time by inspecting relations among on-site auxiliary vertex operators. One can systematically verify that the lowest-order algebraic relation in the free algebra of on-site auxiliary operators generated from the two elements $\vmbb{T}$ and $\vmbb{S}:=\vmbb{L}^{11}-\vmbb{L}^{22}$ is given by inhomogeneous cubic relations of the form~\cite{IZ14}
\begin{equation}
\kappa_{0}(\gamma,s)(\vmbb{T}\vmbb{T}\vmbb{S}+\vmbb{S}\vmbb{T}\vmbb{T})+\vmbb{T}\vmbb{S}\vmbb{T}+
\kappa_{1}(\gamma,s)\{\vmbb{T},\vmbb{S}\}+\kappa_{2}(\gamma,s)\vmbb{S}=0,
\label{eqn:cubic_relation_profile}
\end{equation}
with coefficient functions
\begin{align}
\kappa_{0}(\gamma,s)&=\half-\cos{(2\gamma)},\nonumber \\
\kappa_{1}(\gamma,s)&=1+\cos{(2\gamma)}+\cos{(4\gamma)}-4\cos{(2\gamma s)},\\
\kappa_{2}(\gamma,s)&=12\cos{(2\gamma s)}-2\cos{(4\gamma)}-10-16\cos{(2\gamma)}\sin^{2}{(\gamma s)}+(8-4\cos{(2\gamma s)})[s]^{2}_{q}.\nonumber
\label{eqn:cubic_relation_coefficients}
\end{align}
At the isotropic point $\Delta=1$, the equation \eqref{eqn:cubic_relation_profile} simplifies\footnote{It is not hard to derive those results explicitly for $\Delta=1$ by using components $\vmbb{L}^{ij}$ and spin algebra commutation relations.} to a second-order difference form~\cite{PKS13},
\begin{equation}
[\vmbb{T}[\vmbb{T},\vmbb{S}]]+2\{\vmbb{T},\vmbb{S}\}-8s^{2}\vmbb{S}=0,\quad s=\frac{4\ii}{\epsilon}.
\end{equation}
By invoking continuum approximations and scaling analysis for large $n$ one arrives at the magnetization profile $M(x=\frac{x-1}{n-1})\equiv \expect{\sigma^{z}_{x}}$,
yielding a cosine-shaped form $M(x)=\cos{(\pi x)}$ after crossing the perturbative threshold $\epsilon\gg \epsilon^{*}\gtrsim 1/n$.
In $|\Delta|<1$ regime, on the other hand, the profiles tend asymptotically to a flat plateau $\expect{\sigma^{z}_{x}}\approx 0$, which is a hallmark of ballistic phases. With a passage into the gapped phase (i.e. for easy-axis regime $\Delta>1$), where an effective truncation of the auxiliary space is not permissible\footnote{This fact is a consequence of hyperbolic behavior of the ``quantization'' for $q>1$ ($q\in \RaR$), causing unbounded growth of the auxiliary amplitudes.} for arbitrary $n$ one finds $\epsilon$-independent \textit{kink-shaped} profiles which are characteristic of insulators.

For conclusion we mention perhaps even more surprising (or atypical) behaviour of the $2$-point spin correlations.
The asymptotic expression for the connected correlator $\expect{\sigma^{z}_{x}\sigma^{z}_{y}}-\expect{\sigma^{z}_{x}}\expect{\sigma^{z}_{y}}$
at the isotropic point, which has also been calculated in~\cite{PRL106}, reveals a \textit{long-range order} to non-decaying correlations, suppressed by $1/n$ overall scaling factor (i.e. the correlations thermodynamically dissolve).
Similar behaviour has been confirmed prior in non-interacting (both Gaussian and non-Gaussian) Lindbladian steady states~\cite{PP08,Znidaric11}. In the interacting XXZ model with \textit{non-integrable} type dissipation on the other hand one finds a phase transition to a thermodynamically stable long-range order~\cite{LongRange}.
\chapter[Degenerate steady states]{Degenerate steady states: spin-1 Lai--Sutherland chain}
\label{sec:degenerate}

We devote the entire chapter to especially intriguing situation of dealing with degenerate Lindbladian fixed points. For this purpose we study the spin-$1$ integrable bilinear bi-quadratic Lai--Sutherland model, introduced previously in chapter~\ref{sec:SUN}.
In contradistinction to our earlier considerations we now impose different type of boundary-localized driving.
Keeping analogy to the XXZ Heisenberg model we again propose two oppositely polarizing channels.
As a result of the so-called strong Liouvillian $U(1)$ symmetry~\cite{BP12}, no unique fixed point of Lindbladian flow exist any longer. Instead, for a system composed of $n$ sites, a $(n+1)$-fold degeneracy of steady states occurs, allowing for a possibility of introducing an external chemical potential parameter describing a grand canonical nonequilibrium steady state ensemble. By relying exclusively on the Lax representation for the bulk condition, we construct explicitly the solution in the MPS form with aid of an infinite-dimensional realization of a peculiar non-semisimple Lie algebra. Despite no direct reference to the theory of integrability or Yang-Baxter algebraic framework is being made, we show some inarguable evidence of its fingerprints.

The content of this chapter is taken entirely from the material published in reference~\cite{IP14}.

\paragraph{Preliminaries.}
The Lai--Sutherland spin-$1$ Hamiltonian is given by \eqref{eqn:LS_hamiltonian}. The $3$-state local Hilbert space basis can be for convenience identified as three different particle species, $\ket{1}\equiv \ket{\ua}$, $\ket{2}\equiv \ket{0}$, $\ket{3}\equiv \ket{\da}$, i.e. as the spin-up particles, the holes and the spin-down particles, in respective order. In fermionic formulation, the model is known to coincide with a t--J model\footnote{Supersymmetric t--J model is integrable at the point when exchange interaction strength becomes twice the hopping interaction, $J=2t$.} at the supersymmetric point~\cite{KSZ91,AKMS01}.

In multi-component quantum processes with $N$-dimensional local quantum spaces we may introduce $N\times N$ \textit{skew-symmetric} tensor of $2$-point particle current densities,
\begin{equation}
J^{ij}=\ii(e^{ij}\otimes e^{ji}-e^{ji}\otimes e^{ij}),\quad J^{ij}_{x}=\one_{N}^{\otimes (x-1)}\otimes J^{ij}\otimes \one_{N}^{\otimes(n-x-1)}=-J_{x}^{ji}.
\end{equation}
whose components are by definition determined via \textit{local continuity equation} for the difference of two neighboring on-site particle densities,
\begin{equation}
\frac{d}{dt}(e^{ii}_{x}-e^{jj}_{x})=\ii[H,e_{x}^{ii}-e_{x}^{jj}]=J^{ij}_{x-1,x}-J^{ij}_{x,x+1}.
\label{eqn:LS_continuity_equation}
\end{equation}
The components $J^{ij}$ can be identified with partial currents between the particle of type $i$ and the particle of type $j$.
The total current is thus for each of involved $N$ particles provided by
\begin{equation}
J^{i}=\sum_{i=1}^{N}J^{ij},
\end{equation}
such that
\begin{equation}
\frac{d}{dt}e^{ii}_{x}=J^{i}_{x-1,x}-J^{i}_{x,x+1}.
\end{equation}

Let us now fix $N=3$ and concentrate on the spin-$1$ case.
We employ a pair of ultra-local Lindblad jump operators, installed at chain's ends, reading
\begin{equation}
A_{1}=e_{1}^{13}=\half(s^{+}_{1})^{2},\quad A_{2}=e_{n}^{31}=\half(s^{-}_{n})^{2},
\label{eqn:LS_jumps}
\end{equation}
where spin creation/destruction operators $s_{x}^{\pm}:=s_{x}^{1}\pm \ii s_{x}^{2}$ were introduced. Introduced incoherent processes \eqref{eqn:LS_jumps} are performing spin-flips $\ket{\ua}\ra \ket{\da}$ and $\ket{\da}\ra \ket{\ua}$ with \textit{equal} rates $\epsilon$. Because there is no noise processes affecting the hole particles and the unitary part of the dynamics $\LL_{0}$ \textit{preserves} the total number of holes, the same must hold for the whole Liouvillian flow $\VV(t)$. To this end we define the global hole number operator $N_{0}\in \frak{F}$,
\begin{equation}
N_{0}\ket{i_{1},\ldots,i_{n}}=\left(\sum_{x=1}^{n}\delta_{i_{x},2}\right)\ket{i_{1},\ldots,i_{n}},
\end{equation}
commuting with everything that generates $\LL$, i.e. with the both Lindblad operators and the Hamiltonian,
\begin{equation}
[A_{1,2},N_{0}]=[H^{\rm{LS}},N_{0}]=0.
\end{equation}
In the terminology of reference~\cite{BP12}, $N_{0}$ is the generator of the \textit{strong} $U(1)$ symmetry. We note that only strong symmetries
may be responsible for degeneracy of Lindbladian flows. As a consequence, the system's $n$-particle Hilbert space $\frak{H}_{s}$ decomposes into $n+1$ orthogonal subspaces,
\begin{equation}
\frak{H}_{s}=\bigoplus_{\nu=0}^{n}\frak{H}_{s}^{(\nu)},\quad N_{0}\frak{H}_{s}^{(\nu)}=\nu\;\frak{H}_{s}^{(\nu)}.
\end{equation}
The flow can therefore be reduced to operator subspaces $\frak{F}^{(\nu)}=\End(\frak{H}_{s}^{(\nu)})$, where each
$\LL^{(\nu)}\equiv \LL|_{\frak{F}^{(\nu)}}$ possesses a \textit{unique}\footnote{Once again we resort on the theorem of Evans~\cite{Evans77}.}
time-asymptotic density operator
\begin{equation}
\rho_{\infty}^{(\nu)}:=\lim_{t\to \infty}\exp{(t\LL^{(\nu)})}\rho^{(\nu)}(0),
\end{equation}
sought by the fixed point condition
\begin{equation}
\LL^{(\nu)}\rho_{\infty}^{(\nu)}=-\ii[H,\rho_{\infty}^{(\nu)}]+\epsilon \DD\rho_{\infty}^{(\nu)}=0.
\label{eqn:LS_fixed_point}
\end{equation}
In principle we cannot exclude additional fixed points from appearing in the \textit{off-diagonal} subspaces, given by blocks
$\rm{Hom}(\frak{H}_{s}^{(\nu)},\frak{H}_{s}^{(\nu')})$ for $\nu \neq \nu^{\prime}$. We nonetheless conjecture, based on explicit verifications for small system sizes, that this is not the case in our model. The full NESS $\rho_{\infty}$ can be thus formally decomposed as a sum of \textit{microcanonical ensembles} $\rho_{\infty}^{(\nu)}$ by means of orthogonal super-projectors $\hatcal{P}^{(\nu)}\in \End(\frak{F})$, $\rho_{\infty}^{(\nu)}=\hatcal{P}^{(\nu)}\rho_{\infty}\neq 0$,
\begin{equation}
\rho_{\infty}=\sum_{\nu=0}^{n}\rho_{\infty}^{(\nu)}.
\label{eqn:LS_total_NESS}
\end{equation}

\section{Matrix product state solution}
In accordance with our standard practice (cf. Heisenberg XXZ model~\cite{PRL106,PRL107,IZ14}, Hubbard model~\cite{Hubbard}),
we shall factorize the NESS by means of the familiar Cholesky-type decomposition,
\begin{equation}
\rho_{\infty}=S_{n}(\epsilon)S_{n}(\epsilon)^{\dagger},
\label{eqn:LS_Cholesky}
\end{equation}
with the $S$-operator $S_{n}(\epsilon)\in \End(\frak{H}_{s})$.  The solution is sought without imposing any symmetry restrictions based on hole preservation law. Employing an auxiliary separable Hilbert space of infinite dimensionality of yet unspecified structure, the $S$-operator is given as the vacuum expectation value of the monodromy operator $\bb{M}(\epsilon)$,
\begin{equation}
S_{n}(\epsilon)=\lvac \bb{M}(\epsilon)\rvac=\sum_{\ul{i},\ul{j}}\lvac \bb{L}^{i_{1}j_{1}}\cdots \bb{L}^{i_{n}j_{n}}\rvac \bigotimes_{x=1}^{n}e^{i_{x}j_{x}}.
\label{eqn:LS_S}
\end{equation}
Some attention has to be paid to a different convention of writing superscript indices in the matrix elements of the Lax matrix which is now in use, namely we are using the resolution
\begin{equation}
\bb{L}_{x}(\epsilon)=\sum_{i,j=1}^{3}e_{x}^{ij}\otimes \bb{L}^{ij}(\epsilon).
\label{eqn:Lax_new_convention}
\end{equation}
Other definitions which have been introduced earlier in section \ref{sec:SUN} remain unaltered.
Last thing we do before heading straight to the main theorem is to relabel the auxiliary operators $\bb{L}^{ij}$ into more suggestive form, i.e.
\begin{equation}
\bb{L}=
\begin{pmatrix}
\bb{l}^{\ua} & \bb{t}^{+} & \bb{v}^{+} \cr
\bb{t}^{-} & \bb{l}^{0} & \bb{u}^{+} \cr
\bb{v}^{-} & \bb{u}^{-} & \bb{l}^{\da}
\label{eqn:Lax_renamed}
\end{pmatrix}.
\end{equation}
Despite this indicative notation favors the idea of associating the elements of $\bb{L}$ with generators of three $\frak{sl}_{2}$
subalgebras of $\frak{sl}_{3}$ Lie algebra -- what might also be a first guess -- that is indeed not what happens.

\begin{theorem}[Lie-algebraic relations]
\label{thm:LS_relations}
Let $\eta:=\ii \epsilon$ be a complex-rotated coupling parameter and $\frak{g}$ be a Lie algebra composed of $9$ matrix elements from the $\bb{L}$-matrix acting on the space $\frak{H}_{a}$, defined by the commutation relations
\begin{align}
\label{eqn:LS_algebra}
[\bb{u}^{+},\bb{t}^{\pm}]&=[\bb{u}^{-},\bb{t}^{\pm}]=[\bb{u}^{\pm},\bb{v}^{\pm}]=[\bb{t}^{\pm},\bb{v}^{\pm}]=0,\nonumber \\
[\bb{l}^{\ua},\bb{u}^{\pm}]&=[\bb{l}^{\da},\bb{t}^{\pm}]=[\bb{l}^{\ua},\bb{l}^{\da}]=0,\nonumber \\
[\bb{l}^{\ua},\bb{t}^{\pm}]&=\mp \eta\;\bb{t}^{\pm},\quad [\bb{l}^{\da},\bb{u}^{\pm}]=\mp \eta\;\bb{u}^{\pm},\nonumber \\
[\bb{u}^{+},\bb{v}^{\mp}]&=\pm \bb{t}^{\mp},\quad [\bb{t}^{\pm},\bb{v}^{\mp}]=\pm \eta\;\bb{u}^{\mp},\nonumber \\
[\bb{l}^{\ua},\bb{v}^{\pm}]&=[\bb{l}^{\da},\bb{v}^{\pm}]=\mp \eta\;\bb{v}^{\pm},\quad [\bb{v}^{+},\bb{v}^{-}]=\eta\;(\bb{l}^{\ua}+\bb{l}^{\da}),\nonumber \\
[\bb{t}^{+},\bb{t}^{-}]&=[\bb{u}^{+},\bb{u}^{-}]=\eta\;\bb{l}^{0},\nonumber \\
[\bb{l}^{\ua},\bb{l}^{0}]&=[\bb{l}^{\da},\bb{l}^{0}]=[\bb{u}^{\pm},\bb{l}^{0}]=[\bb{v}^{\pm},\bb{l}^{0}]=[\bb{t}^{\pm},\bb{l}^{0}]=0,
\end{align}
and satisfying the boundary requirements
\begin{align}
\label{eqn:LS_boundary}
\bb{l}^{\ua}\rvac&=\bb{l}^{0}\rvac=\bb{l}^{\da}\rvac = \rvac,& \lvac\bb{l}^{\ua}&=\lvac\bb{l}^{0}=\lvac\bb{l}^{\da} = \lvac,\nonumber \\
\bb{t}^{+}\rvac &= \bb{u}^{+}\rvac = \bb{v}^{+}\rvac = 0,& \lvac \bb{t}^{-} &= \lvac \bb{u}^{-} = \lvac \bb{v}^{-} = 0.
\end{align}
Then, the solution \eqref{eqn:LS_total_NESS} to the fixed point condition \eqref{eqn:LS_fixed_point} is given via Cholesky factorization \eqref{eqn:LS_Cholesky} with explicit MPS expression \eqref{eqn:LS_S} for $S_{n}(\epsilon)$ with $\eta=\ii \epsilon$.
\end{theorem}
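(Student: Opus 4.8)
\textbf{Proof strategy for Theorem~\ref{thm:LS_relations}.} The plan is to follow the same Sutherland-equation template that has organized all earlier constructions in the thesis, adapted to the non-compact auxiliary algebra $\frak{g}$ at hand. First I would rewrite the fixed-point condition \eqref{eqn:LS_fixed_point} using the Cholesky ansatz $\rho_\infty^{(\nu)}=\hatcal{P}^{(\nu)}(S_n S_n^{\dagger})$ and the Leibniz rule for the adjoint action, so that the unitary part becomes $\ii[H^{\rm LS},S_n]S_n^{\dagger}-\ii S_n[H^{\rm LS},S_n]^{\dagger}$. The crux is then to establish a local \emph{Sutherland identity} of the form $[h^{\rm LS}_{x,x+1},\bb{L}_x\bb{L}_{x+1}]=\bb{B}_x\bb{L}_{x+1}-\bb{L}_x\bb{B}_{x+1}$ for some auxiliary boundary operator $\bb{B}$. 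Since $h^{\rm LS}$ is the permutation on $\CC^3\otimes\CC^3$ up to a constant, the left-hand side expands (exactly as in \eqref{eqn:Sutherland_SUN}) into $\sum_{i,j,k,l}e^{ij}_x e^{kl}_{x+1}\otimes[\bb{L}^{jk},\bb{L}^{li}]$, so matching coefficients against a boundary operator that acts as a multiple of the identity on $\frak{H}_a$ forces precisely a set of quadratic-turned-linear relations — which I claim coincide with the listed commutators \eqref{eqn:LS_algebra} after the relabeling \eqref{eqn:Lax_renamed} and with $\bb{B}\propto\one_a$. Verifying that the nine relations in \eqref{eqn:LS_algebra} are exactly what the permutation interaction demands (with $\eta=\ii\epsilon$ entering through the normalization of $h^{\rm LS}$ relative to $[h^{\rm LS}_{x,x+1},\cdot]$ and the $-\one$ shift) is the first block of routine but delicate index bookkeeping.

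Granting the Sutherland identity, the telescoping argument familiar from Section~\ref{sec:QGapproach} applies verbatim: summing over $x$ collapses the bulk and leaves $\adH(S_n)=\lvac\bb{B}_1\bb{L}_2\cdots\bb{L}_n\rvac-\lvac\bb{L}_1\cdots\bb{L}_{n-1}\bb{B}_n\rvac=:S_n^{(\rm L)}-S_n^{(\rm R)}$, where the "defects" $\bb{B}$ sit only in the boundary auxiliary slots. Because $\bb{B}=-\one_a$ acts trivially, $S_n^{(\rm L)}$ and $S_n^{(\rm R)}$ are just $S_n$ with a scalar inserted at the first resp.\ last site, so the bulk of the fixed-point equation is automatically annihilated, and what remains is a pair of \emph{boundary equations} — one localized at site $1$ involving $\DD_1$ and the left vacuum $\lvac$, one at site $n$ involving $\DD_2$ and $\rvac$. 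Here I would feed in the explicit action of the renamed Lax elements on the vacua prescribed by \eqref{eqn:LS_boundary}: the diagonal pieces $\bb{l}^\ua,\bb{l}^0,\bb{l}^\da$ act as the identity on $\rvac$ and $\lvac$, the "plus" off-diagonal pieces annihilate $\rvac$, the "minus" ones annihilate $\lvac$. Combined with the jump operators $A_1=e^{13}_1$, $A_2=e^{31}_n$ and the dissipator form \eqref{eqn:Lindblad_dissipator}, the two boundary equations reduce to a finite system of operator identities on $\frak{H}_a$ whose solvability is guaranteed precisely by the commutation relations \eqref{eqn:LS_algebra} together with \eqref{eqn:LS_boundary}; tracking the $\epsilon$-dependence shows that the consistent choice is $\eta=\ii\epsilon$, matching the statement.

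Two further points need to be addressed to close the argument. First, the theorem asserts the \emph{full} NESS $\rho_\infty=\sum_\nu\rho_\infty^{(\nu)}$ is obtained; since $S_nS_n^\dagger$ is manifestly positive and, by the strong $U(1)$ symmetry generated by $N_0$, commutes with every super-projector $\hatcal{P}^{(\nu)}$ (each $\bb{L}^{ij}$ carries a definite hole-number shift, so the monodromy is $U(1)$-covariant), the single ansatz \eqref{eqn:LS_S} simultaneously supplies every microcanonical block $\rho_\infty^{(\nu)}=\hatcal{P}^{(\nu)}(S_nS_n^\dagger)$, and one only has to remark that none of these blocks vanishes — which follows from the ice-rule--type selection structure showing the diagonal matrix elements of $S_n$ in the computational basis are nonzero. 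Second, one must check that an explicit representation of the abstract Lie algebra $\frak{g}$ with the required lowest-weight vacuum actually \emph{exists} (otherwise the theorem is vacuous); I would exhibit a concrete realization on $\frak{H}_a=\mathrm{lsp}\{\ket{k,m}\}$ by bosonic ladder operators for $(\bb{u}^\pm,\bb{t}^\pm)$ and their composites for $\bb{v}^\pm$, with the three diagonal $\bb{l}$'s built from number operators so that \eqref{eqn:LS_boundary} holds on $\rvac=\ket{0,0}$. \textbf{The main obstacle} I anticipate is not the telescoping step (which is structural) but the two-sided boundary compatibility system: unlike the XXZ case the algebra $\frak{g}$ is non-semisimple and genuinely $9$-dimensional, so disentangling which linear combinations of the boundary equations are independent, and confirming that the \emph{same} $\eta$ resolves both the left and the right constraints without over-determining the representation, is where the real work lies — in particular one may find, as in the $SU(N)$ discussion of Section~\ref{sec:SUN}, that only the weaker \emph{global} boundary system is satisfied and a short argument is needed to show the residual local terms cannot be generated from the vacuum by $\vmbb{M}(\epsilon)$.
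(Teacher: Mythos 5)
Your overall architecture (Sutherland identity $\to$ telescoping $\to$ two-sided boundary system, plus the remark that an explicit representation must be exhibited for the theorem not to be vacuous) matches the paper's proof, but there are two concrete errors that would derail the argument. First, you misidentify the boundary operator. You take $\bb{B}=-\one_a$, i.e.\ trivial in \emph{both} the physical and the auxiliary factor, as in the $SU(N)$ discussion of Section~\ref{sec:SUN}. That choice turns the quadratic relations $[\bb{L}^{jk},\bb{L}^{li}]=\bb{B}^{ji}\bb{L}^{lk}-\bb{L}^{ji}\bb{B}^{lk}$ into the reductive algebra $\frak{gl}_{3}$, not into the non-semisimple algebra $\frak{g}$ of the statement (note e.g.\ $[\bb{u}^{+},\bb{t}^{\pm}]=0$ in \eqref{eqn:LS_algebra}, whereas $[\bb{L}^{23},\bb{L}^{12}]\neq 0$ in $\frak{gl}_{3}$), and worse, it makes the telescoping sum collapse to $S_{n}^{(\rm L)}-S_{n}^{(\rm R)}=-S_{n}+S_{n}=0$, i.e.\ $[H,S_{n}]=0$; then the unitary part of the fixed-point condition vanishes identically and nothing is left to balance the dissipator, so the boundary equations are unsolvable for $\epsilon\neq 0$. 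The correct choice, and the one the paper uses, is $B_{x}=\eta\,(e_{x}^{33}-e_{x}^{11})\otimes\one_{a}=-\ii\epsilon\,s_{x}^{3}\otimes\one_{a}$: scalar only in $\frak{H}_{a}$ but a traceless diagonal (non-identity) operator in the physical space. It is precisely this $s^{3}$-content that injects the $\pm\eta$ structure constants into selected commutators of \eqref{eqn:LS_algebra} and leaves the defining relation $[H,S_{n}]=-\ii\epsilon(s^{3}\otimes S_{n-1}-S_{n-1}\otimes s^{3})$ with boundary defects for the dissipator to annihilate.

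Second, the representation you sketch — two bosonic modes on $\mathrm{lsp}\{\ket{k,m}\}$ realizing $(\bb{t}^{\pm},\bb{u}^{\pm})$ as ladder operators and $\bb{v}^{\pm}$ as their composites — is exactly the construction the paper shows to be \emph{incompatible} with the vacuum requirements \eqref{eqn:LS_boundary}: the Heisenberg--Weyl pairs do fix two Fock factors, but a Schwinger-boson realization of the simple subalgebra $\{\bb{v}^{\pm},\bb{l}^{\ua}+\bb{l}^{\da}\}$ cannot satisfy the highest-weight conditions, and comparison of Schmidt ranks with exact small-$n$ solutions rules out any two-mode auxiliary space. One must adjoin a third tensor factor carrying a non-compact $\frak{sl}_{2}$ Verma module, with the complex spin tied to the coupling by $p=\tfrac{1}{2}+\ii/\epsilon$; only on $\frak{H}_{a}\cong\frak{B}\otimes\frak{B}\otimes\frak{S}$ with vacuum $\ket{0,0,0}$ do both the algebra $\frak{g}$ and the boundary equations \eqref{eqn:LS_LBE} close. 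Your cautionary remark about the local versus global boundary system is sensible, but for this model the local system is in fact satisfied once the correct representation is in place, so the real work you should anticipate is in these two items rather than in the telescoping or the $U(1)$ bookkeeping.
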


\begin{theorem}[Representation of $\frak{g}$]
\label{thm:representation}
A possible explicit irreducible representation of the Lie algebra $\frak{g}$ from Theorem \ref{thm:LS_relations} is generated by
\begin{align}
\bb{t}^{+}&=\bb{b}_{\ua},\quad \bb{t}^{-}=\eta\;\bb{b}^{\dagger}_{\ua},\nonumber \\
\bb{u}^{+}&=\eta\;\bb{b}_{\da},\quad \bb{u}^{-}=\bb{b}^{\dagger}_{\da},\nonumber \\
\bb{v}^{+}&=\eta\left(\bb{b}_{\ua}\bb{b}_{\da}+\bb{s}^{+}\right),\quad
\bb{v}^{-}=\eta\left(\bb{b}^{\dagger}_{\ua}\bb{b}^{\dagger}_{\da}-\bb{s}^{-}\right),\nonumber \\
\bb{l}^{\ua,\da}&=\eta\left(\bb{b}^{\dagger}_{\ua,\da}\bb{b}_{\ua,\da}+\half-\bb{s}^{z}\right),\quad \bb{l}^{0}=\one_{a},
\end{align}
operating in a three-fold auxiliary space $\frak{H}_{a}$ spanned by three-dimensional lattice of states $\{\ket{j,k,l};j,k,l\in \ZZ_{+}\}$,
i.e. two canonical bosons $\bb{b}_{\ua,\da}$,
\begin{align}
\bb{b}^{\dagger}_{\ua}\ket{j,k,l}&=\sqrt{j+1}\ket{j+1,k,l},\quad \bb{b}_{\ua}\ket{j,k,l}=\sqrt{j}\ket{j-1,k,l},\nonumber \\
\bb{b}^{\dagger}_{\da}\ket{j,k,l}&=\sqrt{k+1}\ket{j,k+1,l},\quad \bb{b}_{\ua}\ket{j,k,l}=\sqrt{k}\ket{j,k-1,l},
\end{align}
and a complex-valued spin associated with a $\frak{sl}_{2}$ Verma module,
\begin{align}
\bb{s}^{+}\ket{j,k,l}&=l\ket{j,k,l-1},\nonumber \\
\bb{s}^{-}\ket{j,k,l}&=(2p-l)\ket{j,k,l+1},\\
\bb{s}^{z}\ket{j,k,l}&=(p-l)\ket{j,k,l},\nonumber
\end{align}
together with the highest-weight vacuum product state $\rvac=\ket{0,0,0}$. The representation (spin) parameter $p$ is a function of the coupling rate $\epsilon$,
\begin{equation}
\boxed{p=\half-\frac{1}{\eta}=\half+\frac{\ii}{\epsilon}.}
\end{equation}
\end{theorem}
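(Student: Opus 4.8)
The plan is to prove Theorem~\ref{thm:representation} by direct construction and verification. First I would fix the carrier space $\frak{H}_{a}$ as the tensor product of two bosonic Fock spaces, with canonical commutators $[\bb{b}_{\ua},\bb{b}^{\dagger}_{\ua}]=[\bb{b}_{\da},\bb{b}^{\dagger}_{\da}]=\one_{a}$ and vanishing cross-flavour brackets, times the $\frak{sl}_{2}$ Verma module $\frak{S}_{p}$ of lowest weight $p$; from the displayed action of $\bb{s}^{\pm},\bb{s}^{z}$ one checks at once that $[\bb{s}^{+},\bb{s}^{-}]=2\bb{s}^{z}$, $[\bb{s}^{z},\bb{s}^{\pm}]=\pm\bb{s}^{\pm}$, and that the bosons commute with all spin operators. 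Writing $\bb{N}_{\ua,\da}:=\bb{b}^{\dagger}_{\ua,\da}\bb{b}_{\ua,\da}$, so that $\bb{l}^{\ua,\da}=\eta(\bb{N}_{\ua,\da}+\half-\bb{s}^{z})$ and $\bb{l}^{0}=\one_{a}$, the whole content of the theorem reduces to three checks: that the nine operators obey \eqref{eqn:LS_algebra}, that they obey the boundary conditions \eqref{eqn:LS_boundary}, and that $\frak{H}_{a}$ is irreducible; the constraint on $p$ will drop out of the second check.

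For the algebra relations I would organise the verification into three batches. The brackets among $\{\bb{t}^{\pm},\bb{u}^{\pm},\bb{l}^{0}\}$ use only the Heisenberg relations and $\bb{l}^{0}=\one_{a}$, giving for instance $[\bb{t}^{+},\bb{t}^{-}]=[\bb{u}^{+},\bb{u}^{-}]=\eta\,\one_{a}$ immediately. The brackets with $\bb{l}^{\ua},\bb{l}^{\da}$ follow from $[\bb{N},\bb{b}]=-\bb{b}$, $[\bb{N},\bb{b}^{\dagger}]=\bb{b}^{\dagger}$ and $[\bb{s}^{z},\bb{s}^{\pm}]=\pm\bb{s}^{\pm}$, so that e.g. $[\bb{l}^{\ua},\bb{v}^{\pm}]=\mp\eta\,\bb{v}^{\pm}$. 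The only genuinely informative relations are those involving $\bb{v}^{\pm}$; the central one is $[\bb{v}^{+},\bb{v}^{-}]$, where the pure-boson part contributes $[\bb{b}_{\ua}\bb{b}_{\da},\bb{b}^{\dagger}_{\ua}\bb{b}^{\dagger}_{\da}]=\bb{N}_{\ua}+\bb{N}_{\da}+\one_{a}$ while the spin part contributes $-[\bb{s}^{+},\bb{s}^{-}]=-2\bb{s}^{z}$, their sum being exactly $\eta^{-1}(\bb{l}^{\ua}+\bb{l}^{\da})$ after the overall $\eta^{2}$, which reproduces $[\bb{v}^{+},\bb{v}^{-}]=\eta(\bb{l}^{\ua}+\bb{l}^{\da})$. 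This is precisely where the extra $\frak{sl}_{2}$ summand in $\bb{v}^{\pm}$ is forced: two commuting Heisenberg algebras alone cannot produce the $\bb{s}^{z}$ term, so any representation of the non-semisimple $\frak{g}$ must bring in a Verma module. The remaining $\bb{v}$-brackets, together with the vanishing ones, are one-line computations using only elementary identities such as $[\bb{b}_{\ua},\bb{b}^{\dagger}_{\ua}\bb{b}^{\dagger}_{\da}]=\bb{b}^{\dagger}_{\da}$ and their conjugates.

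Next I would impose \eqref{eqn:LS_boundary} with $\rvac=\ket{0,0,0}$ and $\lvac=\bra{0,0,0}$. The annihilation conditions $\bb{t}^{+}\rvac=\bb{u}^{+}\rvac=\bb{v}^{+}\rvac=0$ hold because $\bb{b}_{\ua,\da}\ket{0,0,0}=0$ and $\bb{s}^{+}\ket{0,0,0}=0$ (the $l=0$ prefactor), and dually $\lvac\bb{t}^{-}=\lvac\bb{u}^{-}=\lvac\bb{v}^{-}=0$ since $\bra{0,0,0}\bb{b}^{\dagger}_{\ua,\da}=0$ and $\bra{0,0,0}\bb{s}^{-}=0$. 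The diagonal conditions are where $p$ enters: $\bb{l}^{0}\rvac=\rvac$ is automatic, while $\bb{l}^{\ua}\rvac=\eta(0+\half-p)\rvac$, so requiring $\bb{l}^{\ua}\rvac=\rvac$ forces $\eta(\half-p)=1$, i.e. $p=\half-\eta^{-1}=\half+\ii\epsilon^{-1}$, the boxed identity; the $\bb{l}^{\da}$ condition reproduces the same equation, and the left-vacuum conditions follow because $\bb{N}_{\ua,\da}$ and $\bb{s}^{z}$ act diagonally on $\ket{0,0,0}$.

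Finally, for irreducibility I would observe that the associative algebra generated by $\frak{g}$ already contains each $\bb{b}^{(\dagger)}_{\ua,\da}$ (namely $\bb{t}^{+},\eta^{-1}\bb{t}^{-},\eta^{-1}\bb{u}^{+},\bb{u}^{-}$) and hence $\bb{s}^{\pm}$ as well (as $\bb{v}^{\pm}$ minus the corresponding boson bilinear), so $\rvac$ is cyclic — every $\ket{j,k,l}$ arises by applying $\bb{b}^{\dagger}_{\ua},\bb{b}^{\dagger}_{\da},\bb{s}^{-}$ — and co-cyclic, since repeatedly applying $\bb{b}_{\ua},\bb{b}_{\da},\bb{s}^{+}$ brings any nonzero vector back to a nonzero multiple of $\rvac$; the only way this degenerates is if a matrix element of $\bb{s}^{\pm}$ vanishes, i.e. if $2p\in\ZZ_{+}$, which is excluded since $p$ has nonzero imaginary part $\epsilon^{-1}$. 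Assembling the three batches of relations, the boundary check (with its induced value of $p$), and irreducibility completes the proof; combined with Theorem~\ref{thm:LS_relations} this certifies that the Cholesky MPS \eqref{eqn:LS_S} built from this representation is the degenerate nonequilibrium steady state. The only real difficulty is the bookkeeping of the brackets in \eqref{eqn:LS_algebra}: I expect the efficient route is to pre-tabulate the elementary $[\bb{b},\bb{b}^{\dagger}\bb{b}^{\dagger}]$-type identities and the adjoint actions of $\bb{N}_{\ua,\da}$ and $\bb{s}^{z}$, after which each relation collapses to a single line, and to record the genericity hypothesis $p\notin\half\ZZ_{+}$ explicitly wherever it is used.
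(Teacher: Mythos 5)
Your proof is correct as a proof of the literal statement of Theorem~\ref{thm:representation}, but it takes a genuinely different route from the paper. The paper does not treat this theorem in isolation: it proves Theorems~\ref{thm:LS_relations} and~\ref{thm:representation} \emph{simultaneously}, first showing that the abstract relations \eqref{eqn:LS_algebra} are nothing but the Sutherland equation \eqref{eqn:LS_LOD} for the permutation interaction, then reading off the structure of the representation from the algebra itself ($\bb{l}^{0}$ central hence set to $\one_{a}$; the pairs $(\bb{t}^{+},\bb{t}^{-})$ and $(\bb{u}^{+},\bb{u}^{-})$ forcing two Fock spaces; a Schmidt-rank count on small-$n$ exact solutions showing two bosons alone are insufficient, hence the extra $\frak{sl}_{2}$ Verma factor), and finally declaring that "the ultimate step" is to verify the dissipator-dependent local boundary equations \eqref{eqn:LS_LBE}. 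Your approach instead black-boxes Theorem~\ref{thm:LS_relations} and reduces Theorem~\ref{thm:representation} to a pure representation-theoretic check: bracket relations, vacuum conditions (correctly locating the origin of $p=\half-\eta^{-1}$ in the normalization $\bb{l}^{\ua,\da}\rvac=\rvac$, exactly as the paper's displayed action of $\bb{L}$ on $\rvac$ implicitly does), and irreducibility. This buys a cleaner, self-contained argument, and your irreducibility proof (cyclicity plus co-cyclicity of $\rvac$, valid since $2p\notin\ZZ_{+}$ because $\Im(p)=\epsilon^{-1}\neq 0$) supplies a claim the paper asserts but never actually argues. What it loses is the step that ties the representation to the physics: in the paper the verification of \eqref{eqn:LS_LBE} — which involves the Lindblad channels $A_{1,2}$ and the nonzero off-vacuum matrix elements such as $\lvac\bb{t}^{+}=\bra{1,0,0}$ and $\bb{v}^{-}\rvac$ — is carried out \emph{in} this representation and is really part of the proof of Theorem~\ref{thm:LS_relations} itself, so your closing claim that "combined with Theorem~\ref{thm:LS_relations} this certifies the NESS" silently inherits that unverified step; you should either perform the contraction of \eqref{eqn:LS_LBE} explicitly or state that you are relying on it. One further point your line-by-line verification would surface: with the operators as given, $[\bb{u}^{+},\bb{v}^{-}]=\eta^{2}\bb{b}^{\dagger}_{\ua}=\eta\,\bb{t}^{-}$ rather than $\bb{t}^{-}$, so the relation $[\bb{u}^{+},\bb{v}^{\mp}]=\pm\bb{t}^{\mp}$ in \eqref{eqn:LS_algebra} holds only up to a factor of $\eta$ (an inconsistency between the two theorem statements that a direct check like yours should flag rather than assert away as a one-line success).
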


\begin{proof}
We prove Theorems \ref{thm:LS_relations} and \ref{thm:representation} simultaneously. The proof is based upon making an observation that the Lie algebra $\frak{g}$ admits an algebraic realization in the form of the Sutherland equation,
\begin{equation}
[h_{x,x+1},\bb{L}_{x}(\epsilon)\bb{L}_{x+1}(\epsilon)]=B_{x}(\epsilon)\bb{L}_{x+1}(\epsilon)-\bb{L}_{x}(\epsilon)B_{x+1}(\epsilon),
\label{eqn:LS_LOD}
\end{equation}
which is equivalent to the flat connection condition of an associated auxiliary linear problem on a lattice.
The boundary operators $\End(\frak{H}_{s}\otimes \frak{H}_{a})$ are not of the boldface type as they operate non-identically
in the quantum spaces only,
\begin{equation}
B_{x}=\eta\left(e_{x}^{33}\otimes \one_{a}-e_{x}^{11}\otimes \one_{a}\right)=b_{x}\otimes \one_{a},\quad b_{x}=-\ii \epsilon s_{x}^{3}\in \frak{F}.
\label{eqn:LS_B}
\end{equation}
To understand why the condition \eqref{eqn:LS_LOD} is equivalent to the Lie algebra $\frak{g}$ it suffices to use \eqref{eqn:Lax_renamed} and \eqref{eqn:LS_B}, together with the permutational form of the interaction,
\begin{equation}
[h_{x,x+1},e_{x}^{ij}e_{x+1}^{kl}]=e_{x}^{kj}e_{x+1}^{il}-e_{x}^{il}e_{x+1}^{kj}.
\end{equation}
We proceed along the lines of the XXZ case, i.e. use \eqref{eqn:LS_LOD} to express the action of $\adH$ on the $S$-operator \eqref{eqn:LS_S}, which brings us again to the familiar global defining relation,
\begin{equation}
[H,S_{n}]=-\ii \epsilon\left(s^{3}\otimes S_{n-1}(\epsilon)-S_{n-1}\otimes s^{3}\right).
\end{equation}
By assuming that the bulk parts factor out we arrive at the system of local boundary equations
\begin{align}
\dbra{\rm{vac}}\left(\DD_{A_{1}}\vmbb{L}_{1}-\ii(\vmbb{B}_{1}^{(1)}-\vmbb{B}_{1}^{(2)})\right)&=0,\nonumber \\
\left(\DD_{A_{2}}\vmbb{L}_{n}+\ii(\vmbb{B}_{n}^{(1)}-\vmbb{B}_{n}^{(2)})\right)\dket{\rm{vac}}&=0,
\label{eqn:LS_LBE}
\end{align}
for the two-leg boundary operators $\vmbb{B}_{x}^{(1)},\vmbb{B}_{x}^{(2)}\in \End(\frak{H}_{s}\otimes \frak{H}_{a}\otimes \ol{\frak{H}}_{a})$,
\begin{equation}
\vmbb{B}_{x}^{(1)}=\sum_{i,j=1}^{3}b_{x}e_{x}^{ij}\otimes \one_{a}\otimes \ol{\bb{L}}^{ji},\quad
\vmbb{B}_{x}^{(2)}=\sum_{i,j=1}^{3}e_{ij}\ol{b}_{x}\otimes \bb{L}^{ij}\otimes \one_{a},
\end{equation}
where $\ol{b}_{x}=-b_{x}$.

Few comments in regard to the content of Theorem \ref{thm:LS_relations} are in place.
The last line indicates that $\bb{l}^{0}$ is in the center of $\frak{g}$,
therefore we can set it to the identity $\one_{a}$. This in turn implies, by looking at the line above, that the pairs of $(\bb{t}^{+},\bb{t}^{-})$ and $(\bb{u}^{+},\bb{u}^{-})$ constitute the ordinary \textit{Heisenberg--Weyl} algebra. In conjunction with the highest-weight properties from requirements \eqref{eqn:LS_boundary}, these relations uniquely (up to unitary transformations) fix two Fock space representations of canonical oscillators (i.e. bosons), with canonical commutation algebraic relations
\begin{equation}
[\bb{b}_{\sigma},\bb{b}_{\sigma^{\prime}}^{\dagger}]=\delta_{\sigma,\sigma^{\prime}},\quad
[\bb{b}_{\sigma},\bb{b}_{\sigma^{\prime}}]=[\bb{b}_{\sigma}^{\dagger},\bb{b}_{\sigma'}^{\dagger}]=0,\qquad \sigma,\sigma^{\prime}\in \{\ua,\da\}.
\end{equation}
This alluring property might deceive the reader into thinking that it is sufficient to implement the auxiliary space $\frak{H}_{a}$ as a two-component Fock space, and realize the remaining four Lax components in terms of these two bosonic modes. This is in fact not hard to achieve, where e.g. one option could be to take just the Schwinger boson representation of the $\frak{su}_{2}$ algebra generated by the triple $\{\bb{v}^{\pm},\bb{l}^{\ua}+\bb{l}^{\da}\}$.
Such implementation would be however incompatible with the boundary requirements \eqref{eqn:LS_boundary}. Besides, there is a nice test to convince oneself that the two-mode Fock space structure does not offer enough complexity to properly capture our auxiliary process. For instance, one can check Schmidt ranks for symmetric block bipartitions on some exact MPS solutions at small system sizes
to conclude that they exceed the upper bounds dictated by the conjectured form. On the other hand, an adequate option is to add another complex spin representation (Verma module) of $\frak{sl}_{2}$, employ the auxiliary space with the structure
\begin{equation}
\frak{H}_{a}\cong \frak{B}\otimes \frak{B}\otimes \frak{S}=\rm{lsp}\{\ket{j,k,l};j,k,l\in \ZZ_{+}\},
\end{equation}
and demanding compliance with the following conditions
\begin{align}
\bb{L}\rvac &=
\begin{pmatrix}
\rvac & 0 & 0 \cr
\eta \ket{1,0,0} & \rvac & 0 \cr
\eta(\ket{1,1,0}-\ket{0,0,1})+2\ket{0,0,1} & \ket{0,1,0} & \rvac
\end{pmatrix},\\
\lvac\bb{L} &=
\begin{pmatrix}
\lvac & \bra{1,0,0} & \eta(\bra{1,1,0}+\bra{0,0,1}) \cr
0 & \lvac & \eta \bra{0,1,0} \cr
0 & 0 & \lvac
\end{pmatrix},
\end{align}
where the vacuum state reads $\rvac\equiv \ket{0,0,0}$.
Such choices can be realized using definitions from Theorem \ref{thm:representation}. The ultimate step is to verfiy that
the boundary equations \eqref{eqn:LS_LBE} are then properly satisfied.
\end{proof}

\paragraph{Remarks.}
Some useful observations and properties of the solution are stated below.
\begin{enumerate}
 \item Amplitudes of the MPS operator $S_{n}(\epsilon)$ are of very simple form, namely they coincide with polynomials in
$\eta=\ii\epsilon$ of maximal order $n$ with coefficients of the form $a+\ii b$, for $a,b\in \mathbb{Z}$.
 \item Computational complexity for obtaining any local information from the NESS $\rho_{\infty}$, which amounts to
calculate its matrix elements $\bra{i_{1},\ldots,i_{n}}\rho_{\infty}\ket{j_{1},\ldots,j_{n}}$, is \textit{polynomial} in $n$.
 \item Using the hole conservation law $N_{0}$, one can decompose the whole Cholesky factor by means of orthogonal projectors
$S_{n}^{(\nu)}(\epsilon)=\hatcal{P}^{(\nu)}S_{n}(\epsilon)$ as
\begin{equation}
\rho_{\infty}^{(\nu)}(\epsilon)=S_{n}^{(\nu)}(\epsilon)S_{n}^{(\nu)\dagger}(\epsilon),
\end{equation}
with $S_{n}^{(\nu)}S_{n}^{(\nu')}=0$ for $\nu \neq \nu'$. The projected Cholesky factors $\rho_{\infty}^{(\nu)}$, which can be thought of as microcanonical ensembles, can be given by a globally-constrained MPS,
\begin{equation}
S_{n}^{(\nu)}(\epsilon)=\sum_{i_{1},j_{1},\ldots,i_{n},j_{n}}\delta_{(\sum_{x}\delta_{i_{x},2}),\nu}
\lvac \bb{L}^{i_{1}j_{1}}\cdots \bb{L}^{i_{n}j_{n}}\rvac e^{i_{1}j_{1}}\otimes \cdots \otimes e^{i_{n}j_{n}},
\label{eqn:LS_microcanonical}
\end{equation}
which is essentially equivalent to simply retaining the basis elements $e^{i_{1}j_{1}}\otimes \cdots \otimes e^{i_{n}j_{n}}$ obeying the constraint
$\sum_{x}\delta_{i_{x},2}=\sum_{x}\delta_{j_{x},2}$. The defining relation projected onto $\frak{F}^{(\nu)}$ then becomes
\begin{equation}
[H,S_{n}^{(\nu)}]=-\ii \epsilon\left(s^{z}\otimes S_{n-1}^{(\nu)}-S_{n-1}^{(\nu)}\otimes s^{z}\right).
\end{equation}
 \item It is worth mentioning two extremal cases of our construction: (i) the \textit{zero-hole} sector $\nu=0$ representing the old known
solution of the driven XXX Heisenberg model, which has been a subject of discussion previously in chapters \ref{sec:openXXZ} and \ref{sec:QGapproach}.
In the opposite limit, where $\nu=n$, we obtain the so-called \textit{dark state}, namely a trivial pure state
\begin{equation}
\rho_{\infty}^{(n)}=(e^{22})^{\otimes n},
\label{eqn:LS_dark_state}
\end{equation}
which is protected from the dissipation meanwhile preserved by the unitary part $\LL_{0}$,
\begin{equation}
\LL_{0}\rho_{\infty}^{(n)}=\DD \rho_{\infty}^{(n)}=0.
\end{equation}
\end{enumerate}

\paragraph{Characterization of Lie algebra.}
Lie algebra $\frak{g}$, as defined by Theorem \ref{thm:LS_relations}, is of non-trivial structure. According to Levi theorem,
any finite dimensional Lie algebra can be decomposed as a semi-direct product of a \textit{solvable radical} $\frak{r}$ and a \textit{semi-simple} part $\frak{a}$,
\begin{equation}
\frak{g}=\frak{r}\ltimes \frak{a}.
\end{equation}
In our case, $\frak{a}$ is given by the basis $\{\bb{v}^{\pm},\bb{l}^{+}\}$, introducing $\bb{l}^{\pm}:=\bb{l}^{\ua}\pm \bb{l}^{\da}$. Hence, $\frak{a}$ is isomorphic to the spin algebra, $\frak{a}\cong \frak{sl}_{2}$.
The solvable ideal $\frak{r}$ is generated by $\{\bb{t}^{\pm},\bb{u}^{\pm},\bb{l}^{-},\bb{l}^{0}\}$. The latter is however \textit{not} a
nilpotent subalgebra. Let us also stress that the parameter $\eta$ in the algebra $\frak{g}$ is completely inessential from pure algebraic perspective, as it can be easily removed via $\eta$-dependent algebra automorphism, by dividing all the generators by $\eta$, with exception of $\bb{t}^{+}$ and $\bb{u}^{-}$.

\paragraph{Symmetry of the Lax and transfer operator.}
Unlike in the XXX case (cf. chapter \ref{sec:QGapproach}) where the Lax operator exhibits a full non-Abelian continuous symmetry, we deal here with somewhat different situation. Namely, now the Lax operator is \textit{not} a $\frak{sl}_{3}$ scalar. Of course, there is no fundamental reason why it should be, though. Recall that the dissipative boundary driving \eqref{eqn:LS_jumps} explicitly breaks the global $SL(3)$ symmetry of the Lindbladian flow down to global $U(1)$ symmetry generated by the magnetization operator
\begin{equation}
M=\sum_{x=1}^{n}s_{x}^{3}.
\end{equation}
Consequently we have $U(1)$ invariance of the Lax matrix and its two-leg cousin,
\begin{equation}
[\bb{L},\ii \epsilon\;s^{3}\otimes \one_{a}+\one_{3}\otimes \bb{l}^{+}]=0,\quad
[\vmbb{L},\ii \epsilon\;s^{3}\otimes \one_{a}\otimes \one_{a}+\one_{3}+\bb{l}^{+}\otimes \one_{a}+\one_{3}\otimes \one_{a}\otimes \ol{\bb{l}}^{+}]=0.
\end{equation}
It seems quite plausible however, that other gauges which would resurrect a larger (non-Abelian) continuous symmetry do exist.

At the level of the transfer vertex operator $\vmbb{T}$ we find a $U(1)\times U(1)$ symmetry by virtue of two conserved auxiliary operators $\vmbb{K}^{\pm}$,
\begin{equation}
[\vmbb{T},\vmbb{K}^{\pm}]=0,\quad \vmbb{K}^{\pm}:=\bb{l}^{\pm}\otimes \one_{a}+\one_{a}\otimes \ol{\bb{l}}^{\pm}.
\end{equation}
These symmetries could prove useful for e.g. brute-force computation of nonequilibrium partition function $\textgoth{Z}_{n}$, allowing to reduce calculations to a $4D$ sublattice within $\frak{H}_{a}\otimes \ol{\frak{H}}_{a}$
spanned by states $\{\ket{j,k,l,\ol{j},\ol{k},\ol{l}};j,k,l,\ol{j},\ol{k},\ol{l}\in \ZZ_{+}\}$, obeying restrictions
\begin{equation}
j-k=\ol{j}-\ol{k},\quad j+k-2l=\ol{j}+\ol{k}-2\ol{l}.
\end{equation}
Similar reduction has been pointed out for the XXZ case as well (see section \ref{sec:Heisenberg_observables} of chapter \ref{sec:QGapproach}).

\paragraph{Symmetries of the Liouvillian generator.}
Much of the preceding debate has been based on Liouville $U(1)$ symmetry of strong type (using terminology of reference~\cite{BP12}), associated to the hole-preservation law. However, there exist another $U(1)$ symmetry of the Lindbladian flow which is of the \textit{weak type}, namely for the total magnetization operator $M$ we find for every $\rho$ the following commutation law
\begin{equation}
[M,\LL \rho]=\LL([M,\rho]),
\end{equation}
implying that NESS $\rho_{\infty}$ must preserve eigenspaces of $M$, i.e.
\begin{equation}
\bra{i_{1},\ldots,i_{n}}\rho_{\infty}\ket{j_{1},\ldots,j_{n}}\neq 0\quad {\rm if} \quad \sum_{x=1}^{n}i_{x}=\sum_{x=1}^{n}j_{x}.
\end{equation}
Irreducibility of hole-restricted Lindbladian flows $\LL^{(\nu)}$ implies uniqueness of $\rho_{\infty}^{(\nu)}$ within each $\frak{F}^{(\nu)}$, hence $\rho_{\infty}^{(\nu)}$ are from the sector with \textit{zero} eigenvalue of $M$.

There exist an extra $\ZZ_{2}$ parity weak symmetry, representing left--right mirror process,
expressed as a composition of \textit{lattice reversal symmetry}
$\hatcal{R}\in \End(\frak{F})$,
\begin{equation}
\hatcal{R}(e^{i_{1}j_{1}}\otimes \cdots \otimes e^{i_{n}j_{n}})=e^{i_{n}j_{n}}\otimes \cdots \otimes e^{i_{1}j_{1}},
\end{equation}
and local mirror symmetries $\hatcal{S}\in \End(\frak{S})$,
\begin{equation}
\hatcal{S}=\hatcal{S}_{1}^{\otimes n},\quad \hatcal{S}_{1}(e^{ij})=e^{3-i+1,3-j+1},
\end{equation}
such that
\begin{equation}
[\hatcal{R}\hatcal{S},\LL]=0,\quad \hatcal{R}\hatcal{S}\rho_{\infty}=\rho_{\infty}.
\end{equation}
Curiously, the Cholesky factor possesses yet another $\ZZ_{2}$ parity. By defining the transposition super-map,
\begin{equation}
\hatcal{T}=\hatcal{T}_{1}^{\otimes n},\quad \hatcal{T}_{1}(e^{ij})=e^{ji},
\end{equation}
we may conclude
\begin{equation}
\hatcal{R}\hatcal{S}S_{n}=\hatcal{T}\hatcal{S}S_{n}=S_{n}.
\end{equation}
This additional symmetry is interpreted as the exchange of the two bosonic modes in the auxiliary space.

\section{Grand canonical nonequilibrium steady states}
A particular inconvenience of microcanonical steady states $\rho_{\infty}^{(\nu)}$, as given by formula \eqref{eqn:LS_microcanonical},
is that the amplitudes are selected using constraint which is of the \textit{global} type, rendering calculations within a fixed hole sector $\frak{F}^{(\nu)}$ quite impractical.
One drawback is that information of the total number of holes cannot be encoded locally on the level of MPS.
A convenient way to circumvent this problem is to exploit the strong Liouville symmetry which allows us to pick
any normalized \textit{convex-linear} combination of microcanonical ensembles,
\begin{equation}
\rho_{\infty}=\sum_{\nu}c_{\nu}\rho_{\infty}^{(\nu)},\quad c_{\nu}\in \RaR^{+},
\end{equation}
still being a perfectly valid NESS. The factorization property is left intact,
\begin{equation}
S_{n}=\sum_{\nu}\sqrt{c_{\nu}}S_{n}^{(\nu)}.
\end{equation}
Therefore, we define a \textit{grand canonical nonequilibrium steady state} in a standard manner by
incorporating an external \textit{hole chemical potential} $\mu$ and choosing weights as $c_{\nu}=\exp{(\mu \nu)}$,
\begin{equation}
\rho_{\infty}(\epsilon,\mu)=\sum_{\nu=0}^{n}\exp{(\mu \nu)}\rho_{\infty}^{(\nu)}.
\end{equation}
Most notably, by virtue of addition theorem for the exponential function, it is now possible to define $\mu$-modified objects without
harming locality of MPS description, i.e.
\begin{align}
S_{n}(\epsilon,\mu)&=\sum_{\ul{i},\ul{j}}\lvac \bb{L}^{i_{1}j_{1}}(\epsilon,\mu)\cdots \bb{L}^{i_{n}j_{n}}(\epsilon,\mu)\rvac
\bigotimes_{x=1}^{n}e^{i_{x}j_{x}},\\
\rho_{\infty}(\epsilon,\mu)&=\sum_{\ul{i},\ul{j}}\dbra{\rm{vac}}\vmbb{L}^{i_{1}j_{1}}\cdots \vmbb{L}^{i_{n}j_{n}}\dket{\rm{vac}}
\bigotimes_{x=1}^{n}e^{i_{x}j_{x}},
\end{align}
with
\begin{equation}
\bb{L}^{ij}(\epsilon,\mu)=\exp{\left(\frac{\mu}{2}\delta_{i,2}\right)}\bb{L}^{ij}(\epsilon),\quad
\vmbb{L}^{ij}(\epsilon,\mu)=\exp{\left(\frac{\mu}{2}(\delta_{i,2}+\delta_{j,2})\right)}\vmbb{L}^{ij}(\epsilon).
\end{equation}
Furthermore, by making use of a modified auxiliary transfer matrix,
\begin{equation}
\vmbb{T}(\epsilon,\mu)=\sum_{i}\vmbb{L}^{ii}(\epsilon,\mu)=\sum_{ij}\bb{L}^{ij}(\epsilon,\mu)\otimes \ol{\bb{L}}^{ij}(\epsilon,\mu),
\end{equation}
we readily obtain a two-parametric (hybrid) grand canonical nonequilibrium partition function
\begin{equation}
\textgoth{Z}_{n}(\epsilon,\mu)=\tr{(\rho_{\infty}(\epsilon,\mu))}=\dbra{\rm{vac}}(\vmbb{T}(\epsilon,\mu))^{n}\dket{\rm{vac}}.
\end{equation}
It is desirable to make a connection between the chemical potential $\mu$ and the average filling factor (hole doping) $r$, defined as
\begin{equation}
r:=\frac{\expect{\nu}}{n}=\frac{\sum_{\nu=0}^{n}\nu \exp{(\mu \nu)}\tr{\rho_{\infty}^{(\nu)}}}
{n\sum_{\nu=0}^{n}\exp{(\mu \nu)}\tr{\rho_{\infty}^{(\nu)}}}=n^{-1}\partial_{\mu}\log{\textgoth{Z}_{n}(\epsilon,\mu)}.
\end{equation}
By definition, the filling ration $r$ takes values in $r\in [0,1]$.
The extremal points pertain to the XXX limit (at $r=0$) and the dark state limit ($r=1$).

The thermodynamic $n\to \infty$ behavior is governed by the asymptotics of $\textgoth{Z}_{n}(\epsilon,\mu)$.
The latter is however of voluminous complexity in the present form, so it seems quite challenging to understand its analytic properties in a
rigorous fashion. Nonetheless, in the $n\to \infty$ limit we propose a generic asymptotic scaling of the form
\begin{equation}
\log{\textgoth{Z}_{n}(\epsilon,\mu)}=\alpha(\epsilon,\mu)n+\sum_{j}\beta_{j}(\epsilon,\mu)f_{j}(n)+o(n),
\label{eqn:LS_scaling}
\end{equation}
where functions $f_{j}(n)$ describe all super-linear dependences, $\lim_{n\to \infty}(n/f_{j}(n))=0$,
and $o$ denotes a conventional `little-o' notation. Then, the filling ration $r$ comes from
\begin{equation}
r(\epsilon,\mu)=\partial_{\mu}\alpha(\epsilon,\mu),
\end{equation}
where $\alpha(\mu,\epsilon)$ can be regarded as a chemical free energy, whereas in addition we must have
\begin{equation}
\partial_{\mu}\beta_{j}(\epsilon,\mu)=0,
\end{equation}
as otherwise $r$ would be ill-defined.

\section{Computation of local observables}
We briefly discuss some formal aspects regarding calculation of local observables, as introduced previously in section \ref{sec:observables}.
Here we shall be mostly concerned with the current density tensor, whose corresponding $2$-site auxiliary vertex operator reads
\begin{equation}
\Lambda_{2}(J^{ij})=\ii(\vmbb{L}^{ji}\vmbb{L}^{ij}-\vmbb{L}^{ij}\vmbb{L}^{ji})=
\ii \sum_{k,l}\left(\bb{L}^{kj}\bb{L}^{il}\otimes \ol{\bb{L}}^{ik}\ol{\bb{L}}^{jl}-\bb{L}^{ik}\bb{L}^{jl}\otimes \ol{\bb{L}}^{jk}\ol{\bb{L}}^{il}\right).
\end{equation}
By stationarity of NESS and validity of local continuity equation \eqref{eqn:LS_continuity_equation}, the steady state expectation value of the current density tensor \textit{must not} depend on position $x$. Consequently, the auxiliary transfer operator $\vmbb{T}$ has to commute with $\vmbb{J}^{ij}$ in the subspace of states generated out of the auxiliary vacua under application of $\vmbb{T}$, i.e.
\begin{equation}
\dbra{\phi^{\rm{L}}_{k}}[\vmbb{T},\vmbb{J}^{ij}]\dket{\phi^{\rm{R}}_{k}}=0,\quad
\dbra{\phi^{\rm{L}}_{k}}:=\dbra{\rm{vac}}\vmbb{T}^{k},\quad \dket{\phi^{\rm{R}}_{k}}:=\vmbb{T}^{k}\dket{\rm{vac}},
\end{equation}
implying that $\vmbb{J}^{ij}$ residing at position $x$ in auxiliary strings $\vmbb{T}^{x-1}\vmbb{J}\vmbb{T}^{n-x-1}$ can be always dragged to the boundary, say to the right. Accounting the explicit form of the Lax operator, provided in Theorems \ref{thm:LS_relations} and \ref{thm:representation}, we
arrive at the following awe-inspiring result for the particle current densities,
\begin{equation}
\expect{J^{1}}=2\epsilon \left(\frac{\textgoth{Z}_{n-1}}{\textgoth{Z}_{n}}\right),\quad \expect{J^{2}}=0,\quad
\expect{J^{3}}=-2\epsilon \left(\frac{\textgoth{Z}_{n-1}}{\textgoth{Z}_{n}}\right).
\label{eqn:LS_ratios}
\end{equation}
The particle currents are therefore given just as ratios of two nonequilibrium partition functions of systems which differ by one lattice site. As argued beforehand, an entirely analogous property takes place in the classical ASEP~\cite{SchutzBook,Blythe07}.
Notably, our quantum process does not display normal diffusive behavior, meaning that \eqref{eqn:LS_ratios} cannot be assigned as a genuine property of diffusive systems.

With aid of scaling ansatz we moreover express the component of the spin-current density $J^{s}=J^{1}-J^{3}$ asymptotically as
\begin{equation}
\log{\expect{J^{s}_{x}}}\asymp -\partial_{n}\log{\textgoth{Z}_{n}}=-\sum_{j}\beta_{j}f^{\prime}_{j}(n)+\rm{const}.
\end{equation}
In the known case of the XXX limit~\cite{PRL107}, the only super-linear term is conjectured to be $f_{1}=n\log{n}$ with coefficient $\beta_{1}=2$, determining asymptotic sub-diffusive scaling of the spin-current as $\expect{J^{s}}\sim n^{-2}$. We suspect that such power-law behavior could be characteristic of out-of-equilibrium scenarios we are considering in the thesis, yet any stronger evidence to support such claims is still missing at the moment.

Another topic of special importance, where understanding thermodynamic limit of the partition function would be of central interest,
is to search for occurrence of nonequilibrium phase transitions in the $\epsilon-\mu$ diagram. This can be done e.g. within the paradigm of the Lee--Yang theory~\cite{LeeYang52,Bulla08}, which makes sense even in the nonequilibrium setup~\cite{BE03}. We note that the quantum nonequilibrium partition functions we are dealing with in these thesis are all given by sums of \textit{non-negative} weights, which is why the proposal seems a viable route to explore.

\paragraph{Transfer matrix property.}
Once again the Cholesky factor displays a charming commuting property,
\begin{equation}
[S_{n}(\epsilon),S_{n}(\epsilon')]=0,\quad \forall \epsilon,\epsilon'\in \CC,
\label{eqn:LS_commuting_property}
\end{equation}
conjectured on the basis of explicit solutions for small number of sites $n$. This property justifies our earlier identifications with integrability entities that have been made so far, namely calling $\bb{L}$ a Lax operator, $\bb{M}$ a monodromy operator and hence
$S_{n}(\epsilon)=\lvac \bb{M}(\epsilon)\rvac$ a quantum transfer matrix. The only difference with respect to the standard practice is that tracing over $\frak{H}_{a}$ is now replaced by taking vacuum expectation values.
It is worth stressing out that, in a strict sense, the flat connection condition in the form of the
Sutherland equation \eqref{eqn:LS_LOD} does not guarantee the commuting property of transfer matrices.
It remains to be inspected how property \eqref{eqn:LS_commuting_property} is explained on the basis of solutions of the Yang--Baxter equation. Because parameter dependence now comes solely from the simple subalgebra $\frak{a}$, we may conjecture close relationship with the universal $\frak{sl}_{2}$ intertwiner. We should also notice the that commuting property \eqref{eqn:LS_commuting_property} still holds after addition of the chemical potential in the Lax operator
\begin{equation}
[S_{n}(\epsilon,\mu),S_{n}(\epsilon',\mu')]=0,
\end{equation}
merely due to orthogonality of subspaces $\frak{H}^{(\nu)}$.
\chapter[Pseudo-local charges and transport]{Pseudo-local charges and quantum transport}
\label{sec:transport}

In the final chapter of the thesis we distance ourselves from primarily abstract considerations we have had thus far
by making an excursion into more physically motivated territory. Although the reader could be inclined to think that the notion of
an integrable steady state is merely some sort of virtual object found practical in a particular algebraic construction of certain
far-from-equilibrium stationary ensembles which could have nothing in common with the ``physical reality''.
We are happy to be able to reject such skepticism by revealing some exciting physical content hiding just beneath the curtain.
The content we present here has been published in~\cite{IP13,PI13}

As pointed out already in seminal paper~\cite{PRL106}, where steady state solutions of the anisotropic Heisenberg spin-$1/2$ chain
with boundary dissipative driving have been constructed in the perturbative \textit{weak-coupling regime}, the \textit{first-order}
(in coupling constant $\epsilon$) term of the density matrix is found to be related to an almost-conserved quantity of pseudo-local structure, having an influential role on the nature of quantum transport. Specifically, it triggers a \textit{diverging} DC optical conductivity, i.e. produces a non-decaying magnetization current. At the same time, the work~\cite{PRL106} provided the resolution of a long-living and puzzling problem of theoretical explanation for ballistic transport behaviour in the gapless phase of XXZ Heisenberg chain.

The foregoing debate will be thus largely centered around a novel concept of the \textit{pseudo-local conservation laws}.
With aim to maintain a sufficient level of mathematical rigor we relocate ourselves to the realm of operator the $C^{*}$-algebras, offering a natural toolbox to deal with quantum statistical physics of infinitely-extended systems. The cornerstone result of this chapter is a careful and concise re-derivation of Mazur-type inequality on temporal high-temperature correlation functions which enables us to accommodate -- beside strictly local conserved charges -- also the pseudo-local almost-conserved operators. By almost-conservation we refer to the property when commutation with a Hamiltonian may result in non-vanishing terms supported at the boundaries of a chain. Subsequently, we facilitate our theorem to impose a lower bound on the spin Drude weight in the high-temperature limit.

The proof we present subsequently is sitting on an important result of \textit{non-relativistic} quantum statistical lattice models with bounded interactions -- quite often overlooked or under-appreciated in the physical literature -- namely the \textit{Lieb-Robinson causality bounds}, stating an effective velocity for propagation of quantum correlations. We wish to emphasize that taking advantage of the framework of operator algebras is of crucial importance
for our construction to properly operate with the \textit{thermodynamic limit}, which e.g. cannot be avoided when discussing \textit{time-asymptotic}
dynamical properties such as \textit{ergodicity}. The main reason is that extensive observables (particularly Hamiltonians, which define time-evolution)
become \textit{ill-defined} objects as $n\to \infty$ due to their divergent operator norms, meanwhile (analogously) the trace operation
becomes invalid as well. In $C^{*}$-algebra framework one instead fully resorts on locality principles.

\section{Operator C*-algebras}
Throughout the rest of this chapter we essentially use notation of Bratteli and Robinson~\cite{BR}. This means that the reader should be aware that such a convention may sometimes override notations and symbols that have been in use in previous chapters.
To avoid derailing our debate from most important points, we do not attempt to make extra clarifications on certain technical (however basic) concepts which are being introduced along the way. For all prerequisite background cf. with any standard literature, e.g.~\cite{BR}.\\

We start by considering a local Hilbert space of dimension $N$. For the sake of our applications we simply choose $N=2$, i.e. adopt a local on-site matrix algebra of a $2$-level quantum system (a qubit, or a spin-$1/2$) $\frak{A}_{x}\cong \CC^{2}$ attached to a lattice site $x$. Next, we define a sublattice (or a chain) of sites $[x,y]=\{x,x+1,\ldots y\}$ and associate to it a local algebra
$\frak{A}_{[x,y]}=\otimes_{z=x}^{y}\frak{A}_{z}$.
After taking a closure of the \textit{limit by inclusion}, $[x,y]\to \ZZ$, we are left with the so-called \textit{quasi-local uniform hyperfinite} (UHF) $C^{*}$-algebra, $\frak{A}=\frak{A}_{\ZZ}$. Note that such a construction makes sense because smaller lattices can be naturally inductively embedded into larger ones. The $*$-involution pertains to the \textit{adjoint operation} (hermitian conjugation).

We continue by constructing a 1D \textit{finite} lattice of $n$ sites, $\Lambda_{n}\equiv [1,n]$, and introduce a global Hamiltonian
\begin{equation}
H_{\Lambda_{n}}=\sum_{x=1}^{n-d_{h}+1}h_{x},
\label{eqn:CS_Hamiltonian}
\end{equation}
by means of a \textit{homogeneous} sum of local energy densities $h_{x}=\eta_{x}(h)\in \frak{A}_{[x,x+d_{h}-1]}$. We specified an
\textit{interaction} $h\in \frak{A}_{[0,d_{h}-1]}$ as a local hermitian operator acting on $d_{h}$ sites (for instance $d_{h}=2$ for nearest neighbour interactions), and a lattice $*$-automorphism of $\frak{A}$, denoted by $\eta_{x}$, implementing the shift action $\eta(a_{x})=a_{x+y}$ for each $a\in \frak{A}$ whose support begins at position $x$.
We remark that Hamiltonian operator from \eqref{eqn:CS_Hamiltonian} is a perfectly valid operator on any $\Lambda_{n}$, whereas a \textit{formal} translational invariant global Hamiltonian, which can be understood in the sense of a limit by inclusion $\Lambda\to \ZZ$ of $H_{\Lambda}$,
\begin{equation}
H_{\Lambda}=\sum_{x=\rm{min}\Lambda}^{\rm{max}\Lambda -d_{h}+1}h_{x}
\end{equation}
is an \textit{invalid} object since it is \textit{not} a member of a $C^{*}$-algebra.

An important result is that the Heisenberg dynamics -- specified for every finite lattice $\Lambda_{n}$ -- defines another $*$-automorphism on $\frak{A}$, namely the \textit{time-automorphism} generated by
\begin{equation}
\tau_{t}^{\Lambda}(a):=\exp{(\ii tH_{\Lambda})}a\exp{(-\ii tH_{\Lambda})},
\label{eqn:time_automorphism}
\end{equation}
The above prescription strictly only applies to a \textit{local} operator $a$, but can be extended to any quasi-local algebra $\frak{A}$ via norm limit
$\tau_{t}(a)=\lim_{\Lambda\to \ZZ}\tau_{t}^{\Lambda}$. The time-evolution satisfies the \textit{group property}, $\tau_{s}(\tau_{t}(A))=\tau_{s+t}(A)$, $\tau_{0}(A)=A$, and is strongly continuous,
$\lim_{\Lambda\to \ZZ}\|\tau_{t}(A)-\tau_{t}^{\Lambda}(A)\|=0$.
Expectation values of observables are accessed via thermal (Gibbs) state $\omega_{\beta}:\frak{A}\to \CC$,
i.e. a finite-temperature equilibrium expectation of a local observable $a$,
\begin{equation}
\omega_{\beta}(a):=\lim_{\Lambda\to \ZZ}\frac{\tr{(a\exp{(-\beta H_{\Lambda})})}}{\tr{(\exp{(-\beta H_{\Lambda})})}},
\end{equation}
which is a positive linear functional $\omega_{\beta}(A^{*}A)\geq 0$ for all $A\in \frak{A}$, and parameter $\beta\geq 0$ being an
inverse temperature. We assume that there exist a unique infinite-volume Gibbs state. A particular advantage of such formulation is to eliminate the need
of Hilbert (state) space, i.e. the entire theory is formulated solely on the algebra of observables $\frak{A}$. In finite systems there exists a $1$-to-$1$ correspondence between density operators and states in the $C^{*}$-algebra.
Moreover, $\omega_{\beta}$ is a $(\tau,\beta)$-KMS state\footnote{Abbrev. KMS refers to Kubo--Martin--Schwinger boundary conditions, which is a result of major importance in $C^{*}$-dynamical systems. The condition refers to holomorphic properties of thermal correlation functions $F_{\beta}(A,B;t)\equiv \omega_{\beta}(A\tau_{t}B)$. By analytic continuation to complex time one has the
equivalence $\omega_{\beta}(A\tau_{t+\ii \beta}(B))=\omega_{\beta}(\tau_{t}(B)A)$ everywhere within an open strip $\{z\in \CC;0<\Im{(z)}<\beta\}$.} and is invariant with respect to space and time translations,
\begin{equation}
\omega_{\beta}(\eta_{x}(A))=\omega_{\beta}(A),\quad \omega_{\beta}(\tau_{t}(A))=\omega_{\beta}(A),\quad \forall A\in \frak{A},x\in \ZZ,t\in \RaR.
\end{equation}
The infinite-temperature state is given by the \textit{tracial state} (which is $(\tau,0)$-KMS state) satisfying $\omega(A^{*}A)=\omega(AA^{*})$.
The tracial state is \textit{separable}, i.e. for two local operators $a$ and $b$ with non-overlapping supports we have
\begin{equation}
\omega(ab)=\omega(a)\omega(b).
\label{eqn:CS_tracial}
\end{equation}

\subsection{Lieb-Robinson bounds}
In absence of Lorentz invariance of the Heisenberg dynamics one could have argued that there is no reason why should a velocity of quantum correlations propagating
through a lattice have any upper bound. Consequently, imagining any observable with a local support at initial time, its influence
may have spread all over the place, in principle instantaneously after an interaction is ``turned on''. While in the strict sense
this is indeed what does happen, it has been claimed that correlations which reach beyond certain ``light-cone'' space-time region become exponentially suppressed. This profound insight is captured by the \textit{Lieb-Robinson estimate} (LRE), which states
an \textit{upper bound} on an \textit{effective speed} on propagation of disturbances in lattice systems with bounded interactions~\cite{LR72,NS10}.
Two noteworthy implications of the Lieb-Robinson velocity bounds are exponential decay of correlations functions in systems
with thermodynamic gap~\cite{Hastings06} and higher-dimensional Lieb--Schultz--Mattis theorem~\cite{Hastings04}.
For purposes of our application though, we employ the LRE in order to control ``spurious'' effects originating from residual terms
which violate conservation laws, thereby demonstrating that as such they have an inconsequential role after the infinite-volume limit is being taken.

Let us take two local observables, say $f\in \frak{A}_{X}$ and $g\in \frak{A}_{\Gamma}$ on two subsets of sites $X,\Gamma\subset \ZZ$, containing
$|X|$ and $|\Gamma|$ sites, respectively, such that at least one of the supports is finite.
The LRE is typically formulated by bounding an operator norm of a commutator
\begin{equation}
\|[\tau_{t}(f),g]\|\leq \phi\;{\rm min}\{|X|,|\Gamma|\}\|f\|\|g\|\exp{(-\mu({\rm dist}(X,\Gamma)-v|t|))}.
\label{eqn:CS_LR_commutator}
\end{equation}
Here ${\rm dist}(X,\Gamma)={\rm min}_{x\in X,y\in \Gamma}|x-y|$ denotes the distance between sets $X$ and $\Gamma$, and $\phi,\mu,v$ are
some \textit{positive} constants which do not depend on $f,g$ and neither $t$.

Furthermore, we introduce \textit{projected} observables, namely for a subset $\Gamma \subset \ZZ$ we define a mapping
$(\bullet)_{\Gamma}:\frak{A}\to \frak{A}$ by tracing out everything \textit{not} supported\footnote{The support $\rm{supp}(A)$ for an observable
$A\in \frak{A}_{\Lambda}$ is a \textit{minimal} set $\Gamma \subset \Lambda$ such that $A=\tilde{A}\otimes \one_{\Lambda \setminus \Gamma}$ for
an operator $\tilde{A}\in \frak{A}_{\Gamma}$. Put in words, it is a contiguous set of sites where an observable operates non-identically.}
on $\Gamma$,
\begin{equation}
(A)_{\Gamma}:=\lim_{\Lambda\to \ZZ}\frac{\tr_{\Lambda\setminus \Gamma}(A)}{\tr{(\one_{\Lambda\setminus \Gamma})}}=
\lim_{\Lambda\to \ZZ}\int \dd \mu(U_{\Lambda\setminus \Gamma})U_{\Lambda\setminus \Gamma} A U_{\Lambda\setminus \Gamma}^{*},
\label{eqn:CS_projected_observable}
\end{equation}
We used $\tr_{X}$ to designate a partial trace over a local algebra supported on $X$, while the right-hand side is merely a formal integration
over the whole unitary group over $N^{|X|}$~-dimensional Hilbert space on a sublattice $X$, with normalized Haar measure $d\mu(U_{X})$.
With these results we now state a very useful form of the LRE bound \eqref{eqn:CS_LR_commutator} which is due to Bravyi et al.~\cite{BHV06},
\begin{equation}
\boxed{\|\tau_{t}(f)-(\tau_{t}(f))_{\Gamma}\|\leq \phi |X|\;\|f\|\;\exp{(-\mu({\rm dist}(X,\ZZ \setminus \Gamma))-v|t|)},}
\label{eqn:CS_BHV_form}
\end{equation}
with $f\in \frak{A}_{X}$, $\Gamma \subset \ZZ$ being some arbitrary sets of sites, and $\phi,\mu,v$ the same set of constants as in the formula
\eqref{eqn:CS_LR_commutator}. To arrive at \eqref{eqn:CS_BHV_form}, we first apply the definition of projected observables \eqref{eqn:CS_projected_observable}
and then use the commutator estimate \eqref{eqn:CS_LR_commutator} (along with $U^{*}_{\Lambda \setminus \Gamma}=U^{-1}_{\Lambda \setminus \Gamma}$),
\begin{align}
\label{eqn:CS_proof_BHV}
\|\tau_{t}(f)-(\tau_{t}(f))_{\Gamma}\|&=\|\lim_{\Lambda \to \ZZ}\int \dd \mu(U_{\Lambda \setminus \Gamma})[\tau_{t}(f),U_{\Lambda \setminus \Gamma}]U^{*}_{\Lambda \setminus \Gamma}\|\nonumber \\
&\leq \lim_{\Lambda \to \ZZ}\int \dd\mu(U_{\Lambda\setminus \Gamma})\|[U_{\Lambda \setminus \Gamma},\tau_{t}(f)]\|.
\end{align}

\section{Theory of linear response: the Drude weight}

In Kubo's formulation of a linear response transport theory~\cite{KuboBook,MahanBook}, a \textit{Drude weight} $D_{\beta}$ (sometimes also called a charge stiffness) pertains to a diverging zero-frequency contribution to the real part of the \textit{optical conductivity} $\sigma(\omega)$,
\begin{equation}
\Re(\sigma(\omega))=2\pi D^{\th}_{\beta} \delta(\omega)+\sigma^{\rm{reg}}(\omega).
\end{equation}
The regular part $\sigma^{\rm{reg}}(\omega)$, measuring response to a given frequency component of an external (electric) field, is not of our interest here. A strictly positive Drude weight, $D^{\th}_{\beta}>0$, signals non-diffusive transport properties.
More precisely, an infinite DC conductivity implies persistent currents. A type of behavior we associate to such non-decaying currents is referred to as the \textit{ballistic transport}. Drude weights are typically formulated by means of an asymptotic
time-averaged temporal autocorrelation function of a corresponding current observable,
\begin{equation}
\boxed{D^{\rm{th}}_{\beta}=\lim_{t\to \infty}\lim_{n\to \infty}\frac{\beta}{4nt}\int_{-t}^{t}\dd t^{\prime}\expect{J_{n}(0)J_{n}(t^{\prime})}_{\beta}.}
\label{eqn:CS_Drude_standard}
\end{equation}
We denoted an \textit{extensive} (spin/particle, or heat) current operator by $J_{n}=\sum_{x=1}^{n}j_{x}$, where $j_{x}$ is a current density at position $x$. The reader should pay attention to the order of the two limits involved in the expression \eqref{eqn:CS_Drude_standard}.
According to the fundamental rule of statistical physics, the \textit{thermodynamic limit} when $n\to \infty$ -- by which we mean the
increasing number of particles while keeping their density fixed -- has to be taken \textit{at the beginning}\footnote{Had time-asymptotic limit been taken prior the $n\to \infty$ limit we would have to deal with annoyances attributed to the discreteness of Hamiltonian spectra, leading to finite Poincar\'{e} recurrence times. Although the later time increases exponentially with size of a Hilbert space, it would still require to invent somewhat less practical infrared/hydrodynamic regularized expression of the Drude weight.}, enabling dynamical correlations to strictly decay in the long-time limit for otherwise there is always
some memory of initial condition due to finite-size effects.
Secondly, is should be emphasized that the vanishing resistivity in homogeneous spin systems which are part of our discussion is \textit{intrinsic} to an interaction, namely in absence of external scattering mechanism there is no meaningful notion of the mean free path like in e.g. standard theory of superconductors where anomalous transport properties arise as an emergent property of several competing interactions. Consequently, the only possible (and quite intuitive) explanation of such an effect must be attributed to existence of additional conserved quantities. The relationship between ballistic transport properties and conservation laws has been debated in e.g.~\cite{CZP95,ZNP97}.

It was already in the late $60$'s when Mazur~\cite{Mazur69} pointed out, by the time in the context of classical dynamical systems,
that time-average of an observable $A$,
\begin{equation}
\ol{A}:=\lim_{t\to \infty}\frac{1}{t}\int_{0}^{t}\dd t^{\prime}A(t^{\prime}),
\end{equation}
can be bounded from below by \textit{exact} conservation laws, say given by a set $\{Q_{[k]};k\in \NaN\}$, obeying $(d/dt)Q_{[k]}=0$,
\begin{equation}
\boxed{\expect{\ol{A}^{2}}_{\beta}=\lim_{t\to \infty}\frac{1}{t}\int_{0}^{t}\dd t^{\prime}\expect{A(0)A(t^{\prime})}_{\beta}\geq
\sum_{k}\frac{\expect{AQ_{[k]}}^{2}_{\beta}}{\expect{Q^{2}_{[k]}}_{\beta}}.}
\label{eqn:CS_Mazur}
\end{equation}
We used $\expect{\bullet}_{\beta}$ to denote the thermal average at finite inverse temperature $\beta$. In the equation \eqref{eqn:CS_Mazur} we
assumed that $\{Q_{[k]}\}$ to be mutually orthogonal and in involution,
\begin{equation}
\expect{Q_{[k]}Q_{[l]}}_{\beta}=\delta_{k,l}\expect{Q^{2}_{[k]}}_{\beta},\quad \{Q_{[k]},Q_{[l]}\}=0,
\end{equation}
where $\{\bullet,\bullet\}$ designates the Poisson bracket.
Choosing $\expect{A}_{\beta}$ (by convention) such that its equilibrium expectation vanishes, a non-zero value
on right-hand side of \eqref{eqn:CS_Mazur} indicates \textit{non-ergodic} behavior of an observable $A$.

The classical result given by \eqref{eqn:CS_Mazur}, essentially being merely a restatement of the Wiener--Khinchin theorem, has been soon afterwards translated to the quantum setup~\cite{Suzuki71} by expressing dynamical correlations in the energy-sum representation using an \textit{explicit diagonalization} of a Hamiltonian. The obtained result was formally analogous to the classical one, basically only interpreting an observable $A$ as a hermitian operator on a many-body Hilbert space. Technically speaking nonetheless, such considerations was only applicable for \textit{finite} size systems and required in addition to take the \textit{opposite order} of the thermodynamic and time-asymptotic limits, which however, as we just argued above, is in conflict with concise formulation of the correct and meaningful definition of dynamical quantity $D_{\beta}^{\rm{th}}$. It is well known that these two limits typically \textit{do not} commute. This can be most obviously demonstrated by taking a quantum chain with \textit{open} boundary conditions, where time-averaged quantity \eqref{eqn:CS_Drude_standard} exactly vanishes (simple arguments are provided
in~\cite{RS08}). Finally, one could have tried to circumvent these problems by invoking one of standard physicist's tricks,
namely imposing the periodic boundary conditions and thus define the Drude weight for cyclic-invariant systems. Still, to best of our knowledge no \textit{rigorous} construction to access the thermodynamic regime can be devised in such a case\footnote{Here although it is fair to mention Kohn's construction~\cite{Kohn64} in which Drude weight in a finite and periodic system is defined via curvature of energy levels w.r.t. an external magnetic twists.
This way analytic closed-form results has been obtained for the $T=0$ case~\cite{SS65} and also
Thermodynamic Bethe Ansatz calculations for finite temperatures~\cite{BFK05}}.
Put shortly, the Drude weight is a \textit{time-asymptotic} quantity and can therefore be meaningfully defined only in a \textit{strict} thermodynamic $n\to \infty$ limit. To this end we present below a fully $C^{*}$-algebraic derivation for the Drude weight within a setting of infinitely-extended systems, without resorting on any arguments used in proof of~\cite{Suzuki71}.
Subsequently, we outline below how Mazur bound can be further improved by addition of almost-conserved charges.

\subsection{Derivation of the Drude weight}
We begin by taking an external homogeneous perturbation of a finite Hamiltonian residing on a symmetric sublattice $[-n,n]\subseteq \Lambda$,
\begin{equation}
H^{F,n}_{\Lambda}=H_{\Lambda}-F\sum_{x=-n}^{n-d_{q}+1}xq_{x},\quad q\in \frak{A}_{[0,d_{q}-1]},\quad q_{x}=\eta_{x}(q).
\end{equation}
Parameter $F\in \RaR$ prescribes a total gradient of an external forcing,
whereas, $q_{x}$ typically represents an operator of an on-site charge density,
e.g. the on-site magnetization $q=\sigma^{z}$ when speaking of qubit chains.
The Hamiltonian $H^{F,n}_{\Lambda}$ governs the \textit{perturbed} dynamics on an infinite lattice,
\begin{equation}
\tau_{t}^{F,n}(a)=\lim_{\Lambda \to \ZZ}\exp{(\ii\;t H_{\Lambda}^{F,n})}a\exp{(-\ii\;t H_{\Lambda}^{F,n})},
\end{equation}
but with potential gradient assigned only to a finite chunk of a lattice $[-n,n]$. Unlike in the standard formulation of the linear response theory within operator algebra formalism (see e.g. the reference~\cite{JOP06}), where $n\to \infty$ limit is applied first, we are forced to take zero-forcing limit $F\to 0$ beforehand, for otherwise the perturbation protocol becomes ill-devised (causing an unbounded energy contribution). Accordingly, we may define the \textit{canonical Drude weight} as the \textit{asymptotic rate} at which a local current density $j$ in the bulk increases per unit time after applying infinitesimal gradient and imposing infinite field extension,
\begin{equation}
D^{\rm{can}}_{\beta}:=\lim_{t\to \infty}\frac{1}{2t}\lim_{n\to \infty}\left[\frac{\dd}{\dd F}\omega_{\beta}\left(\tau_{t}^{F,n}(j)\right)\right]_{F=0}.
\label{eqn:CS_Drude_canonical}
\end{equation}
The perturbed evolution is generated by a perturbed $*$-derivation $\delta^{F,n}$,
\begin{equation}
\delta^{F,n}=\ii [H,\bullet]+\ii[V,\bullet]=\delta+\ii [V,\bullet],
\end{equation}
with a perturbation potential $V$ extending on $[-n,n]$, yielding for $A\in \frak{A}$ a formal time-evolution given by \textit{Dyson series},
\begin{equation}
\tau_{t}^{F}(A)=\tau_{t}(A)+\sum_{m=1}^{\infty}\ii^{m}\int_{0\leq s_{m}\leq s_{1}\leq t}[\tau_{s_{m}}(V),[\ldots,[\tau_{s_{1}}(V),\tau_{t}(A)]]]
\dd s_{1}\cdots \dd s_{m}.
\label{eqn:CS_Dyson_series}
\end{equation}
Up to order $\cal{O}(F^{2})$, and after back-propagating by $\tau_{t}$ from the left, we readily find the following expression of the
Loschmidt echo operator,
\begin{equation}
\left(\tau_{-t}\circ \tau_{t}^{F,n}\right)(j)=j-\ii F\int_{s=0}^{t}\dd s\sum_{x=-n}^{n-d_{q}+1}x\;[\tau_{-s}(q_{x}),j]+\cal{O}(F^{2}).
\end{equation}
Throughout the derivation we adopted convention that the equilibrium expectation value of a current density $j$ vanishes, $\omega_{\beta}(j)$.
To get rid of the commutator we first make use of imaginary-time propagation
\begin{equation}
\tau_{\ii \lambda}(A)=\exp{(-\lambda H)}A\exp{(\lambda H)},
\end{equation}
and invoke the following trick,
\begin{align}
[\exp{(-\beta H)},A]&=-\exp{(-\beta H)}(\exp{(\beta H)}A\exp{(-\beta H)}-A)\nonumber \\
&=-\exp{(-\beta H)}(\tau_{-\ii \beta}(A)-A)
=-\exp{(-\beta H)}\int_{\lambda=0}^{\beta}\dd \lambda\;\frac{\dd}{\dd \lambda}\tau_{-\ii \lambda}(A)\nonumber \\
&=-\exp{(-\beta H)}\int_{\lambda=0}^{\beta}\dd \lambda\;\tau_{-\ii \lambda}([H,A])\\
&=\ii \exp{(-\beta H)}\int_{\lambda=0}^{\beta}\dd \lambda(\tau_{-\ii \lambda}\circ\delta)(A).\nonumber
\end{align}
We accounted for $\delta(A)\equiv \lim_{\Lambda\to \ZZ}\ii[H_{\Lambda},A]$ and
$(\dd/\dd \lambda)\tau_{-\ii \lambda}=-\ii \tau_{-\ii \lambda}\circ \delta$. Subsequently, using cyclic invariance of the trace we find for two operators
$A,B\in \frak{A}$ the following useful identity,
\begin{equation}
\omega_{\beta}([\tau_{t}(A),B])=\frac{\tr{([\exp{(-\beta H)},\tau_{t}(A)]B)}}{\tr \exp{(-\beta H)}}=
\ii \int_{\lambda=0}^{\beta}\dd \lambda\;\omega_{\beta}((\tau_{t-\ii \lambda}\circ \delta)(A)B),
\end{equation}
by means of which the formula for the canonical Drude weight reads
\begin{equation}
D^{\rm{can}}_{\beta}=\lim_{t\to \infty}\frac{1}{2t}\lim_{n\to \infty}\int_{s=0}^{t}\dd s\int_{\lambda=0}^{\beta}\dd \lambda
\sum_{x=-n}^{n-d_{j}+1}x\;\omega_{\beta}((\tau_{-s-\ii \lambda}\circ \delta)(q_{x})j).
\label{eqn:CS_Drude_int1}
\end{equation}
Charge and current density operators are of course related via local continuity equation,
\begin{equation}
j_{x-1}-j_{x}=\delta(q_{x}),
\end{equation}
allowing further simplification of the expression \eqref{eqn:CS_Drude_int1} after rewriting it as a difference of two sums
and shifting the summation index from $x$ to $x+1$ in the sum involving $j_{x-1}$, i.e.
\begin{equation}
D^{\rm{can}}_{\beta}=\lim_{t\to \infty}\frac{1}{2t}\lim_{n\to \infty}
\int_{s=0}^{t}\dd s\int_{\lambda=0}^{\beta}\dd \lambda \sum_{x=-n}^{n-d_{j}+1}\omega_{\beta}(\tau_{-s-\ii \lambda}(j_{x})j),
\label{eqn:CS_Drude_int2}
\end{equation}
modulo irrelevant boundary terms at $x\sim n$ of the form $|\omega_{\beta}(\tau_{-s-\ii \lambda}(j_{x})j)|$ which decay exponentially with $n$
and uniformly in $z=-s-\ii \lambda$, thus not contributing as $n\to \infty$. The later fact is the result of Araki's theorem
(theorem 4.2 of reference~\cite{Araki69}), essentially saying that for two strictly local observables $f,g\in \frak{A}$ and $z\in \CC,\rho>0$ we have
\begin{equation}
\lim_{n\to \infty}e^{|n|\rho}\|[f,\tau_{z}(\eta_{n}(g))]\|=0.
\label{eqn:Araki_theorem}
\end{equation}
Additionally, using complex-time translation invariance of the Gibbs state in \eqref{eqn:CS_Drude_int2}, the sequence
\begin{equation}
\tilde{c}_{n}(z):=\sum_{x=-n}^{n-d_{j}+1}\omega_{\beta}(j_{x}\tau_{z}(j)),
\end{equation}
converges uniformly to $c(z)$, i.e.
\begin{equation}
c(z=s+\ii \lambda)=\lim_{n\to \infty}\sum_{x=-n}^{n}\omega_{\beta}(j_{x}\tau_{z}(j)),
\end{equation}
permitting to interchange the $s$ and $\lambda$ integrations in the $n\to \infty$ limit. Ultimately we obtain
\begin{equation}
D^{\rm{can}}_{\beta}=\lim_{t\to \infty}\frac{1}{2t}\int_{s=0}^{t}\dd s\int_{\lambda=0}^{\beta}\dd\lambda \;c(s+i\lambda).
\label{eqn:CS_Drude_canonical_final}
\end{equation}
Notice that this expression essentially coincides with the standard form which employs the \textit{canonical} correlation functions
which are given in terms of \textit{Kubo--Mori--Bogoliubov inner product} (sometimes also called the Duhamel $2$-point function),
which correctly accounts for integration over the thermal strip,
\begin{equation}
(A(t)|B)\equiv \frac{1}{\beta}\int_{0}^{\beta}\expect{A^{\dagger}(t)B(\ii \tau)}_{\beta}\dd \tau.
\end{equation}
Now Kubo formula for the conductivity assumes a compact form,
\begin{equation}
\sigma(\omega,\beta)=\frac{\beta}{2n}\int_{0}^{\infty}\exp{(\ii \omega t)}(J_{n}(t)|J_{n})\dd t.
\end{equation}
The canonical expression for the Drude weight is expressed as the zero-frequency contribution,
\begin{equation}
D^{\rm{can}}_{\beta}=\lim_{t\to \infty}\frac{1}{t}\int_{0}^{t}\lim_{n\to \infty}\frac{\beta}{2n}(J(t^{\prime})|J)\dd t^{\prime}.
\end{equation}

At this stage we have to remark that this form is in disagreement with perhaps more common and widespread expression
\begin{equation}
\boxed{D^{\rm{th}}_{\beta}:=\lim_{t\to \infty}\frac{\beta}{4t}\int_{s=-t}^{t}\dd s\;c(s),}
\label{eqn:CS_thermal_Drude}
\end{equation}
which we shall refer to as the \textit{thermal Drude weight}, where \textit{no} integration over the temperature domain is performed.
In \eqref{eqn:CS_thermal_Drude} we used symmetric time domain to render the thermal Drude weight a manifestly \textit{real} quantity.
The difference of the two expressions is though a very subtle one, namely for any finite time $t$ we have
\begin{equation}
|D^{\rm{th}}_{\beta}-D^{\rm{can}}_{\beta}|\leq \frac{1}{2t}\int_{0}^{t}\dd s\int_{0}^{\beta}\dd \lambda |c(s+\ii \lambda)-\ol{c}|.
\end{equation}
Therefore, it is enough to assume a seemingly innocent additional technical condition holds, namely
\begin{equation}
\lim_{t\to 0}\frac{1}{2t}\int_{s=0}^{t}\dd s|c(s+\ii \lambda)-\ol{c}|=0,\qquad \forall \lambda \in [0,\beta]
\end{equation}
which may be thought of as a \textit{non-ergodic weak-mixing}, essentially demanding that the complex-time correlation function $c(z)$ converges towards $\ol{c}$ as $|z|\to \infty$ everywhere in the thermal strip $\Im(z)\in [0,\beta]$. We should also emphasize that customary trick which invokes Cauchy's integral formula (cf. reference~\cite{JOP06}) to get rid of complex-time correlations cannot be applied here since $c(z)$ might not be holomorphic due to infinite-extension limit.

In spite of these fuzzy issues we may simply adopt the thermal Drude weight, as defined by the expression \eqref{eqn:CS_thermal_Drude}, as an alternative legitimate indicator of non-ergodicity which stands on its own right. At any rate, in subsequent treatment we are only addressing high-temperature ${\cal O}(\beta)$ behaviour where the discrepancies we have been discussing in this section do not matter.

\section{Pseudo-local almost-conserved charges}
Recall how a full set of independent \textit{local} charges, denoted by $\{Q_{[k]}\}$ arise from expansion of the associated fundamental quantum transfer matrix (cf. chapter~\ref{sec:integrability}). By a word \textit{local} we mean that each $Q_{[k]}$ can be represented as a \textit{spatially-homogeneous} sum of charge densities $q_{[k]}\in \frak{A}_{[0,k-1]}$, operating non-trivially only on $k$ adjacent sites,
\begin{equation}
Q_{[k]}=\sum_{x=1}^{n-k+1}\eta_{x}(q_{[k]}).
\end{equation}
By convention, $Q_{[2]}$ is regarded as a Hamiltonian,
\begin{equation}
Q_{[2]}=\sum_{x=1}^{n-1}h_{x},
\end{equation}
whereas a whole tower of remaining higher local charges can be elegantly obtained by resorting on existence of
a boost operator $B=\sum_{x}x\;h_{x}$ (being well-defined only with periodic boundaries when $x$ should be understood modulo $n$),
\begin{equation}
Q_{[k+1]}=[B,Q_{[k]}],\quad k\in\{2,3,\ldots\}.
\label{eqn:boost_global}
\end{equation}
The missing $n$-th global charge is given by, when speaking of spin chains, the total magnetization $M=\sum_{x\in \Lambda_{n}}\sigma_{x}^ {z}$.
In integrable spin chains with open boundary conditions the situation with local charges becomes a more delicate one. In the first place, one needs to use a generalized generating operator for such integrals. The latter is obtained from solutions of the Sklyanin's \textit{reflection equation}~\cite{Sklyanin88}, which can be thought of as an additional algebraic condition (besides the quantum YBE which still has to hold in the bulk) which appropriately accounts for quasi-particle scattering at the boundaries, explicitly breaking the lattice momentum conservation. As a consequence, one finds, somewhat peculiarly, that in the open anisotropic Heisenberg spin-$1/2$ chain, half of the local charges get destroyed when open boundary conditions are assumed~\cite{GM96}. In particular, while the odd charges (when $k$ is an odd number) cease to exist in the strict sense, the even ones have to be amended by extra boundary terms in order to restore exact commutation on $\Lambda_{n}$, say writing it for $\{\widetilde{Q}_{[2l]}\}$ with $l\in \NaN$ ($2l\leq n$),
\begin{equation}
\widetilde{Q}_{[2l]\Lambda_{n}}=Q_{[2l]\Lambda_{n}}+Q_{[2l]L}+Q_{[2l]R},\quad [\widetilde{Q}_{[2l]\Lambda_{n}},\widetilde{Q}_{[2l^{\prime}]\Lambda_{n}}]=0.
\end{equation}
But the main message we want to convey at this point is, however, that this fragile issue is completely irrelevant for the sake
applications where conservation laws play a material role, e.g. for setting susceptibility bounds which is exemplified in the foregoing discussion. The resolution combines concepts of \textit{almost-commutativity} in combination with the Lieb-Robinson effective causality.

In accordance with our recent suggestions, one of the main aspects where local constants of motions have a monumental role is in
the quantum transport theory. Taking XXZ Heisenberg chain in the easy-plane (gapless) regime as a representative example,
one can easily demonstrate positivity of the spin Drude weight in the grand canonical ensemble by facilitating Mazur inequality~\cite{ZNP97}.
On the contrary, such arguments soon become void when restricting ourselves to the canonical ensemble in the sector of \textit{zero} total magnetization
(which is equivalent to the half-filling point in the fermionic picture) by virtue of the spin-reversal parity symmetry (or particle-hole symmetry when
referring to fermions). To elaborate on implications of this simple fact, consider the spin-reversal linear map
$\hatcal{F}:\frak{A}_{\Lambda_{n}}\to \frak{A}_{\Lambda_{n}}$, factorizing into a product of on-site maps
$\hatcal{F}_{x}:\frak{A}_{x}\to \frak{A}_{x}$, $\hatcal{F}=\otimes_{x}\hatcal{F}_{x}$, which acts on Pauli basis as
\begin{equation}
\hatcal{F}_{1}:\sigma^{z}_{1}\mapsto -\sigma^{z}_{1},\quad \sigma^{\pm}_{1}\mapsto \sigma^{\mp}_{1}.
\end{equation}
Crucially, the spin-current operator $J_{\Lambda_{n}}$ is of well-defined \textit{odd} parity with respect to $\hatcal{F}$,
\begin{equation}
\hatcal{F}J_{\Lambda_{n}}=-J_{\Lambda_{n}},
\end{equation}
i.e. of exactly the \textit{opposite} parity as all local charges
\begin{equation}
\hatcal{F}Q_{[k]}=Q_{[k]}.
\end{equation}
which is inherited from the corresponding quantum transfer matrix.
Henceforth, all thermal overlaps between the extensive spin-current operator $J_{\Lambda_{n}}$ and the local charges $Q_{[k]\Lambda_{n}}$ in the
numerator of the Mazur inequality \eqref{eqn:CS_Mazur} precisely vanish, $\expect{J_{\Lambda_{n}}Q_{[k]\Lambda_{n}}}_{\beta}\equiv 0$, for all $k$ and at any $\beta$, making the bound \textit{trivial}. The spin-reversal symmetry gets explicitly broken e.g. by inclusion of magnetic field term of strength $\chi$, namely $h\to h+\chi \sigma^{z}_{0}$, removing the symmetry restriction which prevents the spin-current from decaying (the effect is essentially contributed via equilibrium measure). These simple observations persuade many researchers to conjecture that, after taking into consideration that Mazur bound has to eventually saturate after including \textit{all local} conserved charges, the Drude weight at half-filling should in fact vanish. Mysteriously though, a plethora of studies, e.g.~\cite{CZP95,Zotos98,AG02PRL,AG02PRB,BFK05,HHB07,SPA09,SPA11,HPZ11,KBM12,KHLM13} among others, which use a
variety of analytical and numerical methods (exact diagonalization, Bethe Ansatz calculations,
DMRG, QMC, bosonization etc.) has reached the consensus on \textit{finiteness} of the (finite-temperature) spin Drude weight even at the half-filling point while opening a controversial debate in regard to the critical isotropic point.
The topic of optical conductivity in the Heisenberg model has been a very vivid subject in the literature, even so
that is quite a challenge to suggest a selection of most significant contributions.\\

Even though extra conservation laws on \textit{non-local} type may exist, say $\{\widehat{Q}_{[k]}\}$,
they clearly \textit{cannot} be important in bound \eqref{eqn:CS_Mazur} since for a given extensive observable $A$, namely
such $A$ is a sum of local densities, we have
\begin{equation}
\lim_{n\to \infty}\frac{\expect{A\widehat{Q}_{[k]}}^{2}_{\beta}}{n\expect{\widehat{Q}^{2}_{[k]}}_{\beta}}=0.
\end{equation}

As claimed in~\cite{GM96}, the absence of \textit{odd-degree} local conserved charges in the system with open boundaries is an artefact of lattice-reversal symmetry which is \textit{not} a symmetry of the quantum transfer matrix compliant with the periodic boundary conditions. It has been argued moreover, that lack of translational invariance implies \textit{non-existence} of a boost operator which would enable efficient fabrication of surviving charges. Despite these obstructions are perfectly well-founded, they are somehow immaterial at the same time since they obscure an underlying algebraic bulk structure which remains untouched after all. Or to say it more explicitly, a boost operator density should still remain a valid concept.

In attempt to clarify our recent statements, we simply write down the commutator of local charges $Q_{[k]}$
(with associated densities $q_{[k]}\in \frak{A}_{[0,k-1]}$) and the \textit{open} Hamiltonian $H_{\Lambda_{n}}$,
\begin{equation}
[H_{\Lambda_{n}},Q_{[k]}]=\eta_{1}(p_{[k]})-\eta_{n-k}(p_{[k]}),
\label{eqn:CS_almost_commuting}
\end{equation}
with \textit{higher current densities} $p_{[k]}\in \frak{A}_{[0,k]}$ defined on the basis of the local continuity equation,
\begin{equation}
\frac{d}{dt}q_{[k]}=\eta_{-1}(p_{[k]})-p_{[k]}.
\end{equation}
Because \eqref{eqn:CS_almost_commuting} expresses a violation of the commutation law, we refer to it as the \textit{almost-commutativity}.
The boundary residuals are given precisely by the current densities, satisfying
\begin{equation}
p_{[k]}-\eta_{1}(p_{[k]})=\ii \sum_{x=0}^{k}[h_{x},\eta_{1}(q_{[k]})],
\end{equation}
which allows us to formulate the \textit{local-algebraic version} of the boost condition \eqref{eqn:boost_global},
\begin{equation}
q_{[k+1]}=\half p_{[k]}+\frac{\ii}{2}\sum_{x=0}^{k-1}(x+1)[h_{x},q_{[k]}],\quad k\in\{2,3,\ldots\}.
\label{eqn:boost_local}
\end{equation}
Additionally, for the initial conditions to \eqref{eqn:boost_local} we have
\begin{equation}
q_{[1]}=\sigma^{z}_{0},\quad p_{[1]}\equiv j=2(\sigma^{x}_{0}\sigma^{y}_{1}-\sigma^{y}_{0}\sigma^{x}_{1}),
\end{equation}
which then gives $q_{[2]}=h$ and so forth. In the following we owe to explain why these terms are harmless in the thermodynamic limit.
This is where the Lieb-Robinson estimate enters the game.

Let us briefly return back to section \ref{sec:solution} where we stated the so-called defining relation for the $S$-operator
\eqref{eqn:global_defining_relation}. Let us restate it once again, 
\begin{equation}
[H_{\Lambda_{n}},S_{\Lambda_{n}}(\epsilon)]=-\ii \epsilon(\sigma^{z}_{1}\eta_{1}(S_{\Lambda_{n-1}}(\epsilon))-S_{\Lambda_{n-1}}(\epsilon)\sigma^{z}_{n}).
\label{eqn:CS_defining}
\end{equation}
There is a possibility of using a gauge (see section \ref{sec:solution}) in which $S_{n}(\epsilon)$ admits the amplitudes which are
polynomials in $\epsilon$. Therefore, by expanding $S_{\Lambda_{n}}(\epsilon)$ in a Taylor series in $\epsilon$,
\begin{equation}
S_{\Lambda_{n}}(\epsilon)=\sum_{p=0}^{n}(\ii \epsilon)^{p}S_{\Lambda_{n}}^{(p)},\quad S_{\Lambda_{n}}^{(p)}\in \frak{A}_{[1,n]},
\label{eqn:S_expansion}
\end{equation}
with the leading (zeroth) order representing the infinite-temperature state $S_{\Lambda_{n}}^{(0)}\equiv \one_{\Lambda_{n}}$ and
by denoting $S_{\Lambda_{n}}^{(1)}\equiv Z_{\Lambda_{n}}$, we immediately obtain the following very suggestive identity
\begin{equation}
[H_{\Lambda_{n}},Z_{\Lambda_{n}}]=-(\sigma^{z}_{1}-\sigma^{z}_{n}).
\label{eqn:Z_almost_commuting}
\end{equation}
The operator $Z_{\Lambda_{n}}$ is \textit{almost}-commuting with the Hamiltonian -- the violation is restricted to only two ultra-local boundary terms. However, since $Z_{\Lambda_{n}}$ is non-hermitian, we may introduce these two associated hermitian charges,
\begin{equation}
Q^{+}_{[Z]\Lambda_{n}}:=Z_{\Lambda_{n}}+Z^{\dagger}_{\Lambda_{n}},\quad Q^{-}_{[Z]\Lambda_{n}}:=\ii(Z_{\Lambda_{n}}-Z_{\Lambda_{n}}^{\dagger}),
\end{equation}
satisfying
\begin{equation}
\left[H_{\Lambda_{n}},Q^{+}_{[Z]\Lambda_{n}}\right]=0,\quad \left[H_{\Lambda_{n}},Q^{-}_{[Z]\Lambda_{n}}\right]=-2\ii(\sigma^{z}_{1}-\sigma^{z}_{n}).
\end{equation}
In distinction to the family $\{Q_{[k]}\}$, these operators are nonetheless \textit{non-local}.

With the baffling problem of explaining non-vanishing Drude weight in our mind, we set focus on $Q^{-}_{[Z]\Lambda_{n}}$, being evidently of adequate \textit{odd parity}
\begin{equation}
\hatcal{F}Q^{-}_{[Z]\Lambda_{n}}=-Q^{-}_{[Z]\Lambda_{n}}.
\end{equation}
In the following we shall exclusively work only with $Q^{-}_{[Z]\Lambda_{n}}$, therefore we decide to omit the superscript parity label henceforth.

\subsection{Pseudo-locality}
\begin{defn}[Pseudo-local homogeneous conservation law]
\label{def:pseudo-locality}

Operator $Q$ is a translationally invariant spatial sum of exponentially localized pseudo-local operators, if for any finite
chain $\Lambda_{n}$ we have:
\begin{equation}
Q_{\Lambda_{n}}=\sum_{d=1}^{n}Q^{(d)}_{\Lambda_{n}},\quad
Q^{(d)}_{\Lambda_{n}}=\sum_{x=1}^{n-d+1}q_{x}^{(d)},
\end{equation}
with hermitian density operators $q^{(d)}=(q^{(d)})^{*}\in \frak{A}_{[0,d-1]}$ satisfying
\begin{equation}
\omega \left((q^{(d)})^{2}\right)\leq \zeta \exp{(-\xi d)},
\end{equation}
and two positive $n$-independent constants $\eta,\xi$.
We may also assume that infinite-temperature equilibrium expectation value at all orders of $d$ vanish\footnote{This assumption can be made without loss of generality. Basically we are only interested in behaviour of dynamical $2$-point correlation functions as far as only ergodic properties are under consideration.},
\begin{equation}
\omega(q^{(d)})=0.
\label{eqn:vanishing_expectation}
\end{equation}
\end{defn}

Notably, our fresh almost-conserved operator $Q_{[Z]}$ is compliant with recent definitions. We should stress however that the definition \ref{def:pseudo-locality} only makes sense strictly at $\beta=0$. We permit ourselves to slightly delay a detailed clarification on this subtlety at this stage.

With aim to facilitate Mazur bound for the operator $Q_{[Z]\Lambda_{n}}$ we subsequently examine its structure.
In order to do so, MPS form for the operator $Z_{\Lambda_{n}}$ becomes of central relevance. This non-trivial task has been addressed in the original work~\cite{PRL106}, where the perturbative solution (with respect to the coupling parameter $\epsilon$) of Heisenberg model (cf. section \ref{sec:solution}) has been found using arguments based on cubic algebra. Briefly said, by virtue of a formal perturbation series \eqref{eqn:S_expansion} in parameter $\epsilon$, the defining equation in at order $\cal{O}(\epsilon^{p})$ takes the form of
\begin{equation}
\left[H_{\Lambda_{n}},S_{\Lambda_{n}}^{(p)}\right]+\DD S_{\Lambda_{n}}^{(p-1)}=0,
\end{equation}
apparently reproducing in the first order $p=1$ the almost-commutation condition \eqref{eqn:Z_almost_commuting}.
Based on $\epsilon$-expansion of the defining relation \eqref{eqn:CS_defining}, all orders $S^{(p)}_{\Lambda_{n}}$ are determined by the same bulk-algebraic relations found already in the non-perturbative solution $S_{\Lambda_{n}}(\epsilon)$. The boundary condition nonetheless do change because these operators are not equivalent after all. In fact, it turns out that even the structure of an underlying auxiliary Hilbert space changes. Let us set aside a clean derivation of the MPS representation for the $Z$-operator for now and rather refer the reader to consult references~\cite{PRL106,IP13}.

To this end, we utilize a \textit{modified} set of $\bb{A}$-matrices from section~\ref{sec:solution} by replacing the
vacuum state $\ket{0}$ with two \textit{distinct} boundary states $\ket{\rm{L}}$ and $\ket{\rm{R}}$,
and facilitate a set of operators $\{\bb{A}^{Z}_{0},\bb{A}^{Z}_{+},\bb{A}^{Z}_{-}\}$ over an auxiliary Hilbert
space with orthonormal basis $\{\ket{\rm{L}},\ket{\rm{R}},\ket{1},\ket{2},\ldots\}$, given by
\begin{align}
\bb{A}^{Z}_{0}&=\ket{\rm{L}}\bra{\rm{L}}+\ket{\rm{R}}\bra{\rm{R}}+\sum_{k=0}^{\infty}a^{Z,0}_{k}(\gamma)\ket{k}\bra{k},\nonumber \\
\bb{A}^{Z}_{+}&=\ket{\rm{L}}\bra{1}+\sum_{k=1}^{\infty}a^{Z,+}_{k}(\gamma)\ket{k}\bra{k+1}\nonumber \\
\bb{A}^{Z}_{-}&=\ket{1}\bra{\rm{R}}-\sum_{k=1}^{\infty}a^{Z,-}_{k}(\gamma)\ket{k+1}\bra{k}.
\label{eqn:Z_amplitudes}
\end{align}
A possible parametrization of the amplitudes $a^{Z,{0,\pm}}_{k}(\gamma)$ can be found in the original paper~\cite{PRL106}.
More importantly, we observe that $Q_{[Z]\Lambda_{n}}$ can be nicely expanded in terms of local densities of order $d\geq 2$,
\begin{align}
\label{eqn:Q_definition}
Q_{[Z]\Lambda_{n}}&=\sum_{d=2}^{n}q^{(d)}_{Z},\quad 
q^{(d)}_{Z}=\ii \sum_{s_{2},\ldots,s_{d-1}\in \{0,\pm\}}\bra{0}\bb{A}^{Z}_{+}\bb{A}^{Z}_{s_{2}}\cdots \bb{A}^{Z}_{s_{d-1}}\bb{A}^{Z}_{-}\ket{0}\times \nonumber \\
&(\sigma^{+}\otimes \sigma^{s_{2}}\otimes \cdots \otimes \sigma^{s_{d-1}}\otimes \sigma^{-}-
\sigma^{-}\otimes \sigma^{-s_{2}}\otimes \cdots \otimes \sigma^{-s_{d-1}}\otimes \sigma^{+}).
\end{align}
As an example, we list few lower-order densities,
\begin{align}
q^{(2)}&=\ii(\sigma^{+}\otimes \sigma^{-}-\sigma^{-}\otimes \sigma^{+})=\frac{1}{4}j,\nonumber \\
q^{(3)}&=\ii \Delta(\sigma^{+}\otimes \sigma^{0}\otimes \sigma^{-}-\sigma^{-}\otimes \sigma^{0}\otimes \sigma^{+}),\nonumber \\
q^{(4)}&=\ii \Delta^{2}(\sigma^{+}\otimes \sigma^{0}\otimes \sigma^{0}\otimes \sigma^{-}-\sigma^{-}\otimes \sigma^{0}\otimes \sigma^{0}\otimes \sigma^{+})\nonumber \\
&+2\ii \Delta(\Delta^{2}-1)(\sigma^{+}\otimes \sigma^{+}\otimes \sigma^{-}\otimes \sigma^{-}-\sigma^{-}\otimes \sigma^{-}\otimes \sigma^{+}\otimes \sigma^{+}).
\end{align}
In spite of the fact that the almost-conserved charge $Q_{[Z]\Lambda_{n}}$ is a \textit{non-local} operator it is still an
\textit{extensive} observable compliant with definition~\ref{def:pseudo-locality}, namely
\begin{equation}
\omega \left((q^{(d)})^{2}\right)\leq \zeta \exp{(-\xi d)},
\end{equation}
implying that the overlap at $\beta=0$, that is $\omega(Q^{2}_{[Z]\Lambda_{n}})$, is growing only \textit{linearly} with system
size $n$, equally as for the local charges $Q_{[k]}$. This makes $Q_{[Z]\Lambda_{n}}$ a suitable candidate for the right-hand side of
Mazur inequality.

% For any finite system of size $n$ we may thus identify the thermal correlation function $\omega(A^{*}B)$ at infinite-temperature with the
% \textit{rescaled} Hilbert-Schmidt inner product $(A,B)_{0}$,
% \begin{equation}
% \omega(A^{*}B)\equiv (A,B)_{0}:=2^{-n}\tr(A^{*}B),\quad A,B\in \frak{A}_{\Lambda_{n}}.
% \label{eqn:HS_norm}
% \end{equation}

We may utilize $Q_{[Z]\Lambda_{n}}$ in Mazur inequality at high temperatures\footnote{Precisely at $\beta=0$ the Drude weight exactly vanishes, which is in accordance with its definition. Therefore, strictly speaking the quantity we are referring to is a temperature rescaled version of $D^{{\rm th}}_{\beta}$ which should be understood as ${\cal O}(\beta)$ approximation at very high temperatures.}, i.e.
\begin{equation}
D^{\rm{th}}_{\beta}\geq \frac{\beta}{2}\lim_{n\to \infty}
\frac{1}{n}\frac{\left(\omega \left(J_{\Lambda_{n}}Q_{[Z]\Lambda_{n}}\right)\right)^{2}}{\omega\left(Q^{2}_{[Z]\Lambda_{n}}\right)},
\label{eqn:Mazur_quantum}
\end{equation}
which in agreement with \eqref{eqn:CS_Mazur}. To proceed from this point it is required to calculate
(i) the overlap between the extensive spin current $J_{\Lambda_{n}}$ and charge $Q_{[Z]\Lambda_{n}}$ given by
$\omega (J_{\Lambda_{n}}Q_{[Z]\Lambda_{n}})$, and (ii) the normalization overlap $\omega (Q^{2}_{[Z]\Lambda_{n}})$.
A rigorous justification of inequality \eqref{eqn:Mazur_quantum} is presented in the next section.
Before heading there, several remarks are in order:
\begin{itemize}
 \item The motivation behind restricting ourselves to the high-temperature limit is to be able to take advantage of separability of
the tracial state $\omega$ (for the definition see \eqref{eqn:CS_tracial}). Additionally, evaluation of the overlap
$\omega(J_{\Lambda_{n}}Q_{[Z]\Lambda_{n}})$ severely simplifies in the particular case because the only density from $\{q^{(d)}\}$ which can contribute to a finite overlap with the density $j$ is that of $d=2$, essentially being proportional to the current density itself.
 \item Further considerations can be made significantly easier by making a restriction to the set of ``resonant'' anisotropy
parameters at $\Delta=\cos{(\gamma)}<1$ for $\gamma=\pi(l/m)$, with $\{l,m\}\in \ZZ$ being co-prime numbers ($m>1$). At those special points an effective dimension of the auxiliary matrices $\bb{A}^{Z}_{\alpha}$ ($\alpha=\{0,\pm\}$) becomes \textit{finite}, namely $m+1$. This makes it possible to define an associated auxiliary transfer matrix (along the lines of~\cite{PRL106}) of the form
\begin{equation}
\bb{T}^{(Z)}=\bb{A}^{Z}_{0}\otimes \ol{\bb{A}}^{Z}_{0}+
\half\left(\bb{A}^{Z}_{+}\otimes \ol{\bb{A}}^{Z}_{+}+\bb{A}^{Z}_{-}\otimes \ol{\bb{A}}^{Z}_{-}\right),
\end{equation}
which can be readily employed in computation of the tracial norm
\begin{equation}
\omega\left(Q^{2}_{[Z]\Lambda_{n}}\right)=2\bra{\rm{L}}\left(\bb{T}^{(Z)}\right)^{n}\ket{\rm{R}}.
\end{equation}
\end{itemize}
The calculation of the spin Drude weight at high temperatures can be thus carried out in \textit{exact} way by iterating an effective transfer matrix,
\begin{equation}
\lim_{\beta \to0}\frac{D^{\rm{th}}_{\beta}}{\beta}\geq 4D_{Z},\quad
D_{Z}:=\frac{1}{4}\lim_{n\to \infty}\frac{n}{\bra{\rm{L}}\left(\bb{T}^{(Z)}\right)^{n}\ket{\rm{R}}},
\end{equation}
with aid of Jordan decomposition\footnote{It turns out that the Jordan structure of $\bb{T}^{(Z)}$ has a single non-trivial block of dimension $2$ with eigenvalue $1$. The sub-leading eigenvalues are strictly smaller that $1$.}. In the end of the day, the Drude weight can be already bounded away from zero \textit{solely} by means of the $Z$-operator. We conclude by stating the explicit form of the lower bound~\cite{IP13}, reading
\begin{equation}
\boxed{D_{Z}=\half\left(1-\Delta^{2}\right)\left(\frac{m}{m-1}\right),\quad \Delta=\cos{(\pi l/m)}.}
\label{eqn:Drude_single}
\end{equation}
What makes the latest result so fabulous is that $D_{Z}$ is a \textit{fractal} (i.e. nowhere-continuous) function
Despite the result \eqref{eqn:Drude_single} is only defined at the resonant values of anisotropies, the latter actually covers a \textit{dense} set of point on the interval $\Delta\in[0,1)$. One the other hand, we could surely by no means claim that the bound is saturated with inclusion of the single operator $Q_{[Z]}$. Could it be possible that there exist other charges of similar type? By supposing that the answer is affirmative, then what is a guiding principle we should be looking for at this stage?

Perhaps it is instructive to comment on the (quantum critical) points at $|\Delta|=1$. At these marginal points the
charge $Q_{[Z]\Lambda_{n}}$, which becomes
\begin{equation}
Q_{[Z]\Lambda_{n}}(\Delta=\pm1)=\ii \sum_{d=2}^{n}\Delta^{d-2}\sum_{x\in \Lambda_{n}}\eta_{x}(\sigma^{+}_{0}\sigma^{-}_{d-1}-\sigma^{-}_{0}\sigma^{+}_{d-1}),
\label{eqn:Qz_Delta1}
\end{equation}
is no longer a homogeneous sum of pseudo-local operators, but instead a non-local \textit{quadratically extensive} operator with bi-local densities. Existence of such operators can be used e.g. to rigorously exclude a possibility of insulating transport behavior~\cite{Hubbard} in the linear regime. Curiously, \eqref{eqn:Qz_Delta1} also coincides with one of the generators of Yangian symmetry~\cite{MacKay05}.

We have to remark that we purposefully omitted some interesting details in this section. We are about to return to this subject shortly by giving a comprehensive explanation on the origin of pseudo-locality via self-contained derivation from the universal Yang-Baxter algebraic objects, hopefully demystifying puzzles which we have created along the way. In the meantime, let us finally discuss the implications of almost-commutativity.

\section[Almost-commuting property]{Almost-commuting property at high temperatures}
Below we justify a meaning of \textit{almost-conserved} pseudo-local charges at \textit{high temperatures} in Mazur inequality.
The subject of concern here could primarily be that boundary-supported residual terms might eventually propagate with time in the interior of the system thus completely destroying the time-invariance property of conservation laws. Surely, for any system of finite (but possibly large) size this essentially inevitably happens. However the reader should recall that we are interested exclusively in the thermodynamic limit.

We present the proof as outlined in the note of Prosen~\cite{ProsenNote} with extended commentary along the lines of~\cite{IP13},
relying entirely on two the LRE and separability of the tracial state.
Let us begin by a concise definition of almost-conservation:

\begin{defn}[Almost-conservation]
An operator $Q_{\Lambda_{n}}$ is almost-conserved if for any finite lattice $\Lambda_{n}$, i.e. it commutes with a Hamiltonian $\Lambda_{n}$
up to boundary terms which are supported at the boundaries of a chain,
\begin{equation}
[H_{\Lambda_{n}},Q_{\Lambda_{n}}]=B_{\partial_{n}},
\end{equation}
where $\partial_{n}\equiv [1,d_{b}]\cup [n-d_{b}+1,n]$, for a suitable boundary operator $B_{\partial_{n}}$,
\begin{equation}
B_{\partial_{n}}:=b_{1}-b_{n-d_{b}+1},
\end{equation}
given by some local operator $b\in \frak{A}_{[0,d_{b}-1]}$.
\end{defn}

Without loss of generality we may stick with the particular case of $Q_{[Z]\Lambda_{n}}$ (cf. definition \eqref{eqn:Q_definition}),
where $b=-2\ii \sigma^{z}$. The arguments provided below equally apply to boundary residuals of arbitrary finite support $b_{d}>1$, e.g. as in the case of local charges $Q_{[k]}$. Therefore, by assuming the initial condition $\tau_{t=0}(Q_{\Lambda_{n}})=Q_{\Lambda_{n}}$, the time evolution for \textit{any} system size $n$, as prescribed by time-automorphism \eqref{eqn:time_automorphism}, can be written as
\begin{equation}
\tau_{t}(Q_{[Z]\Lambda_{n}})=Q_{[Z]\Lambda_{n}}+2\int_{0}^{t}\dd s\;\tau_{s}(\sigma_{1}^{z}-\sigma_{n}^{z}).
\label{eqn:boundary_evolution}
\end{equation}
The latter result is valid (in the strong limit sense) also in $n\to \infty$ limit.
Consider subsequently the following finite-time averaged self-adjoint operator,
\begin{equation}
\boxed{A_{\Lambda_{n},t,\alpha}:=\frac{1}{\sqrt{n}}\frac{1}{t}\int_{0}^{t}\dd t^{\prime}\left(\tau_{t^{\prime}}(J_{\Lambda_{n}})-\alpha Q_{\Lambda_{n}}\right),}
\label{eqn:A_operator}
\end{equation}
where $\alpha \in \RaR$ is a \textit{free} optimization parameter that will be adjusted at the end. By virtue of positivity of the
tracial state we have
\begin{equation}
\omega(A^{2}_{\Lambda_{n},t,\alpha})\geq 0,\quad \forall t,\alpha \in \RaR,\;n \in \ZZ_{+},
\label{eqn:A_positive_norm}
\end{equation}
implying, after expanding \eqref{eqn:A_positive_norm} with use of \eqref{eqn:A_operator}, the following inequality,
\begin{align}
\label{eqn:first_term}
0&\leq \frac{1}{t^{2}}\int_{0}^{t}\dd t^{\prime}\int_{0}^{t}\dd t^{\prime \prime}
\frac{1}{n}\omega\left(\tau_{t^{\prime}}(J_{\Lambda_{n}})\tau_{t^{\prime \prime}}(J_{\Lambda_{n}})\right)\\
\label{eqn:second_term}
&-\frac{\alpha}{t}\int_{0}^{t}\dd t^{\prime}\frac{1}{n}
\left\{\omega\left(\tau_{t^{\prime}}(J_{\Lambda_{n}})Q_{[Z]\Lambda_{n}}\right)+
\omega\left(Q_{[Z]\Lambda_{n}}\tau_{t^{\prime}}(J_{\Lambda_{n}})\right)\right\}\\
\label{eqn:third_term}
&+\frac{\alpha^{2}}{n}\omega\left(Q^{2}_{[Z]\Lambda_{n}}\right).
\end{align}
We claim now that all three terms in the expression above exist (converge) in the $n\to~\infty$ limit.
The only potentially hazardous piece of the proof is to demonstrate that the middle terms \eqref{eqn:second_term}, namely the only dynamical part which involves contribution from the boundary leftovers, become thermodynamically equivalent to their \textit{time-independent} counterparts.

After accounting for time-translation invariance of state $\omega$ and shifting the time-evolution entirely to
$Q_{[Z]\Lambda}$ we replace the integrand of \eqref{eqn:second_term} by
\begin{equation}
\frac{1}{n}\left[\omega \left(J_{\Lambda_{n}}\tau_{-t}(Q_{[Z]\Lambda_{n}})\right)+
\omega \left(\tau_{-t}(Q_{[Z]\Lambda_{n}})J_{\Lambda_{n}} \right)\right].
\end{equation}
Since both terms can be handled on equal footing, we shall focus only on the first one. Our aim is to show that in the $n\to \infty$ we have
\begin{equation}
\lim_{\Lambda_{n}\to \ZZ}\frac{1}{n}\left|\omega \left(J_{\Lambda_{n}}\tau_{-t}(Q_{[Z]\Lambda_{n}})\right)-\omega \left(J_{\Lambda_{n}}Q_{[Z]\Lambda_{n}} \right)\right|=0.
\label{eqn:difference_estimate}
\end{equation}
By writing $b_{\rm{L}}\equiv 2\sigma^{z}_{1}$ and $b_{\rm{R}}\equiv -2\sigma^{z}_{n}$, we take advantage of (a) explicit time-evolution for the boundary
residuals \eqref{eqn:boundary_evolution}, (b) linearity of the state $\omega$ and (c) triangular inequality, in order to produce the estimate of the form
\begin{align}
\label{eqn:terms_to_bound}
\left|\omega \left(J_{\Lambda_{n}}\tau_{-t}(Q_{[Z]\Lambda_{n}})\right)-\omega \left(J_{\Lambda_{n}}Q_{[Z]\Lambda_{n}}\right)\right|
&\leq \int_{s=-t}^{0}\dd s \left|\omega \left(J_{\Lambda_{n}}\tau_{s}(b_{\rm{L}}+b_{\rm{R}})\right)\right|\\
&\leq \int_{s=-t}^{0}\dd s \left|\omega \left(J_{\Lambda_{n}}\tau_{s}(b_{\rm{L}}) \right)\right|+
\left|\omega \left(J_{\Lambda_{n}}\tau_{s}(b_{\rm{R}})\right)\right|.\nonumber
\end{align}
Both terms are qualitatively the same. After taking into account the extra $n^{-1}$ scaling in \eqref{eqn:difference_estimate},
all that remains is to to show that terms of the type \eqref{eqn:terms_to_bound} can always be bounded by $n$-independent constants
at arbitrary large times $s$. By focusing e.g. on the terms at the left boundary, namely
\begin{equation}
\left|\omega(J_{\Lambda_{n}}\tau_{s}(b_{\rm{L}}))\right|\leq \sum_{x=1}^{n-1}\left|\omega(j_{x}\tau_{s}(b_{\rm{L}}))\right|,
\label{eqn:CS_lightcone_terms}
\end{equation}
we take advantage of an effective light-cone property by freezing time integration variable $s$ and spatial index $x$ at which a current density
resides, and apply the LRE of the form \eqref{eqn:CS_BHV_form}. An elegant approach to formulate (see figure \ref{fig:LRE}) this idea is to consider a positive half-gap $\ell>0$,
\begin{equation}
\ell=\half \left(x-v|s|-1\right),
\end{equation}
which splits a chunk of a lattice between a causality cone of a boundary operator at time $s$ and a current density at position $x$
into half, and furthermore define a sublattice $\Gamma$,
\begin{equation}
\Gamma=[1,\lfloor v|s|+\ell \rfloor+1],
\end{equation}
to estimate individual terms in the sum on the right-hand side of the \eqref{eqn:CS_lightcone_terms} with use triangle inequality as
\begin{equation}
\left|\omega(j_{x}\tau_{s}(b_{\rm{L}}))\right|\leq \left|\omega(j_{x}(\tau_{s}(b_{\rm{L}}))_{\Gamma})\right|+
\left|\omega(j_{x})(\tau_{s}(b_{\rm{L}})-(\tau_{s}(b_{\rm{L}}))_{\Gamma})\right|.
\label{eqn:CS_split}
\end{equation}
The first term trivially vanishes,
\begin{equation}
|\omega(j_{x}(\tau_{s}(b_{\rm{L}}))_{\Gamma})|= |\omega(j_{x})\omega((\tau_{s}(b_{\rm{L}}))_{\Gamma})|,
\end{equation}
thanks to separability of the tracial state and vanishing trace of the current observable, $\omega(j_{x})=0$.
In the second term in the expression \eqref{eqn:CS_split}  we recognize a difference between the full time evolution of the boundary
operator $b_{\rm{L}}$ an its $\Gamma$-projected version, which can be furthermore estimated by making use of $\omega(A)\leq \|A\|$,
norm sub-multiplicativity $\|AB\|\leq \|A\|\|B\|$ the inequality \eqref{eqn:CS_BHV_form},
\begin{equation}
|\omega(j_{x}(\tau_{s}(b_{\rm{L}}))-(\tau_{s}(b_{\rm{L}}))_{\Gamma})| \leq \|j\|\cdot \|\tau_{s}(b_{\rm{L}}-(\tau_{s}(b_{\rm{L}}))_{\Gamma})\|
\leq C \exp{(-\mu\;\rm{max}(0,\ell))}.
\label{eqn:CS_BHV_bound}
\end{equation}

\begin{figure}[htb]
\centering
\includegraphics[width=0.8\textwidth]{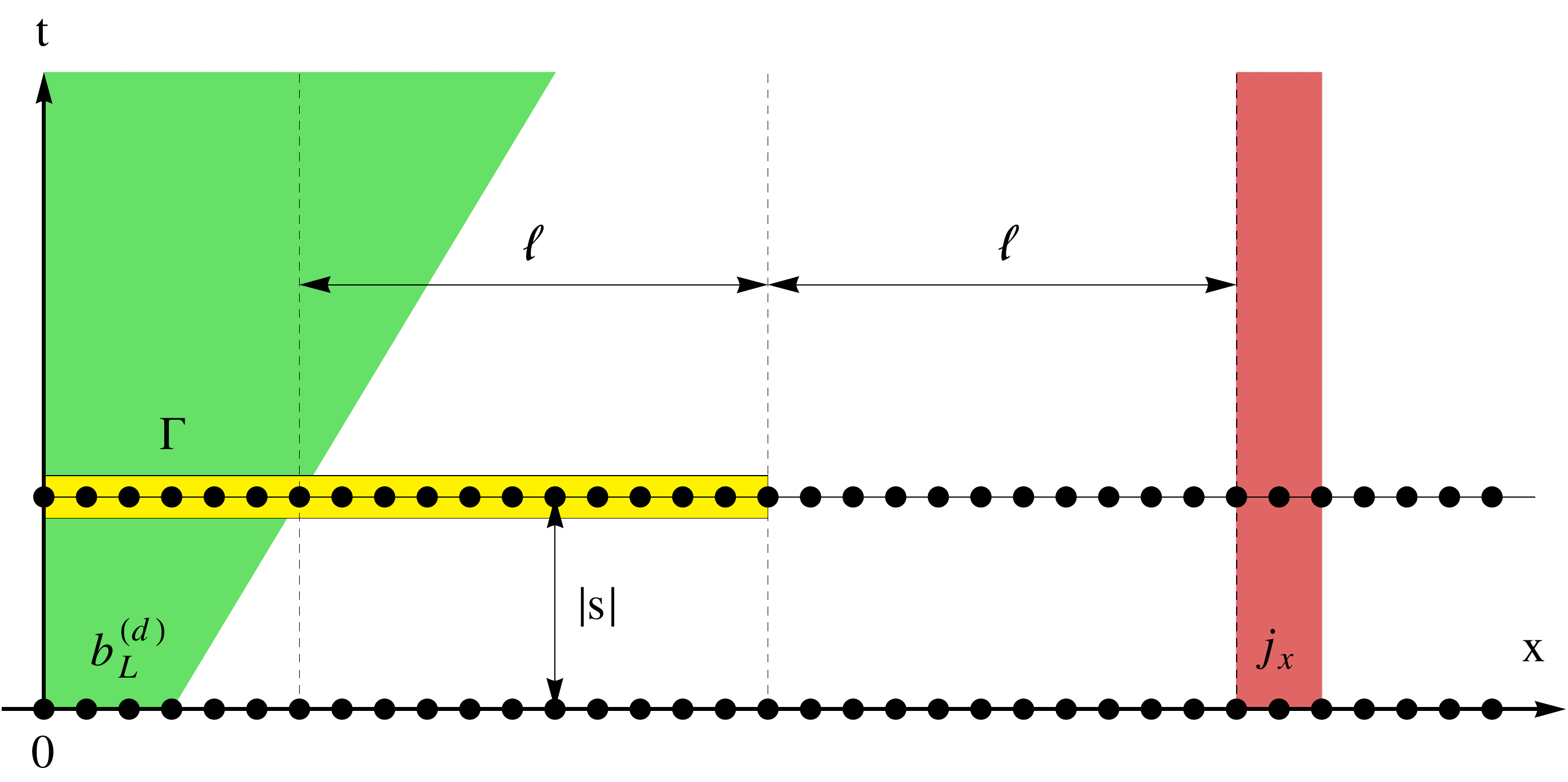}
\caption{Visualization of the Lieb-Robinson effective light-cone: the scheme represents estimation of a typical correlation function
$|\omega(j_{x}\tau_{s}(b^{(d)}_{\rm{L}}))|$, involving a residual term $b^{(d)}_{\rm{L}}$ at the left boundary with support $d$ at initial time $s=0$. At arbitrary later times $|s|$ a propagated operator $\tau_{s}(b^{(d)}_{\rm{L}})$ is exponentially localized within an effective light-cone emanating from its initial support and can be therefore approximated by projecting it on a sublattice $\Gamma$ with an error which is \textit{exponentially} small in the half-gap distance $\ell$ to the support of $j_{x}$.}
\label{fig:LRE}
\end{figure}

Importantly, the constant $C>0$ here only depends on local operator norms and dimensions, but neither on the parameters $s$ and $x$, nor the lattice size $n$. The role of $\rm{max}(0,\ell)$ is to include situations pertaining to $\ell<0$ cases, where the light-cone at time $s$ reaches beyond site $x$. Then one should simply take the constant $C$ large enough for to obey an elementary bound
\begin{equation}
|\omega(j_{x}\tau_{s}(b_{\rm{L}}))|\leq \|j_{x}\|\; \|b_{\rm{L}}\|.
\end{equation}
Ultimately, by summing over $x$ and using recently obtained estimate, we conclude
\begin{equation}
\sum_{x=1}^{n-1}|\omega(j_{x}\tau_{s}(b_{\rm{L}}))| \leq K(|s|)<\infty,
\end{equation}
whence we immediately conclude that
\begin{equation}
\left|\omega(J_{\Lambda_{n}}\tau_{-t}(Q_{[Z]\Lambda_{n}}))-\omega(J_{\Lambda_{n}}Q_{[Z]\Lambda_{n}})\right| \leq K^{\prime}|s|+K^{\prime \prime}s^{2},
\end{equation}
where constants $K^{\prime},K^{\prime \prime}$ do not depend on $s$ and neither on the system size $n$.
Hence, expression \eqref{eqn:difference_estimate} which represented the tricky part of the proof was settled.

To complete the proof it remains to show that the $2D$ time integration from \eqref{eqn:first_term} converges as $n\to \infty$.
Taking again advantage of time-translation invariance of the tracial state $\omega$, we may rewrite it -- in the thermodynamic limit at finite time $t$ -- as a homogeneous sum of the current \textit{density} spatial autocorrelation function, i.e.
\begin{equation}
C(t):=\lim_{n\to \infty}\frac{1}{n}\omega \left(J_{\Lambda_{n}}\tau_{t}(J_{\lambda_{n}})\right)=\sum_{x\in \ZZ}\omega(j_{x}\tau_{t}(j_{0})).
\end{equation}
Furthermore, by taking the time-asymptotic limit $t\to \infty$ we \textit{define}\footnote{At $\beta=0$ there is no need of
adopting manifestly real symmetrized version \eqref{eqn:CS_thermal_Drude}.} the high-temperature Drude weight $D^{\rm{th}}_{0}$ as
\begin{equation}
D^{\rm{th}}_{0}=\lim_{t\to \infty}\frac{\beta}{2t}\int_{0}^{t}\dd t^{\prime}C(t^{\prime}).
\end{equation}
Finally, performing optimization with respect to a free parameter $\alpha$, we obtain the bound
\begin{equation}
\boxed{D^{\rm{th}}_{0}=
\lim_{n\to \infty}\frac{\beta}{2n}\frac{\omega \left(J_{\Lambda_{n}}Q_{[Z]\Lambda_{n}}\right)}{\omega \left(Q^{2}_{[Z]\Lambda_{n}}\right)}.}
\end{equation}

To put in words, the key point for establishing a Mazur-type inequality which accommodates also almost-conserved operators with boundary residual terms is to resort on an effective causality -- in contrast to relativistic theories where such a causality is strict -- of the Heisenberg evolution in quantum lattices, enabling to us keep harmful boundary-localized terms originating from a violation of a conservation property ``under control'' throughout the evolution. In particular, at any finite time a pollution caused by correlations escaping out of a light-cone regime is exponentially suppressed and consequently becomes insignificant in the thermodynamic limit. This is what one would intuitively expect, after all. To remind the reader once again, note that the proper order of $n\to \infty$ and $t\to \infty$ limits was \textit{absolutely} crucial.

\section{Towards arbitrary temperatures}
By trying to repeat the above reasoning at finite temperatures $\beta>0$ we painfully hit at a conceptual barrier.
An obstacle lies in non-separability of the Gibbs state $\omega_{\beta}$. Thermal correlations $\omega_{\beta}(fg)$
of two strictly local observables $f,g$ \textit{do not factorize} as in the $\beta=0$ case.
One possible ad-hoc proposal to control the amount of inseparability could be to seek for a spatial version of the mixing property with respect to equilibrium states. Indeed such a theorem, imposing locality bounds on spatial correlation functions in one-dimensional quantum lattices, does exist and appears in the literature under the name of \textit{exponential clustering theorem} (ECT). The ECT states (cf. references~\cite{NS06}) that for any pair of local observables $f\in \frak{A}_{[-d_{f},-1]}$ and $g\in \frak{A}_{[0,d_{g}-1]}$ ($d_{f},d_{g}\in \ZZ_{+}$) and displacement coordinate $x\in \ZZ_{+}$ the following estimate holds,
\begin{equation}
|\omega_{\beta}(f\eta_{x}(g))-\omega_{\beta}(f)\omega_{\beta}(g)|\leq \kappa \|f\|\;\|g\|\;\exp{(-\rho x)},
\end{equation}
for some constants $\kappa,\rho>0$, independent of $x$. By facilitating ECT in conjunction with LRE, one can repeat the strategy of
bounding thermal dynamical correlations as depicted in figure \ref{fig:LRE} even at $\beta>0$. The entire proof for a generalized
Mazur inequality at finite temperatures is presented in~\cite{IP13}. Quite miserably, in spite formal correctness of the proof,
a fatal flaw in the assumptions took place, Namely we have required from almost-conserved charges to have exponentially decaying
densities $q^{(d)}$ with respect to \textit{operator norm}, i.e.
\begin{equation}
\|q^{(d)}\|\leq \zeta \exp{(-\xi d)},
\label{eqn:quasi-locality}
\end{equation}
In a more standard nomenclature condition \eqref{eqn:quasi-locality} shall rather be referred to as the \textit{quasi-locality}.
Clearly, the transfer matrix technique which has been outlined earlier is only appropriate for calculations of the weighted Hilbert-Schmidt norm, meaning that at the moment we are still lacking any objective evidence in regard to the validity of property \eqref{eqn:quasi-locality}.
Although $Q_{[Z]}$ could still in principle be an extensive observable with quasi-local densities,
some preliminary checks disfavour such possibility, indicating that $\|q^{(d)}\|$ tends to a constant\footnote{The author thanks prof. T. Prosen for
privately communicating this subtle yet very essential hypothesis.} with increasing $d$. By presuming that $Q_{[Z]}$ do not respect quasi-locality,
a whole philosophy of justifying Mazur-type estimate at arbitrary temperatures eventually becomes faulty and calls for a different program.
We want to stress at the end, by virtue of elementary inequality
\begin{equation}
\omega_{\beta}((q^{(d)})^{2})\leq \|(q^{(d)})^{2}\|=\|q^{(d)}\|^{2},
\end{equation}
the quasi-locality is a stronger condition then pseudo-locality, as the former implies the latter one.

\section{Continuous family of pseudo-local charges}
In attempt to vindicate the origin of \textit{pseudo-local} almost-conserved charges we make a journey back into realm of the
Yang-Baxter equation. Quite impressively, we learn that pseudo-locality emerges as a consequence of taking derivative of \textit{boundary-contracted} universal quantum monodromy operators in the direction of continuous \textit{spin} parameter $s$ (at $s=0$ value of the non-compact spin which acts as a ``decoupling'' point), in contradistinction to local charges which can be extracted by taking logarithmic derivatives in the direction of the spectral parameter. This outstanding result provides us with a continuous family of almost-conserved charges of pseudo-local structure, which includes as a particular case also the previously defined $Z$-operator, and allows for an integral-form extension of Mazur-type inequality.

The construction presented below is expected to work with to any fundamental integrable model. As customary though, we limit our calculations by working out the $\frak{sl}_{2}$ case only. In subsequent derivations we depart from the framework of operator algebras and instead employ an alternative braid-group-oriented notation in the context of the quantum integrability theory that has been in use throughout chapter \ref{sec:exterior}.\\

At the beginning we consider the universal YBE imposed on \textit{generic} triple-product space
$\frak{S}_{s_{1}}\otimes \frak{S}_{s_{2}}\otimes \frak{S}_{s_{3}}$,
with each factor $\frak{S}_{s_{j}}$ being an $\mathcal{U}_{q}(\frak{sl}_{2})$ \textit{Verma module} characterized by its own continuous spin parameter $s_{j}\in \CC$, $j\in\{1,2,3\}$.
We adopt the following \textit{highest-weight} parametrization of the $\mathcal{U}_{q}(\frak{sl}_{2})$ deformed spin generators,
\begin{align}
\label{eqn:hw_spins}
\bb{s}^{z}_{s}&=\sum_{k=0}^{\infty}(s-k)\ket{k}\bra{k},\nonumber \\
\bb{s}^{+}_{s}&=\sum_{k=0}^{\infty}[k+1]_{q}\ket{k}\bra{k+1},\nonumber \\
\bb{s}^{-}_{s}&=\sum_{k=0}^{\infty}[2s-k]_{q}\ket{k+1}\bra{k},
\end{align}
operating in $\frak{S}_{s}$ spanned by semi-infinite tower of orthonormal basis states $\{\ket{k};k\in \ZZ_{+}\}$, satisfying standard
$q$-deformed algebraic relations
\begin{equation}
[\bb{s}^{+}_{s},\bb{s}^{-}_{s}]=[2\bb{s}^{z}_{s}]_{q},\quad [\bb{s}^{z}_{s},\bb{s}^{\pm}_{s}]=\pm \bb{s}^{\pm}_{s}.
\end{equation}
With this convention, a module $\frak{S}_{s}$ becomes reducible to $(2s+1)$-dimensional sub-modules if and only if $s\in \half \ZZ_{+}$. Since pseudo-locality can only set in for the trigonometric deformation of the algebra, we set $q=\exp{(\ii \gamma)}$ (for $\gamma \in \RaR$), where the model anisotropy reads $\Delta=\cos{\gamma}$.

We continue by specifying a fundamental trigonometric $\mathcal{U}_{q}(\frak{sl}_{2})$-invariant Lax operator carrying generic
auxiliary spin generators, i.e. $\bb{L}(\varphi,s)\in \End(\frak{S}_{f}\otimes \frak{S}_{s})$, in the form of
\begin{equation}
\bb{L}(\varphi,s)=
\begin{pmatrix}
\sin{(\varphi +\gamma \bb{s}^{z}_{s})} & (\sin{\gamma})\bb{s}^{-}_{s} \cr
(\sin{\gamma})\bb{s}^{+}_{s} & \sin{(\varphi -\gamma \bb{s}^{z}_{s})}
\end{pmatrix},
\end{equation}
with $\varphi\in \CC$ designating the spectral parameter. The Lax operator $\bb{L}(\varphi,s)$ obeys the RLL relation,
\begin{equation}
\PR(\delta)(\bb{L}(\varphi+\delta,s)\otimes_{a}\bb{L}(\varphi,s))=(\bb{L}(\varphi,s)\otimes_{a}\bb{L}(\varphi+\delta,s))\PR(\delta),\quad \delta \in \CC,
\label{eqn:YBE_delta}
\end{equation}
where the fundamental trigonometric braided $R$-matrix, $\PR(\varphi)\in \End(\frak{S}_{f}\otimes \frak{S}_{f})$ takes explicit form
\begin{equation}
\PR(\varphi)=\frac{1}{\gamma}
\begin{pmatrix}
[\varphi+1]_{q} & & & \cr
& 1 & [\varphi]_{q} & \cr
& [\varphi]_{q} & 1 & \cr
& & & [\varphi+1]_{q}
\end{pmatrix}.
\end{equation}
We have argued earlier in section~\ref{sec:exterior} where the exterior $R$-matrix has been constructed, that the \textit{vacuum expectation} of the highest-weight monodromy matrices -- with right and left vacua being represented just by the tensor product of two highest-weight states,
$\kket{0}{0}$ and $\bbra{0}{0}$, respectively -- defines a continuous family of \textit{commuting} transfer operators,
\begin{equation}
W_{n}(\varphi,s)=\lvac \bb{L}(\varphi,s)^{\otimes_{s}n}\rvac,\quad [W_{n}(\varphi,s),W_{n}(\varphi^{\prime},s^{\prime})]=0,
\label{eqn:W_transfer}
\end{equation}
for every pair of representation parameters $s,s^{\prime}\in \CC$ and spectral parameters $\varphi,\varphi^{\prime}\in \CC$. We clarify the validity of \eqref{eqn:W_transfer} for the $\frak{sl}_{2}$-invariant objects in appendix~\ref{sec:App_universal}.
% \footnote{A pedantic reader might object that we have not rigorously justified commutativity at the
% resonant half-integer values of spin parameters at which Verma modules become reducible. Such cases nevertheless belong to quantized
% representations for compact spins (i.e. solutions of the YBE pertaining to finite dimensional representations) which have been understood earlier
% than their non-compact counterparts.}
% It is worth remarking at this point that the two special instances of $W_{n}(\varphi,s)$, namely at $\varphi=0$ which yields canonical
% $\frak{sl}_{2}$-symmetric Lax operator or alternatively at $\varphi=\pi/2$ for the gauge-equivalent Lax matrix [ref], have been facilitated
% for algebraic construction of out-of-equilibrium steady density matrix of anisotropic Heisenberg spin-$1/2$ model with maximally-polarizing
% incoherent processes attached at its boundaries (cf. references~\cite{PRL}).

Starting by expanding the RLL relation \eqref{eqn:YBE_delta} up to first order in $\delta$ we immediately acquire the Sutherland equation,
\begin{equation}
[h,\bb{L}(\varphi,s)\otimes_{a}\bb{L}(\varphi,s)]=2\sin{\gamma}(\bb{L}(\varphi,s)\otimes_{a} \bb{L}_{\varphi}(\varphi,s)-
\bb{L}_{\varphi}(\varphi,s)\otimes_{a}\bb{L}(\varphi,s)),
\label{eqn:Sutherland_phi}
\end{equation}
with
\begin{equation}
\bb{L}_{\varphi}(\varphi,s)\equiv \partial_{\varphi}\bb{L}(\varphi,s)=\cos{\varphi} \cos{(\gamma \bb{s}^{z}_{s})}\otimes \sigma^{0}-
\sin{\varphi} \sin{(\gamma \bb{s}^{z}_{s})}\otimes \sigma^{z}.
\end{equation}
The way to proceed is to evaluate the $s$-derivative of the Lax operator $\bb{L}(\varphi,s)$.

\begin{lem}[Modified Lax operator]
\label{lem:modified}
Let us modify an auxiliary space $\frak{S}_{s}$ by ``splitting'' the vacuum $\ket{0}$ into a pair of states $\ket{\rm{L}}$ and $\ket{\rm{R}}$ and introducing
a tilde space $\widetilde{\frak{S}}_{s}$ spanned by orthonormal basis $\{\ket{\rm{L}},\ket{\rm{R}},\ket{1},\ket{2},\ldots\}$.
Furthermore, let us define the projected spin-$0$ generators and introduce `tilde' variables $\tilde{\bb{s}}^{\alpha}=\bb{s}^{\alpha}_{0}$, obtained from \eqref{eqn:hw_spins} (by using a running index $k$ starting from $1$) evaluated at $s=0$ by chopping off the vacuum state $\ket{0}$. The modified one-parametric Lax operator assumes the form
\begin{equation}
\widetilde{\bb{L}}(\varphi)=\sum_{\alpha=\{0,\pm,z\}}\widetilde{\bb{L}}^{\alpha}(\varphi)\otimes \sigma^{\alpha},
\end{equation}
with components
\begin{align}
\widetilde{\bb{L}}^{0}(\varphi)&=\ket{\rm{L}}\bra{\rm{L}}+\ket{\rm{R}}\bra{\rm{R}}+\cos{(\gamma \tilde{\bb{s}}^{z})},\nonumber \\
\widetilde{\bb{L}}^{z}(\varphi)&=\cot{\varphi}\sin{(\gamma \tilde{\bb{s}}^{z})},\nonumber \\
\widetilde{\bb{L}}^{+}(\varphi)&=\ket{1}\bra{\rm{R}}+\frac{\sin{\gamma}}{\sin{\varphi}}\tilde{\bb{s}}^{-},\nonumber \\
\widetilde{\bb{L}}^{-}(\varphi)&=\ket{\rm{L}}\bra{1}+\frac{\sin{\gamma}}{\sin{\varphi}}\tilde{\bb{s}}^{+}.
\end{align}
Finally, we define the modified $Z$-operator as the highest-weight transfer operator with the modified Lax operator $\widetilde{\bb{L}}(\varphi)$,
\begin{equation}
Z_{n}(\varphi)=\bra{\rm{L}}\widetilde{\bb{L}}(\varphi)^{\otimes_{s}n}\ket{\rm{R}},
\end{equation}
which follows from the normalized $s$-derivative of $W_{n}(\varphi,s)$ taken at $s=0$,
\begin{equation}
\frac{1}{(\sin{\varphi})^{n}}\partial_{s}W_{n}(\varphi,s)|_{s=0}=\frac{2\gamma \sin{\gamma}}{(\sin{\varphi})^{2}}Z_{n}(\varphi)+
\gamma \cot{\varphi}M_{n},
\label{eqn:s-derivative}
\end{equation}
writing total magnetization operator $M_{n}=\sum_{x=1}^{n}\one_{2^{n-1}}\otimes \sigma^{z}\otimes \one_{2^{n-x}}$.
\end{lem}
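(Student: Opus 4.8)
\textbf{Proof proposal for Lemma \ref{lem:modified}.}

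The plan is to compute the $s$-derivative of the highest-weight transfer operator $W_{n}(\varphi,s)$ directly from its definition \eqref{eqn:W_transfer}, using the Leibniz rule over the $n$-fold tensor product of Lax operators. Differentiating $W_{n}(\varphi,s)=\lvac \bb{L}(\varphi,s)^{\otimes_{s}n}\rvac$ with respect to $s$ produces a telescoping-type sum of $n$ terms, each containing one factor $\partial_{s}\bb{L}(\varphi,s)$ sandwiched between strings of unmodified Lax operators, all contracted between the highest-weight vacua. The first task is therefore to evaluate $\partial_{s}\bb{L}(\varphi,s)$ at $s=0$. From the explicit form of the Lax matrix, only the entries carry $s$-dependence through $\bb{s}^{z}_{s}$, $\bb{s}^{+}_{s}$ and $\bb{s}^{-}_{s}$; the diagonal entries $\sin(\varphi\pm\gamma\bb{s}^{z}_{s})$ and the off-diagonal entries $(\sin\gamma)\bb{s}^{\pm}_{s}$ are differentiated in the highest-weight basis \eqref{eqn:hw_spins}, where $\partial_{s}\bb{s}^{z}_{s}=\sum_{k}\ket{k}\bra{k}=\one_{a}$, $\partial_{s}\bb{s}^{+}_{s}=0$, and $\partial_{s}\bb{s}^{-}_{s}=\sum_{k}(\partial_{s}[2s-k]_{q})\ket{k+1}\bra{k}$. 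At $s=0$ the lowering entry becomes $[-k]_{q}$, which vanishes on $\ket{0}$ and, crucially, the state $\ket{0}$ decouples: this is the ``decoupling point'' referred to in the preamble, and it is what forces the vacuum to split into the pair $\ket{\rm{L}},\ket{\rm{R}}$ in the tilde space.

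Next I would carefully track how the highest-weight contraction acts on the differentiated string. Because $\bb{L}(\varphi,0)$ at the decoupling point leaves $\ket{0}$ invariant in the sense that $\bb{L}^{0}(\varphi,0)\ket{0}=\sin\varphi\ket{0}$ while the lowering and raising pieces either annihilate $\ket{0}$ or move it out of the vacuum sector, each of the $n$ terms in the Leibniz expansion collapses: the unmodified Lax operators to the left and right of the $\partial_{s}\bb{L}$ factor act as the scalar $(\sin\varphi)$ on the vacuum up to the point where the derivative sits. What survives is precisely a single insertion of the ``tilde'' Lax block at one site, with the understanding that the left vacuum has become $\ket{\rm{L}}$ and the right vacuum $\ket{\rm{R}}$, while the bulk reassembles into $\widetilde{\bb{L}}(\varphi)^{\otimes_{s}n}$. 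Collecting the overall powers of $\sin\varphi$ and the factor $2\gamma\sin\gamma$ coming from $\partial_{s}$ of the trigonometric entries (and from the normalization of the spin generators) gives the $Z_{n}(\varphi)$ contribution. A separate, simpler contribution arises from differentiating the diagonal $\sin(\varphi\pm\gamma\bb{s}^{z}_{s})$ entries, whose $s$-derivative at $s=0$ contributes $\pm\gamma\cos\varphi$ times the identity on the auxiliary side; summed over the $n$ sites this assembles into $\gamma\cot\varphi\,M_{n}$ after dividing by $(\sin\varphi)^{n}$, with $M_{n}=\sum_{x}\sigma^{z}_{x}$. This yields \eqref{eqn:s-derivative}.

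Finally I would verify that the components $\widetilde{\bb{L}}^{0},\widetilde{\bb{L}}^{z},\widetilde{\bb{L}}^{\pm}$ displayed in the statement are exactly what emerges from the above bookkeeping: the $\ket{\rm{L}}\bra{\rm{L}}+\ket{\rm{R}}\bra{\rm{R}}$ piece in $\widetilde{\bb{L}}^{0}$ records the two ``boundary'' remnants of the split vacuum, while the $\cos(\gamma\tilde{\bb{s}}^{z})$ and $\sin(\gamma\tilde{\bb{s}}^{z})/\text{stuff}$ pieces come from re-expanding $\sin(\varphi\pm\gamma\bb{s}^{z})$ around the derivative insertion, and the $\ket{1}\bra{\rm{R}}$, $\ket{\rm{L}}\bra{1}$ terms in $\widetilde{\bb{L}}^{\pm}$ encode the single ``hop'' into the bulk that the differentiated lowering/raising operator produces at the decoupling point. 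The main obstacle I anticipate is precisely this last step: disentangling the combinatorics of the Leibniz expansion at $s=0$ so that all $n$ terms coherently glue into a \emph{single} modified transfer operator rather than a sum of defect-insertions — this requires exploiting the fact that $\bb{L}(\varphi,0)$ acts as a pure scalar $\sin\varphi$ on the highest-weight line, together with the precise way $\tilde{\bb{s}}^{\pm}$ connect $\ket{1}$ to the boundary states, and keeping the trigonometric prefactors ($2\gamma\sin\gamma$ versus $\gamma\cot\varphi$) straight. The commuting property of the $W_{n}$ family, already established earlier, is not needed for this lemma but reassures us that $Z_{n}(\varphi)$ inherits good algebraic structure.
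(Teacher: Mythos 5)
Your proposal is correct and follows essentially the same route as the paper's proof: a Leibniz-rule expansion of $\partial_{s}W_{n}(\varphi,s)|_{s=0}$, the observation that at $s=0$ the vacuum decouples so that only two contributions survive — the derivative of the vacuum self-connection $\bra{0}\bb{s}^{z}_{s}\ket{0}$ (producing $\gamma\cot{\varphi}\,M_{n}$) and the derivative of the otherwise-disabled exit amplitude $\bra{1}\bb{s}^{-}_{s}\ket{0}=[2s]_{q}$ (producing $Z_{n}(\varphi)$) — together with the splitting of the vacuum into $\ket{\rm{L}},\ket{\rm{R}}$ to convert the inhomogeneous sum over defect positions into a homogeneous MPS. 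The only phrase to tighten is ``a single insertion of the tilde Lax block at one site'': the excursion initiated by the derivative extends over all sites between the enabled hop and the (always-allowed) return to the vacuum, and it is the sum over insertion positions that reassembles into $\bra{\rm{L}}\widetilde{\bb{L}}(\varphi)^{\otimes_{s}n}\ket{\rm{R}}$, exactly as your closing paragraph in fact describes.
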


\begin{proof}
The proof is centered on the observation that, by applying Leibniz rule, at $s=0$ the transition $\ket{0}\ra \ket{0}$ in the $s$-derivative of the transfer matrix $W_{n}(\varphi,s)$, i.e.
$\partial_{s}\bra{0}\widetilde{\bb{L}}(\varphi,s)^{\otimes_{s}n}\ket{0}$, is only possible under two circumstances: (i) via sequence of states $\ket{1},\ket{2},\dots$ when the $s$-derivative acts on the amplitude $\bra{1}\bb{s}^{-}_{s}\ket{0}$ which is otherwise ``disabled'' at value $s=0$, or (ii) when the derivative acts on the vacuum ``self-connection'' $\bra{0}\bb{s}^{z}_{s}\ket{0}$. The latter situation is evidently responsible for the extra magnetization term in \eqref{eqn:s-derivative}, since
the auxiliary process never leaves the vacuum, whereas in the former case one precisely produces $Z_{n}(\varphi)$. Bare in mind that splitting the vacuum state is required in order to replace an inhomogeneous MPS pertaining to $Z_{n}(\varphi)$ with a single defect operator (representing the operation of $s$-derivative) with a \textit{homogeneous} MPS. Namely, the state $\ket{\rm{R}}$ now decouples from the auxiliary process immediately after the transition to the first excited state $\ket{1}$ is made.
It should also be apparent, accounting for
\begin{equation}
[W_{n}(\varphi,s),M_{n}]=[W_{n}(\varphi,s),H]=0,
\end{equation}
that commutativity property persists for the $Z_{n}(\varphi)$ as well,
\begin{equation}
[Z_{n}(\varphi),Z_{n}(\varphi^{\prime})]=0.
\end{equation}
We continue by noticing that $Z_{n}(\varphi)$ admits an expansion in terms of \textit{local} $r$-point densities,
\begin{equation}
Z_{n}(\varphi)=\sum_{r=2}^{n}\sum_{x=0}^{n-r}\one_{2^x}\otimes q_{r}\otimes \one_{2^{n-r-x}},\quad q_{r}\in \End(\frak{S}_{f}^{\otimes r}).
\end{equation}
This follows from the boundary conditions
\begin{equation}
\bra{\rm{L}}\widetilde{\bb{L}}^{\alpha}=\delta_{\alpha,0}\bra{\rm{L}}+\delta_{\alpha,-}\bra{1},\quad
\widetilde{\bb{L}}^{\alpha}\ket{\rm{R}}=\delta_{\alpha,0}\ket{\rm{R}}+\delta_{\alpha,+}\ket{1}.
\end{equation}
To obtain MPS representation for individual $r$-point densities we simply take
\begin{equation}
q_{r}=\sum_{\alpha_{2},\ldots,\alpha_{r-1}}
\bra{\rm{L}}\widetilde{\bb{L}}^{-}\widetilde{\bb{L}}^{\alpha_{2}}\cdots \widetilde{\bb{L}}^{+}\ket{\rm{R}}
\sigma^{-}\otimes \sigma^{\alpha_{2}}\otimes \cdots \otimes \sigma^{+}.
\end{equation}
In the $n\to \infty$ limit the operator $Z(\varphi)=\lim_{n\to \infty}Z_{n}(\varphi)$ becomes \textit{pseudo-local} almost-conserved charge, i.e.
\begin{equation}
\omega(q^{2}_{r})\leq \zeta \exp{(-\xi r)},\qquad \zeta,\xi>0.
\end{equation}
Any operator from the sequence $Z_{n}(\varphi)$ is almost-commuting with $H_{n}$,
\begin{equation}
[H_{n},Z_{n}(\varphi)]=\sum_{r=1}^{n}(b_{r}\otimes \one_{2^{n-r}}-\one_{2^{n-r}}\otimes b_{r}),
\end{equation}
with pseudo-local boundary-localized residuals $b_{r}\in \End(\frak{S}_{f}^{\otimes r})$,
\begin{equation}
\omega(b^{2}_{r})\leq \zeta^{\prime}\exp{(-\xi^{\prime}r)},\qquad \zeta^{\prime},\xi^{\prime}>0.
\end{equation}
\end{proof}

\begin{theorem}[Holomorphic family of pseudo-local charges]
For a dense set of `resonant' anisotropies, given by $\gamma=\pi(l/m)$, $l,m\in \ZZ_{+}$, $l\leq m$, the operators $Z(\varphi)$ are pseudo-local almost-conserved for all $\varphi\in \cal{D}_{m}\subset \CC$, with an open vertical strip $\cal{D}_{m}=\{\varphi;|\Re{\varphi}-\frac{\pi}{2}|<\frac{\pi}{2m}\}$ of width $\pi/m$ centred at $\varphi_{0}=\pi/2$. The operator $Z(\varphi)$ is holomorphic on $\cal{D}_{m}$.
\end{theorem}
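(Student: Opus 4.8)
The plan is to reduce the claim, at resonant anisotropies $\gamma=\pi(l/m)$, to a finite-dimensional transfer-matrix estimate together with a routine holomorphy argument. First I would invoke the reducibility noted in Section~\ref{sec:Verma}: at $\gamma=\pi(l/m)$ the $q$-number $[m]_{q}$ vanishes, so the bulk tower $\{\ket{1},\ket{2},\ldots\}$ of the modified auxiliary space $\widetilde{\frak{S}}$ of Lemma~\ref{lem:modified} closes into an invariant subspace of dimension $m-1$; hence all four components $\widetilde{\bb{L}}^{\alpha}(\varphi)$ act on a space of finite dimension $m+1$ (the $(m-1)$-dimensional bulk plus the two split vacua $\ket{\rm L},\ket{\rm R}$). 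Then I would recall from Lemma~\ref{lem:modified} that the $r$-point density of $Z_{n}(\varphi)$ is $q_{r}(\varphi)=\sum_{\alpha_{2},\ldots,\alpha_{r-1}}\bra{\rm L}\widetilde{\bb{L}}^{-}\widetilde{\bb{L}}^{\alpha_{2}}\cdots\widetilde{\bb{L}}^{+}\ket{\rm R}\,\sigma^{-}\otimes\sigma^{\alpha_{2}}\otimes\cdots\otimes\sigma^{+}$, and that $\bra{\rm L}\widetilde{\bb{L}}^{-}=\bra{1}$ and $\widetilde{\bb{L}}^{+}\ket{\rm R}=\ket{1}$, so every such contraction lives entirely inside the bulk block, decoupled from the protected unit eigenvalue of the full transfer operator.

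The core step is the pseudo-locality bound. Using separability of the tracial state $\omega$ and the overlaps $\omega(\sigma^{0}\sigma^{0})=\omega(\sigma^{z}\sigma^{z})=1$, $\omega(\sigma^{\pm}\sigma^{\mp})=\half$, contracting the physical legs of $\omega\!\big(q_{r}(\varphi)q_{r}(\varphi)^{*}\big)$ converts it into $\tfrac{1}{4}\,\bra{1,1}\big(\bb{T}^{(Z)}(\varphi)\big)^{r-2}\ket{1,1}$ for the doubled transfer operator (the $\varphi$-dependent analogue of the object in Section~\ref{sec:Heisenberg_observables})
\[
\bb{T}^{(Z)}(\varphi)=\widetilde{\bb{L}}^{0}(\varphi)\otimes\ol{\widetilde{\bb{L}}^{0}(\varphi)}+\widetilde{\bb{L}}^{z}(\varphi)\otimes\ol{\widetilde{\bb{L}}^{z}(\varphi)}+\half\big(\widetilde{\bb{L}}^{+}(\varphi)\otimes\ol{\widetilde{\bb{L}}^{-}(\varphi)}+\widetilde{\bb{L}}^{-}(\varphi)\otimes\ol{\widetilde{\bb{L}}^{+}(\varphi)}\big),
\]
restricted to the $(m-1)^{2}$-dimensional bulk block $\frak{K}$. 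The theorem then follows once one proves that on the strip $\cal{D}_{m}=\{\,|\Re\varphi-\tfrac{\pi}{2}|<\tfrac{\pi}{2m}\,\}$ this block has spectral radius $\rho(\varphi)<1$, with $\rho(\varphi)\to1$ as $\varphi\to\partial\cal{D}_{m}$: then $\omega(q_{r}^{2})\le\zeta\exp(-\xi r)$ with $\xi=-\log\rho(\varphi)$ and $\zeta,\xi$ locally uniform on $\cal{D}_{m}$, which is exactly Definition~\ref{def:pseudo-locality}. I would establish the spectral estimate by using the residual $U(1)$ (ice-rule) symmetry $[\bb{T}^{(Z)}(\varphi),\tilde{\bb{s}}^{z}\otimes\one-\one\otimes\ol{\tilde{\bb{s}}}^{z}]=0$ to block-diagonalise $\bb{T}^{(Z)}(\varphi)$ into small blocks, then computing or bounding their eigenvalues, tracking how the poles of $\cot\varphi$, of $\sin\gamma/\sin\varphi$ and the rotated arguments $\varphi\pm k\gamma$ for $0\le k\le m-1$ govern the modulus; the width $\pi/m$ is precisely the maximal symmetric strip about $\varphi_{0}=\pi/2$ on which all these combinations avoid the lattice $\pi\ZZ$ of bad points and the subleading modulus stays strictly below one. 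Making this eigenvalue analysis uniform over $\cal{D}_{m}$ is the main obstacle.

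Granting the bound, holomorphy is immediate: on $\cal{D}_{m}$ the entries of $\widetilde{\bb{L}}(\varphi)$ — built from $\cos(\gamma\tilde{\bb{s}}^{z})$, $\cot\varphi\,\sin(\gamma\tilde{\bb{s}}^{z})$ and $(\sin\gamma/\sin\varphi)\tilde{\bb{s}}^{\pm}$ — are holomorphic, the poles at $\varphi\in\pi\ZZ$ being excluded, so each density $q_{r}(\varphi)$ is a holomorphic $\End(\frak{S}_{f}^{\otimes r})$-valued function, being a finite matrix-product expression; since $\sum_{r}\sup_{K}\|q_{r}(\varphi)\|<\infty$ on every compact $K\subset\cal{D}_{m}$ (the same exponential bound, now in operator norm on the finite block $\frak{K}$), the formal sum $Z(\varphi)=\sum_{r}Q^{(r)}(\varphi)$ defines a holomorphic family of pseudo-local operators, i.e. $\varphi\mapsto\omega\big(a^{*}Z(\varphi)\big)$ is holomorphic on $\cal{D}_{m}$ for every local $a$. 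Finally, almost-conservation of each $Z_{n}(\varphi)$ is already supplied by Lemma~\ref{lem:modified}, $[H_{n},Z_{n}(\varphi)]=\sum_{r}(b_{r}\otimes\one-\one\otimes b_{r})$, and the identical transfer-matrix computation applied to the boundary residuals yields $\omega(b_{r}^{2})\le\zeta'\exp(-\xi'r)$ uniformly on $\cal{D}_{m}$; in the limit $n\to\infty$ the residuals stay boundary-localised, so $Z(\varphi)$ is an almost-conserved pseudo-local charge for every $\varphi\in\cal{D}_{m}$. Density of the resonant set $\{\pi(l/m)\}$ in $(0,\pi)$ is elementary.
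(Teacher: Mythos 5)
Your overall architecture — reduce to a finite-dimensional auxiliary transfer matrix at resonant $\gamma=\pi(l/m)$, show the bulk block is a strict contraction on $\cal{D}_{m}$, and read off exponential decay of the densities — is exactly the paper's route. But you have left the one genuinely hard step unproven, and your stated heuristic for it points in the wrong direction. You write that the width $\pi/m$ is ``precisely the maximal symmetric strip \ldots on which all these combinations avoid the lattice $\pi\ZZ$ of bad points,'' and you concede that ``making this eigenvalue analysis uniform over $\cal{D}_{m}$ is the main obstacle.'' Avoiding the poles of $\cot\varphi$ and $1/\sin\varphi$ at $\pi\ZZ$ would permit a strip of half-width nearly $\pi/2$, not $\pi/(2m)$; that is not where the constraint comes from. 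The paper's mechanism is structural: with $\Re\varphi=\pi/2+u$ one has the exact factorization $\one-\widetilde{\bb{T}}=\bb{D}\bb{G}\bb{D}/|\sin\varphi|^{2}$, where $\bb{D}$ is the positive diagonal matrix with entries $|\sin(\pi(l/m)k)|$ and $\bb{G}$ is the tridiagonal Toeplitz matrix with diagonal $\cos(2u)$ and off-diagonal $-\half$. Since $\widetilde{\bb{T}}$ is irreducible with non-negative entries, Perron--Frobenius forces its leading eigenvalue to be positive, so contraction is equivalent to $\bb{G}>0$, i.e. $\cos(2u)>\cos(\pi/m)$, i.e. $|u|<\pi/(2m)$ — this is where $\cal{D}_{m}$ comes from. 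The extension from $\varphi'=\ol{\varphi}$ to general $\varphi,\varphi'\in\cal{D}_{m}$ then uses Cauchy--Schwarz on $K_{n}$. Without some version of this argument your proof does not establish the theorem for the claimed domain.

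A second, smaller problem: your holomorphy argument rests on the claim $\sum_{r}\sup_{K}\|q_{r}(\varphi)\|<\infty$, ``the same exponential bound, now in operator norm.'' The transfer-matrix computation only controls the weighted Hilbert--Schmidt norm $\omega(q_{r}^{2})$, and the paper explicitly warns (in the section on finite temperatures) that the operator norms $\|q^{(d)}\|$ are believed \emph{not} to decay — finite dimensionality of the auxiliary block does not help, because $q_{r}$ acts on $\frak{S}_{f}^{\otimes r}$ and the MPS contraction can produce operator norms that saturate. Your weaker formulation, holomorphy of $\varphi\mapsto\omega(a^{*}Z(\varphi))$ for local $a$, is the right statement, but it should be derived from the Hilbert--Schmidt bounds (each such pairing involves only finitely many $q_{r}$, each a finite matrix product of entries holomorphic on $\cal{D}_{m}$), not from an operator-norm summability that is almost certainly false. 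The reduction of $\omega(q_{r}q_{r}^{*})$ to the diagonal (ice-rule) subspace and the identification of the boundary states $\bra{\rm L}\widetilde{\bb{L}}^{-}=\bra{1}$, $\widetilde{\bb{L}}^{+}\ket{\rm R}=\ket{1}$ are correct and match the paper.
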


\begin{proof}
By applying Sutherland equation \eqref{eqn:Sutherland_phi} to the highest-weight transfer operator \eqref{eqn:W_transfer},
differentiating with respect to $s$ at $s=0$, and using results obtained in Lemma \ref{lem:modified},
we arrive after some effort at the global defining relation,
\begin{align}
[H_{n},Z_{n}(\varphi)]&=(\sigma^{z}\otimes \one_{2^{n-1}}-\one_{2^{n-1}}\otimes \sigma^{z})\nonumber \\
&-2\sin{\gamma}\cot{\varphi}(\sigma^{0}\otimes Z_{n-1}(\varphi)-Z_{n-1}(\varphi)\otimes \sigma^{0}),
\end{align}
expressing almost-commutativity of a continuous family of charges $Z_{n}(\varphi)$.
There is one ultra-local boundary term which is the only surviving at $\varphi=\pi/2$, whereas pseudo-locality of the remaining terms is simply induced by pseudo-locality of $Z_{n}(\varphi)$. Thus, it remains to figure out at which values of $\varphi$ the $Z_{n}(\varphi)$ acquire pseudo-locality.

To compute the \textit{weighted Hilbert-Schmidt inner product}
\begin{equation}
K_{n}(\varphi,\varphi^{\prime}):=(Z_{n}(\ol{\varphi}),Z_{n}(\varphi^{\prime})),
\end{equation}
where $(A,B)\equiv 2^{-n}\tr (A^{\dagger}B)$, we construct a general \textit{two-parametric} transfer
matrix $\bb{T}(\varphi,\varphi^{\prime})$,
\begin{align}
\label{eqn:K_matrix}
K_{n}(\varphi,\varphi^{\prime})&=\bra{\rm{L}}(\bb{T}(\varphi,\varphi^{\prime}))^{n}\ket{\rm{R}},\\
\bb{T}(\varphi,\varphi^{\prime})&:=\ket{\rm{L}}\bra{\rm{L}}+\ket{\rm{R}}\bra{\rm{R}}+\half \left(\ket{\rm{L}}\bra{1}+\ket{1}\bra{\rm{R}}\right)+
\widetilde{\bb{T}}(\varphi,\varphi^{\prime}),\nonumber \\
\widetilde{\bb{T}}(\varphi,\varphi^{\prime})&:=
\sum_{k=1}^{\infty}\Big\{(\cos^{2}{(\gamma k)}+\cot{\varphi}\cot{\varphi^{\prime}}\sin^{2}{(\gamma k)})\ket{k}\bra{k}\nonumber \\
&+\frac{|\sin{(\gamma k)}\sin{(\gamma(k+1))}|}{2\sin{\varphi}\sin{\varphi^{\prime}}}(\ket{k}\bra{k+1}+\ket{k+1}\bra{k})\Big\}.
\end{align}
Restricting ourselves once more only to points when $q$ is $m$-th root of unity, or equivalently to resonant anisotropies $\gamma=\pi(l/m)$, the transition from $\ket{m}\ra \ket{m+1}$ gets \textit{disabled},
\begin{equation}
\bra{m}\widetilde{\bb{T}}(\varphi,\varphi^{\prime})\ket{m+1}=0,
\end{equation}
and therefore $\bb{T}(\varphi,\varphi^{\prime})$ reduces to a $(m+1)$-dimensional matrix,
acting on a ladder of states $\{\ket{\rm{L}},\ket{\rm{R}},\ket{1},\ldots\}$, with symmetric tridiagonal
part $\widetilde{\bb{T}}$. The latter shall be now a subject of further considerations.

First we prove that $\widetilde{\bb{T}}(\varphi,\varphi^{\prime})$ is \textit{contracting}, i.e. that all of its $m-1$ eigenvalues $\tau_{j}$
($j\in \{1,2,\ldots m-1\}$, $|\tau_{1}|\geq |\tau_{2}|\geq \ldots$) are strictly smaller than $1$, $|\tau_{j}|<1$, if $\varphi,\varphi^{\prime}$ are from
the domain $\cal{D}_{m}$. To see this, first assume $\varphi^{\prime}=\ol{\varphi}$ and use parametrization $\Re{\varphi}=\pi/2+u$, arriving 
at the identity
\begin{equation}
\one - \widetilde{\bb{T}}=\frac{\bb{D}\bb{G}\bb{D}}{|\sin{\varphi}|^{2}},
\end{equation}
where $\bb{D}$ is a positive real diagonal matrix,
\begin{equation}
\bb{D}:=\sum_{k=1}^{m-1}|\sin{(\pi(l/m)k)}|\ket{k}\bra{k},
\end{equation}
and $\bb{G}$ is a tridiagonal Toeplitz matrix,
\begin{equation}
\bb{G}:=\sum_{k=1}^{m-1}\cos{(2u)}\ket{k}\bra{k}-\half \sum_{k=1}^{m-2}(\ket{k}\bra{k+1}+\ket{k+1}\bra{k}).
\end{equation}
Because $\widetilde{\bb{T}}$ is \textit{irreducible} with \textit{real non-negative} elements, \textit{Perron--Frobenius theory} ensures that the leading eigenvalue must be positive, $\tau_{1}>0$. Hence, $\widetilde{\bb{T}}$ is contracting if $\one-\widetilde{\bb{T}}>0$, which is equivalent to $\bb{G}>0$. The latter holds provided $|u|<(\pi/2m)$, i.e. for $\varphi \in \cal{D}_{m}$. In order to release the constraint $\varphi^{\prime}=\ol{\varphi}$, namely to claim that $\widetilde{\bb{T}}(\varphi,\varphi^{\prime})$ still remains contracting for a general case of $\varphi,\varphi^{\prime}\in \cal{D}_{m}$, it is sufficient to invoke Cauchy--Schwartz inequality,
\begin{equation}
|K_{n}(\varphi,\varphi^{\prime})|^{2}\leq K_{n}(\ol{\varphi},\varphi)K_{n}(\ol{\varphi^{\prime}},\varphi^{\prime}).
\label{eqn:Cauchy-Schwartz}
\end{equation}

Subsequently, all eigenvalues of $\widetilde{\bb{T}}(\varphi,\varphi^{\prime})$ are also eigenvalues of $\bb{T}(\varphi,\varphi^{\prime})$. The eigenvectors of these two maps, namely $\ket{\tau_{j}}$ of $\bb{T}$ and $\ket{\widetilde{\tau_{j}}}$ of $\widetilde{\bb{T}}$, are in $1$-to-$1$ correspondence, via prescription
\begin{equation}
\ket{\tau_{j}}=\ket{\widetilde{\tau_{j}}}+\ket{\rm{L}}\frac{\braket{1}{\widetilde{\tau_{j}}}}{2(\tau_{j}-1)}.
\label{eqn:eigenvectors_correspondence}
\end{equation}
In addition, $\bb{T}$ has a doubly-degenerate eigenvalue $1$, with a single \textit{proper} eigenvector $\ket{\tau_{0}}=\ket{\rm{L}}$ and a \textit{defective} eigenvector determined by the condition
\begin{equation}
(\bb{T}-\one)\ket{\phi}=\ket{\tau_{0}},
\label{eqn:defective_condition}
\end{equation}
which is of an abstract form $\ket{\phi}=\phi_{\rm{R}}\ket{\rm{R}}+\sum_{j=1}^{m-1}\phi_{j}\ket{j}$. Its precise form can be obtained in analytic fashion by working out\footnote{We kindly thank prof. Ian Affleck for communicating this procedure.} the Jordan canonical form of $\widetilde{\bb{T}}(\varphi,\varphi^{\prime})$. In the course of calculations one encounters the following recurrence which is of main importance,
\begin{equation}
\phi_{m-k}=\left(\frac{\bb{T}_{m-k,m-k-1}}{1-\bb{T}_{m-k,m-k}}\right)C^{-1}_{k-1}\phi_{m-k-1},\quad \phi_{1}=2,\quad
\phi_{\rm{R}}=\frac{2\bb{T}_{21}}{1-\bb{T}_{22}}C^{-1}_{m-2},
\label{eqn:defective_eigenvector}
\end{equation}
with $C_{k}$ constituting a continued fraction sequence,
\begin{equation}
C_{0}=1,\quad C_{k+1}=1-\frac{1}{4\cos^{2}{(\varphi+\varphi^{\prime})}C_{k}}.
\label{eqn:recurrence_defective}
\end{equation}
In analogy to reference~\cite{PRL106}, by iterating the Jordan decomposition we can readily find that
\begin{equation}
K_{n}(\varphi,\varphi^{\prime})=nK(\varphi,\varphi^{\prime})+\cal{O}(n^{0})+\cal{O}(\tau^{n}_{1}),
\end{equation}
namely the asymptotic $n\to \infty$ behavior is dominated by the off-diagonal element of the non-trivial Jordan block.
Explicit calculation shows that the rate coefficient $K(\varphi,\varphi^{\prime})$ is
\begin{equation}
K(\varphi,\varphi^{\prime})=-\frac{\sin{\varphi}\sin{\varphi^{\prime}}}{2\sin^{2}{(\pi l/m)}}
\frac{\sin{((m-1)(\varphi+\varphi^{\prime}))}}{\sin{(m(\varphi+\varphi^{\prime}))}},
\end{equation}
which is well-defined everywhere on the domain $\varphi,\varphi^{\prime}\in \cal{D}_{m}$. On the other hand, the quantity
$K(\ol{\varphi},\varphi)=\lim_{n\to \infty}(Z_{n}(\varphi),Z_{n}(\varphi))/n$ is non-singular everywhere except at the the boundary
of the domain $\partial \cal{D}_{m}$, which is at $\Re{\varphi}=\pi/2\pm \pi/(2m)$. The weighted Hilbert-Schmidt norms of local densities
of degree $r$ are therefore given by the tridiagonal part of the transfer matrix $\widetilde{\bb{T}}(\varphi,\varphi^{\prime})$,
\begin{equation}
(q_{r}(\varphi),q_{r}(\varphi))=\frac{1}{4}\bra{1}\widetilde{\bb{T}}(\ol{\varphi},\varphi)^{r-2}\ket{1},
\end{equation}
implying quasi-locality of $Z_{n}(\varphi)$ for any $\varphi\in \cal{D}_{m}$, i.e.
\begin{equation}
(q_{r},q_{r})\leq \zeta|\tau_{1}(\ol{\varphi},\varphi)|^{r},
\end{equation}
for some constant $\zeta>0$ and the decay exponent $\xi=-\log{|\tau_{1}|}>0$. As argued earlier, the same exponent is inherited
to boundary residual operators $b_{r}$, i.e. $\xi^{\prime}=\xi$.
% We conclude the proof by stressing that analogous considerations are applicable to a strip of the same width centered around $\varphi=0$
% upon transformation of the Lax operator $\bb{L}(\varphi,s)\ra \sigma^{z}\bb{L}(\varphi,s)$, which however yields an equivalent family
% of parity-transformed charges.

\end{proof}

\subsection{Integral form of the Mazur bound}
With our recently constructed continuum of charges $Z_{n}(\varphi)$ we may finally answer the question from previous chapter, namely whether the fractal form of the thermal spin Drude weight is already being saturated. Resorting to reference~\cite{IP13}, a generalized Mazur inequality for an arbitrary set of linearly independent pseudo-local almost-conserved charges $\{Q_{k}\}$ is given by
\begin{equation}
\lim_{\beta \to 0}\frac{D^{\rm{th}}_{\beta}}{\beta}\geq \lim_{n\to \infty}\frac{1}{2n}\sum_{k,l}(J_{n},Q_{k})_{0}\left(U^{-1}\right)_{kl}(Q_{l},J_{n}),\quad
\left(U^{-1}\right)_{kl}=(Q_{k},Q_{l}),
\label{eqn:Mazur_inequality_finite}
\end{equation}
with (invertible) \textit{overlap matrix} $U>0$ (see Theorem 2 from~\cite{IP13}).
There is no requirement from $\{Q_{k}\}$ to be hermitian, thus, it makes sense to utilize two mutually orthogonal
sets of lower-triangular charges $\{Z_{n}(\varphi)\}$ and upper-triangular charges $\{Z^{\dagger}_{n}(\varphi)\}$,
$(Z_{n}(\varphi),Z^{\dagger}_{n}(\varphi^{\prime}))\equiv 0$, parametrized with the variable $\varphi\in \cal{D}_{m}$.
The overlap with the extensive spin-current operator,
\begin{equation}
J_{n}=\sum_{x=1}^{n-1}\one_{2^{x-1}}\otimes j \otimes \one_{2^{n-x-1}},\quad j=4\ii(\sigma^{+}\otimes \sigma^{-}-\sigma^{-}\otimes \sigma^{+}),
\label{eqn:spin_current_observable}
\end{equation}
is attributed from the $2$-point density $q_{2}=\sigma^{-}\otimes \sigma^{+}$ and does not depend on $\varphi$,
\begin{equation}
\left(J_{n},Z_{n}(\varphi)\right)=-\left(J_{n},Z^{\dagger}_{n}(\varphi)\right)=\ii(n-1).
\end{equation}
In order to surmount an evident drawback which has to do with the fact the $\varphi$-dependent normalization matrix $U_{kl}$ does not exist when working with a continuum of operators, we shall look for a suitable integral formulation of Mazur inequality which would replace a discrete summation from \eqref{eqn:Mazur_inequality_finite} by an integration over some compact domain of pseudo-local charges. On conceptual level, we need to find a substitute for the inverse of $U_{kl}/n$.
This task can be elegantly achieved though solution $f(\varphi^{\prime})$ of the Fredholm integral equation of the first kind,
\begin{equation}
\int_{\cal{D}_{m}}\dd \varphi K(\varphi,\varphi^{\prime})f(\varphi^{\prime})=1,\qquad \forall \varphi \in \cal{D}_{m},
\label{eqn:first_kind}
\end{equation}
with positive-definite integral kernel $K(\varphi,\varphi^{\prime})$, after which the high-temperature Drude bound
$D^{\rm{th}}_{\beta}\geq (\beta/4)D_{K}$ takes the final \textit{integral form},
\begin{equation}
\boxed{D_{K}=\frac{1}{4}\int_{\varphi\in \cal{D}_{m}}\dd \varphi\;f(\varphi).}
\end{equation}
The solution $f(\varphi)$, determining the integration measure for our particular form of
the kernel $K(\varphi,\varphi^{\prime})$ (cf. \eqref{eqn:K_matrix}), can be guessed to be of the form
\begin{equation}
f(\varphi)=\frac{c}{|\sin{\varphi}|^{4}},
\label{eqn:f_explicit}
\end{equation}
with $c$ being a constant which comes from the explicit integration of \eqref{eqn:first_kind}.
Ultimately, the final explicit \textit{closed-form} result for the Drude bound reads
\begin{equation}
\boxed{D_{K}=\frac{\sin^{2}{(\pi l/m)}}{\sin^{2}{(\pi/m)}}\left(1-\frac{m}{2\pi}\sin{\left(\frac{2\pi}{m}\right)}\right).}
\label{eqn:DK_result}
\end{equation}
Notice that this marvelous result, which substantially improves the previously best bound from~\cite{PRL106} obtained with a single charge $Z_{n}(\pi/2)$, is still of (mysterious) fractal shape, and what is perhaps even more striking and appealing, it precisely coincides with very debatable and often questioned Thermodynamic Bethe Ansatz result from~\cite{BFK05} at $\gamma=\pi/m$. At $\Delta=1$, formally accessed by $m\to \infty$ limit, the domain $\cal{D}_{m}$ shrinks to a set of points of zero measure. It seems therefore (or it is tempting to say at least) that non-trivial value of quantization parameter $q$ is needed to render the Drude weight positive.

We would like to close the chapter with our personal opinion expressing our belief that the family of charges $Z_{n}(\varphi)$ now finally provides \textit{all} the relevant integrals of motion one might associate with the finiteness of the spin Drude weight in XXZ Heisenberg model. We hope that such assertion is not premature. Unfortunately we are incapable of stating any rigorous remarks on viability of either (i) to go beyond highest-weight representations, where particularly at the $q$ being root of unity where the center of $\cal{U}_{q}(\frak{sl}_{2})$ algebra becomes larger\footnote{At roots of unity an ``enhanced'' symmetry is acquired, explaining why the situation is quite different there. Namely, extra degeneracies in the spectrum of the fundamental transfer matrix occur and the Bethe equations are suddenly insufficient to describe the full spectrum. The corresponding transfer operators are then parametrized on certain hyper-surfaces, whereas also the concept of an universal $\cal{R}$-matrix can become problematic. For further technical aspects on these matters consult with~\cite{Korff03} and references therein.}, thereby allowing for even richer representations (e.g. cyclic representations where no extremal states exist and extra representation parameters are possible) or, (ii) to employ higher $s$-derivatives of the highest-weight transfer matrix $W_{n}(\varphi,s)$ in some way.
Nevertheless, while empirical indications seem to disfavor the latter option, the former of does not look plausible either
as far as the interest in only in operators which preserve the zero magnetization sector. It is conceivable though that similar pseudo-local operators with defective transfer matrices orthogonal to the spin-current density operator can be constructed however.

Aside from these open questions, we should remark that no statements have been made in regard to generic values of deformation
parameters (anisotropies) thus far. It remains to be settled whether those intrinsically infinite-dimensional irreducible cases are of
any physical significance.

At the end we cannot help mentioning that we have encountered certain rather arousing properties calling for a better understanding which are thus perhaps worth to be written out for future reference. There exist an identity relating the inverse of the suitably normalized two-parametric quantum transfer matrix,
\begin{equation}
\widetilde{W}_{n}(\varphi,s):=W_{n}(\varphi,s)/(\sin{(\varphi+\gamma s)})^{n},
\end{equation}
via spin parameter \textit{inversion},
\begin{equation}
(\widetilde{W}_{n}(\varphi,s))^{-1}=\widetilde{W}_{n}(\varphi,-s).
\end{equation}
More notably, there is another identity which relates the logarithmic $s$-derivative at arbitrary spin values $s\in \CC$ to a kind of
symmetrized shift in the \textit{spectral} parameter at $s=0$,
\begin{equation}
\partial_{s}\log{\widetilde{W}_{n}(\varphi,s)}=
\half \partial_{p}\left(\widetilde{W}_{n}(\varphi+\gamma s,p)+\widetilde{W}_{n}(\varphi-\gamma s,p)\right)_{p=0}.
\end{equation}
\chapter{Summary}
\label{sec:summary}

To conclude this thesis we provide a compressed recapitulation of the main results, hopefully combining key aspects of the preceding debate into a short, coherent and comprehensible unit.

Our guiding interest has been to develop a simple mathematical setup for studying far-from-equilibrium dynamics of certain prototype models of strongly-correlated quantum particles in one spatial dimension. With aim of identifying certain regimes which would be amenable for exact description and allow for further analytical manipulations we maintained a close analogy to classical nonequilibrium dynamical processes -- being a subject of quite intense studies in the past -- where the reservoirs are typically modeled via suitable stochastic injection and extraction of particles at the boundaries. We proposed a similar situation also for the case of coherent evolution of quantum interacting chains, by adopting quantum semi-group evolution, namely a time-continuous Markovian trace-preserving positive evolution for a many-body density matrix of the Lindblad form. In contrast with common practice used in various physical setups, we purposefully abstained from establishing validity of such evolution equation with respect to any particular microscopic model of system-bath interactions. Instead, we argued that we may simply think of the Lindbladian evolution in as a simplistic and effective coarse-grained dynamical equation which offers a possibility to study toy models of genuine nonequilibrium scenarios, and even more importantly, permits to attack the problem with mathematical physicist's toolbox.
Since we were predominantly motivated to understand transport properties of quantum spin chains in near (linear response) and far-from-equilibrium regimes, we exclusively limited ourselves to specific boundary-driven situations where decoherence governed by the dissipative part of the evolution which affect particles only at chain's boundaries. Therefore, the Lindblad dissipators attached at each ends imitate jump operators causing incoherent transitions between different quantum state with given rates.

Our main goal was to diagnose nonequilibrium states of limited complexity which would be amenable for efficient exact description by means of matrix product states ansatz. We focused solely on the fixed points of Markovian evolution, i.e. ``ground states'' of the Liouville operator of the Lindblad form, and began our considerations by addressing the anisotropic Heisenberg model, a rudimentary model of quantum many-body interacting electrons. Heisenberg model has been solved explicitly, by imposing a pair of oppositely polarizing incoherent processes at the boundaries, just very recently, initially in~\cite{PRL106,PRL107} by residing on a peculiar homogeneous cubic algebra of auxiliary MPS matrices, and afterwards via generators of quantized spin algebra~\cite{KPS13}. Either of the derivations in fact implemented a suitable operator cancellation mechanism for the unitary part of the dynamics reminiscent of an operator-valued divergence condition. As a consequence of open boundary conditions, the unitary propagator preserves steady state operators up to defect-like terms residing in the boundary quantum spaces. In the final stage it remains to be verified if there exist two sets of Lindblad operators being capable of destructing those defects.
Despite the ideology behind our construction was quite transparent, it was not immediately obvious at the beginning whether that particular solvable instance has anything in common with established theory of quantum integrability.

In the ongoing work we finally succeeded to provide the affirmative answer ~\cite{PIP13}. The generating object of steady state density operator, assuming the form of a Cholesky-like amplitude $S$-operator, remarkably coincides with an abstract quantum transfer matrix, namely a holomorphic operator which commutes at different values of continuous complex-valued parameters.
Still surprisingly though, the corresponding intertwiner, i.e. a quantum $R$-matrix, acts in an irreducible infinite-dimensional representation of the $q$-deformed spin algebra. Most commonly known solutions of the quantum Yang-Baxter equation populating condensed matter physics literature on contrary belong to finite dimensional spaces. In subsequent work~\cite{IZ14} we have finally thoroughly incorporated our construction into conventional theory of integrability by unveiling two vital interconnected ingredients, amounting to show that (i) the defining algebraic relation is equivalent to the Sutherland equation (discrete Lax representation), essentially expressing flatness of the Lax connection of an associated auxiliary transport problem which is implied by the underlying Yang--Baxter structure, and (ii) that the infinite-dimensional $R$-matrix pertaining to the $S$-operator is gauge-equivalent to the universal solution of the quantum Yang--Baxter equation over two generic lowest-weight
evaluation (Verma) representations, exhibiting the ${\cal U}_{q}(\frak{sl}_{2})$ quantum algebra symmetry.
It has been quite amusing to discover that such intertwiners already found their places a while ago in the context of
conformal-invariant QFTs and high-energy asymptotics of QCD, materializing in relation to non-compact quantum spin chains.
In our nonequilibrium setup however, we associate a complex-spin label, being inversely-proportional to a coupling strength parameter, with
an ancillary spin particle. A vacuum contraction -- which may be in more generic cases replaced by coherent state vectors --
however appears to be, at least from perspective of physical applications, an entirely novel feature.

Two integrable spin chains associated with higher-rank symmetries were under investigation in the foregoing discussion. First, we
examined the $SU(N)$-symmetric integrable quantum gases by restricting allowed set of dissipative channels to primitive rank-$1$ operators. We were only able to find compatible cases which displayed qualitative agreement with previously found Heisenberg spin-$1/2$ solution, in a sense merely realizing an embedding of the $N=2$ process into multi-component quantum space. With the unconstrained form of the Lindblad dissipator, a brute-force procedure of finding solutions via corresponding nonlinear systems of boundary equations turns out to be quite involved task, mainly attributed to the fact that jump operator enter into the dissipator
in a nonlinear way.

Computation of physical observables with respect to NESS which were briefly outlined in section \ref{sec:observables}
proceed in a standard practice in the language of matrix product (or tensor network) states. The two-leg ladder structure of steady states requires to facilitate composite local units in the form of Lax operators with two-component auxiliary spaces. Numerical contractions for a system of length $n$ can be performed efficiently, i.e. with polynomial time-complexity, with effective bond dimensions of auxiliary matrices being only $\cal{O}(n^{k})$, for an auxiliary space consisting of $k$ irreducible factors. Further reductions are possible by exploiting global Abelian symmetries of auxiliary contraction processes.
In the gapless phase of the XXZ spin-$1/2$ chain there exist a dense set of anisotropy parameters, pertaining to the values of
deformation parameters at the roots of unity, at which an exact reduction to a finite-dimensional auxiliary space takes place and
henceforth expectation values of observables enable evaluations by means of exact diagonalization of corresponding auxiliary transfer
vertex operators.

New integrable instances were presented where degenerate, namely non-unique, state states emerge.
These situations may occur when the generator of Lindbladian flow attains the so-called strong Liouville symmetry,
with kernels of higher dimensionality originating as a consequence of global conservation laws.
In the thesis we considered the $SU(3)$-symmetric integrable spin-$1$ chain known as the Lai--Sutherland model, with two oppositely polarizing channels causing incoherent transitions between two extremal levels. Such driving regime coincides with the one used for the spin-$1/2$ chain, except for, say, middle energy level -- proclaimed as the hole particle -- which remains decoupled from reservoirs. We demonstrated how the exact MPS solution can be formulated on the basis of the Sutherland equation. The latter can be viewed as an algebraic formulation of a non-semisimple Lie algebra admitting a realization with one complex spin and two bosonic auxiliary degrees of freedom. The hole-preservation law allowed us to extend the solution to an entire 2D manifold of steady states, with a special choice of weights pertaining to a grand canonical nonequilibrium steady state -- an ensemble in canonical chemical equilibrium with the number of hole particles, but away from thermal and spin equilibrium. A quantity of main interest was the nonequilibrium partition function, representing the normalization of the steady state ensemble. Quite strikingly, we were able to relate the spin current density observable in the asymptotic regime to the susceptibility of the partition function with respect to increasing system size, also present in asymmetric classical exclusion processes. It also remains an attractive open problem to apply thermodynamic scaling analysis on the nonequilibrium partition function to infer the transport behavior for non-extremal values of the filling factor and check for appearance of nonequilibrium phase transitions.

In the last part we morphed our discussion into the area of conservation laws, coining the concept of pseudo-local almost-conserved
charges~\cite{PRL106,IP13,PI13}. We argued that non-ergodic behavior of temporal correlation functions, representing a signature of integrable quantum models and can be linked to ballistic transport properties in the linear response regime, indeed reaches beyond strictly local conservation laws -- arising from the logarithmic derivatives of fundamental quantum transfer matrices -- by explicitly constructing a continuum of extensive charges of non-local type which also commute with a system's Hamiltonian, modulo spurious boundary-localized pseudo-local terms which emanate from broken momentum conservation. Such terms are however conjectured to be to completely inconsequential in thermodynamic limit thanks to Lieb--Robinson effective velocity bounds for propagation of correlations in quantum lattice systems. We were able to prove this assertion in a rigorous way only in the high temperature limit. Pseudo-local charges come about from what we termed highest-weight quantum transfer matrices. They are defined over fundamental quantum spaces and generic Verma modules, invariant under given quantized classical Lie algebra ${\cal U}_{q}(\frak{g})$. Crucially, instead of taking derivative with respect to a spectral parameter, one has to take a derivative with respect to analytic representation parameters around distinguished values of parameters which detach the extremal weight state from the remaining basis states.

We terminate our debate by stressing some worthwhile remarks and also some speculative directions which could hopefully lead to further progress. In the spirit of reference~\cite{IZ14}, it is now understood how to handle prerequisite bulk algebraic condition for all fundamental integrable and their quantizations. A broader class of solutions may also be obtained by means of gauges and twists~\cite{KulishBook,KunduRev}. At this stage we would like to put an emphasis on several other perspectives which have been only briefly touched in the thesis and definitely call for clarification or at least better understanding:
\begin{itemize}
 \item The main persisting question is how to identify admissible ``integrable'' sets of dissipative processes for a particular integrable chain?
After the structure of a Lax matrix is being postulated, a direct approach via local boundary compatibility systems and generic GKS rate matrices provides a system of complex polynomial equations for rate constants which, however, does not seem to be an elegant or viable route, especially when dimensions of local quantum spaces become larger. Additionally, such an approach has another drawback because it requires a-posteriori verification of positive-definiteness of a rate matrix.
 \item It remains to explain or enlighten the meaning of a two-leg structure of the MPS formulation for NESS, namely the Cholesky-type factorization ansatz. Is it dictated by some elementary principle or, on the other hand, can one hope to find other possibilities of having multi-leg ladder type of solutions? At any rate, we should remark that transposition of physical components in the right factor can be orderly removed via trivial spin-algebra automorphism at expense of converting lowest-weight Verma module into highest-weight one. One feasible option would be to attack these questions from pure symmetry considerations.
 \item A majority of integrable models are not of the fundamental type, but rather generically arise after some sort of fusion procedure of fundamental ones, or as special limiting or degenerate cases (usually identified with semi-classical limit, yielding classical $r$-matrix structure). As we have already stressed, these models are more difficult to deal with because, to best of our knowledge, no evident realization of Sutherland equation is known. Of course, one must keep in mind that one representation space must always be reserved for some generic infinite-dimensional space, hence presumably some non-canonical Lax operators would need to be involved. We have not devoted enough attention to this appealing aspect thus far.
\end{itemize}

% We also briefly sketched an alternative representation of exact solutions in graph-theoretic language, suggestively
% referred to as a walking graph state. Because auxiliary matrices belong to generators of some Lie symmetry evaluated in the
% lowest-weight representations, an auxiliary process can be viewed as a time-ordered walk in a suitable multi-dimensional graph.
% Vertices on a graph are associated with local Hilbert spaces, edges encode transitions between states, whereas initial and final states
% correspond to a state of lowest weight. It is conceivable that such description is merely a dual analogue to the algebra-based MPS solutions,
% despite (unlike for the Heisenberg and Lai-Sutherland models) the Fermi-Hubbard model (in spin-$1/2$ ladder formulation) has been recently solved
% purely in the WGS picture by means of a quasi-1D graph exhibits certain exotic (unusual) features~\cite{Hubbard} and so-called ``isolating defect
% operator'' logic. The author believes that the latter is essentially only a restatement of a vanishing curvature condition on a lattice.
% It is plausible to expect that integrable models of intrinsic fermionic nature fits better into graded algebras toolbox.

At last, some comment in regard to potential improvements and upgrades are in order. To begin with, it is perhaps insightful to say that requiring a relationship with solutions the Yang--Baxter equation on the fundamental level might be, strictly speaking too conservative. Specifically, since a discrete space Lax representation is sufficient for the bulk part in our scenario, one may envisage a worthwhile strategy by trying out ideas building directly on the flat-connection condition (or maybe even auxiliary problem compliant with higher-dimensional consistency, e.g. in the spirit of `tetrahedron integrability'~\cite{Z81,BS82}, if detrimental ambiguities due to nonexistent string-order can be surmounted). Another tantalizing question is whether one can go beyond steady states. Yet, to be able to legitimately speak of ``integrable'' Liouvillians we would have to demonstrate solvability of \textit{all} Liouville eigenstates, i.e. decay modes of Lindbladian process. This looks quite a challenging task and might require to devise a some sort of CBA, based on excitations where matrix ansatz play a role of a vacuum state (see e.g. reference~\cite{PFS02,Alcaraz04,CRS11} where similar ideas have been tried out successfully for ASEP). Nonetheless, it maybe seem even too courageous to claim that e.g. the anisotropic Heisenberg spin-$1/2$ Liouville operator with maximally-polarizing boundary channels is fully solvable in the first place, judging from empirical indications based on eigenvalue statistics~\cite{PZn13}.

% closing remarks

% ------------
% Bibliography
% ------------
\clearpage
%\addcontentsline{toc}{chapter}{Bibliography}
\small
\bibliographystyle{abbrv}
\bibliography{bibliography}
\normalsize

% ----------
% APPENDICES
% ----------

\begin{appendices}
\chapter{FRT construction}
\label{sec:App_FRT}

We put forward a detailed definition of quantum groups by reviewing the original algebraic formulation presented in the pioneering work of Faddeev, Reshetikhin and Takhtajan~\cite{FRT88,Jimbo90}. Our motivation is to give an abstract explanation for algebraic objects which play a central role in QISM.
The construction a quantum group begins with the RTT equation imposed on elements of a quantum monodromy matrix.
The latter can be seen as a an algebraic condition for an associative noncommutative algebra generated by monodromy matrix
elements $T_{ij}$. We refrain from using our standard boldface notation throughout thus appendix.

To this end, let us consider first an $N$-dimensional vector space $\cal{V}$ (over $\CC$) and a non-degenerate matrix
$R\in \End(\cal{V}\otimes \cal{V})$, which is restricted to fulfill the Yang-Baxter condition
\begin{equation}
R_{12}R_{13}R_{23}=R_{23}R_{13}R_{12}.
\label{eqn:YBE_condition}
\end{equation}
As customary, we used $R_{jk}\in \End(\cal{V}\otimes \cal{V}\otimes \cal{V})$. With aid of $R$ one further defines an associative algebra $\cal{A}=\cal{A}(R)$, freely generated by the set $\{I,T_{ij}\}$, which is constrained to obey
\begin{equation}
R\;T_{1}T_{2}=T_{2}T_{1}\;R,
\label{eqn:FRT_algebra}
\end{equation}
where $T=(T_{ij})_{i,j=1}^{N}=\End(\cal{V})$, and $T_{1}=T\otimes \one_{d}\in \End(\cal{V}\otimes \cal{V})$, $T_{2}=\one_{d}\otimes T$, i.e. $T_{1,2}\in \End(\cal{V}\otimes \cal{V})$ over $\cal{A}$. The algebra $\cal{A}(R)$ is regarded as algebra of functions on a
``quantum group''. The role of the matrix $R$ is to control noncommutativity of elements $T_{ij}$ via continuous complex parameter $q$, chosen with convention that in the `classical limit' $q\to 1$ (when the $R$-matrix reduces to identity operator $R=\one_{N}\otimes \one_{N}$) we restore commutativity of elements $T_{ij}$. Hence in the classical case, the algebra $\cal{A}=\rm{Fun}(GL(N))$ becomes simply generated by the elements of $GL(N)$.

Note that the Yang-Baxter condition \eqref{eqn:YBE_condition} for the algebra $\cal{A}(R)$ is merely to ensure associativity of a product; it suffices
to demand consistency of exchanging a product of \textit{three} generators, which can be obviously done in two different ways.

For general $q$ on the other hand, considering for instance quantization of $GL(N)$ (i.e. $d=N$), we have the $R$-matrix reading explicitly
\begin{equation}
R=
\begin{pmatrix}
q & & & \cr
& 1 & & \cr
& q-q^{-1} & 1 & \cr
& & & q 
\end{pmatrix},
\label{eqn:gl2_Rmatrix}
\end{equation}
producing (after plugging it into \eqref{eqn:FRT_algebra}) $q$-defromed (or 'quantized') relations for the
generators $T_{ij}$, $i,j\in \{1,2\}$,
\begin{align}
T_{11}T_{12}&=q\;T_{12}T_{11} & T_{11}T_{21}&=q\;T_{21}T_{11},\\
T_{12}T_{21}&=T_{21}T_{12} & T_{12}T_{22}&=q\;T_{22}T_{12},\\
T_{21}T_{22}&=q\;T_{22}T_{21} & T_{11}T_{22}-T_{22}T_{11}&=(q-q^{-1})T_{12}T_{21},
\label{eqn:glq2_relations}
\end{align}
along with the \textit{quantum determinant} $\det T_{q}=T_{11}T_{22}-q\;T_{12}T_{21}$.
The latter \textit{commutes} with all the generators $T_{ij}$.

Hopf algebra co-structures, namely co-product map $\Delta:\cal{A}\ra \cal{A}$, unit map $\epsilon:\cal{A}\ra \CC$, and antipode $S:\cal{A}\ra \cal{A}$, are
neatly expressed in the following way,
\begin{equation}
\Delta(T_{ij})=\sum_{k}T_{ik}\otimes T_{kj},\quad \epsilon(T_{ij})=\delta_{ij},\quad S(T_{ij})=(T^{-1})_{ij}.
\end{equation}
% \begin{align}
% \Delta
% \begin{pmatrix}
% T^{1}_{1} & T^{1}_{2} \cr
% T^{2}_{1} & T^{2}_{2}
% \end{pmatrix}
% &=
% \begin{pmatrix}
% T^{1}_{1} & T^{1}_{2} \cr
% T^{2}_{1} & T^{2}_{2}
% \end{pmatrix}\otimes
% \begin{pmatrix}
% T^{1}_{1} & T^{1}_{2} \cr
% T^{2}_{1} & T^{2}_{2}
% \end{pmatrix}\\
% &=
% \begin{pmatrix}
% T^{1}_{1}\otimes T^{1}_{1}+T^{1}_{2}\otimes T^{2}_{1} & T^{1}_{1}\otimes T^{1}_{2} + T^{1}_{2}\otimes T^{2}_{2} \cr
% T^{2}_{1}\otimes T^{1}_{1}+T^{2}_{2}\otimes T^{2}_{1} & T^{2}_{1}\otimes T^{1}_{2} + T^{2}_{2}\otimes T^{2}_{2} 
% \end{pmatrix},\\
% \epsilon
% \begin{pmatrix}
% T^{1}_{1} & T^{1}_{2} \cr
% T^{2}_{1} & T^{2}_{2}
% \end{pmatrix}&=
% \begin{pmatrix}
% 1 & 0 \cr
% 0 & 1
% \end{pmatrix},\quad
% S
% \begin{pmatrix}
% T^{1}_{1} & T^{1}_{2} \cr
% T^{2}_{1} & T^{2}_{2}
% \end{pmatrix}=\bb{T}^{-1}=
% \begin{pmatrix}
% T^{2}_{2} & -q^{-1}T^{1}_{2} \cr
% -qT^{2}_{1} & T^{1}_{1}
% \end{pmatrix}.
% \end{align}
and represent the whole structure of a quantum group $GL_q(2)\equiv \rm{Fun}_q(GL(2))$. By restricting the quantum
determinant to $\det T_{q}=1$, we obtain $SL_q(2)$.

In general $N$-dimensional case, the $GL(N)$ $R$-matrix (expanded in terms of standard unit matrices $E^{ij}$) takes the form of
\begin{equation}
\boxed{R=\sum_{i\neq j}^{N}E^{ii}\otimes E^{jj}+q\sum_{i=1}^{N}E^{ii}\otimes E^{ii}+(q-q^{-1})\sum_{j<i}E^{ij}\otimes E^{ji}.}
\label{eqn:gld_R_matrix}
\end{equation}
It has to be emphasized that this form is actually fixed by evaluating the universal $\cal{R}$-matrix in the product of fundamental
representations of Lie algebra $\frak{gl}_{N}$. For example, in the UEA $\cal{U}_{q}(\frak{sl}_{2})$ case, generated
by $\{\one,K_{\pm}\equiv q^{\pm H/2},X_{\pm}\}$, where $q$-deformed commutation relations read
\begin{equation}
[X_{+},X_{-}]=\frac{K^{2}_{+}-K^{2}_{-}}{q-q^{-1}},\quad K_{+}X_{\pm}K_{-}=q^{\pm 1}X_{\pm},\quad K_{+}K_{-}=K_{-}K_{+}=\one,
\label{eqn:Uqsl2_commutation}
\end{equation}
the universal $\cal{R}$-matrix,
$\cal{R}\in \cal{U}_{q}(\frak{sl}_{2})\otimes \cal{U}_{q}(\frak{sl}_{2})$ assumes the form
\begin{equation}
\cal{R}=q^{\frac{H\otimes H}{2}}\sum_{n=0}^{\infty}q^{n(n-1)/2}\frac{(1-q^{-2})^{n}}{[n]_{q}!}\left(q^{H/2}X_{+}\otimes q^{-H/2}X_{-}\right)^{n},\quad
[n]_q:=\frac{q^{n}-q^{-n}}{q-q^{-1}}.
\label{eqn:universal_R_matrix}
\end{equation}
Universal $\cal{R}$-matrices are restricted to obey a \textit{universal Yang-Baxter equation},
\begin{equation}
\boxed{\cal{R}_{12}\cal{R}_{13}\cal{R}_{23}=\cal{R}_{23}\cal{R}_{13}\cal{R}_{12},}
\end{equation}
imposed by co-associativity property, i.e. associativity of a co-product $\Delta$,
\begin{equation}
(\Delta \otimes \id)\cal{R}=\cal{R}_{13}\cal{R}_{23},\quad (\id \otimes \Delta)\cal{R}=\cal{R}_{13}\cal{R}_{12}.
\end{equation}

In order to see how to obtain Hopf algebra co-structures of the deformed spin algebra $\cal{U}_{q}(\frak{sl}_{2})$ let us now define a dual space $\cal{A}^{*}$ to $\cal{A}$, induced by the action of co-product,
\begin{equation}
(l_{1}l_{2})(a)\equiv \expect{l_{1}l_{2},a}=(l_{1}\otimes l_{2})(\Delta(a)),\quad \forall a\in \cal{A}, \forall l_{1},l_{2}\in \cal{A}^{*},
\label{eqn:duality_relation}
\end{equation}
by means of a linear map $\expect{\bullet,\bullet}:\cal{A}^{*}\otimes \cal{A}\to \CC$.
Algebra $\cal{A}^{*}$ is a unital associative algebra, with $I^{*}(T_{ij})=\delta_{ij}$ for $i,j\in \{1,2,\ldots,N\}$.
The pairing is fixed by the $R$-matrix. First, consider a subalgebra $\cal{U}(R)$ of $\cal{A}^{*}$, freely generated by elements $\{I^{*},L_{ij}\}$ via two matrices-functionals
$L^{\pm}=(L^{\pm})_{ij}\in \End(\cal{V})$ ($i,j\in \{1,2,\ldots N\}$), with elements from $\cal{U}(R)$, subjected to \textit{pairing conditions}
\begin{equation}
\expect{L^{\pm},T_{1}\cdots T_{k}}=R^{\pm}_{1}\cdots R^{\pm}_{k},\qquad \expect{\one^{*},T_{1}\cdots T_{k}}=\one_{d}^{\otimes k},
\end{equation}
with
\begin{equation}
R^{+}=PRP,\quad R^{-}=R^{-1}.
\end{equation}
Then in $\cal{U}(R)$ we find the following relations taking place
\begin{equation}
\boxed{R^{+}L^{\pm}_{1}L^{\pm}_{2}=L^{\pm}_{2}L^{\pm}_{1}R^{+},\quad R^{+}L^{+}_{1}L^{-}_{2}=L^{-}_{2}L^{+}_{1}R^{+}.}
\end{equation}
Algebra $\cal{A}$ induces on $\cal{U}(R)$ a co-product $\Delta^{*}$,
\begin{equation}
\Delta^{*}(L^{\pm}_{ij})=\sum_{k=1}^{N}L^{\pm}_{ik}\otimes L^{\pm}_{kj},\quad \Delta^{*}(I^{*})=I^{*}\otimes I^{*},\quad i,j\in {1,2,\ldots,N}.
\end{equation}
The RLL relation can be thus interpreted as the defining relations for a dually-paired Hopf algebra to the RTT algebra.
In other words, the RLL relation prescribes the structure of the algebra $\cal{U}(R)$ which can be considered as a $q$-deformation
of the UEA $\cal{U}(\frak{sl}_{2})$. Finally, we may explicitly state all the co-structures of $\cal{U}_{q}(\frak{sl}_{2})$,
\begin{align}
\Delta(X_{\pm})&=X_{\pm}\otimes K_{+} + K_{-}\otimes X_{\pm},\quad \Delta(K_{\pm})=K_{\pm}\otimes K_{\pm},\\
\epsilon(X_{\pm})&=0,\quad \epsilon(K_{\pm})=1,\\
S(X_{\pm})&=-q^{\pm 1}X_{\pm},\quad S(K_{\pm})=K_{\mp}.
\end{align}
A degree of non-cocommutativity is controlled by the universal element \eqref{eqn:universal_R_matrix}, expressing similarity of the co-product $\Delta^{*}$ and opposite co-product $\Delta^{*\rm{op}}:=\Pi\circ \Delta$ ($\Pi$ being a permutation map in $\cal{A}^{*}\otimes \cal{A}^{*}$),
\begin{equation}
\Delta^{*\rm{op}}(\bullet)=\cal{R}\Delta^{*}(\bullet)\cal{R}^{-1}.%\qquad \forall a\in \cal{U}_{q}(\frak{sl}_{2}).
\end{equation}

\paragraph{Affinization of the $R$-matrix.}
Equipping the $\frak{sl}_{2}$-invariant $R$-matrix with the spectral parameter $\lambda \in \CC$ amounts to enhancing its symmetry to a \textit{affine Kac-Moody algebra} $\widehat{\frak{sl}_{2}}$. The construction is identical to the one presented above and is formally achieved by means of so-called \textit{affinization} of a vector space $\cal{V}$, i.e. an infinite-dimensional $\ZZ$-graded space
\begin{equation}
\cal{V}_{\lambda}=\bigoplus_{n\in \ZZ}\lambda^{n}\cal{V},
\end{equation}
with \textit{infinite-dimensional} analogues of algebras $\cal{A}(R)$ and $\cal{U}(R)$, with relations
\begin{equation}
R(\lambda,\mu)T_{1}(\lambda)T_{2}(\mu)=T_{2}(\mu)T_{1}(\lambda)R(\lambda,\mu),
\end{equation}
generated by \textit{Laurent expansion} of the monodromy $T(\lambda)=\sum_{m\in \ZZ}\lambda^{m}T_{m}$, and
\begin{align}
R^{+}(\lambda,\mu)L^{\pm}_{1}(\lambda)L^{\pm}_{2}(\mu)=L^{\pm}_{2}(\mu)L^{\pm}_{1}(\lambda)R^{+}(\lambda,\mu),\\
R^{+}(\lambda,\mu)L^{+}_{1}(\lambda)L^{-}_{2}(\mu)=L^{-}_{2}(\mu)L^{+}_{1}(\lambda)R^{+}(\lambda,\mu),
\end{align}
generated by a formal Laurent series $L^{(\pm)}(\lambda)=\sum_{m\in \ZZ_{+}}\lambda^{m}L^{(\pm)}_{m}$, respectively. In particular, by introducing
new variable $x=q^{-\ii \lambda}$ we have
\begin{equation}
R(\lambda)=xR^{+}-x^{-1}R^{-},\qquad L(\lambda)=xL^{+}-x^{-1}L^{-}.
\end{equation}

% taken from Miwa: Solvable Lattice Models and Represenation Theory of Quantum Affine Algebras
Quantum affine algebra $\cal{U}_{q}(\widehat{\frak{sl}_{2}})$ is a $q$-deformed UEA of affine Lie algebra $\widehat{\frak{sl}_{2}}$.
The latter is a loop algebra $\frak{sl}_{2}\otimes \CC[\lambda,\lambda^{-1}]$ -- an algebra of $\frak{sl}_{2}$-valued polynomials in
$\lambda$ and $\lambda^{-1}$ --  with \textit{central extension}. Algebra $\widehat{\frak{sl_{2}}}$ in fact contains two $\frak{sl}_{2}$ subalgebras,
generated by $\{X^{\pm}_{i},H_{i}\}$ ($i=0,1$), and the defining representation provided by
\begin{align}
X^{+}_{0}&=\sigma^{-}\otimes \lambda,&\quad X^{-}_{0}&=\sigma^{+}\otimes \lambda^{-1},&\quad H_{0}&=-\sigma^{z}\otimes \one +c,\nonumber \\
X^{+}_{1}&=\sigma^{+}\otimes \one,&\quad X^{-}_{1}&=\sigma^{-}\otimes \one,&\quad H_{1}&=\sigma^{z}\otimes \one,
\end{align}
with the central charge $c$.
\chapter{Universal R-matrix}
\label{sec:App_universal}

Existence of the universal $R$-matrix which intertwines two arbitrary lowest-weight $\frak{sl}_{2}$ modules is an old result
from pioneering works~\cite{Tarasov83,KRS81}. A formal derivation, relying entirely on symmetry properties can be also found in
popular Faddeev's lecture notes~\cite{Faddeev1,Faddeev2}. We nonetheless assume that it is more pedagogical to present a derivation by using Clebsch-Gordan resolution with respect to irreducible modules, as presented in e.g.~\cite{KKM02,DKK01}. We limit ourselves to the undeformed $\frak{sl}_{2}$ symmetry, having in mind that $q$-deformed instance can be treated analogously~\cite{KKM02}. Generic intertwiners for $\frak{sl}_{N}(\CC)$-invariant solutions of QYBE have been considered in e.g.~\cite{DM09}.

By denoting spin labels as $\ell_{i}$ ($i\in\{1,2,3\}$), we consider quantum Yang-Baxter equation over a triple-product space
$\frak{S}_{\ell_{1}}\otimes \frak{S}_{\ell_{2}}\otimes \frak{S}_{\ell_{3}}$,
\begin{equation}
R_{\ell_{1}\ell_{2}}(u-v)R_{\ell_{1}\ell_{3}}(u)R_{\ell_{2}\ell_{3}}(v)=R_{\ell_{1}\ell_{2}}(u-v)R_{\ell_{1}\ell_{3}}(u)R_{\ell_{2}\ell_{3}}(v).
\label{eqn:App_universal}
\end{equation}
Three cases are of special importance. The first one, obtained by taking a triple product of fundamental representations $\frak{S}_{1/2}\equiv \frak{S}_{f}$, is just the celebrated rational $6$-vertex $4\times 4$ $R$-matrix of Yang and Baxter. Next, by fixing $\ell_{1}=\ell_{2}=f$ we have essentially the RLL relation which we have been discussing in the FRT realization \ref{sec:App_FRT} of quantum algebra $\cal{U}(\frak{sl}_{2})$,
\begin{equation}
R_{f\ell}(u)=\left(u+\half\right)\one+\vec{\sigma}\otimes \vec{S},\quad R_{ff}(u)=u\one+P_{ff},
\end{equation}
where $P_{ff}$ denotes the permutation in $\CC^{2}\otimes \CC^{2}$. The last option is to leave $\ell_{1},\ell_{2}$ arbitrary and set $\ell_{3}=f$. By introducing constant Lax matrices,
\begin{equation}
{L}_{i}:=\sum_{\alpha=\{\pm,z\}}\sigma^{\alpha}\otimes S^{\alpha}_{i},
\end{equation}
and separating $(u+v)$ and $(u-v)$ dependence from \eqref{eqn:App_universal} (this can be done by plugging in two equivalent sets of spectral parameters, namely $\{u,u+v,v\}$ and $\{u,u-v,-v\}$), we produce a system of equations for $R_{\ell_{1}\ell_{2}}(u)$,
\begin{align}
\label{eqn:separated_system}
[R_{\ell_{1}\ell_{2}}(u),L_{1}+L_{2}]&=0,\\
R_{\ell_{1}\ell_{2}}(u)\left(\frac{u}{2}(L_{2}-L_{1})+L_{1}L_{2}\right)&=\left(\frac{u}{2}(L_{2}-L_{1})+L_{2}L_{1}\right)R_{\ell_{1}\ell_{2}}(u).
\end{align}
By adopting a realization of the generators as differential operators on a space of commutative polynomials as defined in section \ref{sec:Verma}, we express lowest-weight vectors from product representations $\frak{S}_{\ell_{1}}\otimes \frak{S}_{\ell_{2}}$ as polynomials in variables $x_{1}$ and
$x_{2}$, reading
\begin{equation}
w^{0}_{\zeta}=(x_{1}-x_{2})^{\zeta}.
\end{equation}
All infinite-dimensional multiplets are spanned by functions $w^{m}_{\zeta}$ which are generated by repeated application of the
raising operator $S^{+}=S_{1}^{+}+S_{2}^{+}$,
\begin{equation}
w^{m}_{\zeta}(\ell_{1},\ell_{2})=(S^{+})^{m}w^{0}_{\zeta}.
\end{equation}
The first equation from \eqref{eqn:separated_system} expresses $\frak{sl}_{2}$-invariance of $R_{\ell_{1}\ell_{2}}(u)$, i.e.
\begin{equation}
[S^{\alpha}_{1}+S^{\alpha}_{2},R_{\ell_{1}\ell_{2}}(u)]=0,\quad \alpha\in\{+,-,z\},
\label{eqn:sl2_invariance}
\end{equation}
implying that basis states $w^{m}_{\zeta}$ are eigenfunctions of $R_{\ell_{1}\ell_{2}}(u)$. The eigenvalues can only depend on
multiplet index $\zeta$, but not on $m$.

The second equation from \eqref{eqn:separated_system}, essentially defining the transformation rule for the co-product, can be recast as
\begin{equation}
R_{\ell_{1},\ell_{2}}(u)K^{\rm{L}}(u)=K^{\rm{R}}(u)R_{\ell_{1}\ell_{2}}(u),
\end{equation}
with matrices $K^{\rm{L},\rm{R}}(u)$ assuming the following form:
\begin{align}
K^{\rm{L}}(u)&=\frac{u}{2}(L_{2}-L_{1})+L_{1}L_{2}\nonumber \\
&=\begin{pmatrix}
\frac{u}{2}(S^{z}_{2}-S^{z}_{1})+S^{z}_{1}S^{z}_{2}+S^{-}_{1}S^{+}_{2} & \frac{u}{2}(S^{-}_{2}-S^{-}_{1})+S^{z}_{1}S^{-}_{2}-S^{-}_{1}S^{z}_{2} \cr
\frac{u}{2}(S^{+}_{2}-S^{+}_{1})+S^{+}_{1}S^{z}_{2}-S^{z}_{1}S^{+}_{2} & \frac{u}{2}(S^{z}_{1}-S^{z}_{2})+S^{z}_{1}S^{z}_{2}+S^{+}_{1}S^{-}_{2}
\end{pmatrix}\\
K^{\rm{R}}(u)&=\frac{u}{2}(L_{2}-L_{1})+L_{2}L_{1}\nonumber \\
&=\begin{pmatrix}
\frac{u}{2}(S^{z}_{2}-S^{z}_{1})+S^{z}_{1}S^{z}_{2}+S^{+}_{1}S^{-}_{2} & \frac{u}{2}(S^{-}_{2}-S^{-}_{1})+S^{-}_{1}S^{z}_{2}-S^{z}_{1}S^{-}_{2} \cr
\frac{u}{2}(S^{+}_{2}-S^{+}_{1})+S^{z}_{1}S^{+}_{2}-S^{+}_{1}S^{z}_{2} & \frac{u}{2}(S^{z}_{1}-S^{z}_{2})+S^{z}_{1}S^{z}_{2}+S^{-}_{1}S^{+}_{2}
\end{pmatrix}.
\end{align}
In principle, four equations would have to be considered, that is
\begin{equation}
R_{\ell_{1}\ell_{2}}(u)K^{\rm{L}}_{\alpha\beta}(u)=K^{\rm{R}}_{\alpha\beta}(u)R_{\ell_{1}\ell_{2}}(u),\quad \forall \alpha,\beta\in\{1,2\},
\label{eqn:App_components}
\end{equation}
however, by virtue of $\frak{sl}_{2}$-invariance of $R_{f\ell_{1}}(u)R_{f\ell_{2}}(v)$, the matrices $K^{\rm{L},\rm{R}}(u)$ transform covariantly,
\begin{equation}
[s^{\alpha}+S^{\alpha},K^{\rm{L},\rm{R}}(u)]=0,\quad \alpha=\{+,-,z\},
\end{equation}
where $s^{\alpha}$ are fundamental spins, $s^{\pm}=\sigma^{\pm}$ and $s^{z}=\half \sigma^{z}$. The latter property enables us to treat only one of the components from \eqref{eqn:App_components}, since the remaining three will be automatically fulfilled thanks to invariance with respect to $\frak{sl}_{2}$ symmetry. It makes sense to take the simplest component, which is at $(\alpha,\beta)=(1,2)$ in our representation,
\begin{equation}
K^{\rm{L}}_{12}(u)=(x_{1}-x_{2})\partial_{1}\partial_{2}+\left(\ell_{2}-\frac{u}{2}\right)\partial_{1}-\left(\ell_{1}-\frac{u}{2}\right)\partial_{2}.
\label{eqn:K12_explicit}
\end{equation}

Furthermore, it is sufficient to treat an eigenvalue problem for the lowest-state vectors only,
\begin{equation}
R_{\ell_{1}\ell_{2}}(u)w^{0}_{\zeta}=r_{\zeta}w^{0}_{\zeta}.
\end{equation}
Observing that $K^{\rm{L}}_{12}$ couples two adjacent subspaces by mapping lowest-weight vectors to lowest-weight vectors by increasing
the value of $\zeta$ by $1$, we have
\begin{equation}
K^{\rm{L}}_{12}(u)w^{0}_{\zeta}=\beta_{\zeta}(u;\ell_{1},\ell_{2})w^{0}_{\zeta-1},\quad
\beta_{\zeta}(u;\ell_{1},\ell_{2})=\zeta(\ell_{1}+\ell_{2}-u-\zeta+1),
\end{equation}
where $\beta_{\zeta}$ follows from using \eqref{eqn:K12_explicit} on functions $w^{0}_{\zeta}$. The action of $K^{\rm{R}}$ on $w^{0}_{\zeta}$
is readily expressed via exchanging spaces, i.e. by making substitutions $\ell_{1}\leftrightarrow \ell_{2}$, $x_{1}\leftrightarrow x_{2}$
and reversing the sign of $u$,
\begin{equation}
K^{\rm{R}}_{12}w^{0}_{\zeta}=-\beta(-u;\ell_{1},\ell_{2})_{\zeta}w^{0}_{\zeta-1}.
\end{equation}
By putting these results together we find, after projecting $(1,2)$-component onto $w^{0}_{\zeta}$, namely
\begin{equation}
R_{\ell_{1}\ell_{2}}(u)K_{12}^{\rm{L}}w^{0}_{\zeta}=K^{\rm{R}}_{12}(u)R_{\ell_{1}\ell_{2}}(u)w^{0}_{\zeta},
\end{equation}
that the following \textit{recurrence relation} for the eigenvalues $r_{\zeta}$ takes place,
\begin{equation}
r_{\zeta-1}(u)\beta_{\zeta}(u)=-r_{\zeta}(u)\beta_{\zeta}(-u),
\end{equation}
or explicitly
\begin{align}
r_{\zeta}(u)&=-r_{\zeta-1}(u)\frac{\ell_{1}+\ell_{2}-u-\zeta+1}{\ell_{1}+\ell_{2}+u-\zeta+1},\\
r_{\zeta}(u)&=(-1)^{\zeta}r_{0}(u)\prod_{k=1}^{\zeta}\frac{\ell_{1}+\ell_{2}-u-k+1}{\ell_{1}+\ell_{2}+u-k+1}.
\end{align}
The solution, using \textit{initial condition} $R_{\ell_{1}\ell_{2}}(u)(1)=1$ ($r_{0}(u)=1$), can be given in a compact form by means of
Euler Gamma function,
\begin{equation}
r_{\zeta}(u)=(-1)^{\zeta}\frac{\Gamma(\ell_{1}+\ell_{2}+u)}{\Gamma(\ell_{1}+\ell_{2}-u)}
\frac{\Gamma(\ell_{1}+\ell_{2}-u-\zeta)}{\Gamma(\ell_{1}+\ell_{2}+u-\zeta)}.
\end{equation}
 
The $R$-matrix $R_{\ell_{1}\ell_{2}}(u-v)$ in fact intertwines both spectral parameters $(u,v)$ and conformal spins $(\ell_{1},\ell_{2})$
in a tensor product of two Lax operators. By utilizing a handy factorization of the $\frak{sl}_{2}$ Lax operator,
\begin{equation}
L_{f\ell}(u)=
\begin{pmatrix}
u_{-}+x\partial & \partial \cr
(u_{+}-u_{-})x-x^{2}\partial & u_{+}-x\partial
\end{pmatrix}=\exp{(-x s^{-})}
\begin{pmatrix}
u_{-}-1 & \partial \cr
0 & u_{+}
\end{pmatrix}\exp{(x s^{-})},
\label{eqn:App_factorized}
\end{equation}
with new parameters $u_{\pm}=u\pm \ell$, the $R$-matrix may we given as a product of two simpler objects~\cite{Derkachov07}.
The property \eqref{eqn:App_factorized} reflects the $\frak{sl}_{2}$ invariance, allowing to implement a coordinate shift $x\mapsto x+y$ by resorting on
$(s^{-}+S^{-})L_{f\ell}(u)=L_{f\ell}(u)(s^{-}+S^{-})$, namely
\begin{equation}
\exp{(-\lambda s^{-})}L_{f\ell}(u)\exp{(\lambda s^{-})}=\exp{(\lambda \partial)}L_{f\ell}(u)\exp{(-\lambda \partial)}.
\end{equation}
We should stress that this result is equivalent (up to certain trivial transformations) to the exterior $R$-matrix introduced
in chapter \ref{sec:exterior}, which lacked manifest $\frak{sl}_{2}$ invariance. The advantage of the recent construction is in that
most of the information about the generic intertwiner $R_{\ell_{1}\ell_{2}}(u)$ is now carried by basis vectors of irreducible representations
themselves. At poles, $(\ell_{1}+\ell_{2})\in \ZZ_{+}$ the $R$-matrix requires suitable regularization, or one can resort on fusion procedure by invoking
a theory of finite dimensional representations~\cite{KRS81,Faddeev2,Sklyanin92,Kennedy92}.

For homogeneous half-integer spin chains, $\ell_{1}=\ell_{2}=m/2\equiv \ell$ ($m \in \NaN$) one finds the following (finite) expansion over
projectors $\cal{P}^{(k)}$ onto irreducible (finite-dimensional) factors (namely on two-particle subspaces with total spin $k$) in Clebsch-Gordan series,
\begin{equation}
R(\lambda)=\sum_{k=1}^{2\ell}\left[\prod_{j=1}^{k}(j-\lambda)\prod_{j=k+1}^{2\ell}(\lambda+j)\right]\cal{P}^{(k)}.
\end{equation}
These instances belong to a class of $SU(2)$-symmetric interactions which are polynomials in Casimir invariant of maximal degree $2\ell$,
$h^{\ell}_{x,x+1}=\sum_{l=0}^{2\ell}c_{l}(\vec{S}_{x}\cdot \vec{S}_{x+1})^{l}=\sum_{k=0}^{2\ell}\tilde{c}_{k}\cal{P}^{(k)}$ for some scalars $c_{l},\tilde{c}_{k}$,
reading explicitly (ignoring irrelevant constant and multiplicative terms)
\begin{equation}
h^{\ell}_{x,x+1}=\sum_{k=0}^{2\ell}\left(\sum_{j=1}^{k}\frac{1}{j}\right)\cal{P}^{(k)}_{x,x+1}.
\end{equation}

\clearpage
\section*{Coherent-state transfer matrices}
Let us consider quantum monodromy matrices $T_{f\ell}\in \End(\frak{S}^{\otimes n}_{f}\otimes \frak{S}_{\ell})$ with $\frak{S}_{\ell}$ module as
auxiliary space,
\begin{equation}
T_{f\ell}(u)=\prod_{x=1}^{\stackrel{n}{\longrightarrow}}L_{x\ell}(u),
\end{equation}
with extended Lax operators $L_{x\ell}(u)\in \End(\frak{S}^{\otimes n}_{f}\otimes \frak{S}_{\ell})$, $x\in \{1,2,\ldots,n\}$. When
a module $\frak{S}_{\ell}$ belongs to a generic infinite-dimensional lowest-weight representation we can define a multi-parametric family of
quantum transfer operators
$S(\lambda,\ell;\varphi_{\rm{L}},\varphi_{\rm{R}})\in \End(\frak{S}^{\otimes n}_{f})$,
\begin{align}
S(\lambda,\ell;\varphi_{\rm{L}}\varphi_{\rm{R}})&:=\dbra{\varphi_{\rm{L}}}T_{f\ell}(u)\dket{\varphi_{\rm{R}}},\\
[S(\lambda,\ell_{1};\varphi_{\rm{L}},\varphi_{\rm{R}}),S(\mu,\ell_{2};\varphi_{\rm{L}},\varphi_{\rm{R}})]&=0,\qquad
\forall \lambda,\mu,\ell_{1},\ell_{2},\varphi_{\rm{L}},\varphi_{\rm{R}}\in \CC,
\end{align}
defined through contractions with respect to the left and right coherent states
\begin{align}
\ket{\varphi_{\rm{R}}}:=\exp{\left(\varphi_{\rm{R}}(S^{+}_{1}(\ell_{1})+S^{+}_{2}(\ell_{2}))\right)}\ket{0},\\
\bra{\varphi_{\rm{L}}}:=\bra{0}\exp{\left(\varphi_{\rm{L}}(S^{-}_{1}(\ell_{1})+S^{-}_{2}(\ell_{2}))\right)}.
\end{align}
The transfer matrix property is an immediate consequence of $\frak{sl}_{2}$-invariance \eqref{eqn:sl2_invariance}, implying
preservation of coherent states,
\begin{equation}
R_{\ell_{1}\ell_{2}}(u)\ket{\varphi_{\rm{R}}}=\ket{\varphi_{\rm{R}}},\quad
\bra{\varphi_{\rm{L}}}R_{\ell_{1}\ell_{2}}(u)=\bra{\varphi_{\rm{L}}}.
\end{equation}
Note that a contraction with respect to the lowest-weight state corresponds to the extremal case $\varphi_{\rm{L}}=\varphi_{\rm{R}}=0$.
In the driven Heisenberg chain scenario, coherent state vectors describe solutions with twisted coherent boundary fields~\cite{KPS13,PKS13}.
\chapter[Exterior integrability structures]{Properties of exterior integrability structures}
\label{sec:App_properties}

\subsection*{Properties of the $\PBR$-matrix}
After having proved the existence of the $\PBR$-matrix of the form
\begin{equation}
\PBR(p,s)=\exp{\left((p-p^{\prime})\bb{H}\left(\frac{p+p^{\prime}}{2}\right)\right)},
\label{eqn:PR_matrix}
\end{equation}
we now state some of its most important properties.
\begin{enumerate}
 \item Perhaps most notably, $\PBR(p,p^{\prime})$ \textit{does not} display the difference property, i.e. does not depend only on the difference of parameters $y=p-p^{\prime}$, in contrast to standard solutions of quantum Yang-Baxter equation associated to Lie algebra symmetries (or their quantizations).
 \item Regularity: $\PBR(p,p^{\prime})=\one_{a}$.
 \item Unitarity: $\PBR(p,p^{\prime})\PBR(p^{\prime},p)=\one_{a}$.
 \item $P$-parity: $\pi_{a}(\PBR(p,p^{\prime}))=\PBR(p^{\prime},p)$ .
 \item All eigenvalues of $\PBR(p,p^{\prime})$ are equal to $1$. This implies that $\PBR(p,p^{\prime})$ has a \textit{non-trivial} Jordan canonical form, with $(\alpha+1)$-dimensional $\alpha$-blocks $\PBR^{(\alpha)}(p,p^{\prime})$ corresponding to a single irreducible Jordan block. This follows from Jordan decomposition of the generator $\bb{H}(x)$,
\begin{equation}
\bb{H}^{(\alpha)}(x)=\bb{W}^{(\alpha)}(x)\boldsymbol{\Delta}^{(\alpha)}\left(\bb{W}^{(\alpha)}(x)\right)^{-1},
\end{equation}
where $(\alpha+1)$-dimensional upper-triangular transformations $\bb{W}^{(\alpha)}$ read element-wise
\begin{align}
W^{(\alpha)}_{kl}(x)&=(-1)^{k+l}2^{l-\alpha}\binom{\alpha}{l}^{-1}\binom{\alpha-k}{\alpha-l}\binom{2x}{\alpha-l},\\
\Delta_{kl}^{(\alpha)}&=\frac{2^{l-k+1}}{k-l},\quad k>l,\quad
\Delta_{kl}^{(\alpha)}=0,\quad k\leq l.
\end{align}
 \item The $\PBR$-matrix $\PBR(x+\frac{y}{2},x-\frac{y}{2})$ is a \textit{holomorphic} object almost everywhere in the entire complex plane $x,y\in \CC$, except in
a discrete set of points $x\in \half \CC$ where divergences pertaining to \textit{simple poles} occur. This fact is a consequence of (i) the form of the generator $\bb{H}(x)$ defined in Lemma \ref{lem1}, (ii) the property from the point $5.$ above, saying that exponentiation in $\frak{H}_{a}^{(\alpha)}$ subspaces terminates after $(\alpha+1)$ steps, and (iii) quadratic algebraic relations of
residue-matrices $\bb{X}^{(p)}$ (see definition \eqref{eqn:X_definition}),
\begin{equation}
\bb{X}^{(p)}\bb{X}^{(r)}=0,\quad p\geq r,
\end{equation}
which can be proven by using induction arguments and few basic binomial identities.
 \item A symmetry with respect to transposition of the $\PBR$-matrix is given by
\begin{equation}
(\bb{U}(p)\otimes \bb{U}(p^{\prime}))\PBR(p,p^{\prime})(\bb{U}^{-1}(p^{\prime})\otimes \bb{U}^{-1}(p))=\PBR^{T}(p,p^{\prime}),
\label{eqn:transposition:symmetry}
\end{equation}
where $\bb{U}(\lambda)\in \End(\frak{H}_{a})$ is a diagonal operator with elements
\begin{equation}
U^{k}_{l}(p)=\binom{2p}{l}\delta_{kl}.
\label{eqn:U_diagonal}
\end{equation}
Notice that $\bb{U}^{-1}(\lambda)$ exists for $\lambda \notin \half \ZZ_{+}$.
Transposed elements of the Lax matrix are then obtained as
\begin{equation}
\bb{A}_{s}^{T}(p)=(-1)^{s}\bb{U}(p)\bb{A}_{-s}(p)\bb{U}^{-1}(p).
\label{eqn:transposed_representation}
\end{equation}
\end{enumerate}

\subsection*{Properties of the monodromy matrix}
Understanding properties of monodromy elements might turn out useful for potential applications in ABA-type constructions
of quasiparticle excitations for an underlying physical theory. Specifically, in the context of our construction it seems a natural idea to use information contained in $T^{k}_{l}(p)$ for algebraic implementation of eigenstates of the NESS density operator.
Note that the $S$-operator, despite being in some sense more fundamental than the density operator it generates, does not allow for direct physical interpretations, since it is not \textit{diagonalizable} object in the first place. Below we provide a list of the most notable properties partially based on empirical observations with assistance of symbolic algebra calculations.

\begin{enumerate}
 \item For any finite system of size $n$, the monodromy matrix $\bb{T}(\lambda)$ is of a \textit{banded} structure,
\begin{equation}
T^{k}_{l}(\lambda)=0\quad {\rm if}\quad |k-l|>n,
\end{equation}
implied by magnetization selection rule \eqref{eqn:magnetization_selection}. In addition, all $(n+1)^{2}$ operators $T^{k}_{l}$ from the fundamental square $0\leq k,l \leq n$ are conjectured to be \textit{linearly independent}. By fixing the distance from the diagonal $d=|k-l|$, there are only $n-d+1$ independent elements, whereas the others can be expressed as
\begin{equation}
T^{l}_{l+d}(p)=\sum_{k=0}^{n-d}c^{+}_{n,d,l,k}(p)T^{k}_{k+d}(p),\quad
T^{l+d}_{l}(p)=\sum_{k=0}^{n-d}c^{-}_{n,d,l,k}(p)T^{k+d}_{k}(p),
\end{equation}
with $c^{\pm}_{n,d,l,k}(p)$ being some rational functions of parameter $p$ with integer coefficients.
 \item An interesting question is if there exist other operators (beside the transfer matrix $T^{0}_{0}(p)$) in the space of
diagonal elements $T^{k}_{k}(p)$ (i.e. the zero-magnetization sector) which similarly constitute a commuting family. Empirical evidence
indicates that there is a single such operator, say $\tilde{S}(p)$, expanded as
\begin{equation}
\tilde{S}(p)=\sum_{k=1}^{n}(-1)^{n+k}\binom{n}{k}\frac{2p-n+1}{2p-k+1}T^{k}_{k}(p),
\end{equation}
with commutative property,
\begin{equation}
[\tilde{S}(p),\tilde{S}(p^{\prime})]=0,\quad [\tilde{S}(p),S(p^{\prime})]=0.
\end{equation}
 \item Bethe ansatz applications require to express an iterated action of monodromy matrix elements on an appropriate vacuum state.
To this end, let $\ket{\Omega_{m}}\in \frak{H}_{s}$ designate an arbitrary state from the $\binom{n}{m}$-dimensional subspace of states with \textit{exactly} $m$ spins pointing upwards and magnetization $M\ket{\Omega_{m}}=2(m-n)\ket{\Omega_{m}}$. Similarly, let $\ket{\widetilde{\Omega}_{m}}$ be any state with precisely $m$ spins pointing downwards.
The corresponding (Bethe) vacua are ferromagnetic states $\ket{\Omega_{0}}$ and $\ket{\widetilde{\Omega}_{0}}$, respectively. Then, we find for the monodromy elements lying on fixed diagonals the following relations take place when considering their action on both vacua,
\begin{align}
T^{l}_{l+d}(p)\ket{\Omega_{0}}&=\binom{2p-l}{d}\binom{2p-2l}{d}^{-1}T^{0}_{d}(p-l)\ket{\Omega_{0}},\\
T^{l+d}_{l}(p)\ket{\widetilde{\Omega}_{0}}&=\binom{l+d}{l}T^{d}_{0}(p-l)\ket{\widetilde{\Omega}_{0}}.
\end{align}
Analogous identities involving shifts of parameter $\lambda$ can be given for general $m$-particle sectors as well,
\begin{align}
T^{l}_{l+d}(p)\ket{\Omega_{m}}&=\sum_{k=0}^{m}r^{d,m}_{l,k}(\lambda)T^{k}_{k+d}(p-(l-k))\ket{\Omega_{m}},\\
T^{l+d}_{l}(p)\ket{\Omega_{m}}&=\sum_{k=0}^{m}s^{d,m}_{l,k}(\lambda)T^{k+d}_{k}(p-(l-k))\ket{\widetilde{\Omega}_{m}}.
\end{align}
Functions $r^{d,m}_{l,k}(p)$ and $s^{d,m}_{l,k}(p)$ are again some $n$-independent rational functions of $p$ with $\ZZ$-valued coefficients defined on for $p \notin \half \ZZ_{+}$. We also stress that index $d$ can be essentially interpreted as
a number of quasi-particle excitations with respect to generic reference states with well-defined magnetization.

For purposes of developing ABA it might turn out useful to express transposed monodromy elements with reversed sign of
representation parameter $p$. The motivation behind this is to implement hermitian conjugation of the $S$-operator appearing in the Cholesky factorization of the NESS (recall that $\epsilon \sim \ii/p \in \RaR$),
\begin{equation}
S^{\dagger}(\lambda)=S^{T}(-\lambda),
\end{equation}
as linear combinations of the elements of $\bb{T}(\lambda)$. Writing
\begin{equation}
\tilde{T}^{l}_{k}(p):=(-1)^{n}(T^{k}_{l}(p))^{T},
\end{equation}
we were able to find
\begin{align}
T^{l}_{l+d}\ket{\Omega_{m}}&=\sum_{k=0}^{m}f_{l,k}^{d,m}(p)\widetilde{T}^{k+d}_{k}(p-(d+l+k))\ket{\Omega_{m}},\\
\tilde{T}^{l+d}_{l}\ket{\Omega_{m}}&=\sum_{k=0}^{m}g_{l,k}^{d,m}(p)T^{k}_{k+d}(p+(d+l+k))\ket{\Omega_{m}},\\
T^{l+d}_{l}\ket{\tilde{\Omega}_{m}}&=\sum_{k=0}^{m}g_{l,k}^{d,m}(-p)\widetilde{T}^{k}_{k+d}(p-(d+l+k))\ket{\widetilde{\Omega}_{m}},\\
\widetilde{T}^{l}_{l+d}\ket{\widetilde{\Omega}_{m}}&=\sum_{k=0}^{m}f_{l,k}^{d,m}(-p)T^{k+d}_{k}(p+(d+l+k))\ket{\widetilde{\Omega}_{m}}.
\end{align}
Yet again, $f_{l,k}^{d,m}(\lambda)$ and $g_{l,k}^{d,m}(\lambda)$ are some non-trivial $n$-independent rational functions with
$\ZZ$-valued coefficients.
\end{enumerate}
\end{appendices}

% ---------------
% SLOVENE SUMMARY
% ---------------
\backmatter
\newpage
\renewcommand{\theequation}{\roman{equation}}
\setcounter{equation}{0}
\renewcommand{\thesection}{\Roman{section}} 
\renewcommand{\thesubsection}{\thesection.\Roman{subsection}}
\selectlanguage{slovene}
\phantomsection
\addcontentsline{toc}{chapter}{Razširjen povzetek}
\chapter*{Raz\v sirjen povzetek\markboth{RAZ\v SIRJEN POVZETEK}{}}
%\chapter{Razširjen povzetek}

\section*{Lindbladova master enačba}
V doktorskem delu obravnavamo neravnovesna stacionarna stanja nekaterih interagirajočih kvantnih integrabilnih spinskih verig v formalizmu odprtih sistemov.
Poslužimo se opisa s t.i. \textit{kvantno master enačbo} (ang. \textit{quantum master equation}), ki določa \textit{neunitaren} časovni razvoj gostotne
matrike centralnega sistema, ki je sklopljen z enim ali več makroskopskimi rezervoarji. Kršitev unitarnosti izhaja kot posledica
izločitve rezervoarjem pripadajočih prostostnih stopenj. Slednje nadomestimo z disipativnimi členi, s katerimi modeliramo efektivne
termodinamske potenciale.

Standardna izpeljava časovno avtonomne kvantne master enačbe poteka v okviru Born-Markovske aproksimacije~\cite{BreuerBook}, tj. pri
predpostavki, da je sklopitev med sistemom ter okolico šibka, kar upraviči argumentacijo znotraj perturbacijske teorije. Pri nadaljnji
poenostavitvi preko izpovprečitve hitro-oscilirajočih členov (ang. rotating-wave approximation) pridelamo obliko
\textit{Lindbladove enačbe}~\cite{Lindblad76,GKS78},
\begin{align}
\rho_{\rm{sys}}(t)&=\VV(t)\rho_{\rm{sys}}(0),\quad \VV(t)=\exp{(t \LL)},\quad
\LL \rho_{\rm{sys}}=-\ii [H,\rho_{\rm{sys}}]+\DD(\rho_{\rm{sys}}), \\
\DD(\rho_{\rm{sys}})&:=\sum_{k}\Gamma_{k}\left(A_{k}\rho_{\rm{sys}}A^{\dagger}_{k}-\half\left\{A^{\dagger}_{k}A_{k},\rho_{\rm{sys}}\right\}\right),
\end{align}
Disipacijske člene v celoti popišemo z naborom operatorjev $\{A_{k}\}$, ki pripadajo nekoherentnim vzbuditvam z jakostjo sklopitve $\Gamma_{k}$.
Po drugi strani ja pa takšen časovni razvoj mogoče upravičiti tudi iz bolj pragmatičnega vidika, saj predstavlja najsplošnejšo obliko
avtonomne zvezno-časovne master enačbe, ki spoštuje ohranjanje \textit{sledi} ter \textit{pozitivnosti} gostotnih operatorjev.

V delu obravnavo preprost prototipski model za študijo lastnosti kvantnega transporta v interagirajočih spinskih verigah daleč stran od kanoničnega ravnovesja,
kjer omejimo delovanje disipativnih procesov izključno na robove sistema. V prispodobi lahko rečemo, da na ta način simuliramo kvantno žico
priklopljeno na zunanjo ``baterijo''. Čeprav je smiselnost sklopitve na robu precej težko rigorozno upravičiti na osnovi mikroskopske narave
sklopitve med sistemom in okolico, pa ima uporabljeni model intuitiven in nazoren pomen, in kar je še najpomembneje, omogoča nadaljnjo
analitično obravnavno.

\section*{Teorija kvantne integrabilnosti}
Zavoljo nadaljnje diskusije velja porabiti nekaj besed o teoriji kvantne integrabilnosti. Za razliko od Liouville--Arnoldove definicije
v kontekstu klasičnih dinamičnih sistemov, kjer integrabilnost sovpada z obstojem makroskopskega števila funkcijsko neodvisnih medsebojno
Poissonovo-komutirajočimi integrali gibanja, je potrebno biti pri kvantni formulaciji nekoliko previdnejši. Zaradi same strukturne kvantnega konfiguracijskega
(tj. Hilbertovega) prostora se namreč klasična definicija trivializira, zato je potrebno dodatno zahtevati \textit{lokalnost} ohranitvenih količin.
Seveda je v kvantni domeni potrebno skladno s korespondenčnim načelom nadomestiti Poissonov oklepaj z Liejevim oklepajem, medtem ko
opazljivke postanejo hermitski operatorji.
Za krajši pregled osnovnih pojmov in konstrukcij iz teorije kvantne integrabilnosti bralca napotimo k
referencam~\cite{Faddeev1,Faddeev2,Sklyanin92,DoikouLectures,DoikouLectures2}.

Konvencionalno se integrabilne kvantne sisteme razume v smislu rešitev slavne kvantne \textit{Yang--Baxterjeve enačbe}.
Hamiltonove operatorje integrabilnih modelov, vključno z njihovimi pripadajočimi konstantami gibanja, dobimo iz generirajočega operatorja imenovanega
\textit{kvantni prehodni operator} (ang. \textit{quantum transfer operator}) $\tau(\lambda)$. Gre za analitičen objekt, ki ga odlikuje lastnost
komutiranja pri poljubnih vrednostih t.i. spektralnih parametrov $\{\lambda,\mu\}$,
\begin{equation}
[\tau(\lambda),\tau(\mu)]=0,
\end{equation}
kar direktno implicira, da je mogoče razumeti operatorske koeficiente v formalnem analitičnem razvoju $\tau(\lambda)$ kot konstante gibanja
prirejenega kvantnega sistema (Hamiltonov operator navadno interpretiramo kot operator z dvodelčno interakcijo), v kolikor uspemo pokazati
manifestacijo lokalne strukture (kar ni posebej težavno v sistemih s translacijsko simetrijo). Kvantne prenosne operatorje dobimo iz splošnejših entitet, tako imenovanih
\textit{kvantnih monodromij} $\bb{T}_{a}(\lambda)$ (ang. quantum monodromy operator), ki jih razumemo kot operatorje nad pomožnim (matričnim) prostorom
z matričnimi elementi iz fizikalnega večdelčnega operatorskega prostora. Že iz samega imena objekta je razvidno, da ima integrabilnost naraven
diferencialno-geometrijski pomen, saj je Heisenbergovo dinamično enačbo (v zveznem prostoru) mogoče predstaviti kot vzporedni prenos
tangentnega pomožnega vektorja po $2D$ mehki mnogoterosti. Monodromija ima tu pomen ploščatosti holonomije.

Kvantni prenosni operator pridelamo iz pripadajoče monodromije navadno preko operacije \textit{parcialne sledi} po pomožnem prostoru,
\begin{equation}
\tau(\lambda)=\tr_{a}(\bb{T}_{a}(\lambda)).
\end{equation}
Fundamentalno algebrajsko zahtevo za operator monodromije $\bb{T}_{a}(\lambda)$, ki nemudoma implicira komutiranje prenosnih operatorjev,
udejanja \textit{RTT enačba} nad tenzorskim produktom dveh pomožnih prostorov,
\begin{equation}
\bb{R}_{a_{1}a_{2}}(\lambda,\mu)\bb{T}_{a_{1}}(\lambda)\bb{T}_{a_{2}}(\mu)=\bb{T}_{a_{2}}(\mu)\bb{T}_{a_{1}}(\lambda)\bb{R}_{a_{1}a_{2}}(\lambda,\mu).
\end{equation}
Tu je $R_{a_{1}a_{2}}(\lambda,\mu)$ obrnljiv operator, ki mu preprosto rečemo kar (kvantna) $R$-matrika. Slednja mora ubogati dodaten
združljivostni pogoj v obliki Yang-Baxterjeve enačbe nad trojnim pomožnim prostorom,
\begin{equation}
\bb{R}_{a_{1}a_{2}}(\lambda,\mu)\bb{R}_{a_{1}a_{3}}(\lambda,\eta)\bb{R}_{a_{2}a_{3}}(\mu,\eta)=
\bb{R}_{a_{2}a_{3}}(\mu,\eta)\bb{R}_{a_{1}a_{3}}(\lambda,\eta)\bb{R}_{a_{1}a_{2}}(\lambda,\mu).
\end{equation}
Tenzorska struktura fizikalnega Hilbertovega prostora omogoča faktorizacijo monodromije $\bb{T}_{a}(\lambda)$ na lokalne komponente,
tj. \textit{kvantne Laxove operatorje} $\bb{L}_{x}(\lambda)$, ki delujejo na izbranem fizikalnem mestu $x$,
\begin{equation}
\bb{T}_{a}(\lambda):=\bb{L}_{1}(\lambda)\cdots \bb{L}_{n}(\lambda)=\prod_{x=1}^{\stackrel{n}{\longrightarrow}}\bb{L}_{x}(\lambda).
\end{equation}
Posledično zadostuje obravnava lokalnega pogoja v obliki \textit{RLL enačbe},
\begin{equation}
\bb{R}_{a_{1}a_{2}}(\lambda,\mu)\bb{L}_{a_{1}}(\lambda)\bb{L}_{a_{2}}(\mu)=\bb{L}_{a_{2}}(\mu)\bb{L}_{a_{1}}(\lambda)\bb{R}_{a_{1}a_{2}}(\lambda,\mu).
\end{equation}
Kot primer navedimo najosnovnejšo rešitev takšnega pogoja, ki generira integrabilen sistem antiferomagnetne izotropne
Heisenbergove verige polovičnih spinov, ali krajše, Heisenbergovega XXX modela
\begin{equation}
H^{\rm{XXX}}=\sum_{x=1}^{n-1}h^{\rm{XXX}}_{x,x+1},\quad h^{\rm{XXX}}_{x,x+1}=2(\sigma^{+}_{x}\sigma^{-}_{x+1}+\sigma^{-}_{x}\sigma^{+}_{x+1})+
\sigma^{z}_{x}\sigma^{z}_{x+1},
\end{equation}
kjer je potrebno vzeti $6$-točkovno racionalno $R$-matriko nad $\CC^{2}\otimes \CC^{2}$ produktnim pomožnim prostorom
\begin{equation}
\bb{R}_{a_{1}a_{2}}(\lambda)=\lambda \one_{4}+\bb{P}_{a_{1}a_{2}},
\end{equation}
ki (do aditivne konstante) sovpada s permutacijskim operatorjem $\bb{P}_{a_{1},a_{2}}$ nad dvodelčnim prostorom polovičnih spinov.
Pripadajoči Laxov operator, ki predstavlja matrično delovanje spin-$1/2$ generatorjev $\{s^{\alpha}\}$ je oblike
\begin{equation}
\bb{L}_{a,x}(\lambda)=\sum_{\alpha=\{+,-,z\}}\boldsymbol{\sigma}^{\alpha}_{a}\otimes s^{\alpha}_{x}.
\end{equation}
Izkaže se, da je lokalne integrale gibanja $\{H^{(k)}\}$ mogoče dobiti preko višjih logaritemskih odvodov prenosnega operatorja
\begin{equation}
H^{(k)}=\left[\left(\frac{\partial}{\partial \lambda}\right)^{k}\log{\tau(\lambda)}\right]_{\lambda=\lambda_{0}},\quad H^{(2)}\sim H^{\rm{XXX}}, \quad k\in \{2,3,\ldots,n\},
\end{equation}
v okolici t.i. regularne točke $\lambda_{0}$, kjer Laxov operator sovpada s permutacijo nad $\CC^{2}\otimes \CC^{2}$.

Diagonalizacijo operatorja $\tau(\lambda)$,
\begin{equation}
\bb{T}_a(\lambda)=
\begin{pmatrix}
A(\lambda) & B(\lambda) \cr
C(\lambda) & D(\lambda)
\end{pmatrix},
\end{equation}
lahko izvedemo preko postopka \textit{algebraičnega Bethejevega nastavka} (ang. \textit{Algebraic Bethe Ansatz}), pri čemer
izven-diagonalen element $B(\lambda)$ monodromije interpretiramo kot kvazi-delčne (recimo temu ``multi-magnonske'') ekscitacije
feromagnetnega ``vakuuma'' $\ket{\Omega}$, spektralne parametre pa kot prirejene vrednosti gibalnih impulzov za $N$-delčna Bethejeva stanja,
\begin{equation}
\ket{\psi_{N}}=B(\lambda_{N})\cdots B(\lambda_{2})B(\lambda_{1})\ket{\Omega},\quad \ket{\Omega}=\ket{\ua}^{\otimes n}.
\end{equation}
Pogoj za obstoj sistemu lastnih načinov izrazimo preko nelinearnega sistema Bethejevih enačb, ki fiksira ustrezne vrednosti
impulzov $\{\lambda_{i}\}_{i=1}^{N}$. V fizikalnem kontekstu lahko torej RTT enačbo razumemo kot strukturne konstante za kvadratično
algebro generatorjev kvazi-delčnih načinov, ki udejanja \textit{faktorizacijo} pripadajoče večdelčne kvantne sipalne matrike na produkt
dvodelčnih sipanj. Naloga Yang-Baxterjeve enačbe je v tem kontekstu poskrbeti za asociativnost faktorizacijskega procesa.
Z besedami lahko torej povemo, da integrabilnost implicira možnost predstavitve dinamike preko popolne redukcije na problem dvodelčnega sipanja,
za razliko od kvazi-prostih modelov, ki po drugi strani dovoljujejo opis v okviru enodelčne teorije.

\paragraph*{Kvantne grupe.}
Pod pojmom kvantnih grup navadno razumemo določeno vrsto \textit{Hopfovih algeber}. Primer trivialne Hopfove algebre je univerzalna
ovojna algebra $\cal{U}(\frak{g})$ dane (pol-preproste) Liejeve algebre $\frak{g}$, s komutirajočo operacijo koprodukta.
Hopfovo algebro $\cal{U}(\frak{g})$ je mogoče zvezo deformirati tako, da ustrezno pokvarimo kanonične oz. `klasične' komutacijske relacije,
vendar pri tem obdržimo nadzor nad nekomutativnostjo koprodukta, od koder v igro vstopi omenjena $R$-matrika.
Od tu tudi izhaja pojmovanje ``kvantizacije Lieveje grupe''. Kvantne grupe $\cal{U}_{q}(\frak{g})$ oz. t.i. $q$-deformacije univerzalne ovojne
algebre $\cal{U}_{q}(\frak{g})$ tako dajo RLL enačbi tudi matematično formalen pomen. Vloga kvantizacijskega parametra $q$
je razširitev fundamentalnih integrabilnih $\frak{g}$-invariantnih modelov v eno-parametrične družine integrabilnih Hamiltonovih operatorjev.
V primeru uvodoma omenjenega Heisenbergovega XXX modela $q$-deformacija ustreza parametru aksialne anizotropije interakcije.
Rigorozna konstrukcija kvantnih grup je bila sprva predstavljena v delu Faddeeva, Reshetikhina in Takhtajana~\cite{FRT88,Jimbo90}.

Še posebno pa velja omeniti koncept univerzalnega $\cal{R}$-operatorja na osnovi katerega, denimo za primer $\frak{g}=\frak{sl}_{2}$,
dobimo najsplošnejšo obliko $\frak{sl}_{2}$-invariantne rešitve kvantne Yang-Baxterjeve enačbe na treh \textit{poljubnih} upodobitvah
algebre $\cal{U}_{q}(\frak{g})$ (podane z upodobitvenimi parametri $\{\ell_{1},\ell_{2},\ell_{3}\}$ ter spektralnimi parametri $\{\lambda,\mu,\eta\}$),
\begin{equation}
R_{\ell_{1}\ell_{2}}(\lambda-\mu)R_{\ell_{1}\ell_{3}}(\lambda-\eta)R_{\ell_{2}\ell_{3}}(\mu-\eta)=
R_{\ell_{2}\ell_{3}}(\mu-\eta)R_{\ell_{1}\ell_{3}}(\lambda-\eta)R_{\ell_{1}\ell_{2}}(\lambda-\mu).
\end{equation}
Omenjena družina rešitev je vitalnega pomena za konstrukcijo integrabilnih stacionarnih gostotnih operatorjev,
ki je ji bomo posvetili v nadaljevanju.

\section*{Matrično-produktna stanja}

Za opis splošnih kvantih stanj (gostotnih matrik) se poslužimo t.i. \textit{matrično-produktnega nastavka} (ang. \textit{matrix product ansatz}),
ki se široma uporablja kot standardno in naravno orodje v kontekstu metode časovno-odvisne renormalizacijske grupe gostotne matrike~\cite{Schollwock05,Schollwock11}, kot
osnova za učinkovito klasično simulacijo kvantne dinamike v koreliranih elektronskih sistemih v eni prostorski razsežnosti.
Znano pa je tudi, da matrično-produktna predstavitev dobro opiše kvantna stanja s šibko biparticijsko entropijo prepletenosti, kot na primer
osnovna stanja Hamiltonovih operatorjev v nekritičnih kvantih fazah~\cite{Hastings07}, osnovna stanja t.i. modelov valenčne vezi~\cite{AKLT88} (ang. valence-bond solid),
pa tudi izven kvantne domene, kot recimo stacionarna stanja klasičnih stohastičnih izključitvenih procesov~\cite{SchutzBook},
ki omogočajo kompaktno kompresijo stanj zavoljo algebrajske redundance~\cite{HN83,Derrida93}.

V splošnem tako $n$-delčno valovno funkcijo sistema $\ket{\psi}$ z lokalnim $d$-razsežnim Hilbertovim prostorom zapišemo v obliki
\begin{equation}
\ket{\psi}=\sum_{i_{1},i_{2},\ldots,i_{n}=1}^{d}
\mathbb{A}_{1}^{[i_{1}]}\mathbb{A}_{2}^{[i_{2}]}\cdots \mathbb{A}_{n}^{[i_{n}]}\ket{i_{1}}\otimes \ket{i_{2}}\otimes \cdots \otimes \ket{i_{n}},
\end{equation}
pri čemer smo koeficiente v razvoju izrazili preko produkta matrik ${\mathbb{A}_{x}^{[i_{x}]}}$, ki delujejo nad \textit{pomožnim} Hilbertovim
prostorom. Čeprav morajo biti slednje pri povsem splošnem kanoničnem razvoju pravokotne matrike katerih razsežnosti lahko mestoma varirajo,
v bodoče privzamemo prostorsko \textit{homogeno} (uniformno) obliko, tj. nabor $d^{2}$ pomožnih matrik, ki ne zavisijo od pozicijskega indeksa $i_{x}$.
Pričakujemo, da je to smiselna izbira dokler se zanimamo zgolj za ``osnovna stanja'' kvantnih Liouvillovih operatorjev $\LL$,
oziroma za \textit{fiksne točke} dinamične pol-grupe $\VV(t)$,
\begin{equation}
\VV(t)\rho_{\infty}=\rho_{\infty}.
\end{equation}
Iščemo torej operatorje, ki so v jedru generatorja $\LL$,
\begin{equation}
\LL \rho_{\infty}=0.
\end{equation}
Čeravno splošni principi, ki bi zagotavljali, da je naša izbira matrično-produktnega nastavka smiseln opis za stacionarno
stanje za nekatere preproste oblike Lindbladovih enačb niso znani, v nadaljevanju pokažemo, da je mogoče konstruirati izolirane netrivialne
primere, kjer je omenjeni nastavek ustrezen in dovoljuje \textit{točen} analitičen zapis pripadajočih pomožnih matrik.

\section*{Točna rešitev Heisenbergove XXZ verige}

Preden zaidemo v obravnavo interagirajočih spinskih verig velja omeniti, da je neinteragirajoče (tj. kvazi-proste) sisteme z
\textit{linearno} disipacijo mogoče celovito obravnavati v sklopu Gaussovske (tj. kvadratne) Liouvillove teorije~\cite{Prosen08,Dzhioev11}.
Zadostuje le zapis dinamične enačbe za $2$-točkovno korelacijsko funkcijo (neravnovesno Greenovo funkcijo), saj je korelacije višjega reda mogoče reducirati s pomočjo
Wickovega izreka. Čeprav so se pojavile tudi nekatere točne rešitve neravnovesnih stacionarnih stanj izven Gaussovske domene~\cite{Znidaric11,Eisler11}, gre znova
vendarle za neinteragirajoče Hamiltonove operatorje. Prvi uspešen prodor v svet odprtih večdelčnih \textit{interagirajočih} kvantih odprtih sistemov
predstavlja rešitev~\cite{PRL106,PRL107} anizotropne Heisenbergove verige polovičnih spinov, ki se ji bomo posvetili v nadaljevanju.

Disipacijo modeliramo s paroma Lindbladovih operatorjev, ki poskušajo nekoherentno polarizirati spina na robovih sistema,
\begin{equation}
A^{\pm}_{1}=\sqrt{\epsilon_{\rm{L}}(1\pm \mu)/2}\sigma_{1}^{\pm},\qquad A^{\pm}_{n}=\sqrt{\epsilon_{\rm{R}}(1\mp \mu)/2}\sigma^{\pm}_{n},
\end{equation}
pri čemer sta $\epsilon_{\rm{L}}$ in $\epsilon_{\rm{R}}$ sklopitveni konstanti s pripadajočima rezervoarjema, ter $\mu$ \textit{efektivni}
kemijski potencial. Izkaže se, da je neperturbativna formulacija rešitve možna za primer $\epsilon_{\rm{L}}=\epsilon_{\rm{R}}=\epsilon$ ter $\mu=1$, ki ustreza
maksimalnemu simetričnemu ``poganjanju''. Tedaj se problem fiksne točke glasi,
\begin{align}
\ii[H,\rho_{\infty}]&=\epsilon \DD \rho_{\infty},\quad \DD\rho_{\infty}=\DD_{1}\rho_{\infty}+\DD_{n}\rho_{\infty},\nonumber \\
\DD_{1}\rho_{\infty}&=2\sigma_{1}^{+}\rho \sigma_{1}^{-}-\{\sigma_{1}^{-}\sigma_{1}^{+},\rho\},\quad
\DD_{n}\rho_{\infty}=2\sigma_{n}^{-}\rho \sigma_{n}^{+}-\{\sigma_{n}^{+}\sigma_{n}^{-},\rho\}.
\end{align}
Empirični pristop z opazovanjem rešitev dobljenih iz točne diagonalizacije generatorja $\LL$ razkrije nekatere lepe ter obenem nenavadne
lastnosti, ki nakazujejo na ``rešljivost'' stacionarnega stanja $\rho_{\infty}(\epsilon)$. Velja izpostaviti predvsem kvadratično rast
Schmidtovega ranka simetrične biparticije $\rho_{\infty}$, ter opažanje, da so amplitude v razvoju po večdelčni bazi operatorskega prostora polinomi
v $\epsilon$ maksimalne stopnje $n$ s celoštevilskimi koeficienti. Na osnovi takšnih ugotovitev naposled predlagamo, po
vzoru originalnih referenc~\cite{PRL106,PRL107}, sledečo dekompozicijo (nenormaliziranega) $\rho_{\infty}$ \textit{po Choleskem},
\begin{equation}
\rho_{\infty}(\epsilon)=S_{n}(\epsilon)S_{n}(\epsilon)^{\dagger}.
\end{equation}
Chokeskyjev faktor, ki ga poimenujemo preprosto kot $S$-operator, lahko razumemo tudi kot ``matrično amplitudo'' gostotnega operatorja.
Slednjega predstavimo s homogenim matrično-produktnim stanjem oblike
\begin{equation}
S_{n}=\sum_{\underline{s}\in \{\pm,0\}^{n}}\bra{0}\bb{A}^{s_{1}}\bb{A}^{s_{2}}\cdots \bb{A}^{s_{n}}\ket{0}
\sigma^{s_{1}}\otimes \sigma^{s_{2}}\otimes \cdot \otimes \sigma^{s_{n}}.
\end{equation}
Hitro se je mogoče prepričati, da pogoj fiksne točke sledi iz naslednje globalne identitete za $S$-operator,
\begin{equation}
\ii [H,S_{n}(\epsilon)]=\epsilon \left(\sigma^{z}\otimes S_{n-1}(\epsilon)-S_{n-1}(\epsilon)\otimes \sigma^{z}\right),
\end{equation}
ki je bila sprva rešena z uporabo (glej ~\cite{PRL107}) eksplicitne realizacije homogene kubične algebre v prostoru pomožnih matrik.
Zatem so avtorji v~\cite{KPS13} pokazali, kako je isto rešitev pravzaprav moč zapisati preko kvadratne algebre generatorjev $\cal{U}_{q}(\frak{sl}_{2})$
simetrije. Namreč, če v smislu generalizacije matričnega nastavka iz klasičnih izključitvenih procesov, $S$-operator izrazimo kot projekcijo
$n$-kratnega tenzorskega produkta lokalne matrike $\bb{L}$ na pomožni vakuum, tj.
\begin{equation}
S_{n}(\epsilon)=\bra{0}\bb{L}(\epsilon)^{\otimes n}\ket{0},\quad \bb{L}=
\begin{pmatrix}
\bb{A}_{0}(\epsilon) & \bb{A}_{+}(\epsilon) \cr
\bb{A}_{-}(\epsilon) & \bb{A}_{0}(\epsilon)
\end{pmatrix},
\end{equation}
potem zadostuje vzpostaviti veljavnost nekakšne operatorske različice pogoja ``ničelne divergence''
\begin{equation}
[h^{\rm{XXZ}},\bb{L}\otimes \bb{L}]=-\ii \epsilon(\bb{B}\otimes \bb{L}-\bb{L}\otimes \bb{B}),
\end{equation}
za neznano matriko $\bb{B}$. Navkljub namigovanju, da utegne za rešljivostjo problema stati integrabilnost Hamiltonovega
operatorja, je bilo za neizpodbiten dokaz potrebno počakati do presenetljivega odkritja komutacijske lastnosti $S$-operatorja,
\begin{equation}
[S_{n}(\epsilon),S_{n}(\epsilon^{\prime})]=0,\quad \forall \epsilon,\epsilon^{\prime}\in \CC.
\end{equation}
Torej, $S$-operator lahko interpretiramo kot kvantno prenosno matriko virtualnega (nehermitskega) integrabilnega modela.

\section*{Zunanja integrabilnost}
Omenjena lastnost komutiranja $S$-operatorja pri poljubnih vrednostih sklopitvenega parametra kliče po obstoju pripadajoče $R$-matrike. Jasno pa je, da
ne gre ``običajen'' objekt, saj parameter $\epsilon$ ni mogoče povezati s spektralnim parametrom, temveč z upodobitvenim parametrom, ki
karakterizira generatorje kvantizirane spinske algebre, od koder je tudi razvidno, da mora biti pri generičnih vrednostih parametrov
$R$-matrika \textit{neskončne} razsežnosti. Motivacija za uporabo besede \textit{zunanja} (ang. \textit{exterior}) integrabilnost izhaja
preprosto iz dejstva, da je vlogo spektralnega parametra prevzel (zunanji) sklopitveni parameter pripadajoče neravnovesne rešitve.
Na vrsti je kratek povzetek konstrukcije zunanje $R$-matrike, ki sledi referenci~\cite{PIP13}.

Uporabimo parametrizacijo $\bb{A}$-matrik z upodobitvenim parametrom $p\in \CC$,
\begin{align}
\bb{A}_{0}(p)&=\sum_{k=0}^{\infty}(p-k)\ket{k}\bra{k},\nonumber \\
\bb{A}_{+}(p)&=\sum_{k=0}^{\infty}(k-2p)\ket{k}\bra{k+1},\\
\bb{A}_{-}(p)&=\sum_{k=0}^{\infty}(k+1)\ket{k+1}\bra{k},\nonumber
\end{align}
ki zapirajo $\frak{sl}_{2}$ algebro,
\begin{equation}
[\bb{A}_{+}(p),\bb{A}_{-}(p)]=-2\bb{A}_{0}(p),\quad [\bb{A}_{0}(p),\bb{A}_{\pm}(p)]=\pm \bb{A}_{\pm}(p).
\end{equation}
Dokaz obsega dve neodvisni zahtevi, predpisani s kvantno Yang-Baxterjevo enačbo v t.i. upodobitvi grupe spletov (ang. braid group)
\begin{equation}
\PBR_{a_{1}a_{2}}(p,p^{\prime})\bb{L}_{a_{1}k}(p)\bb{L}_{a_{2}k}(p^{\prime})=
\bb{L}_{a_{1}k}(p^{\prime})\bb{L}_{a_{2}k}(p)\PBR_{a_{1}a_{2}}(p,p^{\prime}),
\end{equation}
ter robnima pogojema
\begin{equation}
\bra{0,0}\PBR_{12}(\lambda,\mu)=\bra{0,0},\quad \PBR(\lambda,\mu)\ket{0,0}=\ket{0,0}.
\end{equation}
Slednja zahteva nadomesti standardno skrčitev (kontrakcijo) preko parcialne sledi, ki ne omogoča smiselne definicije za nerazcepne neskončno-razsežne upodobitve.

Sprva navedimo nekaj koristnih opažanj. Na osnovi robnih pogojev $\bra{0}\bb{A}_{-}=0$ ter $\bra{0}\bb{A}_{0}=p$, najprej ugotovimo,
da je $S$-operator zgornje-trikotna matrika, zapisana v \textit{računski bazi} $\ket{\ul{\nu}}=\ket{\nu_{1},\nu_{2},\ldots,\nu_{n}}$ ($\nu_{j}\in\{0,1\}$),
tj. za matrične elemente velja trikotniško izbirno pravilo
\begin{equation}
\sum_{j=1}^{n}\nu_{j}^{\prime}2^{n-j}>\sum_{j=1}^{n}\nu_{j}2^{\nu-j}\Longrightarrow \bra{\ul{\nu^{\prime}}}S_{n}(p)\ket{\ul{\nu}}.
\end{equation}
Za splošne elemente monodromije $T^{k^{\prime}}_{k}(p):=\bra{k^{\prime}}\bb{T}(p)\ket{k}$ lahko po drugi strani zavoljo tridiagonalnosti
upodobitve $\bb{A}$-matrik ugotovimo, da ohranjajo celotno magnetizacijo $M$, namreč
\begin{equation}
[M,T^{k^{\prime}}_{k}(p)]=2(k^{\prime}-k)T^{k^{\prime}}_{k}(p).
\end{equation}
Nazadnje, globalna $U(1)$ simetrija $\PBR$-matrike odraža t.i. \textit{ice-rule},
\begin{equation}
[\PBR(p,p^{\prime}),\bb{N}]=0,\quad \bb{N}=-(\bb{A}_{0}(0)\otimes \one + \one \otimes \bb{A}_{0}(0))=\bigoplus_{\alpha}\alpha\;\one_{\alpha+1},
\end{equation}
kjer je $\bb{N}$ operator števila ``delcev'' na dvojnem pomožnem prostoru, kar omogoča bločen razcep celotnega pomožnega Hilbertovega
prostora
\begin{equation}
\frak{H}_{a}\otimes \frak{H}_{a}=\bigoplus_{\alpha=0}^{\infty}\frak{H}^{(\alpha)}_{a}.
\end{equation}
Od tod sledi podoben razcep tudi za $\PBR$-matriko,
\begin{equation}
\PBR(p,p^{\prime})=\sum_{\alpha=0}^{\infty}\sum_{k,l=0}^{\alpha}R^{(\alpha)}_{k,l}\ket{k,\alpha-k}\bra{l,\alpha-l}=\bigoplus_{\alpha=0}^{\infty}\PBR^{(\alpha)}(p,p^{\prime}),
\end{equation}
kar že avtomatično poskrbi za oba robna pogoja.

Rešitev kvantne Yang-Baxterjeve enačbe za dano Laxovo matriko $\bb{L}(p)$ predstavimo v eksponentni obliki,
\begin{equation}
\PBR \left(x+\frac{y}{2},x-\frac{y}{2}\right)=\exp{(y\;\bb{H}(x))},\qquad \forall x\in \CC \setminus \half \ZZ_{+},
\end{equation}
od koder iščemo generator $\bb{H}(x)$. Z uvedbo razcepa tenzorskega produkta Laxovih operatojev
\begin{align}
\BL(x,y)&:=\bb{L}\left(x+\frac{y}{2}\right)\otimes_{a}\bb{L}\left(x-\frac{y}{2}\right)\nonumber \\
&=\bb{L}(x)\otimes_{a}\bb{L}(x)-\frac{y}{2}\left(\bb{L}(x)\otimes_{a}\bb{L}^{\prime}-\bb{L}^{\prime}\otimes_{a}\bb{L}(x)\right)-
\frac{y^{2}}{4}\bb{L}^{\prime}\otimes_{a}\bb{L}^{\prime}\nonumber \\
&=:\BL_{0}(x)-\frac{y}{2}\BL_{1}-\frac{y^{2}}{4}\BL_{2}
\end{align}
pridelamo obliko, ki spominja na lastnost Liejeve grupe,
\begin{equation}
\exp{\left(\frac{y}{2}\ad_{\bb{H}(x)}\right)}\BL(x,y)-\exp{\left(-\frac{y}{2}\ad_{\bb{H}(x)}\right)}\BL(x,-y)=0,
\end{equation}
ter na osnovi analitičnosti v parametru $y$ pokažemo, da sledjo implicirajo trije neodvisni matrični pogoji
\begin{align}
\ad_{\bb{H}(x)}\BL_{0}(x)&=\BL_{1},\\
\ad_{\bb{H}(x)}^{2}\BL_{1}+3\ad_{\bb{H}(x)}\BL_{2}&=0,\\
\ad_{\bb{H}(x)}^{2}\BL_{2}&=0.
\end{align}
Preostanek dokaza sestoji iz eksplicitnega preverjanja navedenih identitet znotraj invariantnih podprostorov ($\alpha$-blokov).
Ključni element dokaza predstavlja \textit{simetrija} generatorja
\begin{equation}
[\bb{H},\BL_{1}^{-}]=0,
\end{equation}
ki poveže stanja dveh zaporednih $\alpha$-blokov,
\begin{equation}
\bb{H}^{(\alpha+1)}\BL_{1}^{(\alpha)-}=\BL_{1}^{(\alpha)-}\bb{H}^{(\alpha)}.
\end{equation}

S stališča aplikacij je predvsem zanimiva možnost eksplicitne diagonalizacije NESS operatorja $\rho_{\infty}$ po ideologiji algebraičnega
Bethejevega nastavka. Stacionarni gostotni operator je namreč produkt dveh kvantnih prenosnih matrik
\begin{equation}
\rho_{\infty}(p)=S(p)S^{T}(-p)=(-1)^{n}T^{0}_{0}\widetilde{T}^{0}_{0},
\end{equation}
pri čemer lahko elemente monodromij $T^{k}_{l}$ ter $\widetilde{T}^{k}_{l}$ razumemo kot dva različna tipa ekscitacij. Na tem mestu
velja omeniti dve temeljni razliki v primerjavi s standardnim postopkom, namreč (i) sedaj razpolagamo z \textit{neskončnim} številom
generatorjev $m$-delčnih ekscitacij, npr. $T^{l}_{l+m}$ za $l\geq 0$, ter (ii) posledično ni jasno kako definirati
Bethejeva stanja ob dejstvu, da protokol ustvarjanja $m$-delčnih ekscitacij na danem referenčnem stanju ni več enolično določen.
Tako že v primeru \textit{enodelčnega} sektorja pridelamo netrivialen rezultat. Z izbiro feromagnetnega vakuuma $\ket{\Omega_{0}}=\ket{\da}^{\otimes n}$
najprej ustvarimo splošno eno-delčno stanje $\ket{\Omega_{1}}=T^{0}_{1}(p^{\prime})\ket{\Omega_{0}}$, od koder dobimo
\begin{align}
(-1)^{n}\rho_{\infty}(p)T^{0}_{1}(p^{\prime})\ket{\Omega_{0}}&=t^{2}(p)\Lambda(p,p^{\prime})T^{0}_{1}(p^{\prime})\ket{\Omega_{0}}\nonumber \\
&+\frac{p^{\prime}(p+p^{\prime}-1)t(p)t(p^{\prime})-2p(p^{\prime}-p)t(p+1)t(p^{\prime}-1)}{(p-p^{\prime})(p-p^{\prime}+1)}T^{0}_{1}(p)\ket{\Omega_{0}}\nonumber \\
&+\frac{2p^{\prime} p(p+\half)t(p)t(p^{\prime}-1)}{(p+1)(p-p^{\prime}+1)}T^{0}_{1}(p+1)\ket{\Omega_{0}},
\end{align}
pri uvedbi okrajšave $t(p):=p^{n}$. Funkcijo $\Lambda(p^{\prime},p)$ označimo za kvazi-delčno disperzijsko zvezo,
\begin{equation}
\Lambda(p,p^{\prime})=\frac{(p^{\prime}+p)(p^{\prime}+p-1)}{(p^{\prime}-p)(p^{\prime}-p+1)}.
\end{equation}
Razvidno je, da imamo dvakratno degeneracijo pri vrednostih $\Lambda(p,p_{1}^{\prime})$ in $\Lambda(p,p_{2}^{\prime})$. Pripadajoča impulza
$p^{\prime}_{1,2}$ lahko parametriziramo z enim samim parametrom $\xi$,
\begin{equation}
p^{\prime}_{1}=\frac{1}{2}(1+(p+1)\xi),\quad p^{\prime}_{2}=\frac{1}{2}(1+(p-1)\xi^{-1}),
\end{equation}
od koder splošno Bethejevo stanje ``izven lupine'' zapišemo kot linearno kombinacijo
\begin{equation}
\ket{\Psi_{1}}=(c_{1}T^{0}_{1}(p^{\prime}_{1})+c_{2}T^{0}_{1}(p^{\prime}_{2}))\ket{\Omega_{0}}.
\end{equation}
Neustreznih členov se je mogoče znebiti pod pogojem, da ima $2\times 2$ sistem enačb za uteži $c_{1}$ ter $c_{2}$ netrivialno rešitev, kar da
\begin{equation}
\left(\frac{1-(p+1)\xi}{1+(p+1)\xi}\right)^{n}\left(\frac{\xi+p-1}{\xi-p+1}\right)^{n}=
\left(\frac{1-\xi}{1+\xi}\right)\left(\frac{(p+1)\xi+\lambda-1}{(p+1)\xi-p+1}\right),
\end{equation}
Slednje enačbe lahko proglasimo za enodelčne Bethejeve enačbe za NESS $\rho_{\infty}(p)$.

\section*{Pristop preko kvantnih grup}
Sledi izpeljava stacionarne rešitve Heisenbergove XXZ spinske verige iz prvih simetrijskih principov~\cite{IZ14}. Ključen vidik
predstavlja opazka, da je prej omenjeni operatorski ``divergenčen pogoj'' v resnici ekvivalenten
\textit{Laxovi povezavi} (ang. \textit{Lax connection}) prirejenega linearnega problema, ki v sovpada z diskretno
obliko \textit{pogoja ničelne ukrivljenosti} (ang. \textit{zero-curvature condition}). Regularne rešitve kvantne Yang-Baxterjeve enačbe samodejno
zagotovijo takšen geometrijski pogoj.

Pričnimo s formalnim zapisom $S$-operatorja,
\begin{equation}
S_{n}(p)=\bra{\psi_{\rm{L}}}\prod_{x=1}^{\stackrel{n}{\longrightarrow}}\bb{L}_{x}(p)\ket{\psi_{\rm{R}}}=
\bra{\psi_{\rm{L}}}\bb{M}(p)\ket{\psi_{\rm{R}}},
\end{equation}
preko monodromije
\begin{equation}
\bb{M}(p)=\bb{L}_{1}(p)\bb{L}_{2}(p)\cdots \bb{L}_{n}(p),
\end{equation}
ter Laxovega operatorja
\begin{equation}
\bb{L}_{x}(p)=\sum_{i,j=1}^{2}e_{x}^{ij}\otimes \bb{L}^{ji}(p).
\end{equation}
Dvokomponentna produktna struktura pomožnega prostora omogoča formalno uvedbo \textit{dvonožnih} različic monodromije
$\vmbb{M}(p)=\vmbb{L}_{1}(p)\vmbb{L}_{2}(p)\cdots \vmbb{L}_{n}(p)$, ter Laxovega operatorja
\begin{equation}
\vmbb{L}_{x}(p)=\sum_{i,j=1}^{2}e^{ij}_{x}\otimes \vmbb{L}^{ij}(p),\quad
\vmbb{L}^{ij}(p)=\sum_{k=1}^{2}\bb{L}^{ki}(p)\otimes \ol{\bb{L}}^{kj}(p),
\end{equation}
od koder z uvedbo robnih produktnih stanj
\begin{equation}
\dket{\psi_{\rm{L}}}:=\kket{\psi_{\rm{L}}}{\ol{\psi_{\rm{L}}}},\quad \dket{\psi_{\rm{R}}}:=\kket{\psi_{\rm{R}}}{\ol{\psi_{\rm{R}}}}
\end{equation}
zapišemo stacionarno stanje v kompaktni obliki
\begin{equation}
\rho_{\infty}(p)=\dbra{\psi_{\rm{L}}}\vmbb{M}(p)\dket{\psi_{\rm{R}}}.
\end{equation}

Prikladna oblika Laxove predstavitve za pomožni linearni problem je v obliki t.i. Sutherlandove enačbe~\cite{Sutherland70}.
Le-to izpeljemo iz konstrukcije kvantne grupe $\cal{U}_{q}(\frak{sl}_{2})$ preko `Baxterizirane' RLL enačbe
\begin{equation}
R^{q}_{x,x+1}(\lambda-\mu)\bb{L}^{q}_{x}(\lambda)\bb{L}^{q}_{x+1}(\mu)=\bb{L}^{q}_{x+1}(\mu)\bb{L}^{q}_{x}(\lambda)R^{q}_{x,x+1}(\lambda-\mu),
\end{equation}
ki smo jo aplicirali na sosednjih fizikalnih prostorih na mestih $x$ ter $x+1$. Upoštevajoč regularnost, $R^{q}_{x,x+1}(0)=(q-q^{-1})P_{x,x+1}$,
lahko z odvajanjem po spektralnem parametru $\lambda$ z izvrednotenjem pri $\mu=\lambda$ že dobimo znani pogoj
\begin{equation}
[h_{x,x+1},\bb{L}^{q}_{x}(\lambda)\bb{L}^{q}_{x+1}(\lambda)]=\bb{B}^{q}_{x}(\lambda)\bb{L}^{q}_{x+1}(\lambda)-
\bb{L}^{q}_{x}(\lambda)\bb{B}^{q}_{x+1}(\lambda),
\end{equation}
pri čemer identificiramo $\cal{U}_{q}(\frak{sl}_{2})$-invariantno interakcijo z
\begin{equation}
h_{x,x+1}\sim \left[\partial_{\lambda}\check{R}^{q}_{x,x+1}(\lambda)\right]_{\lambda=0},
\end{equation}
ter \textit{robni operator} $\bb{B}^{q}_{x}(\lambda)$ kot
\begin{equation}
\bb{B}^{q}_{x}(\lambda)\sim \partial_{\lambda}\bb{L}^{q}_{x}(\lambda).
\end{equation}
Na tem mestu je potrebno opomniti, da se $h_{x,x+1}$ razlikuje od željene anizotropne interakcije Heisenbergovega modela $h^{\rm{XXZ}}_{x,x+1}$
za površinski nehermitski člen, ki ga pa je mogoče rigorozno odstraniti preko ustrezne transformacije, ki modificira le robno matriko, pri čemer ohrani
algebrajsko strukturo kvantne grupe. Kočni obliki ste glasita
\begin{align}
\bb{L}^{\rm{XXZ}}_{x}(\lambda)&=
\begin{pmatrix}
[-\ii \lambda + \bb{s}^{z}]_{q} & \bb{s}_{q}^{-} \cr
\bb{s}_{q}^{+} & [-\ii \lambda - \bb{s}^{z}]_{q}
\end{pmatrix},\\
\bb{B}^{\rm{XXZ}}_{x}(\lambda)&=-2
\begin{pmatrix}
\cos{\left(\gamma(-\ii \lambda + \bb{s}^{z})\right)} & \cr
& \cos{\left(\gamma(-\ii \lambda - \bb{s}^{z})\right)}
\end{pmatrix}.
\end{align}

Zavoljo Sutherlandovega pogoja postane problem stran od robov trivialen, saj unitarni del pogoja za stacionarno stanje (tj. komutator z
Hamiltonovim operatorjem) pusti za seboj alternirajočo vsoto oblike
\begin{align}
[H,S_{n}]&=\bra{\psi_{\rm{L}}}[H,\bb{L}_{1}\bb{L}_{2}\cdots \bb{L}_{n}]\ket{\psi_{\rm{R}}}\nonumber \\
&=\sum_{x=1}^{n-1}\bra{\psi_{\rm{L}}}\bb{L}_{1}\cdots \bb{L}_{x-1}[h_{x,x+1},\bb{L}_{x}\bb{L}_{x+1}]\bb{L}_{x+2}\cdots \bb{L}_{n}\ket{\psi_{\rm{R}}}\nonumber \\
&=\bra{\psi_{\rm{L}}}\bb{B}_{1}\bb{L}_{2}\cdots \bb{L}_{n}\ket{\psi_{\rm{R}}}-
\bra{\psi_{\rm{L}}}\bb{L}_{1}\cdots \bb{L}_{n-1}\bb{B}_{n}\ket{\psi_{\rm{R}}},
\end{align}
kar na nivoju matrično-produktnega nastavka pomeni, da je prišlo do produkcije ultra-lokalnih robnih defektov. Preostane torej
poiskati disipacijo, ki natanko kompenzira nastali učinek, kar je pri predpostavki razklopitve levega in desnega pogoja mogoče strniti v sistem
robnih združljivostnih pogojev
\begin{align}
\dbra{0}\left(\vmbb{B}^{(1)}_{1}-\vmbb{B}^{(2)}_{1}+ \ii \DD_{1}\vmbb{L}_{1}\right)&=0,\nonumber \\
\left(\vmbb{B}^{(1)}_{n}-\vmbb{B}^{(2)}_{n}-\ii \DD_{2}\vmbb{L}_{n}\right)\dket{0}&=0.
\end{align}
Za robni stanji smo privzeli (produktni) vakuum $\dket{0}=\ket{0}\otimes \ket{0}$. Rešljivost robnih enačb je pogojena s primerno izbiro upodobitev $\frak{sl}_{2}$ algebre.
Potrebno je zagotoviti dvoje: (i) neravnovesna narava rešitve narekuje uporabo \textit{neunitarnih} upodobitev ter (ii) uporabiti družino upodobitev
določenih z zveznim naborom upodobitvenih parametrov. Slednje je pomembno, saj se lahko disipacijski parametri razpotegajo čez zvezen interval vrednosti.

Ustrezno izbiro predstavljajo t.i. Vermajevi moduli, generične nerazcepne neskončno-razsežne upodobitve algebre s spodnjo utežjo.
V primeru $\cal{U}_{q}(\frak{sl}_{2})$ lahko takšne module predstavimo z delovanjem diferencialnih operatorjev
na prostoru polinomov $\CC[x]$ v spremenljivki $x$, danimi z
\begin{equation}
\bb{s}^{z}_{q}(p)=x\partial -p,\quad \bb{s}^{+}_{q}(p)=x[2p-x\partial]_{q},\quad \bb{s}^{-}_{q}(p)=x^{-1}[x\partial]_{q},
\end{equation}
Vakuum $\ket{0}$ potemtakem ustreza stanju z najnižjo utežjo, tj. polinomu $1$. Ni težko pokazati, da je za rešitev problema je potrebno vzeti
upodobitveni parameter $p$, ki zadošča
\begin{equation}
\epsilon=4\sin{(\gamma)}\coth{(\gamma \Im{(p)})}=4\ii[p]^{-1}_{q}\cos{(\gamma p)},
\end{equation}
kar v klasični limiti $\gamma\to 0$ vodi do preprostejše zveze
\begin{equation}
p=\frac{4\ii}{\epsilon}.
\end{equation}

V delu~\cite{IZ14} predstavimo analogno konstrukcijo še za posebno različico integrabilne sklopitve odprtega fundamentalnega integrabilnega
modela z večkomponentnim lokalnim fizikalnim prostorom z globalno $\frak{sl}_{N}$ simetrijo preko uporabe $\cal{U}(\frak{sl}_{N})$
Vermajevih modulov. V omenjenih primerih (v odsotnosti kvantne deformacije) Sutherlandova enačba pravzaprav definira produkt za splošno
Liejevo algebro $\frak{gl}_{N}$,
\begin{equation}
[\bb{L}^{jk},\bb{L}^{li}]=\bb{B}^{ji}\bb{L}^{lk}-\bb{L}^{ji}\bb{B}^{lk},
\end{equation}
pri identifikaciji $\bb{B}=-\one_{a}$. S privzetvijo razcepa po Choleskem ter omejitvijo nabora disipacijskih
procesov na ``primitivne'' rank-$1$ operatorje, dane z Weylovimi enotskimi matrikami $e^{ij}$, pokažemo, da je do rešitve mogoče priti le
s primerno ``vložitvijo'' $\frak{sl}_{2}$-invariantne rešitve v $N$-komponentni model.

\section*{Računanje opazljivk}
Za računanje pričakovanih vrednosti fizikalnih lokalnih opazljivk glede na integrabilna stacionarna stanja v matrično-produktnem
zapisu predlagamo tehniko s t.i. pomožnimi vozliščnimi operatorji (ang. auxiliary vertex operators). Formalno lahko za lokalno opazljiv ko, ki
je podprta med mestoma $x$ in $y$,
\begin{equation}
O_{[x,y]}=\one_{N}^{\otimes (x-1)}\otimes O\otimes \one_{N}^{\otimes (n-y)}.
\end{equation}
zapišemo
\begin{equation}
\expect{O_{[x,y]}}\equiv \frac{\tr{(O_{[x,y]}\rho_{\infty}(\epsilon))}}{\tr{(\rho_{\infty}(\epsilon))}}=\textgoth{Z}_{n}^{-1}(\epsilon)\tr{(O_{[x,y]}\rho_{\infty}(\epsilon))},
\end{equation}
pri čemer je normalizacijski faktor gostotne matrike $\rho_{\infty}(\epsilon)$ določen z \textit{neravnovesno particijsko funkcijo} $\textgoth{Z}_{n}(\epsilon)$
$n$-delčnega sistema. Z izvrednotenjem delne sledi preko celotnega fizikalnega prostora uvedemo pripadajoče vozliščne operatorje,
\begin{equation}
\Lambda_{d}(O)=\vmbb{O}:=\sum_{i_{1},j_{1},\ldots, i_{d},j_{d}}\tr{(e^{i_{1}j_{1}}\otimes \cdots \otimes e^{i_{d}j_{d}})}
\vmbb{L}^{i_{1}j_{1}}\cdots \vmbb{L}^{i_{d}j_{d}}.
\end{equation}
Posebno vlogo igra t.i. prenosni vozliščni operator
\begin{equation}
\vmbb{T}(\epsilon):=\Lambda_{1}(\one_{N})=\tr \vmbb{L}(\epsilon),
\end{equation}
od koder pričakovane vrednosti izrazimo preko skrčitev
\begin{equation}
\expect{O_{[x,y]}}=\textgoth{Z}_{n}^{-1}(\epsilon)\dbra{0}\vmbb{T}^{x-1}\vmbb{O}\vmbb{T}^{n-y}\dket{0},\quad
\textgoth{Z}_{n}=\tr \rho_{\infty}(\epsilon)=\dbra{0}\vmbb{T}^{n}\dket{0}.
\end{equation}
Pomembno je poudariti, da je asimptotsko (tj. $n\to \infty$) obnašanje objekta $\vmbb{T}(\epsilon)$ temeljnega pomena
za razumevanje termodinamskih lastnosti sistema. V odsotnosti boljših idej se lahko poslužimo numeričnega izračuna,
ki je najpreprostejši v brezmasni fazi $|\Delta|<1$ pri deformacijah, ki zavzamejo vrednosti $m$-tega korena enote, ali ekvivalentno za $\gamma=\cos{(\pi(l/m))}$, za tuji $l,m\in \NaN$
($l<m$). Tedaj postane pomožni Hilbertov prostor povsem razcepen na $m$-razsežne podprostore, kar omogoča eksplicitno diagonalizacijo
primerno reduciranega prenosnega operatorja. V kritičnih točkah $|\Delta|=1$ ter masni fazi $|\Delta|>1$ tovrstna poenostavitev ni mogoča.
Seveda, za potrebe numeričnega izračuna amplitud za sisteme končnih dimenzij $n$ vselej zadostuje redukcija vozliščnih operatorjev na začetnih
$n/2$ stanj.

\section*{Degeneracija stacionarnih stanj}
Še posebej zanimiv točno rešljiv problem odprtega večdelčnega integrabilnega sistema predstavlja Lai--Sutherlandova $S=1$ veriga,
\begin{equation}
H^{\rm{LS}}=\sum_{x=1}^{n-1}h_{x,x+1},\quad h_{x,x+1}^{\rm{LS}}=\vec{s}_{x}\cdot \vec{s}_{x+1}+(\vec{s}_{x}\cdot \vec{s}_{x+1})^{2}-\one,
\end{equation}
kjer lahko s primerno izbiro disipacije dosežemo degeneracijo stacionarnih stanj. Ustrezno izbiro predstavlja par Lindbladovih
operatorjev
\begin{equation}
A_{1}=e_{1}^{13}=\half(s^{+}_{1})^{2},\quad A_{2}=e_{n}^{31}=\half(s^{-}_{n})^{2},
\end{equation}
kjer uvedemo nekoherentno preklapljanje (z enako jakostjo sklopitve) le med nivojema $\ket{1}\equiv \ket{\ua}$ ter $\ket{3}\equiv \ket{\da}$.
Vmesni nivo $\ket{2}\equiv \ket{0}$ tako proglasimo za \textit{luknjo}.
V odsotnosti disipacije povsem unitaren $SU(3)$-invarianten proces ohranja globalno število delcev vseh treh vrst, kar nam na osnovi lokalne
kontinuitetne enačbe
\begin{equation}
\frac{d}{dt}e^{ii}_{x}=J^{i}_{x-1,x}-J^{i}_{x,x+1},
\end{equation}
zagotavlja ohranitev anti-simetričnega tenzorja dvodelčnih gostot delnih tokov
\begin{equation}
J^{ij}=\ii(e^{ij}\otimes e^{ji}-e^{ji}\otimes e^{ij}),\quad J^{ij}_{x}=\one_{N}^{\otimes (x-1)}\otimes J^{ij}\otimes \one_{N}^{\otimes(n-x-1)}=-J_{x}^{ji}.
\end{equation}
Skupna bilanca delnih tokov definira celoten tok za posamično vrsto delcev
\begin{equation}
J^{i}=\sum_{i=1}^{N}J^{ij}.
\end{equation}

Medtem kot izbrana disipacija povzroči tok magnetizacije $J^{s}:=J^{1}-J^{3}$, je število lukenj
\begin{equation}
N_{0}\ket{i_{1},\ldots,i_{n}}=\left(\sum_{x=1}^{n}\delta_{i_{x},2}\right)\ket{i_{1},\ldots,i_{n}},
\end{equation}
konstanta gibanja. Natančneje, v danem primeru govorimo o tako imenovani~\cite{BP12} \textit{krepki simetriji} (ang. \textit{strong symmetry})
\begin{equation}
[A_{1,2},N_{0}]=[H^{\rm{LS}},N_{0}]=0,
\end{equation}
ki ima za posledico bločni razpad celotnega Hilbertovega prostora na sektorje z dobro definiranim številom lukenj
\begin{equation}
\frak{H}_{s}=\bigoplus_{\nu=0}^{n}\frak{H}_{s}^{(\nu)},\quad N_{0}\frak{H}_{s}^{(\nu)}=\nu \frak{H}_{s}^{(\nu)}.
\end{equation}
To pomeni, da Lindbladov proces razpade na invariantne podprocese s pripadajočimi (enoličnimi) mikrokanoničnimi stacionarnimi
stanji $\rho^{(\nu)}_{\infty}$,
\begin{equation}
\LL^{(\nu)}\rho_{\infty}^{(\nu)}=-\ii[H,\rho_{\infty}^{(\nu)}]+\epsilon \DD\rho_{\infty}^{(\nu)}=0.
\end{equation}

Točno rešitev se izplača zapisati kar kot vsoto preko vseh sektorjev,
\begin{equation}
\rho_{\infty}=\sum_{\nu=0}^{n}\rho_{\infty}^{(\nu)}.
\end{equation}
Iz Sutherlandovega pogoja, ob parametrizaciji Laxovega operatorja
\begin{equation}
\bb{L}=
\begin{pmatrix}
\bb{l}^{\ua} & \bb{t}^{+} & \bb{v}^{+} \cr
\bb{t}^{-} & \bb{l}^{0} & \bb{u}^{+} \cr
\bb{v}^{-} & \bb{u}^{-} & \bb{l}^{\da}
\end{pmatrix},
\end{equation}
izluščimo zahtevo po realizaciji \textit{ne-polpreproste} Liejeve algebre, predpisane z relacijami
\begin{align}
[\bb{u}^{+},\bb{t}^{\pm}]&=[\bb{u}^{-},\bb{t}^{\pm}]=[\bb{u}^{\pm},\bb{v}^{\pm}]=[\bb{t}^{\pm},\bb{v}^{\pm}]=0,\nonumber \\
[\bb{l}^{\ua},\bb{u}^{\pm}]&=[\bb{l}^{\da},\bb{t}^{\pm}]=[\bb{l}^{\ua},\bb{l}^{\da}]=0,\nonumber \\
[\bb{l}^{\ua},\bb{t}^{\pm}]&=\mp \eta \bb{t}^{\pm},\quad [\bb{l}^{\da},\bb{u}^{\pm}]=\mp \eta\bb{u}^{\pm},\nonumber \\
[\bb{u}^{+},\bb{v}^{\mp}]&=\pm \bb{t}^{\mp},\quad [\bb{t}^{\pm},\bb{v}^{\mp}]=\pm \eta \bb{u}^{\mp},\nonumber \\
[\bb{l}^{\ua},\bb{v}^{\pm}]&=[\bb{l}^{\da},\bb{v}^{\pm}]=\mp \eta \bb{v}^{\pm},\quad [\bb{v}^{+},\bb{v}^{-}]=\eta(\bb{l}^{\ua}+\bb{l}^{\da}),\nonumber \\
[\bb{t}^{+},\bb{t}^{-}]&=[\bb{u}^{+},\bb{u}^{-}]=\eta \bb{l}^{0},\nonumber \\
[\bb{l}^{\ua},\bb{l}^{0}]&=[\bb{l}^{\da},\bb{l}^{0}]=[\bb{u}^{\pm},\bb{l}^{0}]=[\bb{v}^{\pm},\bb{l}^{0}]=[\bb{t}^{\pm},\bb{l}^{0}]=0.
\end{align}
Uvedli smo zavrteni parameter disipacije $\eta:=\ii \epsilon$. Zahtevi fiksne točke zadostimo preko združljivostnega sistema robnih enačb
s predpisom delovanja generatorjev algebre na vakuumski stanji, ki sta $\rvac$ ter $\lvac$.
Čeprav se ob pogledu na predzadnjo vrstico, ki udejanja Heisenberg-Weylovi algebri parov generatorjev $(\bb{t}^{+},\bb{t}^{-})$
ter $(\bb{u}^{+},\bb{u}^{-})$, zdi, kot da zadostuje vzeti upodobitev z dvema Fockovima prostoroma kanoničnih bozonov, tj.
\begin{equation}
[\bb{b}_{\sigma},\bb{b}_{\sigma^{\prime}}^{\dagger}]=\delta_{\sigma,\sigma^{\prime}},\quad
[\bb{b}_{\sigma},\bb{b}_{\sigma^{\prime}}]=[\bb{b}_{\sigma}^{\dagger},\bb{b}_{\sigma'}^{\dagger}]=0,\qquad \forall \sigma,\sigma^{\prime}\in \{\ua,\da\},
\end{equation}
se taka izbira \textit{ne} sklada z robnimi pogoji
\begin{align}
\bb{l}^{\ua}\rvac&=\bb{l}^{0}\rvac=\bb{l}^{\da}\rvac = \rvac,& \lvac\bb{l}^{\ua}&=\lvac\bb{l}^{0}=\lvac\bb{l}^{\da} = \lvac,\nonumber \\
\bb{t}^{+}\rvac &= \bb{u}^{+}\rvac = \bb{v}^{+}\rvac = 0,& \lvac \bb{t}^{-} &= \lvac \bb{u}^{-} = \lvac \bb{v}^{-} = 0.
\end{align}
Izkaže se, da je za ustrezno realizacijo pomožnega prostora potrebno uvesti še dodaten $\frak{sl}_{2}$ Vermajev modul nekompaktnega spina.
Ena od možnih eksplicitnih (nerazcepnih) upodobitev je dana z
\begin{align}
\bb{t}^{+}&=\bb{b}_{\ua},\quad \bb{t}^{-}=\eta \bb{b}^{\dagger}_{\ua},\nonumber \\
\bb{u}^{+}&=\eta \bb{b}_{\da},\quad \bb{u}^{-}=\bb{b}^{\dagger}_{\da},\nonumber \\
\bb{v}^{+}&=\eta\left(\bb{b}_{\ua}\bb{b}_{\da}+\bb{s}^{+}\right),\quad
\bb{v}^{-}=\eta\left(\bb{b}^{\dagger}_{\ua}\bb{b}^{\dagger}_{\da}-\bb{s}^{-}\right),\nonumber \\
\bb{l}^{\ua,\da}&=\eta \left(\bb{b}^{\dagger}_{\ua,\da}\bb{b}_{\ua,\da}+\half-\bb{s}^{z}\right),\quad \bb{l}^{0}=\one_{a},
\end{align}
vakuumom zgornje uteži $\dket{0}\equiv \ket{0,0,0}$, ter upodobitvenim spinskim parametrom, ki ga je potrebno nastaviti na
\begin{equation}
p=\half-\frac{1}{\eta}=\half+\frac{\ii}{\epsilon}.
\end{equation}
Preprosti del algebre tvori $\frak{sl}_{2}$ podalgebra generirana z $\{\bb{v}^{\pm},\bb{l}^{\ua}+\bb{l}^{\da}\}$.

Zaradi potrebe po globalni vezi je računanje z mikrokanoničnimi ansambli $\rho_{\infty}^{(\nu)}$ sila nepraktično, saj informacije o
ustreznem sektorju ne gre pospraviti v lokalno zahtevo, torej na nivo Laxovega operatoja. Iz tega razloga raje definiramo
velekanonični statistični ansambel v kemijskem ravnovesju s številom lukenj,
\begin{equation}
\rho_{\infty}(\epsilon,\mu)=\sum_{\nu=0}^{n}\exp{(\mu \nu)}\rho_{\infty}^{(\nu)}.
\end{equation}
Kemijsko ravnovesje kontroliramo preko intenzivnega termodinamskega parametra kemijskega potenciala $\mu$, ki ga tokrat lahko vgradimo
v Laxov operator in njegovo dvonožno različico,
\begin{equation}
\bb{L}^{ij}(\epsilon,\mu)=\exp{\left(\frac{\mu}{2}\delta_{i,2}\right)}\bb{L}^{ij}(\epsilon),\quad
\vmbb{L}^{ij}(\epsilon,\mu)=\exp{\left(\frac{\mu}{2}(\delta_{i,2}+\delta_{j,2})\right)}\vmbb{L}^{ij}(\epsilon).
\end{equation}
Dvo-parmetrično neravnovesno particijsko funkcijo $\textgoth{Z}_{n}(\epsilon,\mu)$ definiramo preko modificiranega pomožnega
prenosnega vozliščnega operatorja
\begin{equation}
\vmbb{T}(\epsilon,\mu)=\sum_{i}\vmbb{L}^{ii}(\epsilon,\mu)=\sum_{ij}\bb{L}^{ij}(\epsilon,\mu)\otimes \ol{\bb{L}}^{ij}(\epsilon,\mu),
\end{equation}
od koder izrazimo \textit{polnitveno razmerje} (ang. \textit{filling factor}) kot
\begin{equation}
r:=\frac{\expect{\nu}}{n}=\frac{\sum_{\nu=0}^{n}\nu \exp{(\mu \nu)}\tr{\rho_{\infty}^{(\nu)}}}
{n\sum_{\nu=0}^{n}\exp{(\mu \nu)}\tr{\rho_{\infty}^{(\nu)}}}=n^{-1}\partial_{\mu}\log{\textgoth{Z}_{n}(\epsilon,\mu)}.
\end{equation}
Ostaja zanimivo vprašanje ali se nemara v termodinamski limiti pri variranju kemijskega potenciala lahko zgodi neravnovesni kvantni
prehod, še posebej ob dejstvu, da je particijska funkcija dana z vsoto nenegativnih členov, kar pomeni možnost obravnave v kontekstu
Lee--Yangove teorije~\cite{LeeYang52,BDL05}.

\subsection*{Psevdo-lokalne konstante gibanja}
V sklepnem delu doktorskega dela obravnavamo problem kvantnega transporta v anizotropni Heisenbergovi verigi polovičnih spinov znotraj
teorije linearnega odziva. Zanimamo se pretežno za pojasnitev anomalnega spinskega transporta v brezmasni fazi. Vrsta numeričnih ter analitičnih
študij neizpodbitno kaže na transport balističnega značaja, kar pomeni, da tok magnetizacije nikoli ne povsem zamre. Tovrstno lastno
superprevodnost povežemo s singularno enosmerno prevodnostjo preko \textit{pozitivnosti} t.i. \textit{Drudejeve uteži} $D^{\rm{th}}_{\beta}$,
ki se v teoriji linearnega odziva prevede na ne-ergodične lastnosti časovne avtokorelacijske funkcije pripadajočega ekstenzivnega toka
\begin{equation}
D^{\rm{th}}_{\beta}=\lim_{t\to \infty}\lim_{n\to \infty}\frac{\beta}{4nt}\int_{-t}^{t}\dd t^{\prime}\expect{J_{n}(0)J_{n}(t^{\prime})}_{\beta}.
\end{equation}
Posebej moramo opozoriti na odločilen vrstni red limit v zgornji definiciji, saj je skladno z osnovnim principom statistične fizike
termodinamsko $n\to \infty$ limito potrebno vzeti najprej. Ker se tako ni mogoče izogniti obravnavi znotraj sistemov neskončnih
razsežnosti se moramo poslužiti formalizma operatorskih kvazi-lokalnih $C^{*}$ algeber, ki omogočajo rigorozen prehod v termodinamski režim.

Ne-ergodičnost opazljivk v sistemu povežemo s prisotnostjo netrivialnih makroskopskih ohranitvenih količin, kot veleva dobro znani
rezultat iz klasične teorije Hamiltonskih sistemov v obliki \textit{Mazurjeve neenakosti}~\cite{Mazur69},
\begin{equation}
\lim_{t\to \infty}\frac{1}{t}\int_{0}^{t}\dd t^{\prime}\expect{A(0)A(t^{\prime})}_{\beta}\geq
\sum_{k}\frac{\expect{AQ_{[k]}}^{2}_{\beta}}{\expect{Q^{2}_{[k]}}_{\beta}},
\end{equation}
kjer je $\{Q_{[k]}\}$ izbrana množica konstant gibanja. Kot je razvidno, obstoj konstant gibanja z neničelnim prekrivanjem z ekstenzivnim
tokom takoj implicira pozitivnost Drudejeve uteži. Čeprav je rezultat mogoče direktno prepisati v kvantno domeno, pa standardna izpeljava~\cite{Suzuki71} ni ustrezna, saj je posluži
diagonalizacije celotnega Hamiltonovega operatorja, ki ga termodinamsko ni mogoče smiselno definirati. Takšna izpeljava
bi nenazadnje nujno pomenila neustrezen vrstni red limit v definicji Drudejeve uteži.

V delu izpeljemo Mazurjevo neenakost v jeziku operatorske $C^{*}$ algebre ter pokažemo, kako je poleg striktnih ohranitvenih količin mogoče
uporabiti tudi \textit{psevdo-lokalne operatorje}, kjer dovolimo kršitev časovne invariance s členi, ki so lokalizirani na robu sistema.
V odsotnosti poznavanja rigoroznih mej za oceno prostorskega razpada termalnih korelacij uspemo posplošeno Mazurjevo oceno formulirati
le v limiti visokih temperatur. Pod pojmom psevdo-lokalnosti (na končni verigi $\Lambda_{n}$) razumemo strukturo
\begin{equation}
Q_{\Lambda_{n}}=\sum_{d=1}^{n}Q^{(d)}_{\Lambda_{n}},\quad Q^{(d)}_{\Lambda_{n}}=\sum_{x=1}^{n-d+1}q_{x}^{(d)},
\end{equation}
kjer za $d$-točkovne gostote $q^{(d)}$ zahtevamo \textit{eksponentni} razpad termalnega prekrivanja pri neskončni temperaturi,
\begin{equation}
\omega \left((q^{(d)})^{2}\right)\leq \zeta \exp{(-\xi d)},
\end{equation}
za primerni $n$-neodvisni pozitivni konstanti $\xi$ in $\zeta$. Kršitev časovne invariance po drugi strani izrazimo
z robnimi členi $B_{\partial_{n}}$ s podporo na mestih $\partial_{n}\equiv [1,d_{b}]\cup [n-d_{b}+1,n]$,
\begin{equation}
[H_{\Lambda_{n}},Q_{\Lambda_{n}}]=B_{\partial_{n}},\quad B_{\partial_{n}}:=b_{1}-b_{n-d_{b}+1}.
\end{equation}

Ključni element dokaza predstavlja \textit{Lieb--Robinsonova kavzalnost} v nerelativističnih kvantnih sistemih na mreži.
Slednja določa maksimalno \textit{efektivno} hitrost propagacije kvantnih korelacij. Najnazornejša oblika izreka
ponudi oceno razlike časovne propagacije lokalne opazljivke $f$ s podporo na podmreži $X$ ter pripadajoče projekcije na podmrežo $\Gamma$,
ki zaobsega celotno domeno ``svetlobnega stožca''
\begin{equation}
\|\tau_{t}(f)-(\tau_{t}(f))_{\Gamma}\|\leq \phi |X|\|f\|\exp{(-\mu({\rm dist}(X,\ZZ \setminus \Gamma))-v|t|)},
\end{equation}
za pozitivne konstante $\phi,\mu$ ter Lieb--Robinsonovo hitrost $v$.

Kot smo že omenili, je v brezmasni fazi XXZ Heisenbergovega modela transport magnetizacije balističnega tipa. Kljub vsemu pa takega obnašanja,
v kolikor se omejimo le na kanoničen ansambel v sektorju z ničelno magnetizacijo, ne moremo razložiti na podlagi Mazurjeve neenakosti
v sklopu \textit{lokalnih} konstant gibanja dobljenih iz pripadajočega kvantnega prenosnega operatorja, kar lahko preprosto
argumentiramo na osnovi neujemajočih parnosti na obrat spina operatorjev spinskega toka ter lokalnih integralov. Razrešitev zagate je,
morda nekoliko nepričakovano, povezana ravno s točno rešitvijo neravnovesnega stacionarnega stanja o kateri smo predhodno razpravljali.

Z razvojem $S$-operatorja v Taylorjevo vrsto po sklopitvenem parametru $\epsilon$,
\begin{equation}
S_{\Lambda_{n}}(\epsilon)=\sum_{p=0}^{n}(\ii \epsilon)^{p}S_{\Lambda_{n}}^{(p)},
\end{equation}
namreč dobimo v prvem redu $\cal{O}(\epsilon)$ nehermitski psevdo-lokalni operator $S_{\Lambda_{n}}^{(1)}\equiv Z_{\Lambda_{n}}$,
ki \textit{skoraj} komutira s Hamiltonovim operatorjem na $\Lambda_{n}$,
\begin{equation}
[H_{\Lambda_{n}},Z_{\Lambda_{n}}]=-(\sigma^{z}_{1}-\sigma^{z}_{n}),
\end{equation}
kar omogoča uvedbo hermitskega psevdo-lokalnega operatorja
\begin{equation}
\quad Q_{[Z]\Lambda_{n}}:=\ii(Z_{\Lambda_{n}}-Z_{\Lambda_{n}}^{\dagger}),
\end{equation}
katerega parnost sovpada z operatorjem spinskega toka $J_{\Lambda_{n}}$. Od tod nemudoma zaključimo
\begin{equation}
D^{\rm{th}}_{\beta}\geq \frac{\beta}{2}\lim_{n\to \infty}
\frac{1}{n}\frac{\left(\omega \left(J_{\Lambda_{n}}Q_{[Z]\Lambda_{n}}\right)\right)^{2}}{\omega\left(Q^{2}_{[Z]\Lambda_{n}}\right)}>0.
\end{equation}

Psevdo-lokalost ohranitvene količine $Q_{\Lambda_{n}}$ dokažemo preko matrično-produktne predstavitve z razvojem po
$d$-točkovnih gostotah
\begin{align}
Q_{[Z]\Lambda_{n}}&=\sum_{d=2}^{n}q^{(d)}_{Z},\quad
q^{(d)}_{Z}=\ii \sum_{s_{2},\ldots,s_{d-1}\in \{0,\pm\}}\bra{0}\bb{A}^{Z}_{+}\bb{A}^{Z}_{s_{2}}\cdots \bb{A}^{Z}_{s_{d-1}}\bb{A}^{Z}_{-}\ket{0}\times \nonumber \\
&(\sigma^{+}\otimes \sigma^{s_{2}}\otimes \cdots \otimes \sigma^{s_{d-1}}\otimes \sigma^{-}-
\sigma^{-}\otimes \sigma^{-s_{2}}\otimes \cdots \otimes \sigma^{-s_{d-1}}\otimes \sigma^{+}),
\end{align}
s pomočjo iteracije prirejenega prenosnega operatorja $\bb{T}^{(Z)}$,
\begin{equation}
\omega\left(Q^{2}_{[Z]\Lambda_{n}}\right)=2\bra{\rm{L}}\left(\bb{T}^{(Z)}\right)^{n}\ket{\rm{R}},\quad
\bb{T}^{(Z)}=\bb{A}^{Z}_{0}\otimes \ol{\bb{A}}^{Z}_{0}+
\half\left(\bb{A}^{Z}_{+}\otimes \ol{\bb{A}}^{Z}_{+}+\bb{A}^{Z}_{-}\otimes \ol{\bb{A}}^{Z}_{-}\right).
\end{equation}
Eksplicitno upodobitev pomožnih $\bb{A}^{Z}$-matrix lahko bralec najde v~\cite{PRL106,IP13}. Končni rezultat za spinsko Drudejevo utež v limiti
visokih temperatur za gosto množico anizotropij $\Delta=\cos{(\pi l/m)}$ je mogoče predstaviti v zaprti obliki
\begin{equation}
\lim_{\beta \to0}\frac{D^{\rm{th}}_{\beta}}{\beta}\geq 4D_{Z},\quad
D_{Z}:=\frac{1}{4}\lim_{n\to \infty}\frac{n}{\bra{\rm{L}}\left(\bb{T}^{(Z)}\right)^{n}\ket{\rm{R}}}=
\half\left(1-\Delta^{2}\right)\left(\frac{m}{m-1}\right).
\end{equation}
Pridelali smo skrajno nenavaden rezultat, saj je $D_{Z}$ fraktalna (tj. povsod nezvezna) funkcija.

Vprašanje osrednjega pomena je seveda, ali je mogoče konstruirati dodatne psevdo-lokalne operatorje. S tem namenom se vrnimo nazaj na že
omenjene univerzalne $\cal{U}_{q}(\frak{sl}_{2})$-invariantne rešitve kvantne Yang--Baxterjeve enačbe.
Z evaluacijo univerzalne $\cal{R}$-matrike nad produktnim Vermajevim modulom v fundamentalni ter generični upodobitvi lahko na osnovi
tako dobljenega Laxovega operatorja
\begin{equation}
\bb{L}(\varphi,s)=
\begin{pmatrix}
\sin{(\varphi +\gamma \bb{s}^{z}_{s})} & (\sin{\gamma})\bb{s}^{-}_{s} \cr
(\sin{\gamma})\bb{s}^{+}_{s} & \sin{(\varphi -\gamma \bb{s}^{z}_{s})}
\end{pmatrix},
\end{equation}
in pripadajoče zvezne dvo-parametrične družine kvantnih \textit{robnih} prenosnih operatorjev
\begin{equation}
W_{n}(\varphi,s)=\lvac \bb{L}(\varphi,s)^{\otimes_{s}n}\rvac,\quad [W_{n}(\varphi,s),W_{n}(\varphi^{\prime},s^{\prime})]=0,
\end{equation}
z upoštevanjem Sutherlandove enačbe pridelamo naslednjo operatorsko identiteto za odvod $W_{n}(\varphi,s)$ v smeri spinskega parametra $s$ pri
referenčni vrednosti $s=0$,
\begin{equation}
\frac{1}{(\sin{\varphi})^{n}}\partial_{s}W_{n}(\varphi,s)|_{s=0}=\frac{2\gamma \sin{\gamma}}{(\sin{\varphi})^{2}}Z_{n}(\varphi)+
\gamma \cot{\varphi}M_{n}.
\end{equation}
Na desni strani lahko prepoznamo modificirani $Z$-operator $Z_{n}(\varphi)$, ki ga je znova mogoče kompaktno zapisati kot matrično-produktno stanje
\begin{equation}
Z_{n}(\varphi)=\bra{\rm{L}}\widetilde{\bb{L}}(\varphi)^{\otimes_{s}n}\ket{\rm{R}},\quad
\widetilde{\bb{L}}(\varphi)=\sum_{\alpha=\{0,\pm,z\}}\widetilde{\bb{L}}^{\alpha}(\varphi)\otimes \sigma^{\alpha},
\end{equation}
s komponentami
\begin{align}
\widetilde{\bb{L}}^{0}(\varphi)&=\ket{\rm{L}}\bra{\rm{L}}+\ket{\rm{R}}\bra{\rm{R}}+\cos{(\gamma \tilde{\bb{s}}^{z})},\nonumber \\
\widetilde{\bb{L}}^{z}(\varphi)&=\cot{\varphi}\sin{(\gamma \tilde{\bb{s}}^{z})},\nonumber \\
\widetilde{\bb{L}}^{+}(\varphi)&=\ket{1}\bra{\rm{R}}+\frac{\sin{\gamma}}{\sin{\varphi}}\tilde{\bb{s}}^{-},\nonumber \\
\widetilde{\bb{L}}^{-}(\varphi)&=\ket{\rm{L}}\bra{1}+\frac{\sin{\gamma}}{\sin{\varphi}}\tilde{\bb{s}}^{+}.
\end{align}
V resnici smo dobili eno-parametrično družino kvantnih prenosnih operatorjev
\begin{equation}
[Z_{n}(\varphi),Z_{n}(\varphi^{\prime})]=0,
\end{equation}
ki predstavlja homogeno vsoto $r$-lokalnih gostot $q_{r}$,
\begin{equation}
Z_{n}(\varphi)=\sum_{r=2}^{n}\sum_{x=0}^{n-r}\one_{2^x}\otimes q_{r}\otimes \one_{2^{n-r-x}},
\end{equation}
za katere lahko z analognim postopkom kot pri $Q_{[Z]}$ zopet demonstriramo psevdo-lokalnost,
$\omega(q^{2}_{r})\leq \zeta \exp{(-\xi r)}$.
Kršitev komutacije tokrat zaobjema na robovih lokalizirane psevdo-loklane gostote $b_{r}$, torej
\begin{equation}
[H_{n},Z_{n}(\varphi)]=\sum_{r=1}^{n}(b_{r}\otimes \one_{2^{n-r}}-\one_{2^{n-r}}\otimes b_{r}),\quad \omega(b^{2}_{r})\leq \zeta^{\prime}\exp{(-\xi^{\prime}r)}.
\end{equation}
Zaključimo z integralsko obliko Mazurjeve neenakosti, ki vodi do izboljšane (še vedno fraktalne) spodnje za spinsko Drudejevo utež
\begin{equation}
\lim_{\beta\to 0}D^{\rm{th}}_{\beta}\geq \frac{1}{4}\frac{\sin^{2}{(\pi l/m)}}{\sin^{2}{(\pi/m)}}\left(1-\frac{m}{2\pi}\sin{\left(\frac{2\pi}{m}\right)}\right),
\end{equation}
ki v točkah $\Delta=\cos{(\pi/m)}$ natančno reproducira predhodno dobljeni rezultat na osnovi termodinamskega Bethejevega nastavka~\cite{BFK05}.

% ----------------
% AUTHORSHIP CLAIM
% ----------------

\thispagestyle{empty}
\mbox{}

\newpage
\thispagestyle{empty}
\vspace*{10cm}
\begin{center}
\large\textbf{Izjava o avtorstvu}
\end{center}
\vspace{2cm}

\noindent\normalsize Izjavljam, da je predlo\v zena disertacija rezultat lastnega znanstveno-raziskovalnega dela.
\vspace{3cm}

\noindent Ljubljana, 2.4.2014\hfill Enej Ilievski

\newpage

\thispagestyle{empty}
\mbox{}

\end{document}